\newtheorem{theorem}{Theorem}[section]
\newtheorem{defn}{Definition} 
\newtheorem{lemma}{Lemma}[section]
\newtheorem{cor}{Corollary} 
\newtheorem{claim}{Claim}[section]
\newtheorem{obs}{Observation} 
\newtheorem{asspt}{Assumption} 
\newcommand{\RJC}[1]{{\color{blue}RJC: #1}}
\newcommand{\IA}[1]{{\color{red}IA: #1}}
\newcommand{\hide}[1]{}
\newcommand{\eps}{\epsilon}
\newcommand{\nmax}{n_{\max}}
\newcommand{\ra}{\rightarrow}
\newcommand{\nbar}{\overline{n}}
\newcommand{\sigmabar}{\overline{\sigma}}
\newcommand{\xbar}{\overline{x}}
\newcommand{\Lbar}{\overline{L}}
\newcommand{\ceil}[1]{\lceil #1 \rceil}
\newcommand{\floor}[1]{\lfloor #1 \rfloor}
\title{Stable Matching: Choosing Which Proposals to Make}
\author{Ishan Agarwal \thanks{New York University. This work was supported in part by NSF Grant CCF-1909538.} \and Richard Cole \footnotemark[1]}
\begin{document}

\maketitle

\begin{abstract}

To guarantee all agents are matched in general, the classic Deferred Acceptance algorithm 
needs complete preference lists.
In practice, preference lists are short, yet stable matching still works well.
This raises two questions:
\begin{itemize} 
    \item Why does it work well? 
    \item Which proposals should agents include in their preference lists?
\end{itemize}

We study these questions in a model, introduced by Lee~\cite{Lee16}, with preferences based on correlated cardinal utilities: these utilities are based on common public ratings of each agent together with individual private adjustments.
Lee showed that for suitable utility functions, in large markets, with high probability, for most agents, all stable matchings yield similar valued utilities.
By means of a new analysis, we strengthen Lee's result, showing that in large markets, with high probability, for \emph{all} but the agents with the lowest public ratings,
all stable matchings yield similar valued utilities. We can then deduce that for \emph{all} but the agents with the lowest public ratings, each agent has an easily identified length $O(\log n)$ preference list that includes all of its stable matches,
addressing the second question above.
We note that this identification uses an initial communication phase.

We extend these results to settings where
 the two sides have unequal numbers of agents,  to many-to-one settings, e.g.\ employers and workers, and we also show the existence of an $\epsilon$-Bayes-Nash equilibrium in which every agent makes relatively few proposals. These results all rely on a new technique for sidestepping the conditioning between the tentative matching events that occur over the course of a run of the Deferred Acceptance algorithm.
 We complement these theoretical results with an experimental study. 
\end{abstract}

\section{Introduction}
\label{sec::intro}

Consider a doctor applying for residency positions.
Where should she apply? To the very top programs for her specialty?
Or to those where she believes she has a reasonable chance of success (if these differ)?
And if the latter, how does she identify them?
%
We study these questions in the context of Gale and Shapley's deferred acceptance (DA) algorithm~\cite{GS}.
It is well-known that in DA the optimal strategy for the proposing side is to list their choices in order of preference. However, this does not address which choices to list.

The DA algorithm is widely used to compute matchings in real-world applications:
the National Residency Matching Program (NRMP), which matches future residents to hospital programs~\cite{RP99}; university admissions programs which match students to programs, e.g.\ in Chile~\cite{RLPC21},
school choice programs, e.g.\ for placement in New York City's high schools~\cite{APR05},
the Israeli psychology Masters match~\cite{HRS17}, and no doubt many others (e.g.~\cite{GNKR19}). 

Recall that each agent provides the mechanism a list of its possible matches in preference order,
including the possibility of ``no match'' as one of its preferences.
These mechanisms promise that the output will be a stable matching with respect to the submitted preference lists.
In practice, preference lists are relatively short.
This may be directly imposed by the mechanism or could be a reflection of the
costs---for example, in time or money---of determining these preferences. Note that a short preference list is implicitly stating that the next preference after the listed
ones is ``no match''.

Thus it is important to understand the impact of short preference lists. Roth and Peranson observed that the NRMP data showed that preference lists were short compared to the number of programs and that these preferences yielded a single stable partner for most
participants; 
we note that this single stable partner could be the ``no match'' choice, and in fact this is the outcome 
for a constant fraction of the participants.
They also confirmed this theoretically for the simplest model of uncorrelated random preferences; namely that with the preference lists truncated to the top O(1) preferences, almost all agents have a unique stable partner.
Subsequently, in~\cite{IM15} the same result was obtained in the more general popularity model which allows for correlations among different agents' preferences; in their model, the first side---men---can have arbitrary preferences; on the second side---women---preferences are selected by weighted random choices, the weights representing the ``popularity'' of the different choices. These results were further extended by Kojima and Parthak
in~\cite{KP09}.

The popularity model does not capture behavior in settings 
where bounds on the number of proposals lead to proposals
being made to plausible partners, i.e.\ partners with
whom one has a realistic chance of matching.
One way to capture such settings is by way of tiers~\cite{ABKS19},
also known as block correlation~\cite{CKN13}.
Here agents on each side are partitioned into tiers, with all agents in a higher tier preferred to agents in a lower tier, and with uniformly random preferences within a tier. Tiers on the two sides may have different sizes.
If we assign tiers successive intervals of ranks equal to their size,
then, in any stable matching, the only matches will be between agents in tiers whose rank intervals overlap.

A more nuanced way of achieving these types of preferences 
bases agent preferences on cardinal utilities; for each side, these utilities are functions of an underlying common assessment of the other side, together with idiosyncratic individual adjustments for the agents on the other side.
These include the separable utilities defined by Ashlagi, Braverman, Kanoria and Shi
in~\cite{ABKS19},
and another class of utilities
introduced by Lee
in~\cite{Lee16}. This last model will be the focus of our study. 

To make this more concrete, we review a simple special case of Lee's model, the \emph{linear
separable model}. Suppose that there are $n$ men and $n$ women seeking to match with each other.
Each man $m$ has a public rating $r_m$, a uniform random draw from $[0,1]$. These ratings can be viewed
as the women's joint common assessment of the men. In addition, each woman $w$ has an individual adjustment, which we call a score,
$s_w(m)$ for man $m$, again a uniform random draw from $[0,1]$. All the draws are independent. Woman $w$'s utility for man $m$ is given
by $\tfrac 12 [r_m + s_w(m)]$; her full preference list has the men in decreasing utility order. The men's utilities are defined similarly.

Lee stated that rather than being assumed, short preference lists should arise from the model; this appears to have been a motivation for the model he introduced. A natural first step would be to show that for some or all stable matchings, the utility of each agent can be well-predicted, for this would then allow the agents to limit themselves to the proposals achieving such a utility. Lee proved an approximate version of this statement, namely
that with high probability (w.h.p., for short) most agents obtain utility within a small $\eps$ of an easily-computed individual benchmark. However, this does not imply that agents can restrict their proposals to a reduced utility range. (See the paragraph preceding Definition~\ref{defn::loss_linear} for the specification of the benchmarks.)

Our work seeks to resolve this issue.
We obtain the following results.
Note that in these results, when we refer to the bottommost agents, we mean when ordered by decreasing public rating. 
Also, we let the term loss mean the difference between an agent's benchmark utility and their achieved utility.
\begin{enumerate}
\item
We show that in the linearly separable model, for any constant $c>0$, with probability $1-1/n^c$, in every stable matching, apart from a sub-constant $\sigma$ fraction of the bottommost agents, \emph{all} the other agents obtain utility equal to an easily-computed individual benchmark $\pm\eps$, where $\eps$ is also sub-constant.

We show that both $\sigma,\eps=\widetilde{\Theta}(n^{-1/3})$.\footnote{The $\widetilde{\Theta}(\cdot)$ notation means up to a poly-logarithmic term; here $\sigma,\eps=\Theta((n/\ln n)^{-1/3})$.}
As we will see, this implies, w.h.p., that for all the agents other than the bottommost $\sigma$ fraction, 
each agent has $\Theta(\ln n)$ possible edges
(proposals) that could be in any stable matching, 
namely the proposals that provide both agents utility within $\eps$ of their benchmark.
Furthermore, we show our bound is tight: with fairly high probability, there is no matching, let alone stable matching, providing every agent a partner if the values of $\eps$ and $\sigma$ are reduced by a suitable constant factor. 

An interesting consequence of this 
lower bound on the agents' utilities is that the agents can readily identify a moderate sized subset of the edge set to which they can safely restrict their applications. More precisely, any woman $w$ outside the bottommost $\sigma$ fraction, knowing only her own public rating, the public ratings of the men, and her own private score for each man, can determine a preference list of length $\widetilde{\Theta}(n^{1/3})$ which, w.h.p, will yield the same result 
as her true full-length list. 
Our analysis also shows that if $w$ obtained the men's private scores for these proposals, 
then w.h.p.\ she could safely limit herself to a length $O(\ln n)$ preference list. 
\item
The above bounds apply not only to the linearly separable model, but to a significantly more general bounded derivative model (in which derivatives of the utility functions are bounded).
\item
The result also immediately extends to settings with unequal numbers of men and women.
Essentially, our analysis shows that the loss for an agent is small if there is a $\sigma$ fraction of agents of lower rank on the opposite side.
Thus even on the longer side, w.h.p., the topmost $n(1-\sigma)$ agents all obtain utility close to their benchmark, where $n$ is the size of the shorter side.
This limits the ``stark effect of competition''~\cite{AKL17}---namely that the agents on the longer side are significantly worse off---to a lower portion of the agents on the longer side.
\item
The result extends to the many-to-one setting, in which agents on one side seek multiple matches.
Our results are given w.r.t.\ a parameter $d$, the number of matches that each agent on the ``many'' side desires. For simplicity, we
assume this parameter is the same for all these agents.
In fact, we analyze a more general many-to-many setting.
\item
A weaker result with arbitrarily small $\sigma, \eps=\Theta(1)$ holds when
there is no restriction on the derivatives of the utility functions, which we call the general values model. Again, we show this bound cannot be improved in general.
This setting is essentially the general setting considered by Lee~\cite{Lee16}. He had shown there was a $\sigma$ fraction of agents who might suffer larger losses; our bound identifies
this $\sigma$ fraction of agents
as 
the bottommost agents.
\item In the bounded derivative model, with slightly stronger constraints on the derivatives, we also show the existence of an $\epsilon$-Bayes-Nash equilibrium in which no agent proposes more than $O(\ln^2 n)$ times and all but the bottommost $O((\ln n/n)^{1/3})$ fraction of the agents make only the $O(\ln n)$ proposals identified in (1) above. Here $\eps=\Theta(\ln n/n^{1/3})$.
\end{enumerate}

These results all follow from a lemma showing that, w.h.p., each non-bottommost agent has at most a small loss.
In turn, the proof of this lemma relies on a new technique which sidesteps the conditioning inherent to runs of DA in these settings.

\paragraph*{Experimental results} Much prior work has been concerned with preference lists that have a constant bound on their length.
For moderate values of $n$, say $n\in[10^3,10^6]$, $\ln n$ is quite small, so our $\Theta(\ln n)$ bound may or may not be sufficiently small in practice for this range of $n$.
What matters are the actual constants hidden by
the $\Theta$ notation, which our analysis
does not fully determine.
To help resolve this, we conducted a variety of simulation experiments.

We have also considered how to select the agents to include in the preference lists, when seeking to maintain a constant bound on their lengths, namely a bound that, for the values of $n$ we considered, was smaller than the $\Theta(\ln n)$ bound determined by the above simulations; again, our investigation
was experimental.

\paragraph*{Other Related work}
The random preference model was introduced by Knuth~\cite{Knuth76} (for a version in English see~\cite{Knuth96}), and subsequently extensively analyzed~\cite{Pittel89,KMRP90,Pittel92,Mertens05,PSV07,Pittel18,Kupfer20}.
In this model, each agent's preferences are an independent uniform random permutation of the agents on the other side.
An important observation was that when running the DA algorithm, the proposing side 
obtained a match of rank $\Theta(\ln n)$ on the average,
while on the other side the matches had rank $\Theta(n/\ln n)$.

A recent and unexpected observation in~\cite{AKL17} was the
``stark effect of competition'': that in the random preferences model the short side, whether it was the proposing side or not, was the one to enjoy the $\Theta(\ln n)$ rank matches.
Subsequent work showed that
this effect disappeared with short preference lists
in a natural modification of the random preferences model~\cite{KMQ21}.
Our work suggests yet another explanation for why this effect may not be present: it does not require that short preference lists be imposed as an external constraint, but rather that the preference model generates few edges that might ever be in a stable matching.

The number of edges present in any stable matching has also been examined for a variety of settings. When preference lists are uniform the expected number of stable pairs is $\Theta(n\ln n)$~\cite{Pittel92}; when they are arbitrary on one side and uniform on the other side, the
expected number is $O(n\ln n)$~\cite{KMRP90}. This result
continues to hold when preference lists
are arbitrary on the men’s side and are generated from general popularities on the women’s side~\cite{GMM19}. Our analysis shows that in the linear separable model (and more generally in the bounded derivative setting) the expected number of stable pairs is also $O(n\ln n)$.

Another important issue is the amount of communication needed to identify who to place on one's preference lists
when they have bounded length. In general, the cost is $\Omega(n)$ per agent (in an $n$ agent market)~\cite{GNOR15},
but in the already-mentioned separable model of Ashlagi et al.\ \cite{ABKS19} this improves to $\widetilde{O}(\sqrt{n})$
given some additional constraints, and further improves to $O(\ln^4n)$ in a tiered separable market~\cite{ABKS19}.
We note that for the bounded derivatives setting, with high probability, the communication cost will be
$O(n^{1/3}\ln^{2/3}n)$ for all agents except the bottommost $\Theta(n^{2/3}\ln^{1/3} n)$,
for whom the cost can reach $O(n^{2/3}\ln^{1/3} n)$.

Another approach to selecting which universities to apply to was considered by Shorrer who devised a dynamic program to compute the optimal choices for students assuming universities had a common ranking of students~\cite{Shorrer19}.

\paragraph*{Roadmap} 
In Section~\ref{sec::prelim}
we review some standard material.
In Section~\ref{sec::linear} we state our main result in two parts: Theorem~\ref{thm::basic-linear-result}, which bounds the losses in the setting of the linear model, and 
Theorem~\ref{lem::acceptable-suffice}, which shows it suffices to limit preference lists
to a small set of edges. We prove these theorems in Sections~\ref{sec::sketch-linear-result} and~\ref{sec::fewer-props}, respectively. We also
present some numerical simulations for the linear separable model in Section~\ref{sec:simulations}
We conclude 
with a brief discussion of open problems in Section~\ref{sec:discussion}.

Following this, in the appendices, we formally state and prove all the other results alluded to in the introduction and we also
present further numerical simulations for the linear separable model.
A complete summary of their content 
is given in Appendix~\ref{sec::appendix-overview}.

\hide{

The analysis for the lower bounds in more general models is based on the key ideas involved in proving Theorems \ref{thm::basic-linear-result} and \ref{lem::acceptable-suffice}. The upper bounds require different, though relatively simple arguments \RJC{I would not call the lower bounds in the general model ``quite simple''}. The $\epsilon$-Bayes-Nash Equilibrium result uses ideas from our analysis for Theorem \ref{thm::basic-linear-result} as well as several other ideas, resulting in quite an involved proof. The appendices conclude by presenting numerical simulations for the linear separable model.  \IA{The second paragraph shouldn't be in the roadmap but I think we should have it somewhere within the first 15 pages.} \RJC{Maybe move some of this to Appendix A.}
}

\hide{In Section~\ref{sec::prelim}
we review some standard material.
In Section~\ref{sec::linear} we state our main result in the setting of the linear model and sketch its proof in Section~\ref{sec::sketch-linear-result}. 
We conclude the first ten pages with a brief discussion of open problems in Section~\ref{sec:discussion}.
 
Next, 
in Section~\ref{sec::utility_models}, we define some more general utility models. In Section~\ref{sec::results} we state our results for these models,
and in Section~\ref{sec::analysis} we sketch their proofs.
In Section~\ref{sec:simulations} we present numerical simulations for the linear separable model. We end the second ten pages with a discussion of other related work.

Finally, we present the full proofs of our results in the appendix.}
 
\section{Preliminaries} \label{sec::prelim}

\subsection{Stable Matching and the Deferred Acceptance (DA) Algorithm}

Let $M$ be a set of $n$ men and $W$ a set of $n$ women. Each man $m$ has an ordered list of women that represents his preferences, i.e.\ if a woman $w$ comes before a woman $w'$ in $m$'s list, then $m$ would prefer matching with $w$ rather than $w'$. 
The position of a woman $w$ in this list is called $m$'s ranking of $w$. Similarly each woman $w$ has a ranking of her preferred men\footnote{Throughout this paper, we assume that each man $m$ (woman $w$) ranks all the possible women (men), i.e.\ $m$’s ($w$’s) preference list is complete.}. The stable matching task is to pair (match) the men and women in such a way that no two people prefer each other to their assigned partners. More formally:

\begin{defn}[Matching]
A matching is a pairing of the agents in $M$ with the agents in $W$.
It comprises a bijective function $\mu$ from $M$ to $W$, and its inverse
$\nu=\mu^{-1}$, which is a bijective function from $W$ to $M$. 
\end{defn}

\begin{defn}[Blocking pair]
A matching $\mu$ has a blocking pair $(m,w)$ if and only if: 
\begin{enumerate}
    \item $m$ and $w$ are not matched: $\mu(m)\neq w$.
    \item $m$ prefers $w$ to his current match $\mu(m)$.
    \item $w$ prefers $m$ to her current match $\nu(w)$. 
\end{enumerate}
\end{defn}

\begin{defn}[Stable matching]
A matching $\mu$ is stable if it has no blocking pair.
\end{defn}
\begin{algorithm}[t]
\SetAlgoNoLine
Initially, all the men and women are unmatched.\\
\While{some woman $w$ with a non-empty preference list is unmatched}{ 
   let $m$ be the first man on her preference list\;
   \If{$m$ is currently unmatched}
      {tentatively match $w$ to $m$.}
      {\eIf{$m$ is currently matched to $w'$, and $m$ prefers $w$ to $w'$}
      {make $w'$ unmatched and tentatively match $w$ to $m$.}
      {remove $m$ from $w$'s preference list.}
    }
 }
\caption{\textsf{Woman Proposing Deferred Acceptance (DA) Algorithm.}}
\label{alg:DA}
\end{algorithm}

Gale and Shapley \cite{GS} proposed the seminal deferred acceptance (DA) algorithm for the stable matching problem. We present the woman-proposing DA algorithm (Algorithm $1$); the man-proposing DA is symmetric.
The following facts about the DA algorithm are well known. We state them here without proof and we shall use them freely in our analysis. 

\begin{obs}
\hspace{0.0001in}
\begin{enumerate}
    \item \label{obs::DA_terminates_in stable_match}
DA terminates and outputs a stable matching.
\item\label{obs::DA_invariant_of_proposal_order}
The stable matching generated by DA is independent of the order in which the unmatched agents on the proposing side are processed.
\item\label{obs::DA_optimality}
Woman-proposing DA is woman-optimal, i.e.\ each woman is matched with the best partner she could be matched with in any stable matching.
\item \label{obs::DA_pessimality}
Woman-proposing DA is man-pessimal, i.e.\ each man is matched with the worst partner he could be matched with in any stable matching.
\end{enumerate}
\end{obs}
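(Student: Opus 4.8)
These are the classical Gale--Shapley facts, and I would establish part~1 directly and then derive parts~2--4 from a single invariant about the run of DA. For termination, note that once a man is tentatively matched he stays matched forever, and the partner he is matched to only moves up in his preference order over the course of the algorithm. Consequently, in Algorithm~\ref{alg:DA} as written, each woman proposes to each man at most twice --- once when she is tentatively accepted (and possibly later bumped), and at most once more, after which the man necessarily rejects her because his current partner is someone he now weakly prefers to her. Hence there are $O(n^2)$ iterations of the while loop and the algorithm halts. For stability, suppose the output $\mu$ had a blocking pair $(m,w)$. Since $w$ prefers $m$ to $\nu(w)=\mu^{-1}(w)$ and proposes in decreasing preference order, $w$ must have proposed to $m$ at some earlier point; from that moment on $m$ is matched, initially to $w$ or to someone he prefers to $w$, and his partner only improves, so $\mu(m)$ is someone $m$ weakly prefers to $w$ --- contradicting that $m$ prefers $w$ to $\mu(m)$.

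For parts~2 and~3, call a pair $(m,w)$ \emph{achievable} if some stable matching pairs them, and prove the invariant that, throughout the run of DA, whenever a man $m$ is removed from a woman $w$'s list, the pair $(m,w)$ is not achievable. I would argue by contradiction, examining the first rejection that violates the invariant: say $m$ is removed from $w$'s list because $m$ is (or becomes) tentatively matched to some $w'$ whom $m$ prefers to $w$, and let $\mu'$ be a stable matching with $\mu'(m)=w$. Because this is the first violating rejection, $w'$ has so far been rejected only by men who are not achievable for her, hence she weakly prefers $m$ to $\mu'(w')$; combined with $m$ preferring $w'$ to $w=\mu'(m)$, this makes $(m,w')$ a blocking pair of $\mu'$, a contradiction. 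Given the invariant, when DA halts each woman $w$ is matched to the first man remaining on her list, i.e.\ to the best man who has not rejected her; since she has been rejected only by non-achievable men and the output is itself a stable matching (part~1), this is precisely her best achievable partner. That partner depends only on the achievability relation, not on the processing order, which yields both part~2 and part~3.

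For part~4, let $\mu$ be the woman-optimal stable matching output by DA, and suppose some man $m$ were strictly better off in $\mu$ than in some stable matching $\mu'$; that is, $m$ prefers $\mu(m)=w$ to $\mu'(m)$. In $\mu$, $w$ is matched to her best achievable partner $m$, so in $\mu'$ she is matched to someone she weakly likes no more than $m$, and since $\mu'(m)\neq w$ we have that $w$ strictly prefers $m$ to $\mu'(w)$. Then $m$ prefers $w$ to $\mu'(m)$ and $w$ prefers $m$ to $\mu'(w)$, so $(m,w)$ blocks $\mu'$ --- a contradiction. Hence no stable matching is worse for $m$ than $\mu$, and since $\mu$ is stable, $\mu(m)$ is $m$'s worst stable partner.

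The only delicate point --- the main obstacle, such as it is --- is getting the first-violation step right: one must observe that at the instant of the first violating rejection the bumping woman $w'$ has herself suffered only ``legitimate'' (non-achievable) rejections so far, which is exactly what licenses comparing $m$ with $\mu'(w')$ in her preference order. Once that is in place, the rest is routine bookkeeping.
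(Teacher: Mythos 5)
Your proof is correct: part~1 via the monotone-improvement/termination argument (with the careful observation that, in this particular pseudocode, a woman may propose to the same man twice before he is struck from her list), parts~2--3 via the standard ``no achievable pair is ever rejected'' invariant proved at the first violating rejection, and part~4 by the usual blocking-pair argument from woman-optimality. The paper itself states these four facts explicitly \emph{without proof}, as classical properties of Gale--Shapley deferred acceptance, so there is no in-paper argument to compare against; your write-up is the standard textbook derivation and fills that gap correctly.
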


\subsection{Useful notation and definitions}

There are $n$ men and $n$ women.
In all of our models, each man $m$ has a utility $U_{m,w}$ for the woman $w$, and each woman $w$ has a utility $V_{m,w}$ for the man $m$. These utilities are defined as
\begin{align*}
    &U_{m,w}=U(r_w,s_m(w)), \text{ and}\\
    &V_{m,w}=V(r_m,s_w(m)),
\end{align*}
where $r_m$ and $r_w$ are common public ratings, $s_m(w)$ and $s_w(m)$ are private scores specific to the pair $(m,w)$, and $U(\cdot,\cdot)$ and $V(\cdot,\cdot)$ are continuous and strictly increasing functions from $\mathbb{R}^2_+$ to $\mathbb{R}_+$.
The ratings are independent uniform draws from $[0,1]$ as are the scores.

In the \emph{Linear Separable Model},
each man
$m$ assigns each woman $w$ a utility of $U_{m,w}=\lambda\cdot r_w+(1-\lambda)\cdot s_m(w)$, where $0< \lambda < 1$ is a constant.
The women's utilities for the men are defined analogously as $V_{m,w}=\lambda\cdot r_m+(1-\lambda)\cdot s_w(m)$.
All our experiments are for this model.

We let $\{m_1, m_2, \ldots, m_n\}$ be the men in descending order of their public ratings and $\{w_1, w_2, \ldots, w_n\}$ be a similar ordering of the women. 
We say that $m_i$ has public rank $i$, or rank $i$ for short,
and similarly for $w_i$.
We also say that $m_i$ and $w_i$ are \emph{aligned}.
In addition, we often want to identify the men or women in an interval of public ratings.
Accordingly, we define $M(r,r')$ to be the set of men with public ratings in the range $(r,r')$,
and $M[r,r']$ to be the set with public ratings in the range $[r,r']$; we also use the notation
$M(r,r']$ and $M[r,r')$ to identify the men with ratings in the corresponding semi-open intervals.
We use an analogous notation, with $W$ replacing $M$, to refer to the corresponding sets of women.

We will be comparing the achieved utilities in stable matchings to the following benchmarks: the rank $i$ man has as benchmark $U(r_{w_i},1)$,
the utility he would obtain from the combination of the rank $i$ woman's public rating and the highest possible private score;
and similarly for the women.
Based on this we define the loss an agent faces as follows.

\begin{defn} [Loss]
\label{defn::loss_linear} 
Suppose man $m$ and woman $w$ both have rank $i$.
The \emph{loss} $m$ sustains from a match of utility $u$ is defined to be $U(r_w,1) - u$.
The loss for women is defined analogously.
\end{defn}

In our analysis we will consider a complete bipartite graph whose two sets of vertices correspond to the men and women, respectively. For each man $m$ and woman $w$, we view the possible matched pair $(m,w)$ as an edge in this graph. Thus, throughout this work, we will often refer to edges being proposed, as well as edges satisfying various conditions.

\hide{
There are $n$ men and $n$ women. 
Each man $m$ has a public rating $r_m$ and each women $w$ has a public rating $w$. These are uniform draws from $[0,1]$.

The utilities of the agents for each other are based on the public rankings and individual adjustments called scores.
Each woman $w$ has a score $s_w(m)$ for each man $m$;  likewise, each man $m$ has a score $s_m(w)$ for each woman $w$.
The scores are draws from $[0,1]$.}

\section{Upper Bound in The Linear Separable Model}
\label{sec::linear}

To illustrate our proof technique for deriving upper bounds, we begin by stating and proving our upper bound result for the special case of the linear separable model with $\lambda= \tfrac 12$. 

\begin{theorem}\label{thm::basic-linear-result}
In the linear separable model with $\lambda=1/2$, when there are $n$ men and $n$ women,
for any given constant $c>0$, for large enough $n$, with probability at least $1-n^{-c}$, in every stable matching, for every $i$, with $r_{w_i}\ge \sigmabar\triangleq 3\Lbar/2$, agent $m_i$ suffers a loss of at most $\Lbar$, where $\Lbar=(16(c+2)\ln n/n)^{1/3}$,
and similarly for the agents $w_i$.
\end{theorem}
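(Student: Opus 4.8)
The plan rests on the extremal structure of the stable lattice together with the man/woman symmetry of the model. By Observation parts~\ref{obs::DA_optimality}--\ref{obs::DA_pessimality}, the woman-proposing DA output is simultaneously woman-optimal and man-pessimal, so each man attains his \emph{largest} possible loss over all stable matchings there; by symmetry the same analysis applied to the man-proposing run controls the women. Hence it suffices to show: for a fixed man $m_i$ with $r_{w_i}\ge\sigmabar$, with probability at least $1-n^{-(c+1)}$ the woman-proposing DA matches $m_i$ with loss at most $\Lbar$. A union bound over the $\le 2n$ choices of agent and side then yields the theorem.

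Write $\rho=r_{w_i}$, and call a woman $w$ \emph{good for $m_i$} if matching her to $m_i$ would cost him loss at most $\Lbar$, i.e.\ $r_w+s_{m_i}(w)\ge \rho+1-2\Lbar$; let $\hat G_i$ be the set of such women. Two easy observations frame the problem. (i) In the woman-proposing DA each man always holds the best-for-him woman among those who have ever proposed to him, and $m_i$ is matched at termination (complete lists, equal sides), so $m_i$ ends with loss at most $\Lbar$ \emph{iff some woman of $\hat G_i$ proposes to $m_i$ during the run}. (ii) Restricting to women with public rating in $[\rho-\Lbar,\rho]$ (a nonempty sub-interval of $[0,1]$ since $\rho\ge\sigmabar=\tfrac32\Lbar$), the event $s_{m_i}(w)\ge\rho+1-2\Lbar-r_w$ has probability $\Theta(\Lbar)$ for each such $w$, so a Chernoff bound gives $|\hat G_i|=\Theta(n\Lbar^2)=\Theta(n^{1/3}(\ln n)^{2/3})\gg\ln n$ with probability $1-n^{-\omega(1)}$.

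The core of the proof — and the step I expect to be the main obstacle — is to show that w.h.p.\ at least one woman of $\hat G_i$ actually proposes to $m_i$. The difficulty is the conditioning among the tentative matches: whether a fixed woman $w$ is rejected far enough down her list to try $m_i$ depends on the whole history of the run, which couples her with every other woman's randomness. The technique I would use is a lazy revelation of the scores $\{s_w(m_i)\}_w$: reveal all other randomness first, which fixes every woman's preference order over the men other than $m_i$ and hence the outcome of every proposal not involving $m_i$; then, each time a woman $w$ is about to propose, reveal only enough of $s_w(m_i)$ to decide whether her next proposal is to $m_i$. Conditioned on the history, $s_w(m_i)$ is still uniform on the sub-interval of $[0,1]$ consistent with $w$ not yet having proposed to $m_i$, so this is an independent biased coin whose bias equals the relevant utility gap in $w$'s list. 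For $w\in\hat G_i$ one shows the chance that $w$ ever reaches $m_i$ is $\Omega(\Lbar)$: a comparatively easy one-sided bound (see below) shows that in the run $w$, whose rating is just below $\rho$, settles on a man of rating at most $\rho+O(\Lbar)$, which places $m_i$'s slot high enough in $w$'s list that she is rejected past it with probability $\Omega(\Lbar)$. Since $|\hat G_i|=\Theta(n\Lbar^2)$, the expected number of women of $\hat G_i$ reaching $m_i$ is $\Theta(n\Lbar^3)=\Theta((c+2)\ln n)$, and because the lazy coupling decorrelates these events a product bound over them gives that the probability that none occurs is $\exp(-\Omega((c+2)\ln n))\le n^{-(c+1)}$, the constant $16$ in $\Lbar$ being chosen to make the exponent large enough.

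Two points require extra care. First, bounding the chance by $\Omega(\Lbar)$ uses that $w$'s partner does not have rating far above $\rho$; this is a one-sided ``no agent matches far above its rank'' statement, strictly weaker than the theorem (it is a lower bound on an agent's rank-of-partner, obtainable by a counting/stability argument on the slightly-higher-rated agents, rather than a bound on utility), which I would establish separately. Second, one must control order statistics: w.h.p.\ $|r_{m_i}-r_{w_i}|=O(\sqrt{\ln n/n})=o(\Lbar)$ for all $i$ simultaneously, so throughout the argument $m_i$'s own rating may be treated as $\rho\pm o(\Lbar)$; the constant-factor slack between $\sigmabar=\tfrac32\Lbar$ and $\Lbar$ is what absorbs these lower-order terms and the loss in the Chernoff bounds. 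With these in hand the union bound over agents and sides finishes the proof.
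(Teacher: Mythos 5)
Your skeleton matches the paper's (man-pessimality of woman-proposing DA plus symmetry and a union bound; $\Theta(n\Lbar^3)=\Theta(\ln n)$ expected good proposals to $m_i$; Chernoff), but the two central steps are not actually established, and the first is stated in a form that is false. You claim that each $w\in\hat G_i$ proposes to $m_i$ with probability $\Omega(\Lbar)$ because her final partner has public rating at most $\rho+O(\Lbar)$. A bound on the partner's \emph{rating} does not bound the \emph{utility} $w$ obtains: a woman matched to a man of rating $\rho+c\Lbar$ for whom her private score is near $1$ has utility exceeding $\tfrac12(r_{m_i}+1)$, the best $m_i$ could ever offer her, so she proposes to $m_i$ with probability $0$. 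What you actually need is that $w$'s achieved utility is at least $\Omega(\Lbar)$ below $\tfrac12(r_{m_i}+1)$ --- i.e.\ an upper bound on how well women do relative to their benchmark --- and for an individual woman that is essentially the statement being proved; the argument is circular. The paper escapes this by never arguing per-woman: it counts. In the double-cut DA (cut at $m_i$ and at rating $r_{m_i}-\alpha$), the $i+\ell_i$ women of $W[r_{w_i}-3\alpha,1]$ can place at most $i+h_i-1$ of themselves with men of $M[r_{m_i}-\alpha,1]\setminus\{m_i\}$, so at least $\ell_i-h_i+1\ge\alpha n$ of them (whichever they turn out to be) must exhaust their proposals above the utility threshold $V(r_{m_i}-\alpha,1)$ unless they propose to $m_i$, and each such woman's edge to $m_i$ clears that threshold with probability $\beta=\Omega(\Lbar)$ over her private score alone. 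No claim about any fixed woman's outcome is needed.

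The second gap is the assertion that ``the lazy coupling decorrelates these events [so] a product bound over them gives'' the failure probability. Lazily revealing $s_w(m_i)$ makes $w$'s next coin conditionally independent of the past, but the events ``$w$ proposes to $m_i$'' for different $w$ remain entangled through the run itself: which women are pushed far enough down their lists to reach $m_i$'s utility level depends on who has already been tentatively matched where. This is exactly the conditioning the paper's main technical device addresses --- the computation tree $T_x$ and the artificial blocking mechanism $\mathcal M$ of Claims~\ref{clm::many-man-acceptable-props} and~\ref{clm::many-props-despite-blocking}, which show that the adversarially worst placement of the $i+h_i-1$ ``matches'' is to remove them all up front, reducing the problem to $\alpha n$ genuinely independent $\beta$-coins. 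You correctly identify this as the main obstacle, but the proposed resolution does not overcome it; as written, the proof does not go through without importing both the surplus-counting argument and the blocking argument.
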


In words, w.h.p., all but the bottommost agents (those whose aligned agents have 
public rating less than $\sigmabar$) suffer a loss of no more than $\Lbar$. This is a special case of our basic upper bound for the bounded utilities model (Theorem \ref{thm:basic-bdd-deriv-result}).

One of our goals is to be able to limit the number of proposals the proposing side needs to make.
We identify the edges that could be in some stable matching, calling them acceptable edges.
Our definition is stated generally so that it covers all our results; accordingly we replace the terms $\overline{L}$ and $\overline{\sigma}$ in Theorem~\ref{thm::basic-linear-result} with parameters $L$ and $\sigma$.

\begin{defn} [Acceptable edges] 
Let $0<\sigma <1$ and $0<L<1$ be two parameters.
An edge $(m_i,w_j)$
is $(L,\sigma)$-\emph{man-acceptable} either if it provides $m_i$ utility at least $U(r_{w_i},1)-L$, or if $m_i\in M[0,\sigma)$.
The definition of $(L,\sigma)$-woman-acceptable is symmetric. Finally, $(m_i,w_j)$ is $(L,\sigma)$-\emph{acceptable} if it is both $(L,\sigma)$-man and $(L,\sigma)$-woman-acceptable.
\end{defn}

To prove our various results, we choose $L$ and $\sigma$ so that w.h.p.\ the edges in every stable matching are $(L,\sigma)$-\emph{acceptable}.
We call this high probability event $\mathcal E$.
We will show that if $\mathcal E$ occurs, then running DA on the set of acceptable edges, or any superset of the acceptable edges
obtained via loss thresholds, 
produces the same stable matching as running DA on the full set of edges.

\begin{theorem}\label{lem::acceptable-suffice}
If $\mathcal E$ occurs, then running woman-proposing DA with the edge set restricted to the acceptable edges or to any superset of the acceptable edges obtained via loss thresholds (including the full edge set) result in the same stable matching.

\end{theorem}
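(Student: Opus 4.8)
The goal is to show that restricting the edge set to the acceptable edges (or any loss-threshold superset of them) does not change the output of woman-proposing DA, provided $\mathcal E$ holds. Let $\mu$ denote the woman-optimal stable matching on the full edge set, and let $\mu'$ denote the matching output by DA on a restricted edge set $E'$, where $E'$ contains all the acceptable edges. The plan is to prove two things: (i) $\mu$ is a valid, stable matching with respect to $E'$; and (ii) $\mu$ is in fact the woman-optimal stable matching with respect to $E'$. Since DA is woman-optimal (Observation~\ref{obs::DA_optimality}), (ii) gives $\mu' = \mu$, which is the claim.

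\textbf{Step 1: every edge of $\mu$ survives the restriction.} By definition of $\mathcal E$, every edge in every stable matching is $(L,\sigma)$-acceptable; in particular every edge of $\mu$ is acceptable, hence lies in $E'$. So $\mu$ is a legal matching on $E'$, and it uses only edges present in $E'$.

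\textbf{Step 2: $\mu$ is stable with respect to $E'$.} Stability is a condition quantified over pairs $(m,w)$; removing edges can only remove potential blocking pairs, never create them. More carefully, a pair $(m,w)$ that blocks $\mu$ in the graph $E'$ would in particular be an edge of $E' \subseteq$ (full edge set) with $m$ preferring $w$ to $\mu(m)$ and $w$ preferring $m$ to $\mu(w)$ — but then $(m,w)$ would already block $\mu$ in the full graph, contradicting stability of $\mu$ there. (Here we use that preference orders on $E'$ are the restrictions of the full preference orders, and that both $\mu(m)$ and $\mu(w)$ are defined since $\mu$ matches everyone.) Hence $\mu$ has no blocking pair in $E'$, so it is stable on $E'$.

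\textbf{Step 3: $\mu$ is woman-optimal on $E'$, so $\mu' = \mu$.} Run woman-proposing DA on $E'$ to obtain $\mu'$; by Observation~\ref{obs::DA_optimality} (applied to the market with edge set $E'$), $\mu'$ is woman-optimal among stable matchings of $E'$, i.e. each woman weakly prefers her $\mu'$-partner to her partner in any stable matching of $E'$ — in particular to her $\mu$-partner, by Step 2. Conversely, I claim each woman weakly prefers her $\mu$-partner to her $\mu'$-partner. For this, observe that $\mu'$ is also a stable matching \emph{with respect to the full edge set}: any blocking pair of $\mu'$ in the full graph is either an edge of $E'$ — impossible since $\mu'$ is stable on $E'$ — or an edge $(m_i, w_j) \notin E'$; but an edge outside $E'$ is non-acceptable, meaning in particular it fails to be man-acceptable or woman-acceptable, and I will argue such an edge cannot block $\mu'$. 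The cleanest way to close this is to use the loss bounds guaranteed by $\mathcal E$: for a non-bottommost woman $w_j$, her $\mu$-utility already exceeds her benchmark minus $L$, and any edge giving her more than that is (woman-)acceptable and hence in $E'$; so no edge outside $E'$ can improve on what she gets in a stable matching — a symmetric statement holds for men — which shows no edge outside $E'$ can block $\mu'$ (one of its two endpoints would have to strictly improve, contradicting acceptability of the blocking edge). Thus $\mu'$ is stable on the full edge set, so by woman-optimality of $\mu$ on the full graph every woman weakly prefers $\mu$ to $\mu'$. Combining the two directions, every woman is indifferent between $\mu$ and $\mu'$; since preferences are strict (utilities are drawn from a continuum, so ties occur with probability $0$, and anyway $\mathcal E$ can be taken to include the no-ties event), $\mu = \mu'$.

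\textbf{Main obstacle.} Steps 1 and 2 are routine monotonicity-of-stability-under-edge-deletion arguments. The delicate point is Step 3: showing that the DA output on the restricted graph is still stable with respect to the \emph{full} graph, i.e. that discarding non-acceptable edges never throws away an edge that would have been needed. This is exactly where the event $\mathcal E$ and the structure of the acceptability definition (an edge is acceptable for an agent either if it clears that agent's loss threshold or if the agent is bottommost) do the work: a non-acceptable edge is, by definition, below the loss threshold for some non-bottommost endpoint, and $\mathcal E$ guarantees that endpoint already does at least that well in any stable matching, so the edge can never be part of a blocking pair. One should also check the "loss-threshold superset" generality: the same argument goes through verbatim for any $E'$ sandwiched between the acceptable edges and the full edge set that is itself defined by (possibly looser) per-agent loss thresholds, since the only property used is that every stable edge of the full graph lies in $E'$ (Step 1, from $\mathcal E$) and that every edge \emph{missing} from $E'$ is below some non-bottommost endpoint's threshold (Step 3).
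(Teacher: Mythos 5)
Your overall architecture---show that the full-edge-set woman-optimal matching $\mu$ survives into the restricted set $E'$ and is stable there, then show that the DA output $\mu'$ on $E'$ is also stable with respect to the full edge set, and conclude by comparing woman-optimality on both sides---is essentially the paper's route. The paper packages the second half as Lemma~\ref{lem::match_occurs}: the \emph{sets} of stable matchings coincide, because a blocking edge for an $E'$-stable matching would have to clear both endpoints' loss thresholds and hence could not have been removed. Your Steps 1 and 2 are correct and match the paper's use of $\mathcal E$ (which is needed only to guarantee that every stable edge of the full instance lies in $E'$).

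The gap is in Step 3, in the clause ``a symmetric statement holds for men.'' For a woman $w$ your argument works: $\mu'$ is woman-optimal among the $E'$-stable matchings and $\mu$ is one of them (by your Step 2), so $w$'s utility under $\mu'$ is at least her utility under $\mu$, which by $\mathcal E$ is at least her benchmark minus $L$. For a man the comparison runs the \emph{wrong} way: $\mu'$ is man-pessimal over the $E'$-stable matchings, a set that \emph{contains} all full-edge-set stable matchings, so $U_{m,\mu'(m)}$ could a priori be below $U_{m,\mu(m)}$, and $\mathcal E$ (which only constrains stable matchings of the full instance) gives no lower bound on $m$'s utility in $\mu'$. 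Hence ``no edge outside $E'$ can improve on what he gets'' is not justified as written. The repair is to drop the appeal to $\mathcal E$'s loss bounds in this step entirely and use only the threshold structure of $E'$, which is exactly the paper's one-line argument: if $(m,w)\notin E'$ blocked $\mu'$ in the full graph, then $m$ prefers $w$ to $\mu'(m)$ and $w$ prefers $m$ to her $\mu'$-partner; the edges joining $m$ and $w$ to their $\mu'$-partners lie in $E'$ and therefore clear the respective agents' loss thresholds (or the agent is bottommost, in which case every edge is acceptable for that agent), so the strictly preferred edge $(m,w)$ clears both thresholds as well and must lie in $E'$---a contradiction. (One should also note that $\mu'$ matches everyone, so that ``$\mu'(m)$'' exists: this follows because $\mu$ is a perfect stable matching of the instance $E'$, and all stable matchings of a given instance match the same set of agents.) With that substitution your proof closes.
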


The implication is that w.h.p.\ a woman can safely restrict her proposals to her acceptable edges, or to any overestimate of this set of edges obtained by her setting an upper bound on the loss she is willing to accept.
There is a small probability--- at most $n^{-c}$---that this may result in a less good outcome, which can happen only if
$\mathcal E$ does not occur.
Note that Theorem~\ref{lem::acceptable-suffice} applies to every utility model we consider.
Then,
w.h.p., every stable matching gives each woman $w$, whose aligned agent $m$ has public rating $r_m\ge\sigmabar=\Omega((\ln n/n)^{1/3})$, a partner with public rating in the range $[r_m - 2\Lbar, r_m+\tfrac 52 \Lbar]$ (see Theorem~\ref{thm::cone-bdd-deriv} in Appendix~\ref{sec::cone-size}).
The bound $r_m-2\Lbar$ is a consequence of the bound on the woman's loss; the bound $r_m+\tfrac52\Lbar$ is a consequence of the bound on the men's losses. 
An analogous statement applies to the men. 

This means that if we are running woman-proposing DA, each of these women might as well limit her proposals to her woman-acceptable edges, which is at most the men with public ratings in the range $r_m \pm \Theta(\Lbar)$ for whom she has private scores of at least $1 - \Theta(\Lbar)$.
In expectation, this yields $\Theta(n^{1/3}(\ln n)^{2/3})$ men to whom it might be worth proposing.
It also implies that a woman can have a gain of at most $\Theta(\Lbar)$ compared to her target utility.

If, in addition, each man can inexpensively signal the women who are man-acceptable to him,
then the women can further limit their proposals to just those men providing them with a signal; 
this reduces the expected number of proposals these women can usefully make to just $\Theta(\ln n)$. 

\section{Sketch of the Proof of Theorem~\ref{thm::basic-linear-result}}
\label{sec::sketch-linear-result}

\begin{figure}[thb]
\begin{center}
\begin{tikzpicture}
\begin{scope}[>=latex]
\draw (-0.2,3) to (-0.2,1.55);
\node at (-0.2,1.485) {$\circ$};
\node at (-2.8,1.485) {rating $r_{m_i}$, man $m_i$};
\draw[->] (-1.15,1.485) to (-0.45,1.485);
\node at (3,1.485) {$\circ$};
\node at (4.1,1.485) {woman $w_i$};
\draw[dashed] (0.2,1.485) to (2.9,1.485);
\draw (-0.2,1.46) to (-0.2,0.5);
\draw[<->] (-0.0,3) to (-0.0,0.5);
\node at (1.5,2.0) {$M_i=[r_{m_i}-\alpha,1]$};
\draw[<->] (-0.4,1.475) to (-0.4,0.5);
\node at (-2.5,1.0) {$h_i$ men, rating range $\alpha$};
\draw (3,3) to (3,1.55);
\draw (3,1.46) to (3,-0.95);
\draw[<->] (3.2,1.475) to (3.2,-1);
\node at (4.1,0.25) {$\ell_i$ women};
\node at (0, 3.3) {men};
\node at (3, 3.3) {women};
\node at (-3.2,0.5) {cutoff $r_{m_i} - \alpha$};
\draw[->] (-1.9,0.5) to (-0.45,0.5);
\draw[<->] (5.0,3) to (5.0,-1);
\node at (6.4,1.0) {$\widetilde{W}_i=W[r_{\overline{w}_i},1]$};
\node at (3,-1) {$\circ$};
\node at (3.2,-1.3) {woman $\overline{w}_i=w_{i+\ell_i}$};
\end{scope}
\end{tikzpicture}
\end{center}
\caption{
} 
\label{fig::key_bounded}
\end{figure}

We begin by outlining the main ideas used in our analysis. 
Our goal is to show that when we run woman proposing DA, w.h.p.\ each man
receives a proposal that gives him a loss of at most $L$ (except possibly for men among the bottommost $\Theta(nL)$).
As the outcome is the man-pessimal stable matching, this means that w.h.p., in all stable matchings,
these men have a loss of at most $L$. By symmetry, the same bound holds for the women.

Next, we provide some intuition for the proof of this result. See Fig.~\ref{fig::key_bounded}. 
Our analysis uses 3 parameters
$\alpha, \beta, \gamma = \Theta(L)$. Let $m_i$ be a non-bottommost man.
We consider the set of men with public rank at least $r_{m_i}-\alpha$: 
$M_i = M[r_{m_i}-\alpha,1]$.
We consider a similar, slightly larger set of women: $\widetilde{W}_i=W[r_{w_i}-3\alpha,1]$.
Now we look at the best proposals by the women in $\widetilde{W}_i$, i.e.\ the ones they make first.
Specifically, we consider the proposals that give these women utility
at least $V(r_{m_i}-\alpha,1)$, proposals that are therefore guaranteed
to be to the men in $M_i$.
Let $\big|M_i\big|= i + h_i$ and $\big|W_i\big|= i + \ell_i$.
In expectation, $\ell_i-h_i= 2\alpha n$.
Necessarily, at least $\ell_i -h_i+1$ women in $M_i$ cannot match with men in $M_i\setminus\{m_i\}$.
But, as we will see, these women all have probability at least $\beta$ of
having a proposal to $m_i$ which gives them utility at least $V(r_{m_i}-\alpha,1)$.
These are proposals these women must make before they make any proposals to men
with public rating less than $r_{m_i}-\alpha$.
Furthermore, for each of these proposals, $m_i$ has probability at least
$\gamma$ of having a loss of $L$ or less.
Thus, in expectation, $m_i$ receives at least $2\alpha\beta\gamma n$ proposals which
give him a loss of $L$ or less.

We actually want a high-probability bound.
So we choose $\alpha, \beta,\gamma$ so that $\alpha\beta\gamma n\ge c \log n$ for a suitable constant $c>0$,
and then apply a series of Chernoff bounds.
There is one difficulty.
The Chernoff bounds requires the various proposals to be independent.
Unfortunately, in general, this does not appear to be the case.
However, we are able to show that the failure probability for our setting is at most
the failure probability in an artificial setting in which the events are independent,
which yields the desired bound.

We now embark on the actual proof.

We formalize the men's rating cutoff with the notion of DA stopping at public rating $r$.

\begin{defn}[DA stops]
The women \emph{stop at public rating $r$} if, in each woman's preference list, all the edges with utility less than
$V(r,1)$
are removed. 
The women \emph{stop at man $m$} if, in each woman's preference list, all the edges following her edge to $m$ are removed.
The women \emph{double cut at man $m$ and public rating $r$}, if they each stop at $m$ or $r$, whichever comes first.
Men stopping and double cutting are defined similarly. Finally, an edge is said to \emph{survive} the cutoff if it is not removed by the stopping.
\end{defn}

To obtain our bounds for man $m_i$, we will have the women double cut at rating $r_{m_i}-\alpha$ and at man $m_i$, where $\alpha>0$ is a parameter we will specify later.

Our upper bounds in all of the utility models depend on a parameterized key lemma (Lemma \ref{lem::key_bounded}) stated shortly.
This lemma concerns the losses the men face in the woman-proposing DA; a symmetric result applies to the women.
The individual theorems follow by setting the parameters appropriately.
Our key lemma uses three parameters:
$\alpha,\beta,\gamma >0$. 
To avoid rounding issues, we will choose $\alpha$ so that $\alpha n$ is an integer.
The other parameters need to satisfy the following constraints.
\begin{align}
\label{eqn::beta-constraint}
   &\text{for $r\ge \alpha$:}\hspace*{0.5in} V(r-\alpha,1) \le V(r,1-\beta)\\
    \label{eqn::gamma-constraint}
    &\text{for $r\ge 3\alpha$:}\hspace*{0.5in} U(r,1) - U(r-3\alpha,1-\gamma)\le L
    \end{align}
Equation~\eqref{eqn::beta-constraint} relates the range of private values that will yield a woman
an edge to $m_i$ that survives the cut at $r_{m_i} - \alpha$, or equivalently the probability of having such an edge.
Observation~\ref{obs::towards-man-accept} below, shows that Equation~\eqref{eqn::gamma-constraint} identifies
the range of $m_i$'s private values for proposals from $\widetilde{W}_i$ that yield him a loss of at most $L$
(for we will ensure the women in $\widetilde{W}_i$ have public rating at least $r_{w_i} - 3\alpha$).

\begin{obs}
\label{obs::towards-man-accept}
Consider the proposal from woman $w$ to the rank $i$ man $m_i$.
Suppose the rank $i$ woman $w_i$ has rating $r_{w_i}\ge 3\alpha$.
If $w$ has public rating $r\ge r_{w_i}-3\alpha$ and $m_i$'s private score for $w$ is at least $1-\gamma$,
then $m_i$'s utility for $w$ is at least $U(r_{w_i}-3\alpha,1-\gamma)\ge  U(r_{w_i},1)-L$.
\end{obs}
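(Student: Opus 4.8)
The plan is to chain two elementary facts: the coordinate-wise monotonicity of the utility function $U$, and the parameter constraint~\eqref{eqn::gamma-constraint}. First I would unfold the definition: in our models $m_i$'s utility for $w$ is $U_{m_i,w} = U(r_w, s_{m_i}(w))$, where $r_w$ is $w$'s public rating and $s_{m_i}(w)$ is $m_i$'s private score for $w$. The hypotheses of the observation give $r_w = r \ge r_{w_i} - 3\alpha$ and $s_{m_i}(w) \ge 1-\gamma$. Since $U$ is (strictly) increasing in each of its two arguments, applying monotonicity first in the first coordinate and then in the second yields
\[
U_{m_i,w} = U(r_w, s_{m_i}(w)) \ge U(r_{w_i}-3\alpha,\, s_{m_i}(w)) \ge U(r_{w_i}-3\alpha,\, 1-\gamma).
\]

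Second, I would invoke~\eqref{eqn::gamma-constraint}. That constraint holds for every $r \ge 3\alpha$; since the observation assumes $r_{w_i} \ge 3\alpha$, we may instantiate it with $r := r_{w_i}$, obtaining $U(r_{w_i},1) - U(r_{w_i}-3\alpha, 1-\gamma) \le L$, i.e.\ $U(r_{w_i}-3\alpha, 1-\gamma) \ge U(r_{w_i},1) - L$. Combining this with the displayed inequality gives the stated bound $U_{m_i,w} \ge U(r_{w_i},1) - L$. Finally, since $m_i$ has rank $i$ its benchmark is $U(r_{w_i},1)$ (Definition~\ref{defn::loss_linear}), so the same chain shows that the loss $m_i$ sustains from a match with such a $w$, namely $U(r_{w_i},1) - U_{m_i,w}$, is at most $L$.

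There is essentially no obstacle here: the statement is purely deterministic and no probabilistic argument enters. The only points requiring care are (i) that the precondition $r_{w_i} \ge 3\alpha$ is exactly what licenses the use of~\eqref{eqn::gamma-constraint}, which is why the observation carries that hypothesis, and (ii) that monotonicity is applied in the correct direction in both coordinates. The role of this observation downstream is to convert the purely ``geometric'' event ``$w$ has public rating at least $r_{w_i}-3\alpha$ and $m_i$'s private score for $w$ is at least $1-\gamma$'' into the event ``a proposal from $w$ gives $m_i$ a loss of at most $L$'', which is the per-proposal success event that the counting argument in the key lemma (Lemma~\ref{lem::key_bounded}) relies on.
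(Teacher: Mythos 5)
Your proof is correct and follows exactly the paper's reasoning: the paper treats this observation as an immediate consequence of the monotonicity of $U$ in each coordinate together with constraint~\eqref{eqn::gamma-constraint} instantiated at $r=r_{w_i}$ (which is licensed by the hypothesis $r_{w_i}\ge 3\alpha$), precisely the two steps you chain together. Nothing is missing.
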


In the linear separable model with $\lambda = \tfrac 12$, we set $\alpha=\beta=\gamma$ and $L=2\alpha$.

The next lemma determines the probability that man $m_i$ receives a proposal 
causing him a loss of at most $L$.
The lemma calculates this probability in terms of the parameters we just defined. Note that the result does not depend on the utility functions $U(\cdot,\cdot)$ and $V(\cdot,\cdot)$ being linear. In fact, the same lemma applies to much more general utility models which we also study (see Section~\ref{sec::utility_models}) and it is the crucial tool we use in all our upper bound proofs.

In what follows, to avoid heavy-handed notation, by
$r_{m_i}-\alpha$ we will mean $\max\{0,r_{m_i} - \alpha\}$.

In order to state our next result crisply, we define the following Event ${\cal E}_i$.
It concerns a run of woman-proposing DA with double cut at the rank $i$ man $m_i$ 
and at public rating $r_{m_i} - \alpha$. 
Let $h_i=\big|M[r_{m_i} - \alpha,r_{m_i})\big|$,
$\ell_i=\big|W[r_{w_i}-3\alpha,r_{w_i})\big|$,
and $\overline{w}_i$ be the woman with rank $i+\ell_i$. See Figure \ref{fig::key_bounded} for an illustration of these definitions.
Event ${\cal E}_i$ occurs if $r_{w_i} \ge 3\alpha$ and
between them the $i+\ell_i$ women in $W[r_{{w}_i}-3\alpha,1]$ make at least one proposal to $m_i$ that 
causes him a loss of at most $L$.


Finally we define Event $\cal E$: it happens if ${\cal E}_i$ occurs for all $i$ such that $r_{w_i}\ge 3\alpha$.

\begin{lemma}\label{lem::key_bounded}
Let $\alpha>0$ and $L>0$ be given, and suppose that $\beta$ and $\gamma$ satisfy
\eqref{eqn::beta-constraint} and \eqref{eqn::gamma-constraint}, respectively.
Then, Event $\cal E$ occurs with probability at least $1-p_f$,
where the failure probability 
$$p_f=n\cdot \exp(-\alpha(n-1)/12)
+n\cdot \exp(-\alpha (n-1)/24)
    +n\exp(-\alpha\beta n/8)
    +n\cdot \exp(-\alpha\beta\gamma n/2).$$
\hide{
${cal E}$ is the occurrence of 
For Event ${\cal E}_i$, 
woman-proposing DA is run with double cut at the rank $i$ man $m_i$ 
and public rating $r_{m_i} - \alpha$. 
Let $h_i=\big|M[r_{m_i} - \alpha,r_{m_i}])\big|$,
$\ell_i = h_i + \alpha n -1$, and
$\overline{w}_i$ be the woman with rank $i+\ell_i$.
Event ${\cal E}_i$ occurs if
$\overline{w}_i$ exists and
between them the $i+\ell_i$ women in $W[r_{\overline{w}_i},1]$ make at least one proposal to $m_i$ that provides him utility at least $U(r_{w_i},1)-L$.
}
\end{lemma}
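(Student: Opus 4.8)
The plan is to fix an index $i$ with $r_{w_i}\ge 3\alpha$, bound $\Pr[\overline{\mathcal E_i}]$ by $p_f/n$, and then union bound over all $n$ choices of $i$. Throughout, the run of DA in question has the women double cutting at $m_i$ and at public rating $r_{m_i}-\alpha$; because DA is order-independent (Observation~\ref{obs::DA_invariant_of_proposal_order}), I may choose a convenient processing order. The quantities $h_i=|M[r_{m_i}-\alpha,r_{m_i})|$ and $\ell_i=|W[r_{w_i}-3\alpha,r_{w_i})|$ are Binomial$(n-1,\alpha)$ and Binomial$(n-1,3\alpha)$-like random variables (conditioning on which agent has rank $i$), with $\mathbb E[\ell_i-h_i]\approx 2\alpha n$. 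I would first carve off two ``bad'' events via Chernoff bounds on these binomials: (a) $h_i$ is too large, say $h_i > 2\alpha n$-ish, contributing roughly $n\exp(-\alpha(n-1)/12)$; and (b) $\ell_i - h_i$ is too small, say below $\alpha n$, contributing roughly $n\exp(-\alpha(n-1)/24)$. On the complement, $W[r_{w_i}-3\alpha,1]$ contains $i+\ell_i$ women while $M[r_{m_i}-\alpha,1]\setminus\{m_i\} = M_i\setminus\{m_i\}$ has only $i+h_i-1$ men, so at least $\ell_i - h_i + 1 \ge \alpha n$ of those women, in the double-cut run, must fail to secure a man in $M_i\setminus\{m_i\}$; since a woman only proposes below rating $r_{m_i}-\alpha$ after exhausting her surviving edges, each such woman has either already proposed to $m_i$ or will have her run terminate without a match in $M_i$.

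Next I would count, among these $\ge\alpha n$ ``deprived'' women, how many have an edge to $m_i$ that survives the cut, i.e. private score $s_w(m_i)$ large enough that $V(r_{m_i},s_w(m_i))\ge V(r_{m_i}-\alpha,1)$; by constraint~\eqref{eqn::beta-constraint} and monotonicity it suffices that $s_w(m_i)\ge 1-\beta$, an event of probability $\ge\beta$ independently across women. A Chernoff bound gives that w.h.p. at least $\alpha\beta n/2$-ish of them have such a surviving edge to $m_i$, contributing the term $n\exp(-\alpha\beta n/8)$. Each of these women, being deprived of a man in $M_i\setminus\{m_i\}$, must actually extend a proposal to $m_i$ at some point in the double-cut run (her edge to $m_i$ survives and lies above the cutoff, so she reaches it). Finally, for each such proposal, Observation~\ref{obs::towards-man-accept} together with constraint~\eqref{eqn::gamma-constraint} says that if $m_i$'s private score for that woman is at least $1-\gamma$ — probability $\ge\gamma$, independently — then that proposal gives $m_i$ loss at most $L$. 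A last Chernoff bound on a Binomial$(\ge\alpha\beta n/2, \gamma)$ variable shows at least one such good proposal occurs except with probability $\exp(-\alpha\beta\gamma n/2)$-ish, which after the union over $i$ is the final term $n\exp(-\alpha\beta\gamma n/2)$. Summing the four contributions gives $p_f$.

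\textbf{The main obstacle.} The subtle point — and the one the paper flags as its new technique — is the independence needed for the last two Chernoff bounds. The private scores $\{s_w(m_i)\}_w$ and $\{s_{m_i}(w)\}_w$ are a priori independent of everything, but which women end up ``deprived'' of a man in $M_i\setminus\{m_i\}$, and hence which women actually propose to $m_i$, is a complicated event determined by the whole configuration of ratings and scores — in particular it can depend on $s_w(m_i)$ and on $s_{m_i}(w)$ through the dynamics of DA. So one cannot naively say ``condition on the set $S$ of deprived women, then the scores $s_{m_i}(w)$ for $w\in S$ are fresh uniform draws.'' The fix I would use is exactly the one sketched before the lemma: reveal the randomness in a careful order — first the public ratings, then enough of the scores to pin down $h_i,\ell_i$ and the identity of the $\ge\alpha n$ women who are structurally forced to be deprived (this forcing is a pigeonhole fact about set sizes, independent of which men those women's surviving edges point to), then the indicator $s_w(m_i)\ge 1-\beta$ for those women, and only last the indicator $s_{m_i}(w)\ge 1-\gamma$. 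One must argue that revealing ``$w$ is deprived'' does not bias $s_w(m_i)$ downward within the surviving range, nor bias $s_{m_i}(w)$; intuitively a larger $s_w(m_i)$ only makes $w$ \emph{more} likely to chase $m_i$ and get stuck, so the conditioning can only help, and the clean way to make this rigorous is to compare with an artificial process in which these indicators are replaced by independent fresh coins and to show the real failure probability is dominated by the artificial one (a coupling / stochastic-domination argument). Getting this domination statement precisely right, so that it legitimately feeds all four Chernoff bounds, is where the real work lies; the binomial tail estimates themselves are routine.
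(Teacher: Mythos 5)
Your skeleton matches the paper's proof almost exactly: the same two Chernoff bounds on $h_i$ and $\ell_i$ (the paper's bad events are $h_i\ge\tfrac32\alpha(n-1)$ and $\ell_i\le\tfrac52\alpha(n-1)$, so that $\ell_i-h_i>\alpha(n-1)$ on their complement), the same pigeonhole count of at least $\alpha n$ women who cannot match inside $M_i\setminus\{m_i\}$, the same use of \eqref{eqn::beta-constraint} to get probability at least $\beta$ of a cutoff-surviving edge to $m_i$, and the same final $(1-\gamma)^{\alpha\beta n/2}$ computation via Observation~\ref{obs::towards-man-accept}. You have also correctly diagnosed that the entire difficulty lives in the third step, the one yielding the term $n\exp(-\alpha\beta n/8)$.

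But that is precisely the step you do not carry out, and your sketch of it rests on a misconception that would have to be repaired. You propose to reveal ``the identity of the $\ge\alpha n$ women who are structurally forced to be deprived'' as ``a pigeonhole fact about set sizes.'' The pigeonhole argument forces the \emph{number} of deprived women, not their \emph{identity}: which particular women end up unmatched in $M_i\setminus\{m_i\}$ is decided adaptively by the DA dynamics and can depend on every private score revealed along the way, including the scores $s_w(m_i)$ themselves. So there is no fixed set of $\alpha n$ women to whom you can attach fresh $\beta$-coins, and the informal monotonicity claim (``a larger $s_w(m_i)$ only makes $w$ more likely to get stuck, so the conditioning can only help'') is not a proof — one must rule out that the dynamics systematically match, and thereby silence, exactly those women whose edges to $m_i$ would have survived the cut. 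The paper's resolution is a concrete construction you would need to supply: discretize the utility space (Claim~\ref{clm::discrete-dist}), represent all possible runs as a tree of per-utility-level choices, and introduce an artificial mechanism $\mathcal M$ allowed to place $i+h_i-1$ ``blocks'' anywhere along each root-to-leaf path so as to minimize the number of proposals to $m_i$; a bottom-up exchange argument (the case analysis in Claim~\ref{clm::many-props-despite-blocking}) shows the optimal adversary blocks a fixed set of women at the root, which reduces the real, conditioned process to a genuinely independent Binomial$(\alpha n,\beta)$ one. Until you construct this (or an equivalent) domination argument, the $n\exp(-\alpha\beta n/8)$ term is unjustified, and that term is the heart of the lemma.
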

\hide{
We are considering an interval of men descending
from man $m_i$ covering a public rating range of length
$\alpha$. Suppose there are $h_i$ men in this interval.
We are also considering an interval of $\ell_i$ women 
descending from $w_i$, where $\ell_i =h_i+\alpha n -1$.
(See Fig.~\ref{fig::key_bounded}.)
}
The following simple claim notes that the men's loss when running the full DA is no larger than when running double-cut DA.

\begin{claim}
\label{clm::full-DA-better}
Suppose a woman-proposing double-cut DA at man $m_i$ and rating $r_{m_i}-\alpha$ is run, and suppose $m_i$ incurs a loss of $L$.
Then in the full run of woman-proposing DA, $m_i$ will incur a loss of at most $L$.
\end{claim}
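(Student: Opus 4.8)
The plan is to exploit the monotonicity of Deferred Acceptance with respect to enlarging preference lists, together with Observation~\ref{obs::DA_pessimality} (man-pessimality of woman-proposing DA). First I would observe that the double-cut run is exactly a run of woman-proposing DA on a restricted instance: it is the instance in which every woman deletes from her preference list all men she ranks below $m_i$, and also all men whose public rating is below $r_{m_i}-\alpha$ (equivalently, all edges of utility below $V(r_{m_i}-\alpha,1)$ to her). Call this restricted edge set $E'$, and let $E$ be the full edge set; clearly $E'\subseteq E$. The double-cut DA outputs the woman-optimal / man-pessimal stable matching of the instance with edge set $E'$, while the full DA outputs the man-pessimal stable matching of the instance with edge set $E$.

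The key step is the standard fact that deleting edges (shortening women's preference lists, with unlisted partners being worse than ``no match'' for the women) can only help the men in the man-pessimal (equivalently woman-proposing) stable matching: formally, if $E'\subseteq E$, then in the man-pessimal stable matching of $(M,W,E')$ each man is matched at least as well as in the man-pessimal stable matching of $(M,W,E)$. This follows, for instance, from the fact that woman-proposing DA on the larger edge set can be simulated by first running it on $E'$ and then letting the extra edges in $E\setminus E'$ trigger further proposals, each of which can only displace men to weakly worse partners; alternatively one cites the well-known comparative-statics result of Gale--Sotomayor (restricting one side's preferences weakly improves the other side's optimal matching). Hence $m_i$'s partner in the full run is weakly worse than in the double-cut run, so $m_i$'s loss in the full run is at least the $L$ he incurs in the double-cut run — wait, that is the wrong direction, so let me state the orientation carefully: in the double-cut run $m_i$ does \emph{at least as well} as... no. Since the double-cut run restricts women's lists, it is the women who are (weakly) hurt and the men who are (weakly) helped by going from $E$ to $E'$; equivalently, going from $E'$ back to $E$ (the full run) weakly hurts the men. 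Therefore $m_i$'s utility in the full run is at most his utility in the double-cut run, i.e.\ his loss in the full run is \emph{at least} $L$ — which is the opposite of the claim.

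So the correct reading must be that the claim's inequality runs the other way in my head and I should reconcile with the paper's sign convention: a smaller partner-rating/utility means a \emph{larger} loss, and the paper asserts the full run gives loss \emph{at most} $L$. This forces the monotonicity to be used in the direction ``the full run weakly \emph{helps} $m_i$'', which is correct precisely because adding edges to women's lists corresponds to women considering \emph{more} men, hence proposing to more men, hence men receiving more proposals and improving — i.e.\ restricting is what hurts men. Concretely: run woman-proposing DA on $E$ but feed in proposals in the order that first completes all $E'$-proposals; at that point the configuration is exactly the double-cut output, in which $m_i$ holds a partner of loss $L$; the remaining proposals along edges in $E\setminus E'$ only cause men to be ``traded up'' to strictly preferred partners, so $m_i$ ends with a partner he weakly prefers, i.e.\ loss at most $L$. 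The main obstacle is thus purely bookkeeping: verifying that woman-proposing DA is order-independent (Observation~\ref{obs::DA_invariant_of_proposal_order}) so that this particular proposal order is legitimate, and checking that reintroducing a deleted edge into an already-stable (for $E'$) configuration and resuming DA terminates with men weakly better off — both are immediate from the monotone structure of DA, so the proof is a short paragraph once the reduction is set up correctly.
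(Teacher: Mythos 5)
Your final argument is correct and is essentially the paper's proof: by order-independence of woman-proposing DA (Observation~1.2), the full run can be scheduled so that its first phase reproduces the double-cut run exactly (the double cut removes only a suffix of each woman's preference list), and since a man's tentative partner only improves over the course of a run of DA, $m_i$ ends the full run with utility at least that of his double-cut match, i.e.\ a loss of at most $L$. However, the ``standard fact'' you invoke in the middle is stated backwards: restricting women's preference lists in woman-proposing DA gives the men \emph{fewer} proposals and hence weakly \emph{worse} partners in the woman-optimal/man-pessimal matching, so deleting edges hurts the men and adding them back helps them---which is exactly the direction the claim needs and exactly what your closing simulation argument establishes. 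As written, the proof asserts both the claim and its negation before settling on the right direction; you should delete the two contradictory intermediate passages and keep only the final paragraph, which stands on its own.
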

\begin{proof}
Recall that when running the women-proposing DA the order in which unmatched women are processed does not affect the outcome.
Also note that as the run proceeds, whenever a man's match is updated, the man obtains an improved utility.
Thus, in the run with the full edge set we can first use the edges used in the double-cut DA and then proceed with 
the remaining edges.
Therefore if in the double-cut DA $m_i$ has a loss of $L$,
in the full run $m_i$ will also have a loss of at most $L$.
\end{proof}

To illustrate how this lemma is applied, we now prove
Theorem~\ref{thm::basic-linear-result}.
Note that $\overline{L}$ is the value of $L$ used in this theorem.
Our other results use other values of $L$.

\begin{proof} (Of Theorem~\ref{thm::basic-linear-result})~
By Lemma~\ref{lem::key_bounded}, in the double-cut DA, 
for all $i$ with $r_{w_i}\ge 3 \alpha$,
$m_i$ obtains a match giving him loss at most $\Lbar$,
with probability at least
$1-n\cdot \exp(-\alpha(n-1)/12)
-n\cdot \exp(-\alpha n/24)
    -n\exp(-\alpha^2 n/8)
    -n\cdot \exp(-\alpha^3 n/2)$.

By Claim~\ref{clm::full-DA-better}, $m_i$ will incur a loss of at most $\Lbar$ in the full run of woman-proposing DA with at least as large a probability.
But this is the man-pessimal match. Consequently, in every stable match,
$m_i$ has a loss of at most $\Lbar$. 
By symmetry, the same bound applies to each woman $w_i$ such that $r_{m_i}\ge 3\alpha$.

We choose $\Lbar = [16(c+2)\ln n/n]^{1/3}$. Recalling that $\alpha=\Lbar/2$, we see that for large enough $n$ the probability bound, over all the men and women, is at most $1-n^{-c}$.
The bounds $r_{w_i}\ge 3\alpha$ and $r_{m_i}\ge 3\alpha$ imply we can set $\sigmabar=3\alpha =\tfrac 32\Lbar$.
\end{proof}

\hide{
We now outline the main ideas used in the proof of Lemma \ref{lem::key_bounded}. Our task is to prove that each individual man $m_i$ (outside the bottom-most men) gets a proposal from a woman in $W[r_{\overline{w}_i},1]$ which yields him less than $L$ loss, with high probability; we can then use a union bound to prove the lemma. To do this we consider a run of woman-proposing DA, but we focus on the women in $W[r_{\overline{w}_i},1]$ and we examine the proposals made in the following particular order: we have the women propose, in their order of preference, up to $m_i$, however they do not make proposals yielding them utility $V(r_{m_i}-\alpha,1)$ or lower. We would like to say that since we have women from $W[r_{\overline{w}_i},1]$ proposing, and they are proposing to roughly $\alpha n$ fewer men, thus there are $\alpha n$ `extra women' who must remain unmatched throughout this portion of the run of DA. Since each of these women has at least a $\beta$ chance to propose to $m_i$, we get, by a simple Chernoff bound, a guarantee that $m_i$ will receive at least $(1/2)\alpha\beta n$ proposals (the half being just a factor lost due to the Chernoff bound). Each such proposal has at least a $\gamma$ chance of being `good enough' (less than $L$ loss) for $m_i$, which gives us the result for $m_i$ with an $\exp(\alpha\beta\gamma n/2)$ probability. We can then set the parameters $\alpha$, $\beta$ and $\gamma$ so that this suffices.   

However, the major complicating feature is that while we argued above that there are at least $\alpha n$ unmatched women at each step, these may not be the same $\alpha n$ women throughout the process. Thus, to complete our argument, we prove that in fact the worst case scenario for us is when the $\alpha n$ women are the same throughout, which is the case we are able to analyze.
}

\begin{proof} (Of Lemma \ref{lem::key_bounded}.)
We run the double-cut DA in two phases, defined as follows. 
Recall that $h_i=\big|M[r_{m_i}-\alpha,r_{m_i})\big|$ and 
$\ell_i=\big|W[r_{w_i}-3\alpha,r_{w_i})\big|$. Note that women with rank at most $i+\ell_i$ have public rating at least $r_{w_i}-3\alpha$.
\\
\emph{Phase 1}. Every unmatched woman with rank at most
$i+\ell_i$ 
keeps proposing until her next proposal
is to $m_i$,
or she runs out of proposals.\\
\emph{Phase 2}. Each unmatched women makes her next proposal, if any, which will be a proposal to $m_i$.

\hide{
In this sketch we will be using the term with high probability (w.h.p.) to mean
with probability at least $1-n^{c+2}$.
}

Our analysis is based on the following four claims. The first two are simply observations that w.h.p.\ the number of agents with
public ratings
in a given interval is close to the
expected number. We defer the proofs to the appendix.

A critical issue in this analysis is to make sure the conditioning induced by the successive steps of the analysis does not affect the independence needed for subsequent steps. To achieve this, we use the Principle of Deferred Decisions, only instantiating
random values as they are used.
Since each successive bound uses a different collection of random variables this does not present a problem.

\begin{claim}
\label{clm::span-man-interval}
Let ${\mathcal B}_1$ be the event that for some $i$,
$h_i \ge \tfrac 32 \alpha (n-1)$.
${\mathcal B}_1$ occurs with probability at most $n\cdot \exp(-\alpha(n-1)/12)$.
The only randomness used in the proof are the choices of the men's public ratings.
The same bound applies to the women.
\end{claim}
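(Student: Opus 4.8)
The plan is to bound $h_i$ for each fixed $i$ via a Chernoff bound, then union over $i$. Recall $h_i = |M[r_{m_i}-\alpha, r_{m_i})|$ counts the men (other than $m_i$ itself) whose public rating lies in the half-open interval of width $\alpha$ immediately below $m_i$'s rating. The only randomness here is the $n$ i.i.d.\ uniform ratings $r_{m_1} \ge \cdots \ge r_{m_n}$ of the men. First I would condition on $r_{m_i} = r$; then each of the other $n-1$ men independently has rating in $[r-\alpha, r)$ with probability exactly $\alpha$ (using $r \ge \alpha$; for $r < \alpha$ the interval is truncated at $0$ and the probability is only smaller, so the bound still holds). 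Hence, conditioned on $r_{m_i}=r$, the count of such men is stochastically dominated by $\mathrm{Bin}(n-1, \alpha)$, which has mean $\mu = \alpha(n-1)$.

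Next I would apply the upper-tail Chernoff bound $\Pr[X \ge (1+\delta)\mu] \le \exp(-\delta^2 \mu / 3)$ with $\delta = \tfrac12$, giving $\Pr[h_i \ge \tfrac32 \alpha(n-1) \mid r_{m_i}=r] \le \exp(-\alpha(n-1)/12)$. Since this bound is uniform over the conditioning value $r$, it holds unconditionally: $\Pr[h_i \ge \tfrac32 \alpha(n-1)] \le \exp(-\alpha(n-1)/12)$. Finally, a union bound over the $n$ choices of $i$ yields $\Pr[\mathcal{B}_1] \le n \cdot \exp(-\alpha(n-1)/12)$, as claimed. The identical argument with the women's ratings (which are independent of the men's and identically distributed) gives the same bound for the analogous event on the women's side.

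The only mild subtlety is the conditioning step: one must be careful that ``$r_{m_i}$ is the $i$-th largest rating'' does not distort the distribution of the remaining ratings in a way that breaks independence. This is handled cleanly by instead sampling the $n$ ratings unlabeled, conditioning on the value of the $i$-th order statistic, and noting that the remaining $n-1$ values are then i.i.d.\ uniform on $[0,1]$ conditioned only on their rank relative to that value --- in particular each still lands in $[r-\alpha,r)$ with probability at most $\alpha$ independently. So this step is routine rather than a genuine obstacle; there is no hard part in this claim. (The harder work is reserved for the later claims bounding the proposal counts, where the conditioning across DA steps is the real issue addressed by the paper's main technique.)
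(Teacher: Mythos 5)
Your main calculation is the same as the paper's: for a fixed man, condition on his rating, note the other $n-1$ ratings are i.i.d.\ uniform so the count in the width-$\alpha$ interval below him is dominated by $\mathrm{Bin}(n-1,\alpha)$, apply the multiplicative Chernoff bound with $\delta=\tfrac12$ to get $\exp(-\alpha(n-1)/12)$, and union over $n$. That is exactly the paper's (very short) proof.

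However, your closing paragraph about the order-statistic conditioning is wrong as stated, even though you present it as the clean resolution. Conditioned on the $i$-th largest rating being $r$, the other $n-1$ ratings are \emph{not} i.i.d.\ uniform on $[0,1]$: $i-1$ of them are uniform on $(r,1]$ and $n-i$ are uniform on $[0,r)$, so a given one of them lands in $[r-\alpha,r)$ with probability $\tfrac{n-i}{n-1}\cdot\tfrac{\alpha}{r}$, which exceeds $\alpha$ whenever $n-i > r(n-1)$ (e.g.\ when the $i$-th order statistic comes out smaller than its typical value $\approx 1-i/n$). So "each still lands in $[r-\alpha,r)$ with probability at most $\alpha$" fails under that conditioning, and the indicators are not independent either. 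The issue is entirely avoidable: the event $\mathcal{B}_1$, quantified over ranks $i$, is identical to the event that \emph{some labeled man} $m$ has at least $\tfrac32\alpha(n-1)$ other men in $M[r_m-\alpha,r_m)$, so you should take the union bound over the $n$ labeled men rather than the $n$ ranks. For a labeled man the conditioning on $r_m$ leaves the other ratings i.i.d.\ uniform and your first-paragraph argument goes through verbatim. This is what the paper implicitly does by "prov[ing] the bound for an arbitrary man $m$ with public rating $r_m$." With that substitution your proof is correct and matches the paper's.
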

\begin{proof} (Sketch.)
As $\text{E}[h_i] =\alpha(n-1)$,
w.h.p.,
$h_i < \tfrac 32 \alpha (n-1)$.
This claim uses a Chernoff bound with the randomness coming from
the public ratings of the men.
\end{proof}

\begin{claim}
\label{clm::span-woman-interval}
Let ${\mathcal B}_2$ be the event that for some $i$, $\ell_i \le \tfrac 52 \alpha (n-1)$.
${\mathcal B}_2$ occurs with probability at most $n\cdot \exp(-\alpha (n-1)/24)$.
The only randomness used in the proof are the choices of the women's public ratings.
The same bound applies to the men.
\end{claim}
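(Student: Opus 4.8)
The plan is to follow the template of the proof of Claim~\ref{clm::span-man-interval} --- a Chernoff bound for one index, then a union bound --- but here the quantity $\ell_i=|W[r_{w_i}-3\alpha,r_{w_i})|$ is not directly a binomial, because the interval $[r_{w_i}-3\alpha,r_{w_i})$ is anchored at the random value $r_{w_i}$, which depends on all $n$ women's ratings. The device I would use to remove this obstruction is to re-index by women rather than by ranks: for a woman $w$, let $\ell(w)$ be the number of women $w'\neq w$ with $r_w-3\alpha\le r_{w'}<r_w$. Since $i\mapsto w_i$ is a bijection onto the set of all women and $\ell_i=\ell(w_i)$, the event in question --- that some $i$ with $r_{w_i}\ge 3\alpha$ has $\ell_i\le\tfrac52\alpha(n-1)$, which is the form actually needed (it matches the hypothesis $r_{w_i}\ge 3\alpha$ under which Event $\mathcal{E}_i$ is defined) --- coincides with the event that some woman $w$ with $r_w\ge 3\alpha$ has $\ell(w)\le\tfrac52\alpha(n-1)$.

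Next I would fix a single woman $w$ and condition on $r_w=r$. Since the $n$ ratings are i.i.d.\ uniform on $[0,1]$, conditioning on $r_w$ leaves the remaining $n-1$ ratings i.i.d.\ uniform on $[0,1]$. When $r\ge 3\alpha$, the interval $[r-3\alpha,r)$ is contained in $[0,1]$ and has length exactly $3\alpha$, so each of the other women lands in it independently with probability exactly $3\alpha$; hence, conditioned on $r_w=r$ with $r\ge 3\alpha$, $\ell(w)$ is distributed as $\mathrm{Bin}(n-1,3\alpha)$, with mean $\mu=3\alpha(n-1)$, a law that does not depend on the value of $r$. Applying the multiplicative Chernoff lower-tail bound $\Pr[X\le(1-\delta)\mu]\le\exp(-\delta^2\mu/2)$ with $\delta=\tfrac16$, so that $(1-\delta)\mu=\tfrac52\alpha(n-1)$, gives $\Pr[\ell(w)\le\tfrac52\alpha(n-1)\mid r_w=r]\le\exp(-\mu/72)=\exp(-\alpha(n-1)/24)$ for every $r\ge 3\alpha$. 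Integrating over the value of $r_w$ then yields $\Pr[\,r_w\ge 3\alpha\ \text{and}\ \ell(w)\le\tfrac52\alpha(n-1)\,]\le\exp(-\alpha(n-1)/24)$.

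A union bound over the $n$ women finishes the argument: $\Pr[\mathcal{B}_2]\le n\exp(-\alpha(n-1)/24)$, using only the women's public ratings, and the men's version follows symmetrically. The one real obstacle is the random anchoring of the counting interval, and it is dissolved by the re-indexing over women together with the fact that, once $r_w\ge 3\alpha$, the conditional law of $\ell(w)$ is a clean $\mathrm{Bin}(n-1,3\alpha)$ with no truncation at $0$ --- and it is exactly this ``no truncation'' feature that forces the hypothesis $r_{w_i}\ge 3\alpha$, i.e.\ the fact that for the bottom-most women (whose aligned agents have rating below $3\alpha$) $\ell_i$ can genuinely be small, which is why those agents are the ones excluded from Theorem~\ref{thm::basic-linear-result}.
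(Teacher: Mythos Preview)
Your proof is correct and takes essentially the same approach as the paper: fix an arbitrary woman $w$ with $r_w\ge 3\alpha$, apply a Chernoff lower-tail bound to the $\mathrm{Bin}(n-1,3\alpha)$ count of other women in $[r_w-3\alpha,r_w)$, and union-bound over the $n$ women. You are simply more explicit than the paper about the re-indexing from ranks to women and the conditioning on $r_w$ that justifies treating the count as binomial despite the random anchor point.
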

\begin{proof} 
This is very similar to the proof of Claim~\ref{clm::span-man-interval}.
\hide{
(Sketch.)
As $\text{E}[\ell_i] =\alpha(n-1)$,
w.h.p.,
$\ell_i \ge \tfrac 52 \alpha (n-1)/24$.
This claim uses a Chernoff bound 
with the randomness coming from the public ratings of the women.
}
\end{proof}

\begin{claim}
\label{clm::many-man-acceptable-props}
Let ${\mathcal B}_3$ be the event that between them, the women with
rank at most $i+\ell_i$ make fewer than
$\tfrac 12 \alpha\beta n$ Step 2 proposals to $m_i$.
If events ${\mathcal B}_1$ and ${\mathcal B}_2$
do not occur, then ${\mathcal B}_3$ occurs with probability at most $\exp(-\alpha\beta n/8)$.
The only randomness used in the proof are the choices of the women's private scores.
\end{claim}
This bound uses the private scores of the women
and employs a novel argument given below to sidestep the conditioning
among these proposals.

\begin{claim}
\label{clm::one-man-accept-prop}
If none of the events ${\mathcal B}_1$, ${\mathcal B}_2$, or
${\mathcal B}_3$
occur, then
at least one of the Step 2 proposals to $m_i$ will
cause him a loss of at most $L$ with probability at least
$1 -(1-\gamma)^{\alpha\beta n/2}\ge 1- \exp(-\alpha\beta\gamma n/2)$.
The only randomness used in the proof are the choices of the men's private scores.
\end{claim}
\begin{proof}
Note that each Phase 2 proposal is from a woman $w$ with rank at most $i+\ell_i$. 
As already observed, her public rating is at least $r_{w_i} -3\alpha$. 
Recall that man $m_i$'s utility for $w$ equals
$U(r_w,s_{m_i}(w))\ge U(r_{w_i}-3\alpha,s_{m_i}(w))$.
To achieve utility at least $U(r_{w_i},1)-L \le U(r_{w_i}-3\alpha,1-\gamma)$ (using \eqref{eqn::gamma-constraint}) it suffices to have
$s_{m_i}(w) \ge 1-\gamma$, which happens with
probability $\gamma$.
Consequently, utility at least $U(r_{w_i},1)-L$ is
achieved with probability at least $\gamma$.

For each Phase 2 proposal these probabilities are independent as they reflect $m_i$'s private
scores for each of these proposals.
%
Therefore the probability that there is no proposal providing $m_i$ a loss of at most $L$ is at most 
\begin{align*}
    \big(1-\gamma\big)^{\alpha\beta n/2}\le \exp(\alpha\beta\gamma n/2).
\end{align*}
\end{proof}

\noindent
Concluding the proof of Lemma \ref{lem::key_bounded}:
The overall failure probability summed over all $n$ choices of $i$
is
\begin{align*}
n\cdot \exp(-\alpha(n-1)/12)
+n\cdot \exp(-\alpha (n-1)/24)
    +n\exp(-\alpha\beta n/8)
    +n\cdot \exp(-\alpha\beta\gamma n/2).
\end{align*}
\end{proof}

\hide{
\smallskip
\noindent
i. 
As $\text{E}[h_i] =\alpha(n-1)$,
w.h.p.,
$h_i \le \tfrac 32 \alpha (n-1)$.
This claim uses a Chernoff bound with the randomness coming from
the public ratings 
of the men.

\smallskip
\noindent
ii. Assuming the high probability event in (i), 
w.h.p., $r_{w_{i+\ell}}\ge r_{w_i} -3\alpha$.
This claim uses a Chernoff bound 
with the randomness coming from the public ratings of the women.

\smallskip
\noindent
iii. w.h.p., between them, the women with
rank at most $i+\ell_i$ make at least
$\tfrac 12 \alpha\beta n$ Phase 2 proposals to $m_i$.
This bound uses the private scores of the women
and uses a novel argument to sidestep the conditioning
among these proposals. 

\smallskip
\noindent
iv. Assuming the high probability events in (i)--(iii),
at least one of the Phase 2 proposals to $m_i$ will
provide him utility at least $U(r_{w_i},1)-L$ with probability at least
$1 -(1-\gamma)^{\alpha\beta n/2}\ge 1 -\exp(\alpha\beta\gamma n/2)$.
The randomness comes from the private scores of the men.

To see this, note that each Phase 2 proposal is from a woman $w$ with rank at most $i+\ell_i$. 
By (ii), her public rating is at least $r_{w_i} -3\alpha$. 
Recall that man $m_i$'s utility for $w$ equals
$U(r_w,s_{m_i}(w))\ge U(r_{w_i}-3\alpha,s_{m_i}(w))$.
To achieve utility at least $U(r_{w_i},1)-L \le U(r_{w_i}-3\alpha,1-\gamma)$ (using \eqref{eqn::gamma-constraint}) it suffices to have
$s_{m_i}(w) \ge 1-\gamma$, which happens with
probability $\gamma$.
Consequently, utility at least $U(r_{w_i},1)-L$ is
achieved with probability at least $\gamma$.

For each Phase 2 proposal these probabilities are independent as they reflect $m_i$'s private
scores for each of these proposals.
%
Therefore the probability that there is no proposal providing $m_i$ a loss of at most $L$ is at most 
\begin{align*}
    \big(1-\gamma\big)^{\alpha\beta n/2}\le \exp(\alpha\beta\gamma n/2).
\end{align*}
}


\begin{proof} (Of Claim~\ref{clm::many-man-acceptable-props}.)
%
First, we simplify the action space by viewing the decisions as being made on a discrete utility space, as
specified in the next claim, proved in the appendix.
\begin{claim}
\label{clm::discrete-dist}
For any $\delta>0$, there is a discrete utility space in which
for each woman the probability of selecting $m_i$ is only increased,
and the probability of having any differences
in the sequence of actions in the original continuous setting and the discrete setting is at most $\delta$.
\end{claim}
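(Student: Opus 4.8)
The plan is to show that, once the public ratings and all the men's private scores are fixed --- which is legitimate because in the proof of Claim~\ref{clm::many-man-acceptable-props} the only randomness is the women's scores, and it suffices to build a discrete space for each such conditioning --- the entire run of the double-cut DA is a deterministic function of the women's private scores that \emph{factors through a finite sign pattern}, so that a sufficiently fine, per-coordinate rounding of those scores reproduces the same run with probability at least $1-\delta$. Write $\mathbf{s}=(s_w(m))$ for the vector of all women's private scores, each an independent $\mathrm{Uniform}[0,1]$ variable; given the conditioning, the sequence of actions of the double-cut DA is a deterministic function $\Phi(\mathbf{s})$, since it depends on $\mathbf{s}$ only through (i) each woman's ordinal ranking of the men and (ii) the bit, for each pair $(w,m)$, recording whether the edge $(w,m)$ survives the cut at rating $r_{m_i}-\alpha$, i.e.\ whether $V(r_m,s_w(m))\ge V(r_{m_i}-\alpha,1)$ --- both ``stop at man $m_i$'' and ``stop at rating $r_{m_i}-\alpha$'' are captured by (i) and (ii), and the men's responses are fixed by the conditioning. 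In turn, (i) and (ii) are determined by the signs of the finite family of continuous functions
\[
  g_{w,m,m'}(\mathbf{s}) = V(r_m,s_w(m)) - V(r_{m'},s_w(m')), \qquad
  g_{w,m}(\mathbf{s}) = V(r_m,s_w(m)) - V(r_{m_i}-\alpha,1),
\]
each depending on at most two coordinates; since $V$ is continuous and strictly increasing, every zero set $\{g=0\}$ is Lebesgue-null, so almost surely no $g(\mathbf{s})$ vanishes, and $\Phi(\mathbf{s}')=\Phi(\mathbf{s})$ whenever $\mathbf{s}'$ has the same sign pattern as $\mathbf{s}$ on this family.

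Given $\delta>0$, I would then fix, in this order: a margin $\Delta>0$ small enough that $\Pr[\min_g|g(\mathbf{s})|<\Delta]<\delta$, which is possible because the (finite) union of the zero sets is null; a mesh $\eta>0$ such that $|V(r,s)-V(r,s')|<\Delta/2$ whenever $|s-s'|\le\eta$, for each of the finitely many ratings $r$ in play (uniform continuity of $V$ on the relevant compact domain), so that moving every coordinate of $\mathbf{s}$ by at most $\eta$ changes each $g$ by less than $\Delta$; and a partition of $[0,1]$ into intervals of width at most $\eta$, arranged so that on the coordinates $s_w(m_i)$ the survival threshold $\theta$, defined by $V(r_{m_i},\theta)=V(r_{m_i}-\alpha,1)$ and satisfying $\theta\le 1-\beta$ by \eqref{eqn::beta-constraint}, is an interval endpoint. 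The discrete utility space is the law of $(s'_w(m))$, where $s'_w(m)$ is $s_w(m)$ rounded to an endpoint of its interval --- the upper endpoint on the coordinates $s_w(m_i)$, an arbitrary fixed endpoint on the rest; the $s'_w(m)$ remain mutually independent and finitely supported, so the induced utilities $V(r_m,s'_w(m))$ take finitely many values with the same independence structure.

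Finally, under the coupling $\mathbf{s}'=(s'_w(m))$, I would verify the two conclusions. On the event $\{\min_g|g(\mathbf{s})|\ge\Delta\}$, which has probability more than $1-\delta$, each $g(\mathbf{s}')$ keeps the sign of $g(\mathbf{s})$ (it moved by less than $\Delta\le|g(\mathbf{s})|$), so $\Phi(\mathbf{s}')=\Phi(\mathbf{s})$ and the two action sequences coincide; hence they differ with probability at most $\delta$. For the monotonicity, $w$'s edge to $m_i$ survives the cut iff $s_w(m_i)\ge\theta$, and since $\theta$ is a grid endpoint and that coordinate is rounded up, $s_w(m_i)\ge\theta$ forces $s'_w(m_i)\ge\theta$; thus continuous survival implies discrete survival under the coupling, so the discrete probability that $w$ selects $m_i$ is at least the continuous one (and so still at least $\beta$). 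The step I expect to be the crux is the first one --- identifying precisely which comparisons the double-cut DA consults, verifying that they form a finite family whose boundaries are null, and that ``stopping at $m_i$ or at rating $r_{m_i}-\alpha$'' is expressible through them; once that is in hand the rounding argument is routine uniform continuity, the only remaining care being to place the grid endpoint at $\theta$ on the $s_w(m_i)$-coordinates and round them upward, which is what yields the monotone coupling.
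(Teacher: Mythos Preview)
Your approach is correct and genuinely different from the paper's. The paper does not round scores; instead it partitions each man's utility range $[V(r_m,0),V(r_m,1)]$ into $z$ equal-probability subintervals, refines by all $n(z-1)$ breakpoints, and defines a new probability space in which, at each discrete utility level, the event ``two or more men land in this interval'' has its mass explicitly reassigned to ``select $m_i$.'' The coupling is then: whenever the continuous run places exactly one man in an interval the discrete run does the same, and the difference probability is bounded by a union bound over collision events ($\le n^3/2z$). Your argument is more structural: you observe that the double-cut DA factors through a finite sign pattern of comparison functions $g_{w,m,m'}$ and $g_{w,m}$, bound the probability of being near any boundary, and round each score coordinate to a fine grid. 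The paper's construction is tailored to feed directly into the per-utility-level tree $T$ used in Claim~\ref{clm::many-props-despite-blocking} (each node is a woman processing one discrete utility value, with branches ``propose to some $m_j$'' or ``no action''); your rounded-score space supports an equivalent tree, but the utility levels are the finitely many values $V(r_m,\text{grid point})$ rather than a common grid, so you should note that ties among these (finitely many equalities in the fixed public ratings) are a measure-zero event in the public ratings, or else break them arbitrarily. What your route buys is conceptual economy---no ad hoc reassignment of collision mass---and a clean reason for monotonicity: rounding $s_w(m_i)$ upward with $\theta$ as a grid endpoint makes the survival event $\{s_w(m_i)\ge\theta\}$ monotone under the coupling, which is exactly the $\ge\beta$ bound the blocking argument needs. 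What the paper's route buys is an explicit quantitative bound ($\delta=n^4/2z$) and a discrete space whose per-level branching structure matches the tree in Claim~\ref{clm::many-props-despite-blocking} without further comment.
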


\hide{
We do this in such a way that for each woman the probability of selecting $m_i$ is only increased,
and the probability of having any differences
in the sequence of actions in the original continuous setting and the discrete setting is at most $\delta$.
We detail how to construct this discrete utility space
in the appendix. The space depends on
$\delta$, which can be arbitrarily small. 
For each man $m$ we partition the interval $[V(r_m,0),V(r_m,1)]$ of utilities it can provide
into the following $z$ subintervals: $[V(r_m,0),V(r_m,1/z)),[V(r_m,1/z),V(r_m,2/z)),\ldots,[V(r_m,(z-2)/z),V(r_m,(z-1)/z)),[V(r_m,(z-1)/z),V(r_m,1)]$.
Note that the probability that woman $w$'s edge to $m$ occurs in any one subinterval is $1/z$.
Over all $n$ men this specifies $n(z-1)$ utility values that are partitioning points.
Now, for each man $m$, we partition the interval $[V(r_m,0),V(r_m,1)]$ about all $n(z-1)$ of these points,
creating $n(z-1)+1$ subintervals.
The values at these partition points 
plus the endpoint 
$V(r_m,0)$ are the discrete utilities available to the women for evaluating man $m$, obtained by rounding down her actual utility.

Consider a single interval $I=[V(r_m,a),V(r_m,b))$ and an arbitrary woman $w$.
Let $p^{I,c}_j$ be the probability
that in the original continuous private score setting,
the probability exactly one man $m_j$ provides her a utility in $I$,
let $p^{I,c}_{\text{none}}$ be
the probability no one provides her a utility in $I$, and let 
$\overline{p}^{I,c}$ be the probability that two or more men provide her a utility in $I$.
Note that $\overline{p}^{I,c}\le n(n-1)/2z^2$.
In the discrete setting, we remove the possibility
of making two proposals and increase the probability of selecting man $m_i$ by this amount: the probability of selecting man $m_j\ne m_i$ alone, with private score $a$ will be $p^{I,c}_j$, the
probability of selecting no one will be
$p^{I,c}_{\text{none}}$, while the probability of selecting man $m_i$ with private score $a$ becomes $p^{I,d}_i=p^{I,c}_{i}+\overline{p}^{I,c}$.

Recall that in the run of double-cut DA, each woman repeatedly makes the next highest utility proposal.
We view this as happening as follows. For each successive discrete utility value, woman $w$ has the following choices.

i. she selects some man 
to propose to (among the men $w$ she has not yet proposed to); or 

ii. she takes ``no action''. This corresponds 
to $w$ making no proposal achieving the current utility.
\\
Every run of DA in the continuous setting that does not have a woman selecting two men over the course of a single
utility interval will result in the identical run in the discrete setting in terms of the order in which each
woman proposes to the men.
Thus, the probability
that in the discrete setting $w$'s action in terms of who she selects and in what order differs from her actions in the continuous setting is at most $n^3/2z\triangleq \delta/n$ (because, in each possible computation, $w$ makes at most $nz$ choices, and for each choice the probability difference is at most $n^2/2z^2)$.
Furthermore, the probability of selecting man $m_i$ is only increased.
So over all $n$ women, the probability of anything
changing is at most $\delta$.
Clearly, $\delta$ can be made arbitrarily small.
}

We represent the possible computations of the double-cut DA in this discrete setting using a tree $T$. Each woman will be going through her possible utility values in decreasing order, with the possible actions of the various women being interleaved in the order given by the DA processing.
Each node $u$ corresponds to a woman $w$ processing her next utility value. The possible choices at this utility are each represented by an edge descending from $u$. These choices are: 

i. Proposing to some man 
(among those men $w$ has not yet proposed to); or 

ii. ``no action''. This corresponds 
to $w$ making no proposal achieving the current utility.

We observe the following important structural feature of tree $T$.
Let $S$ be the subtree descending from the edge corresponding to woman $w$ proposing to $m_i$;
in $S$ there are no further actions of $w$, i.e.\ no nodes at which $w$ makes a choice, because the double cut DA cuts at the proposal to $m_i$.

The assumption that ${\cal B}_1$ and ${\cal B}_2$ do not occur means that for all $i$, $h_i< \tfrac 32 \alpha(n-1)$ and
$\ell_i> \tfrac 52 \alpha(n-1)$, and therefore $\ell_i - h_i > \alpha(n-1)$.

At each leaf of $T$, up to $i+h_i -1$ 
women will have been matched with someone other than $m_i$.
The other women either finished with a proposal to $m_i$ or both failed to match and did not propose to $m_i$. Let $w$ be a woman in the latter category.
Then, 
on the path to this leaf, $w$ will have traversed edges corresponding to a choice at each discrete utility in the range $[V(r_{m_i}-\alpha,1),V(1,1)]$.

We now create an extended tree, $T_x$, by adding a subtree at each leaf; this subtree will correspond to pretending there were no matches; the effect is that 
each women will take an action at all their remaining utility values in the range
$[V(r_{m_i}-\alpha,1),V(1,1)]$,
except that in the sub-subtrees descending from edges that correspond to some woman $w$ selecting $m_i$, $w$ has no further actions.
For each leaf in the unextended tree, the probability
of the path to that leaf is left unchanged.
The probabilities of the paths in the extended tree are then calculated by multiplying the path probability in the unextended tree with the probabilities of each woman's choices in the extended portion of the tree.

Next, we create an artificial mechanism $\mathcal M$ that acts on tree $T_x$. The mechanism $\mathcal M$
is allowed to put $i+h_i -1$ 
``blocks'' on each path; blocks can be placed at internal nodes. A block names a woman $w$ and corresponds to her matching (but we no longer think of the matches as corresponding to the outcome of the edge selection; they have no meaning beyond making
all subsequent choices by this woman be the ``no action'' choice).

DA can be seen as choosing to place up to $i+h_i -1$ 
blocks at each of the nodes corresponding to a leaf of $T$.
$\mathcal M$ will place its blocks so as to minimize the probability $p$ of paths with at least $\tfrac12 \alpha\beta n$ women choosing edges to $m_i$.
Clearly $p$ is a lower bound on the probability
that the double-cut DA makes at least $\tfrac12 \alpha\beta n$ proposals in Step 2.
Given a choice of blocks we call the resulting probability of having fewer than $\tfrac12 \alpha\beta n$ women choosing edges to $m_i$ the \emph{blocking probability}.

\begin{claim}
\label{clm::many-props-despite-blocking}
The probability that $\mathcal M$ makes at least $\tfrac 12\alpha\beta n$ proposals to $m_i$ is at least $1-\exp(-\alpha\beta n/8)$.
\end{claim}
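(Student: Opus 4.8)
The plan is to show that, however $\mathcal M$ places its blocks, the number of women that choose an edge to $m_i$ stochastically dominates a $\mathrm{Bin}(\alpha n,\beta)$ random variable; a single Chernoff bound then yields the claim. Recall that in $T_x$ a block merely silences the woman it names (all her later choices become ``no action''), with no other effect. Since $\mathcal M$ uses at most $i+h_i-1$ blocks on any path, and since the failure of ${\cal B}_1$ and ${\cal B}_2$ gives $\ell_i-h_i>\alpha(n-1)$, on every path at least $(i+\ell_i)-(i+h_i-1)=\ell_i-h_i+1>\alpha(n-1)+1\ge\alpha n$ of the women of rank at most $i+\ell_i$ are never blocked; call these the \emph{active} women of that path.

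For each woman $w$ of rank at most $i+\ell_i$, let $Z_w$ be the indicator of the event that $w$ would choose an edge to $m_i$ at some point of her processing in $T_x$ were she never blocked. Woman $w$'s utility for $m_i$ is $V(r_{m_i},s_w(m_i))$, which survives the cut at rating $r_{m_i}-\alpha$ as soon as $s_w(m_i)\ge 1-\beta$, by \eqref{eqn::beta-constraint}; hence, invoking Claim~\ref{clm::discrete-dist} (whose construction only raises the probability of selecting $m_i$), $\Pr[Z_w=1]\ge\beta$. The crucial ``sidestepping the conditioning'' point is that $Z_w$ depends on $w$'s private scores alone, so the events $\{Z_w\}$ are mutually independent, and — revealing scores lazily via the Principle of Deferred Decisions, only when $w$'s processing reaches the stage at which $m_i$ could be selected — the value of $Z_w$ is still unrevealed at the node where $\mathcal M$ must commit to (effectively) blocking $w$ or not. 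Consequently each active woman contributes, independently of the path so far, a proposal to $m_i$ with probability at least $\beta$; and an active woman with $Z_w=1$ does propose to $m_i$, so the number of proposals $\mathcal M$ generates to $m_i$ is at least $\sum_{w \text{ active}}Z_w$.

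To finish, order the women by the interleaved processing order of $T_x$ and let $T_k$ be the index of the $k$-th active woman, a stopping time for the corresponding filtration; since $Z_{T_k}$, conditioned on the history strictly before time $T_k$, is Bernoulli with parameter at least $\beta$, a standard coupling with fresh $\mathrm{Ber}(\beta)$ coins shows $\sum_{k=1}^{\alpha n}Z_{T_k}$ stochastically dominates $\mathrm{Bin}(\alpha n,\beta)$. As there are at least $\alpha n$ active women on every path, the number of proposals to $m_i$ is at least this sum, and a Chernoff bound gives $\Pr\big[\sum_{k=1}^{\alpha n}Z_{T_k}<\tfrac12\alpha\beta n\big]\le\exp(-\alpha\beta n/8)$, which is the claimed bound. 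The delicate step — and the heart of the argument — is the middle one: verifying that $\mathcal M$'s adaptivity is worthless, i.e.\ that nothing available to $\mathcal M$ along a path correlates with which still-unblocked women will end up proposing to $m_i$. This is exactly what the structure of $T_x$ is built to guarantee: blocks only silence, and below an edge recording a proposal to $m_i$ the proposing woman takes no further action, so the randomness determining $Z_w$ can be deferred past every one of $\mathcal M$'s decisions.
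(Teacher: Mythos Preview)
Your overall architecture is sound and your intuition is right, but the core step has a genuine gap. The claim that ``the value of $Z_w$ is still unrevealed at the node where $\mathcal M$ must commit'' is false: the path in $T_x$ up to a node $v$ records every choice $w$ has made so far, and at each utility level in the range $[V(r_{m_i}-\alpha,1),V(r_{m_i},1)]$ the edge taken reveals whether or not $s_w(m_i)$ lies in the corresponding private-score subinterval. So by the time $\mathcal M$ elects to block $w$, the conditional probability $\Pr[Z_w=1\mid\text{path to }v]$ can be strictly smaller than $\beta$ (and will be $0$ once $w$ has exhausted that range without selecting $m_i$). Consequently your assertion that $Z_{T_k}$, conditioned on the history, is Bernoulli with parameter at least $\beta$ is not justified; and your stopping time $T_k$ is itself ill-posed, since ``$k$-th active woman'' is a property of the entire root-to-leaf path, not something measurable at any finite stage of the filtration.

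What is true --- and what you implicitly rely on --- is that this leaked information is only ever \emph{negative}: observing that $w$ has not yet proposed can only lower $\Pr[Z_w=1\mid\cdot]$, never raise it. From this one can conclude that $\mathcal M$ gains nothing by waiting, so its optimal strategy is to commit all blocks at the root. But that is precisely the statement that needs proof, and your stopping-time sketch does not supply it. The paper closes this gap by a direct combinatorial argument: it takes a deepest block and, via a short case analysis comparing the proposing probabilities of the women at sibling nodes (distinguishing whether the woman acting at the parent node is among those blocked), shows the block can be lifted to the parent without increasing the success probability; iterating pushes all blocks to the root, where the independence you want is immediate. Your stochastic-dominance framing would be cleaner if it could be made to work, but as written it assumes away exactly the adaptivity issue the claim is designed to handle.
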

\hide{
\begin{proof}
We will show that the most effective blocking strategy is to block all but $\alpha n$ women before they have made any choices.
Then, the remaining $\alpha n$ women all have independent probability at least $\beta$ of making a proposal to $m_i$.
In expectation, there are at least $\alpha\beta n$ proposals to $m_i$, and therefore, by a Chernoff bound,
at least $\tfrac 12 \alpha\beta n$ proposals with probability at least $1-\exp(-\alpha\beta n/8)$.

The claim about the best blocking strategy follows
by showing that at a node $x$ such that the only blocks below it are at its children, the blocks, possibly modified, can be moved to $x$. This requires a case analysis, deferred to the appendix. Iterating the argument demonstrates the claim.
\end{proof}
}

\begin{cor}
\label{cor::many-props-to-m-mtm}
The probability that the double-cut DA makes at least $\tfrac 12\alpha\beta n$ proposals to $m_i$ is at least $1-\exp(-\alpha\beta n/8)$.
\end{cor}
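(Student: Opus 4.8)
The plan is to read off the corollary from Claim~\ref{clm::many-props-despite-blocking} by observing that a run of the double-cut DA is nothing but \emph{one particular} admissible blocking strategy for the mechanism $\mathcal{M}$. First I would work inside the discrete utility space of Claim~\ref{clm::discrete-dist} and the extended tree $T_x$, and exhibit the block placement that reproduces the double-cut DA: at the node of $T_x$ corresponding to each leaf of $T$, place a block on each of the (at most $i+h_i-1$) women who, in that run, have been tentatively matched to a man other than $m_i$. The thing to verify is that under this placement the extended portion of $T_x$ faithfully replays Phase~2: every woman of rank at most $i+\ell_i$ still unmatched at the end of Phase~1 keeps taking actions until she either selects $m_i$ — after which, by the structural feature of $T_x$, she has no further actions — or exhausts her utilities in $[V(r_{m_i}-\alpha,1),V(1,1)]$, exactly as in Phase~2; whereas a woman blocked at a leaf of $T$ takes ``no action'' thereafter, exactly like a woman already matched. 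Consequently, under this placement, the number of women who ever choose an edge to $m_i$ equals the number of Phase~2 proposals the double-cut DA makes to $m_i$.

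Since $\mathcal{M}$ places its $i+h_i-1$ blocks per path so as to \emph{minimize} the probability that at least $\tfrac12\alpha\beta n$ women choose an edge to $m_i$, that probability is at least as large under the DA block placement as under $\mathcal{M}$'s optimal one; by Claim~\ref{clm::many-props-despite-blocking} it is therefore at least $1-\exp(-\alpha\beta n/8)$. Hence, in the discrete setting, the double-cut DA makes at least $\tfrac12\alpha\beta n$ proposals to $m_i$ with probability at least $1-\exp(-\alpha\beta n/8)$. To return to the original continuous model I would invoke the coupling of Claim~\ref{clm::discrete-dist}: for any $\delta>0$ the continuous and discrete runs have identical sequences of actions with probability at least $1-\delta$, and on that event the counts of proposals to $m_i$ agree. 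Thus the continuous double-cut DA makes at least $\tfrac12\alpha\beta n$ proposals to $m_i$ with probability at least $1-\exp(-\alpha\beta n/8)-\delta$, and letting $\delta\to 0$ yields the claimed bound. (All proposals to $m_i$ in a double-cut run are Phase~2 proposals, so this is precisely the complement of event $\mathcal{B}_3$, which finishes the proof of Claim~\ref{clm::many-man-acceptable-props}.)

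I expect the only genuine obstacle to be the bookkeeping in the first paragraph: checking that ``block the matched-to-others women at the leaves of $T$'' is admissible (it uses at most $i+h_i-1$ blocks per path, which is where $\neg\mathcal{B}_1$ is being used via $h_i<\tfrac32\alpha(n-1)$) and that, under this placement, the extended tree mirrors Phase~2 proposal by proposal, including the fact that a woman who selects $m_i$ makes no further choices in both. All of the probabilistic content is inherited from Claims~\ref{clm::discrete-dist} and~\ref{clm::many-props-despite-blocking}; nothing new needs to be estimated here.
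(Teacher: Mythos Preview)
Your proposal is correct and follows essentially the same approach as the paper: the paper's setup already records that ``DA can be seen as choosing to place up to $i+h_i-1$ blocks at each of the nodes corresponding to a leaf of $T$'' and that $\mathcal{M}$'s minimizing probability $p$ is ``clearly a lower bound'' on the double-cut DA's probability, and then the corollary's proof simply chains Claim~\ref{clm::many-props-despite-blocking} with the $\delta$-coupling of Claim~\ref{clm::discrete-dist} and lets $\delta\to 0$, exactly as you do. Your write-up is more explicit about the bookkeeping (identifying the specific block placement and verifying it replays Phase~2), but there is no substantive difference in method.
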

\begin{proof}
For any fixed $\delta$, by Claim~\ref{clm::many-props-despite-blocking}, the probability that $\mathcal M$ makes at least $\tfrac 12\alpha\beta n$ proposals to $m_i$ is at least $1-\exp(-\alpha\beta n/8)$.
By construction, the probability is only larger for the double-cut DA in the discrete space.

Therefore, by Claim~\ref{clm::full-DA-better}, the probability that the double-cut DA makes at least $\tfrac 12\alpha\beta n$ proposals to $m_i$ in the actual continuous space is at least $1-\exp(-\alpha\beta n/8)-\delta$, and this holds
for any $\delta>0$, however small.
Consequently, this probability is at least $1-\exp(-\alpha\beta n/8)$.
\end{proof}

\begin{proof} (Of Claim~\ref{clm::many-props-despite-blocking}.)
We will show that the most effective blocking strategy is to block as many women as possible
before they have made any choices. This leaves at least $(i+\ell_i) - (i-1 +h_i) \ge 1 +\alpha (n-1) \ge \alpha n$ women unmatched.
Then, as we argue next, each of these remaining at least $\alpha n$ women $w$ has independent probability at least $\beta$ that their proposal to $m_i$ is cutoff-surviving.
To be cutoff-surviving, it suffices that
$V(r_{m_i},s_w(m_i))\ge V(r_{m_i}-\alpha,1)$.
But we know  by \eqref{eqn::beta-constraint} that $V(r_{m_i}-\alpha,1)\le V(r_{m_i},1-\beta)$, and therefore it suffices that
$s_w(m_i) \ge 1-\beta$, which occurs with probability $\beta$.

Consequently, in expectation, there are at least $\alpha\beta n$ proposals to $m_i$, and therefore, by a Chernoff bound,
at least $\tfrac 12 \alpha\beta n$ proposals with probability at least $\exp(-\alpha\beta n/8)$.

We consider the actual blocking choices made by $\mathcal M$ and modify them bottom-up in a way that only reduces the probability of there being $\tfrac 12 \alpha\beta n$ or more proposals to $m_i$.

Clearly, ${\mathcal M}$ can choose to block the same maximum number of women on every path as it never hurts to block more women (we allow the blocking of women who have already proposed to $m_i$ even though it does not affect the number of proposals to $m_i$).

Consider a deepest block at some node $u$ in the tree,
and suppose $b$ women are blocked at $u$.
Let $v$ be a sibling of $u$.
As this is a deepest block, there will be no blocks
at proper descendants of $u$, and furthermore as there
are the same number of blocks on every path,
$v$ will also have $b$ blocked women.

Observe that if there is no blocking in a subtree, then the probability that a woman makes a proposal to $m_i$ is independent of the outcomes for the other women.
Therefore the correct blocking decision at node $u$ is to block the $b$ women with the highest probabilities of otherwise making a proposal to $m_i$,
which we call their \emph{proposing probabilities}; the same is true at each of its siblings $v$.

Let $x$ be $u$'s parent. Suppose the action at node $x$ concerns woman $\widetilde{w}_x$.
Note that the proposing probability for any woman 
$w\ne \widetilde{w}_x$ is the same at $u$ and $v$ because
the remaining sequence of actions for woman $w$ is the same at nodes $u$ and $v$, and as they are independent of the actions of the other women, they yield the same probability of selecting $m_i$ at some point.

We need to consider a number of cases.

\smallskip
\noindent
{\bf Case 1}. $w$ is blocked at every child of $x$. \\
Then we could equally well block $w$ at node $x$.

\noindent
{\bf Case 2}. At least one woman other than $\widetilde{w}_x$ is blocked at some child of $x$.\\
Each such blocked woman $w$ has the same proposing
probability at each child of $x$.
Therefore by choosing to block the women with the highest proposing
probabilities, we can ensure that at each node either
$\widetilde{w}_x$ plus the same $b-1$ other women
are blocked, or these $b-1$ woman plus the same
additional woman $w'\ne \widetilde{w}_x$ are blocked.
In any event, the blocking of the first $b-1$ women can be moved to $x$.

\noindent
{\bf Case 2.1}. $\widetilde{w}_x$ is not blocked at any child of $x$.\\
Then the remaining identical blocked woman at each child of $x$ can be moved to $x$.

\noindent
{\bf Case 2.2}. $\widetilde{w}_x$ is blocked at some child of $x$ but not at all the children of $x$.\\
Notice that we can avoid blocking $\widetilde{w}_x$  at the child $u$ of $x$ corresponding to selecting $m_i$, 
as the proposing probability for $\widetilde{w}_x$ after it has selected $m_i$ is $0$, so blocking any other women would be at least as good.
Suppose that $w\ne \widetilde{w}_x$ is blocked at node $u$.

Let $v$ be another child of $x$ at which $\widetilde{w}_x$ is blocked.
Necessarily, $p_{v,\widetilde{w}_x}$, the proposing probability for $\widetilde{w}_x$ at node $v$, is at least the proposing probability $p_{v,w}$ for $w$ at node $v$ (for otherwise $w$ would be blocked at node $v$); also, $p_{v,w}$ equals the proposing probability for $w$ at every child of $x$ including $u$; in addition, 
$p_{v,\widetilde{w}_x}$ equals the proposing probability for $\widetilde{w}_x$ at every child of $x$ other than $u$.
It follows that $w$ is blocked at $u$ and $\widetilde{w}_x$ can be blocked at every other child of $x$.
But then blocking $\widetilde{w}_x$ at $x$ only reduces the proposing probability.

Thus in every case one should move the bottommost blocking decisions at a collection of sibling nodes to a single blocking decision at their parent.
\end{proof}

\end{proof}

\section{Making Fewer Proposals}
\label{sec::fewer-props}
We identify a sufficient set of edges that contains all stable matchings, and on which the DA algorithm produces the same outcome as when it runs on the full edge set.

\begin{defn}[Viable edges]
An edge $(m,w)$ is \emph{man-viable} if, according to $m$'s preferences, $w$ is at least as good as the woman he is matched to in the man-pessimal stable match. Woman-viable is defined symmetrically. An edge is \emph{viable} if it is both man and woman-viable. 
$E_{v}$ is the set of all viable edges.
\end{defn}

\begin{lemma}\label{lem::match_occurs}
Running woman-proposing DA with the edge set restricted to $E_v$ and with any superset obtained via loss thresholds, including the full edge set, results in the same stable matching.
\end{lemma}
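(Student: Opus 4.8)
The plan is to reduce the statement to a single claim about running woman‑proposing DA on an edge set with a particular ``product'' structure, and then to prove that claim from the woman‑optimality of the full‑instance matching by a short ``blocking‑pair promotion'' argument. Throughout, write $\mu_0$ for the output of woman‑proposing DA on the full edge set; it is the woman‑optimal stable matching of the full instance (equivalently, the man‑pessimal one), and since all preference lists are complete it matches everyone.

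First I would record the structural fact that $E_v$, every loss‑threshold superset of it, and the full edge set all have the shape $E^M\cap E^W$, where $E^M$ is \emph{man‑monotone} — for each man $m$ the women $w$ with $(m,w)\in E^M$ form a prefix of $m$'s preference list — and $E^W$ is \emph{woman‑monotone}, and moreover $\mu_0\subseteq E^M\cap E^W$. Indeed $E_v=E_v^M\cap E_v^W$, with $E_v^M$ the man‑viable edges (for each $m$, the women weakly preferred to his man‑pessimal partner $\mu_0(m)$, a prefix of his list) and $E_v^W$ the woman‑viable edges, and $\mu_0\subseteq E_v$ since each agent's $\mu_0$‑partner is trivially viable on its own side and, being an \emph{optimal} stable partner on the other side, is viable there too. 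A loss‑threshold superset is again of this form, with $E^M$ (resp.\ $E^W$) extending the man‑viable (resp.\ woman‑viable) prefixes — or being a man's whole list, for a bottommost man — and the full edge set is the case $E^M=E^W=E_{\mathrm{full}}$. Hence it suffices to prove: if $E^\ast=E^M\cap E^W\subseteq E_{\mathrm{full}}$ with $E^M$ man‑monotone, $E^W$ woman‑monotone, and $\mu_0\subseteq E^\ast$, then woman‑proposing DA on $E^\ast$ outputs $\mu_0$.

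For the reduced claim, note first that $\mu_0$ is stable in the instance restricted to $E^\ast$, since a blocking pair there would also block $\mu_0$ in the full instance. Let $\nu$ be the woman‑proposing DA output on $E^\ast$, i.e.\ the woman‑optimal stable matching of the $E^\ast$‑instance; as $\mu_0$ is one of its stable matchings, every woman $w$ weakly prefers $\nu(w)$ to $\mu_0(w)$, and in particular $\nu$ matches everyone. Suppose $\nu\neq\mu_0$, so some woman $w$ has $\nu(w)\succ_w\mu_0(w)$ (preferences being strict). The key step is to show $\nu$ is stable \emph{with respect to the full edge set}: if $(m,w')$ blocked $\nu$ in the full instance, then $(m,w')\notin E^\ast$ (since $\nu$ is $E^\ast$‑stable), so $(m,w')\notin E^M$ or $(m,w')\notin E^W$; but $(m,\nu(m))\in\nu\subseteq E^\ast\subseteq E^M$ and $w'\succ_m\nu(m)$, so man‑monotonicity of $E^M$ forces $(m,w')\in E^M$, and symmetrically $(w',\nu(w'))\in E^W$ together with $m\succ_{w'}\nu(w')$ forces $(m,w')\in E^W$ — a contradiction. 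Hence $\nu$ is a stable matching of the full instance, so by woman‑optimality of $\mu_0$ we get $\mu_0(w)\succeq_w\nu(w)$, contradicting $\nu(w)\succ_w\mu_0(w)$. Therefore $\nu=\mu_0$.

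The step I expect to be the crux is the structural reduction in the first two paragraphs: the lemma is in fact \emph{false} for an arbitrary superset of $E_v$ — one can adjoin two ``bad'' edges that together form a second stable matching of the reduced instance, which DA then finds — so the argument genuinely needs the product form $E^M\cap E^W$ with monotone factors, which is exactly the feature powering the blocking‑pair promotion above. A minor point to handle with care is unmatched agents in the $E^\ast$‑instance; above this is dealt with by observing that $\nu$ weakly dominates the perfect matching $\mu_0$ for the women, but one could equally cite the Rural Hospitals Theorem.
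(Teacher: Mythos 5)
Your proof is correct and rests on the same key observation as the paper's: a full-instance blocking pair for a matching supported on the restricted edges would be preferred by both endpoints to edges they kept, so the threshold (prefix) structure forces the blocking edge into the restricted set, contradicting restricted-instance stability. The paper packages this as ``the set of stable matchings is unchanged'' while you sandwich the two woman-optimal matchings via two applications of woman-optimality, but this is essentially the same argument, with your version merely making explicit the prefix/monotonicity structure and the treatment of unmatched agents that the paper leaves implicit.
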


\begin{proof}
Suppose a new stable matching, $S$, now exists in the restricted edge set: it could not be present when using the full edge set, therefore there must be a blocking edge $(m,w)$ in the full edge set. But neither $m$ nor $w$ would have removed this edge when forming their restricted edge set since for both of them it is better than an edge they did not remove (the edge they are matched with in $S$).

It follows that w.h.p.\ the set of stable matchings is the same when using $E_v$ 
(or any super set of it generated by truncation with larger loss thresholds) and the whole set. Thus woman-proposing DA run on the restricted edge set will yield the same stable matching as on the full edge set.


\end{proof}

\begin{proof} (Of Theorem~\ref{lem::acceptable-suffice}.)
If $\mathcal E$ occurs, the set of acceptable edges contains all the viable edges. 
Furthermore, the acceptable edges are defined by means of loss thresholds.
The result now follows from Lemma~\ref{lem::match_occurs}.
\end{proof}

For some of the very bottommost agents, almost all edges may be acceptable. However, in the bounded derivatives model, with slightly stronger constraints on the derivatives, we also show (see Appendix~\ref{sec::eps-BN-equil-full-match}) the existence of an $\eps$-Bayes-Nash equilibrium in which all but a bottom $\Theta((\ln n/n)^{1/3})$ fraction of agents use only $\Theta(\ln n)$ edges, and all agents propose using at most $\Theta(\ln^2 n)$ edges, with $\eps=O(\ln n/n^{1/3})$.
\section{Numerical Simulations}
\label{sec:simulations}

We present several simulation results which are complementary to our theoretical results.
Throughout this section, we focus on the linear separable model.

\subsection{NRMP Data}

We used NRMP data to motivate some of our choices of parameters for our simulations. The NRMP provides extensive summary data~\cite{NRMP}. We begin by discussing this data.

Over time, the number of positions and applicants has been growing. We mention some numbers for 2021. There were over 38,000 positions available and a little over 42,000 applicants.
The main match using the DA algorithm (modified to allow for couples, who comprise a little over 5\% of the applicants) filled about 95\% of the available positions.
The NRMP also ran an aftermarket, called SOAP, after which about 0.5\% of the positions remained unfilled.

The positions cover many different specialities.
These specialities vary hugely in the number of positions available, with the top 11, all of size at least 1,000, accounting for 75\% of the positions.
In addition, about 75\% of the doctors apply to only one speciality.
We think that as a first approximation, w.r.t.\ the model we are using, it is reasonable to view each speciality as a separate market. Accordingly, we have focused our simulations on markets with 1,000--2,000 positions (though the largest speciality in the NRMP data had over 9,000 positions).

On average, doctors listed 12.5 programs in their preference lists, hospital programs listed 88 doctors, and the average program size was 6.5 (all numbers are approximate). While there is no detailed breakdown of the first two numbers, it is clear they vary considerably over
the individual doctors and hospitals. For our many-to-one simulations we chose to use a fixed size for the hospital programs. Our simulations cause the other two numbers to vary over the individual doctors and programs because the public ratings and private scores are chosen by a random process.

\subsection{Numbers of Available Edges}

The first question we want to answer is how long do the preference lists need to be in order to have a high probability of including all acceptable edges, for all but the bottommost agents?

We chose bottommost to mean the bottom 20\% of the agents, based on where the needed length of the preference lists started to increase in our experiments for $n=\text{1,000--2,000}$.

We ran experiments with $\lambda =0.5,0.67,0.8$,
corresponding to the public rating having respectively equal, twice, and four times the weight of the private scores in their contribution to the utility.
We report the results for $\lambda=0.8$. 
The edge sets were larger for smaller values of $\lambda$, but the results were qualitatively the same.
We generated 100 random markets and determined the smallest value of $L$ that ensured all agents were matched in all 100 markets. $L=0.12$ sufficed.
In Figure~\ref{fig:edges_avg_2000}, we show results by decile of women's rank (top 10\%, second 10\%, etc.),
specifically the average length of the preference list and the average number of edges proposed by a woman in woman-proposing DA, over these 100 randomly generated markets. We also show the max and min values 
over the 100 runs; these can be quite far from the average value.
Note that the min values in Figure~\ref{fig:edges_avg_2000}(a) are close to the max values in Figure~\ref{fig:edges_avg_2000}(b), which suggests that being on the proposing side does not significantly reduce the value of $L$ that the women could use compared to the value the men use.
We also show data for a typical single run in Figure~\ref{fig:edges_typ}.

\begin{figure}[htb]
    \centering
    \subfloat[Number of edges in the acceptable edge set, per woman, by decile; average in blue with circles, minimum in red with stars. ($n=\text{2,000}$, $\lambda=0.8$, $L=0.12$.)]{	\includegraphics[width=0.47\linewidth,height=4cm]{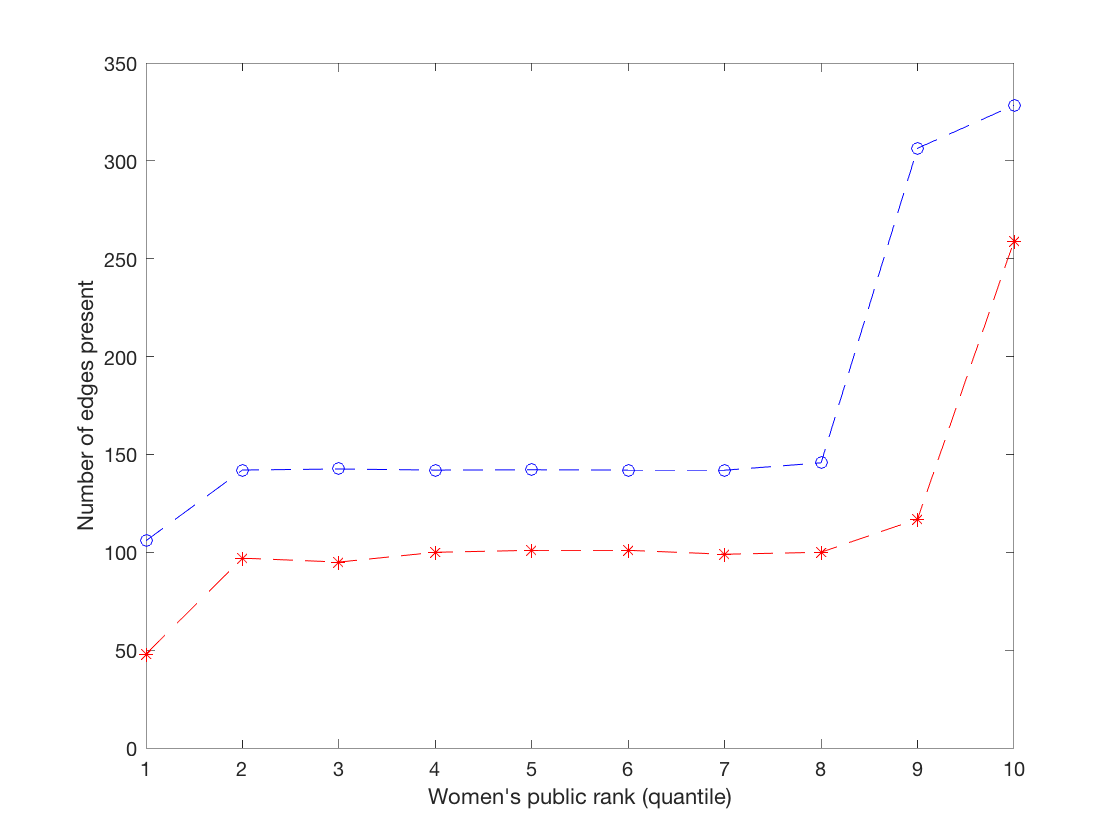}}
    \qquad
    \subfloat[Number of edges in the acceptable edge set proposed during the run of DA, per women, by decile; average in blue with circles, maximum in red with stars.]{\includegraphics[width=0.47\linewidth,height=4cm]{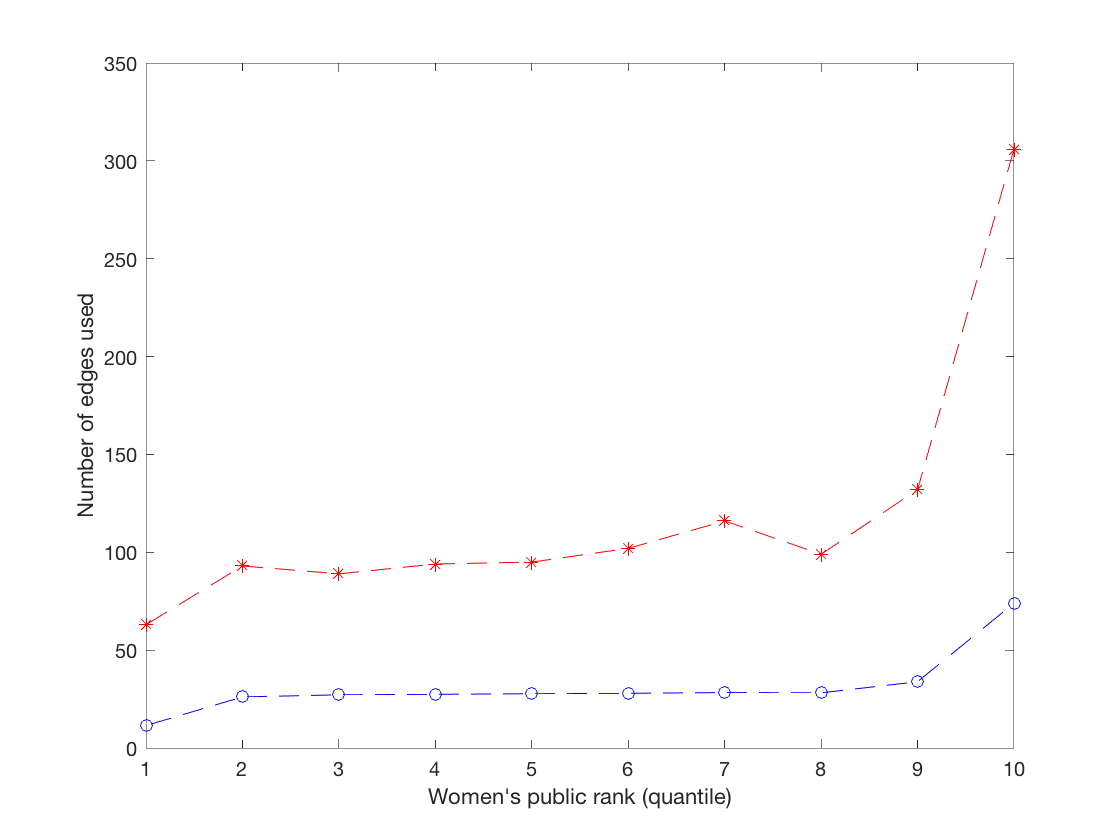}}
    \caption{One-to-one case: summary statistics.}
\label{fig:edges_avg_2000}
\end{figure}

\begin{figure}
    \centering
    \subfloat[Number of edges in the acceptable edge set for each woman.]{	\includegraphics[width=0.47\linewidth,height=4cm]{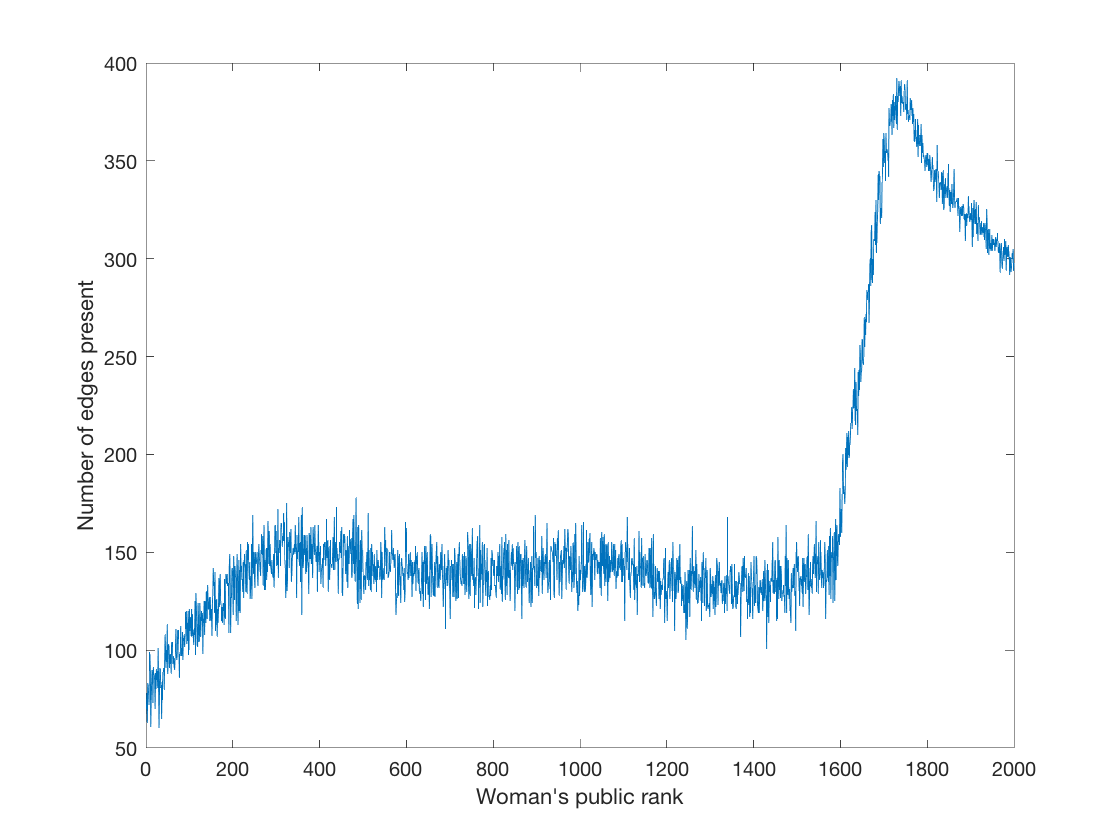}}
    \qquad
    \subfloat[Number of edges in the acceptable edge set  proposed by each woman.
    ]{\includegraphics[width=0.47\linewidth,height=4cm]{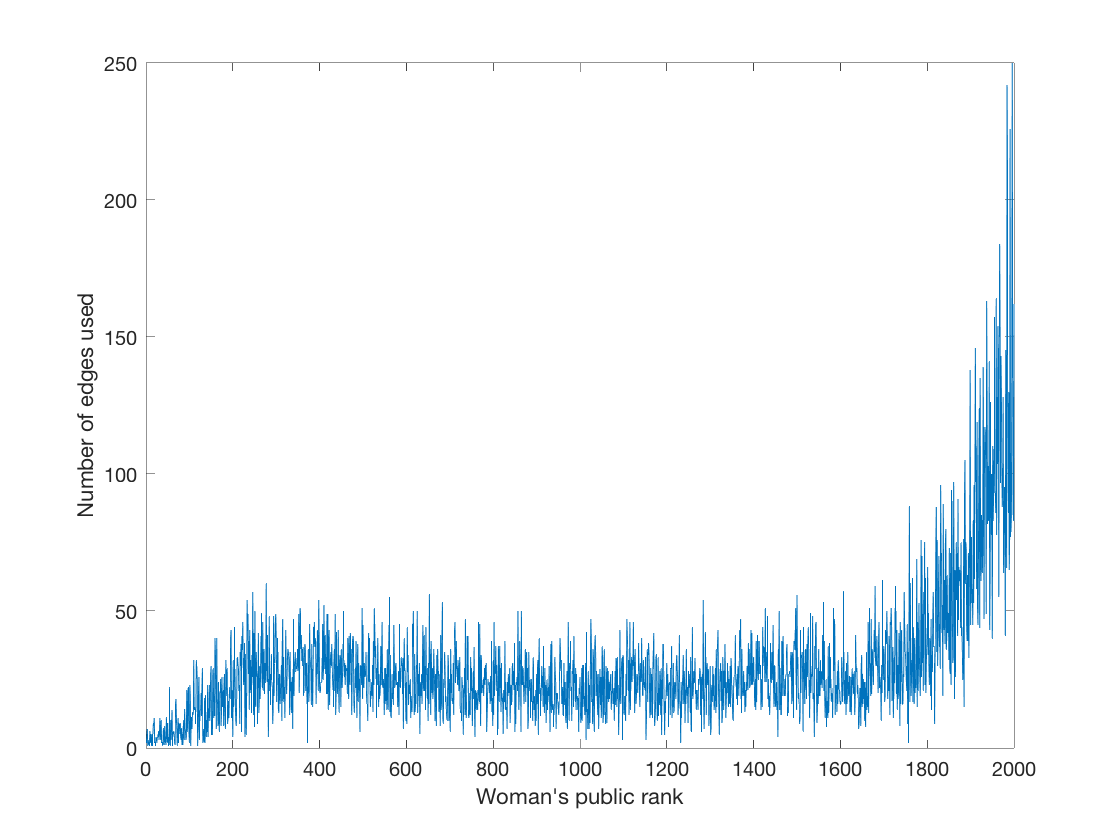}}
    \caption{One-to-one case: a typical run.}
\label{fig:edges_typ}
\end{figure}

We repeated the simulation for the many-to-one setting. In Figure~\ref{fig:edges_typ_8}, we show the results for 2000 workers and 250 companies, each with 8 positions. Now, on average, a typical worker (i.e.\ among the top 80\%) has an average preference list length of 55 and makes 7 proposals.

\begin{figure}
    \centering
    \subfloat[Many to One Setting: Number of edges in the acceptable edge set per worker, by decile; average in blue with circles, minimum in red with stars. ($n_w=\text{2,000}$, $d=8$, $\lambda=0.8$, $L_c = 0.14$, $L_w=0.24$.)]{	\includegraphics[width=0.47\linewidth,height=4cm]{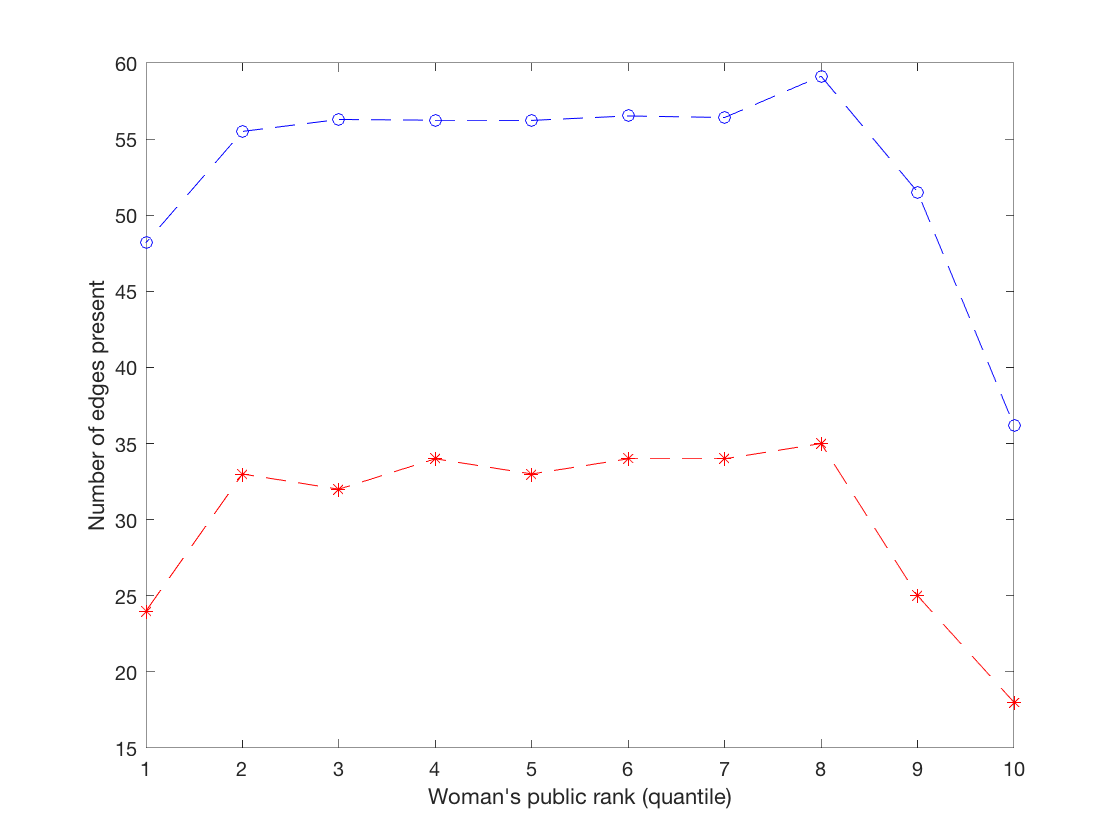}}
    \qquad
    \subfloat[Number of edges in the acceptable edge set proposed during the run of DA, per worker, by decile; average in blue with circles, maximum in red with stars. 
    ]{\includegraphics[width=0.47\linewidth, height=4cm]{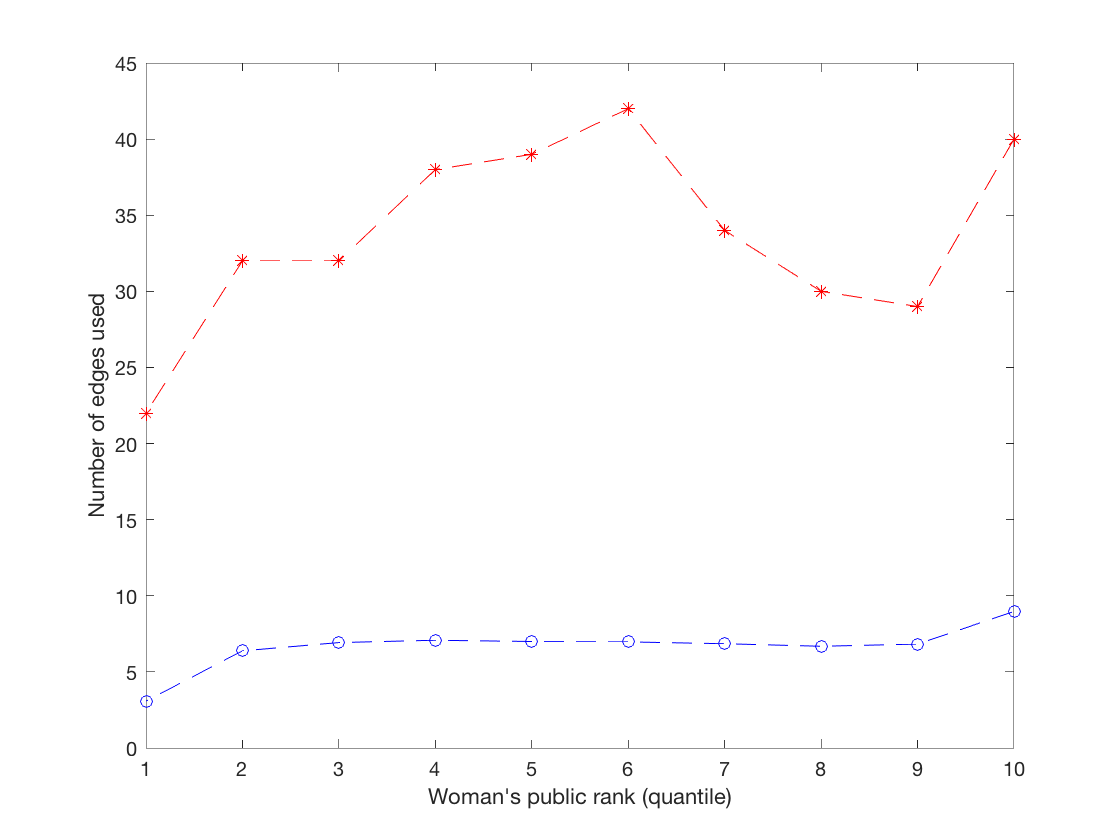}}
    \caption{Many to One Setting.}
\label{fig:edges_typ_8}
\end{figure}
The one-to-one results show that for non-bottommost agents, the preference lists have length 150 on the average, 
while women make 30 proposals on the average (these numbers are slightly approximate). What is going on?
We believe that the most common matches provide a small loss or gain ($\Theta(n^{-1/3})$ in our theoretical bounds) as opposed to the maximum loss possible ($\Theta(n^{-1/3}\ln^{1/3}n)$ in our theoretical bounds),
as is indicated by our distribution bound on the losses
(see item 4 in Section~\ref{sec::extending-results}).
The question then is where do these edges occur in the preference list, and the answer is about one fifth of the way through (for one first has the edges providing a gain, which only go to higher up agents on the opposite side, and then one has the edges providing a loss, and these go both up and down).
However, a few of the women will need to go through most of their list, as indicated by the fact that the max and min lines (for example in Figure \ref{fig:edges_typ_8}) roughly coincide.

This effect can also be seen in the many-to-one experiment but it is even more stark on the worker's side. The reason is that the number of companies with whom a worker $w$ might match which are above $w$, based on their public ratings alone, is $\Theta(L_c n_c)$, while
the number below $w$ is $\Theta(L_w n_c)$, a noticeably larger number. (See Appendix~\ref{sec::cone-size} for a proof of these bounds.)
The net effect is that there are few edges that provide $w$ a gain, and so the low-loss edges, which are the typical matches, are reached even sooner in this setting.

Now we turn to why the number of edges in the available edge set per woman changes at the ends of the range.
There are two factors at work.
The first factor is due to an increasing loss bound as we move toward the bottommost women, which increases the sizes
of their available edge sets.
The second factor is due to public ratings.
For a woman $w$ the range of men's public ratings for its acceptable edges is
$[r_m-\Theta(\Lbar),r_m+\Theta(\Lbar)]$, where $m$ is aligned with $w$. 
But at the ends a portion of this range will be cut off, 
reducing the number of acceptable edges, with the effect more pronounced for low public ratings. 
Because $\lambda=0.8$, initially, as we move to lower ranked women, 
the gain due to increasing the loss bound dominates the loss due to a reduced public rating range, 
but eventually this reverses. Both effects can be clearly seen in Figure~\ref{fig:edges_typ}(a), for example.

\subsection{Unique Stable Partners}

Another interesting aspect of our simulations is that they showed that most agents have a unique stable partner.
This is similar to the situation in the popularity model when there are short preference lists, but here this result appears to hold with full length preference lists.
In Figure~\ref{fig:gap}, we show the outcome on a typical run and averaged over 100 runs, for $n=\text{2,000}$ in the one-to-one setting.
We report the results for the men, but as the setting is symmetric they will be similar for the women.
On the average, among the top 90\% of agents by rank,
0.5\% (10 of 1,800) had more than one stable partner, and among the remainder another 2\% had multiple stable partners (40 of 200).

Also, as suggested by the single run illustrated in Figure~\ref{fig:gap}(a), the pair around public rank 1,600 and the triple between 1,200 and 1,400 have multiple stable partners which they can swap (or exchange via a small cycle of swaps) to switch between different stable matchings. This pattern is typical for the very few men with multiple stable partners outside the bottommost region.

\begin{figure}[htb]
    \centering
    \subfloat[Public ranks of men with multiple stable partners in a typical run.]{	\includegraphics[width=0.47\linewidth]{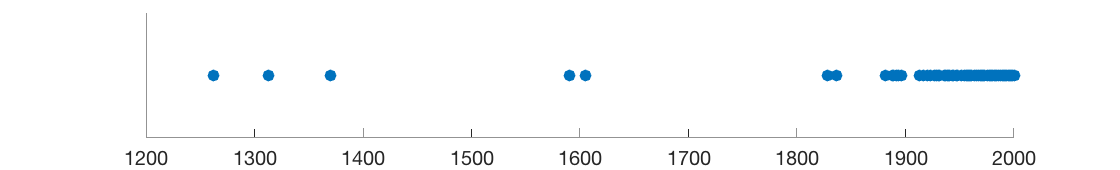}}
    \qquad
    \subfloat[Average numbers of men with multiple stable partners, by decile.]{\includegraphics[width=0.47\linewidth,height=4cm]{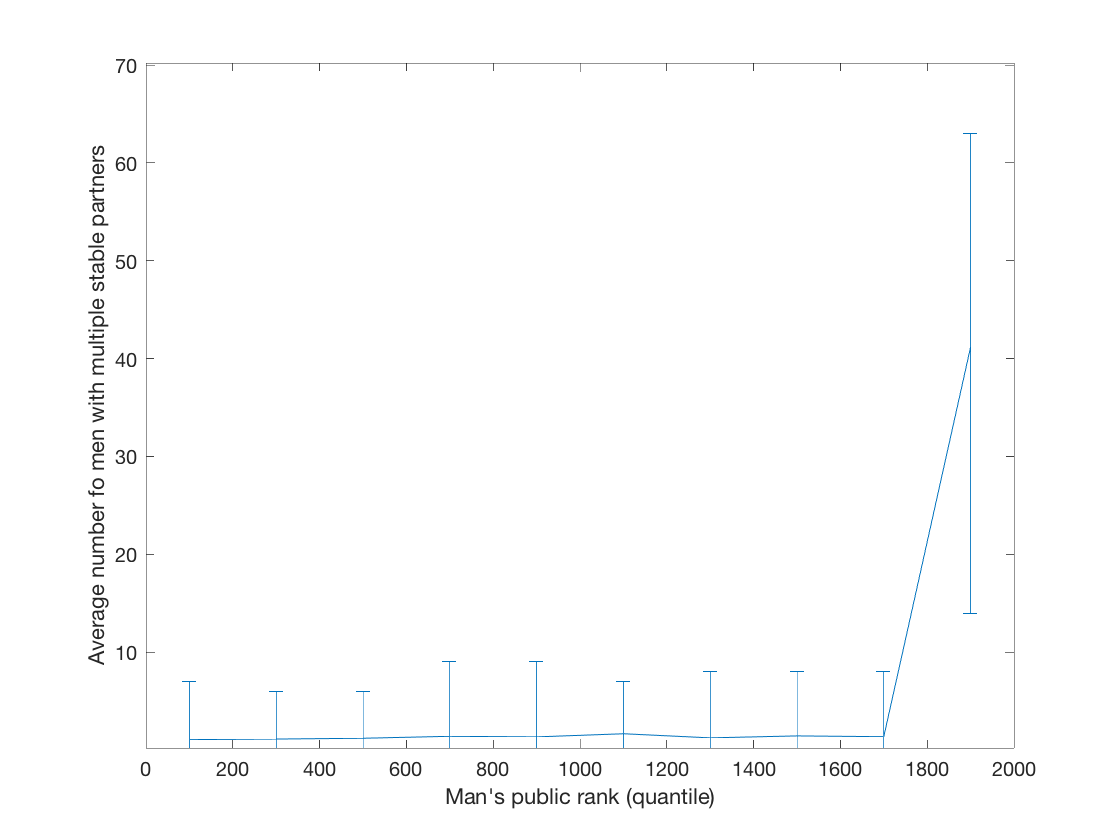}}
    \caption{Unique stable partners, one-to-one setting.}
\label{fig:gap}
\end{figure}

\subsection{Constant Number of Proposals}
\label{sec:const-no-props}

Our many-to-one experiments suggest that the length of the preference lists needed by our model are larger than those observed in the NRMP data. In addition, even though there is a simple rule for identifying these edges, in practice the communication that would be needed to identify these edges may well be excessive. In light of this it is interesting to investigate what can be done when the agents have shorter preference lists.

We simulated a strategy where the workers' preference lists contain only a constant number of edges. We construct an \textit{Interview Edge Set} which contains the edges $(w,c)$ satisfying the following conditions:
\begin{enumerate}
    \item Let $r_w$ and $r_c$ be the public ratings of $w$ and $c$ respectively. Then $|r_w-r_c|\le p$. 
    \item The private score $w$ has for $c$ as well as the private score of $c$ for $w$ are both greater than $q$. 
\end{enumerate}
We choose the parameters $p$ and $q$ so as to have 15 edges per agent on average.  Many combinations of $p$ and $q$ would work. We chose a pair that caused relatively few mismatches.
We then ran worker proposing DA on the Interview Edge Set.

One way of identifying these edges is with the following communication protocol: the workers signal
the companies which meet their criteria (the workers' criteria); the companies then reply to those workers who meet their
criteria. In practice this would be a lot of communication on the workers's side, and therefore it may
be that an unbalanced protocol where the workers use a larger $q_w$ as their private score cutoff and the companies a correspondingly smaller $q_c$ is more plausible. Clearly this will affect the losses each side incurs when there is a match, but we think it will have no effect on the non-match probability, and as non-matches are the main source of losses, we believe our simulation is indicative.
We ran the above experiment with $p=0.19$ and $q=0.60$, with the company capacity being $8$. Figure \ref{fig:constant}(a) shows the locations of unmatched workers in a typical run of this experiment while \ref{fig:constant}(b) shows the average numbers of unmatched workers per quantile (of public ratings) over $100$ runs. We observe that the number of unmatched workers is very low (about 1.5\% of the workers) and most of these are at the bottom of the public rating range. 

Figure \ref{fig:constant}(c) compares the utility obtained by the workers in the match obtained by running worker-proposing DA on the Interview Edge Set to the utility they obtain in the worker-optimal stable match. We observe that only a small number of workers have a significantly worse outcome when restricted to the Interview Edge Set. 

\begin{figure}[h!]
    \centering
    \subfloat[Public ranks of unmatched workers in a typical run.]{	\includegraphics[width=0.29\linewidth]{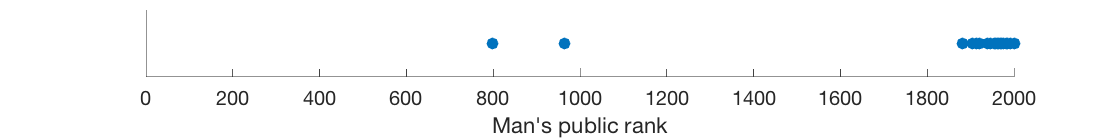}}
    \qquad
    \subfloat[Average numbers of unmatched workers by public rating decile.]{\includegraphics[width=0.29\linewidth]{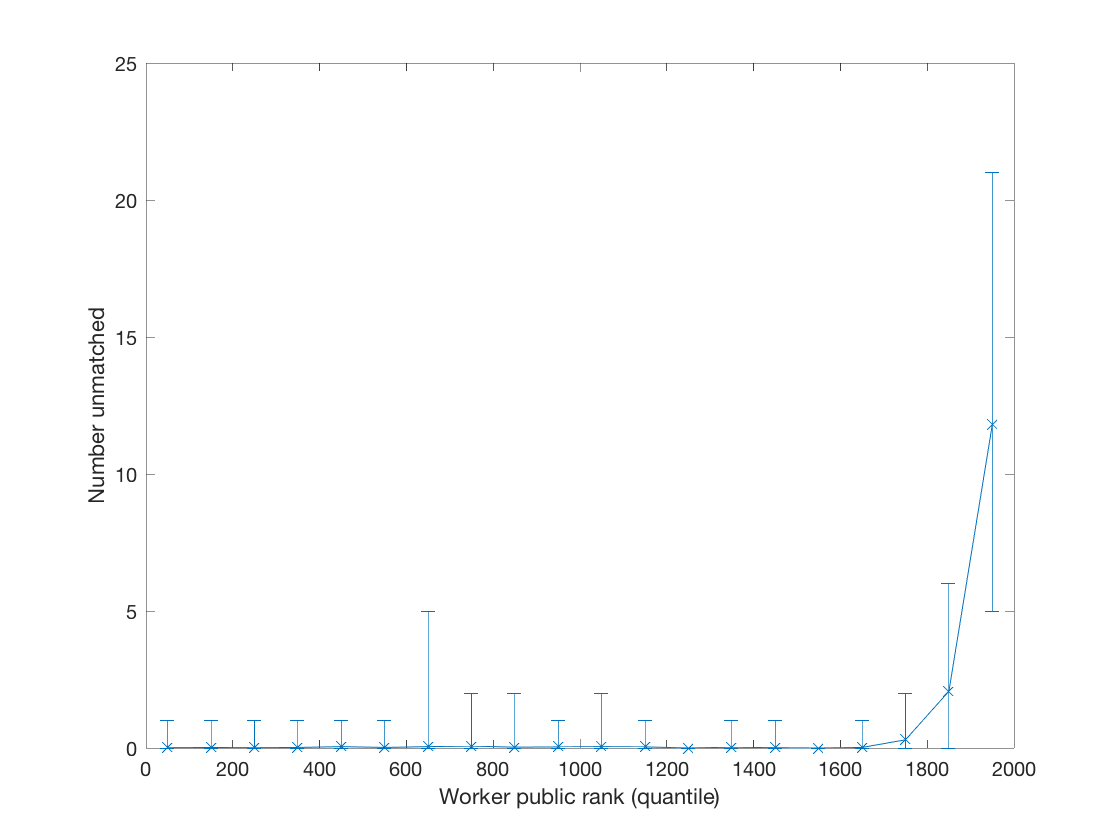}}
    \qquad
    \subfloat[Distribution of workers' utilities with worker-proposing DA: $\text{(full edge set result)}$ $- \text{(Interview edge set result)}$ ]{\includegraphics[width=0.29\linewidth]{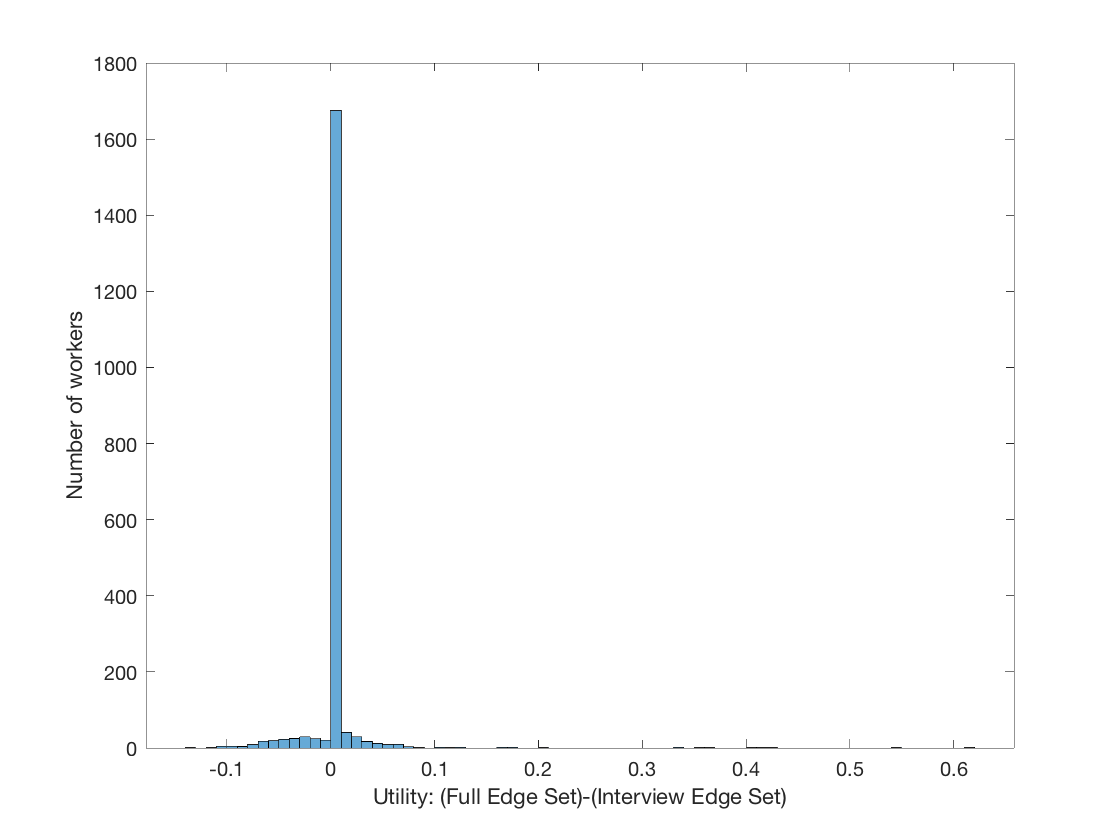}}
    \caption{Constant number of proposals.}
    
\label{fig:constant}
\end{figure}


\section{Discussion and Open Problems}
\label{sec:discussion}

Our work shows that in the bounded derivatives model, apart from a sub-constant fraction of the
agents, each of the other agents has $O(\ln n)$ easily identified edges on their preference list which cover all
their stable matches w.h.p. 

As described 
in Section~\ref{sec:simulations},
our experiments for the one-to-one setting yield a need for what appear to be impractically large preference lists. 
While the results in the many-to-one setting are more promising, even here
the preference lists appear to be on the large side.
Also, while our rule for identifying the edges to include is simple,
in practice it may well require too much communication to identify these edges.
At the same time, our outcome is better than what is achieved in practice: we obtain a complete match
with high probability, whereas in the NRMP setting a small but significant percentage of positions are left unfilled.
Our conclusion is that it remains important to understand how to effectively select smaller sets of edges.

In the popularity model, it is reasonable for each agent to simply select their favorite partners. But in the current setting, which we consider to be more realistic, it would be an ineffective strategy, as it would result in most agents remaining unmatched.
Consequently, we believe the main open issue is to characterize what happens when the number of edges $k$ that an agent can list
is smaller than the size of the allowable edge set.
We conjecture that following a simple protocol for selecting edges to list, such as the one
we use in our experiments (see Section~\ref{sec:const-no-props}), will lead to an $\eps$-Bayes-Nash equilibrium, where $\eps$ is a decreasing function of $k$. Strictly speaking, as the identification of allowable edges requires communication,
we need to consider the possibility of strategic communication,
and so one would need to define a notion of $\eps$-equilibrium akin to a Subgame Perfect equilibrium.
We conjecture that even with this, it would still be an $\eps$-equilibrium.

Finally, it would be interesting to resolve whether the experimentally observed near uniqueness of the stable matching
for non-bottom agents is a property of the linear separable model.
We conjecture that in fact it also holds in the bounded derivatives model.

\newpage
\appendix

\section{Overview of the Appendices}
\label{sec::appendix-overview}
Here we provide an overview of the appendices that follow.

Appendix~\ref{sec::main_lemma_full_proof} provides the omitted proofs of claims from the main body of the paper.
Appendix~\ref{sec::utility_models} defines all the utility models we consider, Appendix~\ref{sec::results} states the remaining results, and Appendix~\ref{sec::analysis} gives sketch proofs of all these results.
Complete analyses are given in the following appendices: The upper bound results are shown in
Appendix~\ref{sec::upper_full_proofs}; they are based on the key ideas involved in proving Theorems~\ref{thm::basic-linear-result} and~\ref{lem::acceptable-suffice}.
The lower bounds are shown in Appendix~\ref{sec::lower_full_proofs}.
The analysis of the $\epsilon$-Nash equilibrium is given in Appendix~\ref{sec::eps-BN-equil-full-match}; this
uses ideas from our analysis of Theorem~\ref{thm::basic-linear-result} as well as several other ideas, resulting in a quite involved proof. 
Finally, further experimental results are described in Appendix~\ref{sec::more_numerics}. 

\section{Missing Proofs}
\label{sec::main_lemma_full_proof}

\hide{
We run the double-cut DA in two steps, defined as follows. 

Let $h_i=\big|M[r_{m_i}-\alpha,r_{m_i}]\setminus\{m_i\}\big|$ and $\ell_i=h_i+\alpha n -1$.
\\
\emph{ Step 1}. Every unmatched woman with rank at most
$i+\ell_i$ 
keeps proposing until her next proposal
is to $m_i$,
or she runs out of proposals.\\
\emph{Step 2}. Each unmatched women makes her next proposal, if any, which will be a proposal to $m_i$.

\smallskip

Our analysis is based on the following four observations. 

\begin{claim}
\label{clm::span-man-interval}
Let ${\mathcal B}_1$ be the event that for all $i$,
$h_i\le 1+ \tfrac 32 \alpha (n-1)$.
${\mathcal B}_1$ occurs with probability at most $n\cdot \exp(-\alpha(n-1)/12)$.
The only randomness used in the proof are the choices of the men's public ratings.
The same bound applies to the women.
\end{claim}

\begin{claim}
\label{clm::span-woman-interval}
Let ${\mathcal B}_2$ be the event that for all $i$,
$r_{w_{i+\ell_i}}\ge r_{w_i} -3\alpha$.
Suppose ${\mathcal B}_1$ does not occur.
Then ${\mathcal B}_2$ occurs with probability at most $n\cdot \exp(-\alpha n/24)$.
The only randomness used in the proof are the choices of the women's public ratings.
The same bound applies to the men.
\end{claim}

\begin{claim}
\label{clm::many-man-acceptable-props}
Let ${\mathcal B}_3$ be the event that between them, the women with
rank at most $i+\ell_i$ make fewer than
$\tfrac 12 \alpha\beta n$ Step 2 proposals to $m_i$.
If events ${\mathcal B}_1$ and ${\mathcal B}_2$
do not occur, then ${\mathcal B}_3$ occurs with probability at most $\exp(-\alpha\beta n/8)$.
The only randomness used in the proof are the choices of the women's private scores.
\end{claim}

\begin{claim}
\label{clm::one-man-accept-prop}
If none of the events ${\mathcal B}_1$, ${\mathcal B}_2$, or
${\mathcal B}_3$
occur, then
at least one of the Step 2 proposals to $m_i$ will
cause him a loss of at most $L$ with probability at least
$1 -(1-\gamma)^{\alpha\beta n/2}\ge 1- \exp(-\alpha\beta\gamma n/2)$.
The only randomness used in the proof are the choices of the men's private scores.
\end{claim}
Claim~\ref{clm::one-man-accept-prop} was already proved in observation (4)
in the proof sketch of this lemma.
}

\begin{proof} (Of Claim~\ref{clm::span-man-interval}.)
We prove the bound for an arbitrary man $m$ with public rating $r_m$.
The expected number $n_x$ of men other than $m$ in $M[r_m-\alpha,r_m]$ is $\alpha(n-1)$.
This bound depends on the independent random choices of the men's public ratings.
Thus, by a Chernoff bound, 
\begin{align*}
    \Pr\big[n_x \ge  \tfrac 32 \alpha(n-1)]  \le \exp(\alpha(n-1)/12).
\end{align*}
Now, we apply a union bound to all $n$ men to obtain the stated result.
\end{proof}

\begin{proof} (Of Claim~\ref{clm::span-woman-interval}.)
We prove the bound for an arbitrary woman $w$ with public rating $r_w\ge 3\alpha$.
The expected number $n_y$ of women other than $w$ in $W[r_w-3\alpha,r_m]$ is $3\alpha(n-1)$.
This bound depends on the independent random choices of the women's public ratings.
Thus, by a Chernoff bound, 
\begin{align*}
    \Pr\big[n_y \le  \tfrac 52 \alpha(n-1)]  \le \exp(\alpha(n-1)/24).
\end{align*}
Now, we apply a union bound to all $n$ women to obtain the stated result.
\hide{

We prove the bound for $w_i$.
Note that $\ell_i =h_i +\alpha n-1$,
and as ${\mathcal B}_1$ does not occur, by
Claim~\ref{clm::span-man-interval},
$h_i \le  \tfrac 32 \alpha (n-1)$.
Next, we bound the rating range $\tau$ such that
$\big|W[r_{w_i}-\tau,r_{w_i}]\big|\ge \ell_i$ with high probability.

For any $\tau>0$, the expected number $n_{w_i}$ of woman other than $w_i$
in $W[r_{w_i}-\tau,r_{w_i}]$ is $\tau(n-1)$.
Thus, by a Chernoff bound, 
\begin{align*}
    \Pr\big[n_{w_i} \le \tau(n-1) - \tfrac 12 \alpha(n-1)\big]  \le \exp(-\alpha^2(n-1)/8\tau).
\end{align*}
We want to ensure $n_{w_i} \ge \ell_i$.
This holds if 
\begin{align*}
  \tau(n-1) - \tfrac 12 \alpha(n-1) \ge   \tfrac 32 \alpha (n-1) +\alpha n-1.
\end{align*}
$\tau = 3\alpha$ suffices.
This yields a probability bound of $\exp(-\alpha(n-1)/24)$.
A union bound over all $n$ women yields the result.
}
\end{proof}

\begin{proof}
(Of Claim~\ref{clm::discrete-dist}.)
We do this in such a way that for each woman the probability of selecting $m_i$ is only increased,
and the probability of having any differences
in the sequence of actions in the original continuous setting and the discrete setting is at most $\delta$.
We detail how to construct this discrete utility space
in the appendix. The space depends on
$\delta$, which can be arbitrarily small. 
For each man $m$ we partition the interval $[V(r_m,0),V(r_m,1)]$ of utilities it can provide
into the following $z$ subintervals: $[V(r_m,0),V(r_m,1/z)),[V(r_m,1/z),V(r_m,2/z)),\ldots,[V(r_m,(z-2)/z),V(r_m,(z-1)/z)),[V(r_m,(z-1)/z),V(r_m,1)]$.
Note that the probability that woman $w$'s edge to $m$ occurs in any one subinterval is $1/z$.
Over all $n$ men this specifies $n(z-1)$ utility values that are partitioning points.
Now, for each man $m$, we partition the interval $[V(r_m,0),V(r_m,1)]$ about all $n(z-1)$ of these points,
creating $n(z-1)+1$ subintervals.
The values at these partition points 
plus the endpoint 
$V(r_m,0)$ are the discrete utilities available to the women for evaluating man $m$, obtained by rounding down her actual utility.

Consider a single interval $I=[V(r_m,a),V(r_m,b))$ and an arbitrary woman $w$.
Let $p^{I,c}_j$ be the probability
that in the original continuous private score setting,
the probability exactly one man $m_j$ provides her a utility in $I$,
let $p^{I,c}_{\text{none}}$ be
the probability no one provides her a utility in $I$, and let 
$\overline{p}^{I,c}$ be the probability that two or more men provide her a utility in $I$.
Note that $\overline{p}^{I,c}\le n(n-1)/2z^2$.
In the discrete setting, we remove the possibility
of making two proposals and increase the probability of selecting man $m_i$ by this amount: the probability of selecting man $m_j\ne m_i$ alone, with private score $a$ will be $p^{I,c}_j$, the
probability of selecting no one will be
$p^{I,c}_{\text{none}}$, while the probability of selecting man $m_i$ with private score $a$ becomes $p^{I,d}_i=p^{I,c}_{i}+\overline{p}^{I,c}$.

Recall that in the run of double-cut DA, each woman repeatedly makes the next highest utility proposal.
We view this as happening as follows. For each successive discrete utility value, woman $w$ has the following choices.

i. she selects some man 
to propose to (among the men $w$ she has not yet proposed to); or 

ii. she takes ``no action''. This corresponds 
to $w$ making no proposal achieving the current utility.
\\
Every run of DA in the continuous setting that does not have a woman selecting two men over the course of a single
utility interval will result in the identical run in the discrete setting in terms of the order in which each
woman proposes to the men.
Thus, the probability
that in the discrete setting $w$'s action in terms of who she selects and in what order differs from her actions in the continuous setting is at most $n^3/2z\triangleq \delta/n$ (because, in each possible computation, $w$ makes at most $nz$ choices, and for each choice the probability difference is at most $n^2/2z^2)$.
Furthermore, the probability of selecting man $m_i$ is only increased.
So over all $n$ women, the probability of anything
changing is at most $\delta$.
Clearly, $\delta$ can be made arbitrarily small.
\end{proof}

\hide{
\begin{proof} (Of Claim~\ref{clm::many-man-acceptable-props}.)
First, we simplify the action space by viewing the decisions as being made on a discrete utility space.

We do this in such a way that for each woman the probability of selecting $m_i$ is only increased,
and the probability of having any differences
in the sequence of actions in the original continuous setting and the discrete setting is at most $\delta$. The space depends on
$\delta$, which can be arbitrarily small. 

For each man $m$ we partition the interval $[V(r_m,0),V(r_m,1)]$ of utilities it can provide
into the following $z$ subintervals: $[V(r_m,0),V(r_m,1/z)),[V(r_m,1/z),V(r_m,2/z)),\ldots,[V(r_m,(z-2)/z),V(r_m,(z-1)/z)),[V(r_m,(z-1)/z),V(r_m,1)]$.
Note that the probability that woman $w$'s edge to $m$ occurs in any one subinterval is $1/z$.
Over all $n$ men this specifies $n(z-1)$ utility values that are partitioning points.
Now, for each man $m$, we partition the interval $[V(r_m,0),V(r_m,1)]$ about all $n(z-1)$ of these points,
creating $n(z-1)+1$ subintervals.
The values at these partition points 
plus the endpoint 
$V(r_m,0)$ are the discrete utilities available to the women for evaluating man $m$, obtained by rounding down her actual utility.

Consider a single interval $I=[V(r_m,a),V(r_m,b))$ and an arbitrary woman $w$.
Let $p^{I,c}_j$ be the probability
that in the original continuous private score setting,
the probability exactly one man $m_j$ provides her a utility in $I$,
let $p^{I,c}_{\text{none}}$ be
the probability no one provides her a utility in $I$, and let 
$\overline{p}^{I,c}$ be the probability that two or more men provide her a utility in $I$.
Note that $\overline{p}^{I,c}\le n(n-1)/2z^2$.
In the discrete setting, we remove the possibility
of making two proposals and increase the probability of selecting man $m_i$ by this amount: the probability of selecting man $m_j\ne m_i$ alone, with private score $a$ will be $p^{I,c}_j$, the
probability of selecting no one will be
$p^{I,c}_{\text{none}}$, while the probability of selecting man $m_i$ with private score $a$ becomes $p^{I,d}_i=p^{I,c}_{i}+\overline{p}^{I,c}$.

Recall that in the run of double-cut DA, each woman repeatedly makes the next highest utility proposal.
We view this as happening as follows. For each successive discrete utility value, woman $w$ has the following choices.

i. she selects some man 
to propose to (among the men $w$ she has not yet proposed to); or 

ii. she takes ``no action''. This corresponds 
to $w$ making no proposal achieving the current utility.
\\
Every run of DA in the continuous setting that does not have a woman selecting two men over the course of a single
utility interval will result in the identical run in the discrete setting in terms of the order in which each
woman proposes to the men.
Thus, the probability
that in the discrete setting $w$'s action in terms of who she selects and in what order differs from her actions in the continuous setting is at most $n^3/2z\triangleq \delta/n$ (because, in each possible computation, $w$ makes at most $nz$ choices, and for each choice the probability difference is at most $n^2/2z^2)$.
Furthermore, the probability of selecting man $m_i$ is only increased.
So over all $n$ women, the probability of anything
changing is at most $\delta$.
Clearly, $\delta$ can be made arbitrarily small.

We represent the possible computations of the double-cut DA in this discrete setting using a tree $T$. Each woman will be going through her possible utility values in decreasing order, with the possible actions of the various women being interleaved in the order given by the DA processing.
Each node $u$ corresponds to a woman $w$ processing her next utility value: the possible choices at this utility, as specified in (i) and (ii) above, are each represented by an edge descending from $u$.

We observe the following important structural feature of tree $T$.
Let $S$ be the subtree descending from the edge corresponding to woman $w$ proposing to $m_i$;
in $S$ there are no further actions of $w$, i.e.\ no nodes at which $w$ makes a choice, because DA cuts at the proposal to $m_i$.

At each leaf of $T$, up to $i+h_i -1= i+\ell_i - \alpha n$
women will have been matched with someone other than $m_i$
(recall that $\ell_i =h_i + \alpha n - 1$).
The other women either finished with a proposal to $m_i$ or both failed to match and did not propose to $m_i$. Let $w$ be a woman in the latter category.
Then, 
on the path to this leaf, $w$ will have traversed edges corresponding to a choice at each discrete utility in the range $[V(r_{m_i}-\alpha,1),V(1,1)]$.

We now create an extended tree, $T_x$, by adding a subtree at each leaf; this subtree will correspond to pretending there were no matches; the effect is that 
each women will take an action at all their remaining utility values in the range
$[V(r_{m_i}-\alpha,1),V(1,1)]$,
except that in the sub-subtrees descending from edges that correspond to some woman $w$ selecting $m_i$, $w$ has no further actions.
For each leaf in the unextended tree, the probability
of the path to that leaf is left unchanged.
The probabilities of the paths in the extended tree are then calculated by multiplying the path probability in the unextended tree with the probabilities of each woman's choices in the extended portion of the tree.

Next, we create an artificial mechanism $\mathcal M$ that acts on tree $T_x$. The mechanism $\mathcal M$
is allowed to put $i+\ell_i - \alpha n$ ``blocks'' on each path; blocks can be placed at internal nodes. A block names a woman $w$ and corresponds to her matching (but we no longer think of the matches as corresponding to the outcome of the edge selection; they have no meaning beyond making
all subsequent choices by this woman be the ``no action'' choice).

DA can be seen as choosing to place up to $i+\ell_i - \alpha n$ blocks at each of the nodes corresponding to a leaf of $T$.
$\mathcal M$ will place its blocks so as to minimize the probability $p$ of paths with at least $\tfrac12 \alpha\beta n$ women choosing edges to $m_i$.
Clearly $p$ is a lower bound on the probability
that the double-cut DA makes at least $\tfrac12 \alpha\beta n$ proposals in Step 2.
Given a choice of blocks we call the resulting probability of having fewer than $\tfrac12 \alpha\beta n$ women choosing edges to $m_i$ the \emph{blocking probability}.

Claim~\ref{clm::many-props-despite-blocking} completes the argument.
\end{proof}
}

\hide{
\begin{claim}
\label{clm::many-props-despite-blocking-rpt} (Repeats Claim~\ref{clm::many-props-despite-blocking}.)
The probability that $\mathcal M$ makes at least $\tfrac 12\alpha\beta n$ proposals to $m_i$ is at least $1-\exp(-\alpha\beta n/8)$.
\end{claim}

\begin{proof} (Of Claim~\ref{clm::many-props-despite-blocking}.)
We will show that the most effective blocking strategy is to block all but $\alpha n$ women before they have made any choices.
Then, as we argue next, each of the remaining $\alpha n$ women $w$ has independent probability at least $\beta$ that their proposal to $m_i$ is cutoff-surviving.
To be cutoff-surviving, it suffices that
$V(r_{m_i},s_w(m_i))\ge V(r_{m_i}-\alpha,1)$.
But we know  by \eqref{eqn::beta-constraint} that $V(r_{m_i}-\alpha,1)\le V(r_{m_i},1-\beta)$, and therefore it suffices that
$s_w(m_i) \ge 1-\beta$, which occurs with probability $\beta$.

Consequently, in expectation, there are at least $\alpha\beta n$ proposals to $m_i$, and therefore, by a Chernoff bound,
at least $\tfrac 12 \alpha\beta n$ proposals with probability at least $\exp(-\alpha\beta n/8)$.

We consider the actual blocking choices made by $\mathcal M$ and modify them bottom-up in a way that only reduces the probability of there being $\tfrac 12 \alpha\beta n$ or more proposals to $m_i$.

Clearly, ${\mathcal M}$ can choose to block the same maximum number of women on every path as it never hurts to block more women (we allow the blocking of women who have already proposed to $m_i$ even though it does not affect the number of proposals to $m_i$).

Consider a deepest block at some node $u$ in the tree,
and suppose $b$ women are blocked at $u$.
Let $v$ be a sibling of $u$.
As this is a deepest block, there will be no blocks
at proper descendants of $u$, and furthermore as there
are the same number of blocks on every path,
$v$ will also have $b$ blocked women.

Observe that if there is no blocking in a subtree, then the probability that a woman makes a proposal to $m_i$ is independent of the outcomes for the other women.
Therefore the correct blocking decision at node $u$ is to block the $b$ women with the highest probabilities of otherwise making a proposal to $m_i$,
which we call their \emph{proposing probabilities}; the same is true at each of its siblings $v$.

Let $x$ be $u$'s parent. Suppose the action at node $x$ concerns woman $\widetilde{w}_x$.
Note that the proposing probability for any woman 
$w\ne \widetilde{w}_x$ is the same at $u$ and $v$ because
the remaining sequence of actions for woman $w$ is the same at nodes $u$ and $v$, and as they are independent of the actions of the other women, they yield the same probability of selecting $m_i$ at some point.

We need to consider a number of cases.

\smallskip
\noindent
{\bf Case 1}. $w$ is blocked at every child of $x$. \\
Then we could equally well block $w$ at node $x$.

\noindent
{\bf Case 2}. At least one woman other than $\widetilde{w}_x$ is blocked at some child of $x$.\\
Each such blocked woman $w$ has the same proposing
probability at each child of $x$.
Therefore by choosing to block the women with the highest proposing
probabilities, we can ensure that at each node either
$\widetilde{w}_x$ plus the same $k-1$ other women
are blocked, or these $k-1$ woman plus the same
additional woman $w'\ne \widetilde{w}_x$ are blocked.
In any event, the blocking of the first $k-1$ women can be moved to $x$.

\noindent
{\bf Case 2.1}. $\widetilde{w}_x$ is not blocked at any child of $x$.\\
Then the remaining identical blocked woman at each child of $x$ can be moved to $x$.

\noindent
{\bf Case 2.2}. $\widetilde{w}_x$ is blocked at some child of $x$ but not at all the children of $x$.\\
Notice that we can avoid blocking $\widetilde{w}_x$  at the child $u$ of $x$ corresponding to selecting $m_i$, 
as the proposing probability for $\widetilde{w}_x$ after it has selected $m_i$ is $0$, so blocking any other women would be at least as good.
Suppose that $w\ne \widetilde{w}_x$ is blocked at node $u$.

Let $v$ be another child of $x$ at which $\widetilde{w}_x$ is blocked.
Necessarily, $p_{v,\widetilde{w}_x}$, the proposing probability for $\widetilde{w}_x$ at node $v$, is at least the proposing probability $p_{v,w}$ for $w$ at node $v$ (for otherwise $w$ would be blocked at node $v$); also, $p_{v,w}$ equals the proposing probability for $w$ at every child of $x$ including $u$; in addition, 
$p_{v,\widetilde{w}_x}$ equals the proposing probability for $\widetilde{w}_x$ at every child of $x$ other than $u$.
It follows that $w$ is blocked at $u$ and $\widetilde{w}_x$ can be blocked at every other child of $x$.
But then blocking $\widetilde{w}_x$ at $x$ only reduces the proposing probability.

Thus in every case one should move the bottommost blocking decisions at a collection of sibling nodes to a single blocking decision at their parent.
\end{proof}
}

\hide{
\begin{proof} (Of Lemma \ref{lem::key_bounded}.)
The overall failure probability summed over all $n$ choices of $i$
is
\begin{align*}
n\cdot \exp(-\alpha(n-1)/12)
+n\cdot \exp(-\alpha n/24)
    +n\exp(-\alpha\beta n/8)
    +n\cdot \exp(-\alpha\beta\gamma n/2).
\end{align*}
\end{proof}
}

\section{More General Models}
\label{sec::utility_models}
\subsection{Utility Models}
\hide{
It is natural to view the men and women's ordered preference lists as arising from the fact that the agents differ regarding their perceived utility values for their possible partners. As already discussed, this topic has been 
explored in the literature, in particular in settings where these utility values are thought of as being drawn, either explicitly or implicitly, from various distributions. 
We now define the utility models and settings we consider in this work.
}

\paragraph*{The General Utilities Model}
There are $n$ men and $n$ women. Each man $m$ has a utility $U_{m,w}$ for the woman $w$, and each woman $w$ has a utility $V_{m,w}$ for the man $w$. These utilities are defined as
\begin{align*}
    &U_{m,w}=U(r_w,s_m(w)), \text{ and}\\
    &V_{m,w}=V(r_m,s_w(m)),
\end{align*}
where $r_m$ and $r_w$ are common public ratings, $s_m(w)$ and $s_w(m)$ are private scores specific to the pair $(m,w)$, and $U(\cdot,\cdot)$ and $V(\cdot,\cdot)$ are continuous and strictly increasing functions from $\mathbb{R}^2_+$ to $\mathbb{R}_+$.

The public ratings and private scores are drawn independently from distributions with positive density functions with bounded support on $\mathbb{R}_+$. We assume without loss of generality that all public ratings and private scores are drawn uniformly and independently from $[0,1]$ since there is always a change of variables that transforms them into uniform draws while transforming the utility functions monotonically.

$U$ and $V$ are not explicitly assumed to be bounded. However they are continuous and for the purpose of our analysis we can restrict the domain of $U$ and $V$ to the product of the bounded supports of our ratings and score distributions. These restricted $U$ and $V$ are continuous functions on a compact set and hence are bounded. Now, WLOG, by scaling appropriately, we can assume the range of $U$ and $V$ are both $[0,1]$.

\paragraph*{The Bounded Derivatives Model}
We add a notion of bounded derivatives to the general utilities model.

\begin{defn} \label{def::bdd-deriv}
A function $f(x,y):{\mathbb R}^2\rightarrow {\mathbb R}^+$ has \emph{$(\rho,\mu)$-bounded derivatives} if for all $(x,y)\in {\mathbb R}^2$,
\begin{align*}
   &\rho \le\pdv{f}{x}\Big/\pdv{f}{y}~~\text{\rm --- the ratio bound};\\ 
&\pdv{f}{x}\leq \mu~~\hspace*{0.32in}\text{\rm --- the first derivative bound}.
\end{align*}
\end{defn}
Note that this definition implies $\pdv{f}{y}$ is upper bounded by $\mu/\rho$.

\smallskip

In the bounded derivatives model, the utility functions $U$ and $V$ are restricted to having $(\rho,\mu)$-bounded derivatives, for some constants $\rho, \mu > 0$. In the linear separable model, which is a special case of this model, $\mu= \lambda$ and $\rho=\lambda/(1-\lambda)$.

\subsection{Other Generalizations}

\paragraph*{Unequal numbers of men and women} We generalize the above models to allow for $n$ women and $p$ men, where $n$ and $p$ need not be equal.
Suppose that $n\le p$. It is then convenient to change the public rating ranges to be $[0,p/n]$ for the men and $[p/n -1,p/n]$ for the women. We proceed symmetrically when $n>p$. We will keep the private score range at $[0,1]$. The effect of this change is to ensure that with
high probability the top $n$ public ratings for the men cover approximately the same range as the women's public ratings.

\paragraph*{Many-to-one matchings} The stable matching problem has also been studied in the setting of many-to-one matchings. For example, in the setting of employees and employers, often employers want to hire multiple employees. For this setting, we will refer to the two sides as 
companies and workers. 
Also, we will focus on the bounded derivatives setting.

There are $n_c$ companies and $n_w$ workers.
Each company has $d$ positions, meaning that it wants to match with $d$ workers.
%
Each worker can be hired by only one company. The total capacity of all the companies exactly matches the number of workers, i.e.\ $n_c\cdot d=n_w$.\footnote{Our results generalize easily to the case in which the number of workers differs from the number of available positions. We omit the details.}

$r_c$ will denote the public rating of company $c$, and $r_w$ the public rating of worker $w$. Worker $w$ has private score $s_w(c)$ for company $c$, and company $c$ has private score $s_c(w)$ for worker $w$. 
$U(r_w,s_c(w))$ denotes the utility company $c$ has for worker $w$, and $V(r_c,s_w(c))$ denotes the utility worker $w$ has for company $c$.

To define the loss in the many-to-one setting, we need to define a non-symmetric notion of alignment of workers and companies.

\begin{defn} [Alignment]
Suppose company $c$ has rank $i$ (as per its public rating). Let $w$ be the worker of rank $d\cdot i$ (also as per its public rating). Then $w$ is aligned with $c$.
Likewise, suppose worker $w'$ has rank $j$.
Let $c'$ be the company with rank $\ceil{j/d}$.
Then $c'$ is aligned with $w'$.
\end{defn}

\begin{defn} [Loss, cont.]
\label{defn::loss-many-to-one} 
Let $c$ be a company and let $w$ be aligned with $c$.
The \emph{loss} $c$ sustains from a match of utility $u$ is defined to be $U(r_w,1) - u$.
Similarly, let $w'$ be a worker and let $c'$ be aligned with $w'$.
The \emph{loss} $w'$ sustains from a match of utility $u$ is defined to be $V(r_{c'},1) - u$.
\end{defn}
\section{Results}
\label{sec::results}
\paragraph*{The Bounded Derivatives Model}
We begin by stating our basic result for this model.
\begin{theorem}\label{thm:basic-bdd-deriv-result}
In the bounded derivatives model, when there are $n$ men and $n$ women,
for any given constant $c>0$, for large enough $n$, with probability at least $1-n^{-c}$, in every stable match,
for every $i$, if $r_{w_i}\ge \sigmabar\triangleq 3\Lbar/4\mu$, 
agent $m_i$ suffers a loss of at most $\Lbar$, where $\Lbar=\Theta((\ln n/n)^{1/3})$,
and similarly for the agents $w_i$.
\end{theorem}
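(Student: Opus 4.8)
The plan is to follow the proof of Theorem~\ref{thm::basic-linear-result} essentially verbatim, since Lemma~\ref{lem::key_bounded} and Claims~\ref{clm::span-man-interval}--\ref{clm::many-man-acceptable-props} were stated for arbitrary continuous, strictly increasing utility functions $U,V$. Thus the only new work is: (i) choose the three parameters $\alpha,\beta,\gamma$ as explicit constant multiples of $\Lbar$; (ii) verify that they satisfy the constraints \eqref{eqn::beta-constraint} and \eqref{eqn::gamma-constraint} using the $(\rho,\mu)$-bounded-derivative hypothesis (this replaces the relations ``$\alpha=\beta=\gamma$, $L=2\alpha$'' used in the $\lambda=\tfrac{1}{2}$ linear case); (iii) substitute into the failure probability $p_f$ of Lemma~\ref{lem::key_bounded} and pick $\Lbar$ so that $p_f\le n^{-c}$; and (iv) pass from the double-cut DA to all stable matchings via Claim~\ref{clm::full-DA-better} and man-pessimality, then invoke symmetry for the women.

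For (i)--(ii) I would take $\alpha=\Lbar/(4\mu)$ (rounded so that $\alpha n\in\mathbb{Z}$, which only perturbs lower-order terms), $\beta=\rho\alpha$, and $\gamma=\rho\Lbar/(4\mu)$. Constraint \eqref{eqn::gamma-constraint} is then immediate: for $r\ge 3\alpha$, splitting $U(r,1)-U(r-3\alpha,1-\gamma)$ into a change in the first coordinate followed by a change in the second coordinate and using $\partial U/\partial x\le\mu$ together with the induced bound $\partial U/\partial y\le\mu/\rho$ gives an upper bound of $3\alpha\mu+\gamma\mu/\rho=\tfrac{3}{4}\Lbar+\tfrac{1}{4}\Lbar=\Lbar$. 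Constraint \eqref{eqn::beta-constraint} is the one step that needs an idea rather than bookkeeping, and is the place I anticipate any friction: I would integrate $V$ along the straight segment from $(r-\alpha,1)$ to $(r,1-\beta)$, which stays inside $[0,1]^2$ when $r\ge\alpha$, obtaining $V(r,1-\beta)-V(r-\alpha,1)=\int_0^1(\alpha\,\partial V/\partial x-\beta\,\partial V/\partial y)\,dt$, and then use the ratio bound $\partial V/\partial y\le(\partial V/\partial x)/\rho$ together with $\partial V/\partial x>0$ and $\beta=\rho\alpha$ to conclude the integrand is nonnegative; hence $V(r-\alpha,1)\le V(r,1-\beta)$, as required. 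Note that no lower bound on the derivatives is needed, consistent with the model definition.

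For (iii), with these choices $\alpha\beta\gamma=\rho^2\Lbar^3/(64\mu^3)$, so the dominant (smallest-exponent) term of $p_f$ is $n\exp(-\alpha\beta\gamma n/2)=n\exp(-\rho^2\Lbar^3 n/(128\mu^3))$; taking $\Lbar=(128\mu^3(c+2)/\rho^2)^{1/3}(\ln n/n)^{1/3}=\Theta((\ln n/n)^{1/3})$ drives this below $n^{-c}/4$ for large $n$, while the remaining three terms of $p_f$ carry the strictly larger quantities $\alpha n=\Theta((\ln n)^{1/3}n^{2/3})$ and $\alpha\beta n=\Theta((\ln n)^{2/3}n^{1/3})$, both $\omega(\ln n)$ for this $\Lbar$, so each of those is also below $n^{-c}/4$; hence $p_f\le n^{-c}$. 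For (iv): on the event $\cal E$ of Lemma~\ref{lem::key_bounded}, for every $i$ with $r_{w_i}\ge 3\alpha$ the double-cut DA at $m_i$ and rating $r_{m_i}-\alpha$ delivers $m_i$ a proposal of loss at most $\Lbar$ (this is exactly what ${\cal E}_i$ asserts, via Observation~\ref{obs::towards-man-accept}); Claim~\ref{clm::full-DA-better} then gives loss at most $\Lbar$ for $m_i$ in the full woman-proposing DA, which produces the man-pessimal stable matching, and hence $m_i$ has loss at most $\Lbar$ in every stable matching; running man-proposing DA instead and using that $V$ also has $(\rho,\mu)$-bounded derivatives yields the symmetric statement for every $w_i$ with $r_{m_i}\ge 3\alpha$. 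Since $3\alpha=3\Lbar/(4\mu)$, this is precisely the claimed bound with $\sigmabar=3\Lbar/(4\mu)$. Everything beyond the verification of \eqref{eqn::beta-constraint} is a direct re-run of the linear-case argument with the constants tracked.
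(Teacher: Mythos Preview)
Your proposal is correct and follows essentially the same approach as the paper: the paper also sets $\alpha=\Lbar/4\mu$, $\beta=\gamma=\alpha\rho$, verifies \eqref{eqn::beta-constraint} and \eqref{eqn::gamma-constraint} from the $(\rho,\mu)$-bounded-derivative assumptions, and takes $\Lbar=[128(c+2)\mu^3\ln n/(\rho^2 n)]^{1/3}$ to make the dominant failure term $n\exp(-\alpha\beta\gamma n/2)$ at most $n^{-(c+1)}$. Your line-integral verification of \eqref{eqn::beta-constraint} is just an explicit rendering of what the paper compresses into the phrase ``by the definition of $\rho$, $V(r-\alpha,1)\le V(r,1-\alpha\rho)$''.
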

Note that w.h.p., the public ratings of aligned agents are similar.

In words, w.h.p., all but the bottommost agents (those whose aligned agent has public rating less than $\sigmabar$) suffer a loss of no more than $\Lbar$. We call this high probability outcome $\mathcal E$.

By Theorem \ref{lem::acceptable-suffice}, the implication is that w.h.p.\ a woman can safely restrict her proposals to her acceptable edges, or to any overestimate of this set of edges obtained by her setting an upper bound on the loss she will accept from a match.
There is a small probability--- at most $n^{-c}$---that this may result in a less good outcome, namely the probability
that $\mathcal E$ does not occur.

Then,
w.h.p., every stable match gives each woman $w$, whose aligned agent $m$ has public rating $r_m\ge\sigmabar=\Omega((\ln n/n)^{1/3})$, a partner with public rating in the range $[r_m - \Lbar/\mu, r_m+ 5/4 \Lbar/\mu]$ (see appendix \ref{sec::cone-size}). An analogous statement applies to the men. 

This means that if we are running woman-proposing DA, each of these women might as well limit her proposals to her woman-acceptable edges, which is at most the men with public ratings in the range $r_m \pm \Theta(\Lbar)$ for whom she has private scores of at least $1 - \Theta(\Lbar)$. In expectation, this yields $\Theta(n^{1/3}(\ln n)^{2/3})$ men to whom it might be worth proposing. It also implies that a woman can have a gain of at most $\Theta(\Lbar)$ compared to her target utility.

If, in addition, each man can inexpensively signal the women who are man-acceptable to him, then the women can further limit their proposals to just those men providing them with a signal; in the case of accurate signals, this reduces the expected number of proposals these women can usefully make to just $\Theta(\ln n)$.

Our next result provides a distribution bound on the losses. It states that for most agents, the losses are at most
$\Theta(\Lbar/(\ln n)^{1/3})$, with a geometrically decreasing number of agents facing larger losses.

\begin{theorem}\label{thm::distr-bound}
In the bounded derivatives model, when there are $n$ men and $n$ women,
for any given constant $c>0$, for large enough $n$, with probability at least $1-n^{-c}$, in every stable match,
among the agents whose aligned partner has public score at least $\sigmabar\triangleq 3\Lbar/4\mu$, 
at most $2n\cdot\exp(-(c+2)\ln n/2^{3h})$ men suffer a loss of more
than $\Lbar/2^h$, for integer $h$ with
$\frac 13 \log\Big(\frac{(c+2)\ln n}{\ln[n/3(c+2)\ln n]}\Big) \le h \le \frac 13 \log[(c+2)\ln n]$,
and likewise for the women.
\end{theorem}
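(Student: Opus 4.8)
The plan is to re-run the key lemma machinery with a scaled-down loss target and observe that the failure probability of each individual man's event decays like a simple exponential in the cube of the loss parameter, so that summing over a dyadic range of loss targets and over all $n$ agents yields a geometrically decaying bound on the \emph{number} of agents who fail at each scale. Concretely, fix $h$ in the stated range and set $L_h = \Lbar/2^h$. Following the linear-separable parameter choice, take $\alpha_h = \beta_h = \gamma_h = \Theta(L_h)$ (adjusted by the constants $\rho,\mu$ in the bounded-derivatives model so that \eqref{eqn::beta-constraint} and \eqref{eqn::gamma-constraint} hold). Then Claims~\ref{clm::span-man-interval}--\ref{clm::one-man-accept-prop} go through verbatim, and for a \emph{single} man $m_i$ with $r_{w_i}\ge 3\alpha_h$, the probability that $m_i$ fails to receive a proposal of loss at most $L_h$ during the double-cut DA is at most
\[
\exp(-\alpha_h(n-1)/12)+\exp(-\alpha_h(n-1)/24)+\exp(-\alpha_h\beta_h n/8)+\exp(-\alpha_h\beta_h\gamma_h n/2).
\]
For the range of $h$ under consideration, $\alpha_h^3 n = \Theta(L_h^3 n) = \Theta((c+2)\ln n/2^{3h})$, and this is the dominant (smallest-exponent) term once $2^{3h}\le (c+2)\ln n$, i.e. for $h\le \frac13\log[(c+2)\ln n]$; the lower bound on $h$ ensures $\alpha_h n \gg 1$ so the linear-exponent terms (coming from the public-rating Chernoff bounds) are still negligible compared to $\exp(-\Theta(\alpha_h^3 n))$. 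Hence each individual man's failure probability at scale $h$ is at most, say, $2\exp(-(c+2)\ln n/2^{3h})=:q_h$.

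Next I would pass from individual failure probabilities to a bound on the \emph{count} of failing agents. By Claim~\ref{clm::full-DA-better}, a man who succeeds in the double-cut DA (at the appropriate cutoff) also has loss at most $L_h$ in the full woman-proposing DA, hence in every stable matching since that run is man-pessimal. So it suffices to bound, for the double-cut runs, the number of men $m_i$ (with aligned public rating at least $\sigmabar$) whose double-cut event fails. Let $X_h$ be this count. Then $\mathbb{E}[X_h]\le n q_h = 2n\exp(-(c+2)\ln n/2^{3h})$, which is exactly the claimed bound --- but we want it to hold with high probability, not just in expectation, and the events across different $i$ are \emph{not} independent. The clean way around this is to apply Markov's inequality to $X_h$ at a slightly inflated threshold, or better, to note that the statement we are proving is itself a ``with probability $1-n^{-c}$'' statement, so it is enough to show $\Pr[X_h > 2n q_h'] \le n^{-c}/\mathrm{poly}$ for a mildly larger $q_h'$ and then union-bound over the $O(\log\ln n)$ relevant values of $h$. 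In fact the simplest route is: reabsorb a factor into the constant $c$, apply Lemma~\ref{lem::key_bounded}-style reasoning with the target ``$X_h \le 2nq_h$'' replaced by a union bound over all $\binom{n}{\lceil 2nq_h\rceil}$ candidate failing sets --- but since $q_h$ can be as large as $n^{-\Theta(1)}$ this is wasteful. Cleaner: observe that $2nq_h \ge n^{1-\Theta(1)}\cdot\mathrm{polylog}$ for the relevant $h$, so $\mathbb{E}[X_h]$ is already polynomially large, and a concentration argument is genuinely needed.

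The main obstacle, then, is exactly this concentration step: turning the per-agent bound $\Pr[m_i\text{ fails at scale }h]\le q_h$ into $\Pr[X_h > 2nq_h]\le n^{-c}$ without independence. I would handle it by the same device used throughout the paper --- the Principle of Deferred Decisions together with the artificial-mechanism / blocking argument of Claim~\ref{clm::many-props-despite-blocking}. The point is that the bad event for $m_i$ at scale $h$ is determined by (a) public ratings, (b) the women's private scores (via the number of cutoff-surviving proposals to $m_i$), and (c) $m_i$'s own private scores; conditioning on (a) and (b), the events ``$m_i$ fails'' for distinct $i$ depend on \emph{disjoint} blocks of $m_i$'s private scores and are therefore independent, so a Chernoff bound applies to $X_h$ conditioned on the good public-rating and women-score events. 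Those good events each hold with probability $1-n\exp(-\Theta(\alpha_h n))$, which is $\ge 1-n^{-c-1}$ in the stated $h$-range. Taking a union bound over the $O(\log\ln n)$ values of $h$ and over the symmetric statement for women, and folding everything into the constant in front of $\Lbar$ and into $c$, gives the theorem. I expect the write-up to be short modulo carefully re-verifying, for the bounded-derivatives constants $\rho,\mu$, that the $h$-range stated is precisely the range in which $\exp(-\Theta(\alpha_h^3 n))$ dominates all four terms in $p_f$ and in which $2nq_h\ge 1$ (so that the statement is non-vacuous).
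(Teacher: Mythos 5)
Your proposal is correct and follows essentially the same route as the paper's proof (Lemma~\ref{lem::loss-distribution} and Corollary~\ref{cor::loss-distribution}): scale $\alpha,\beta,\gamma,L$ down by $2^h$, observe that conditioned on the public ratings and on the women's private scores (i.e.\ on ${\mathcal B}_1$--${\mathcal B}_3$ not occurring) each man's failure event depends only on his own private scores for the Step~2 proposals he receives and is therefore independent across men, then apply a Chernoff bound to the count and a union bound over the $O(\log\log n)$ values of $h$. The one small imprecision is your reading of the lower bound on $h$: its purpose is not to control the linear-exponent terms (those are negligible throughout the range since $\alpha_h n=\Omega(n^{2/3})$) but to guarantee that the expected count $nq_h$ is $\Omega(\ln n)$, so that the Chernoff bound on the count itself fails with probability at most $n^{-(c+2)}$.
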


We now generalize Theorem~\ref{thm:basic-bdd-deriv-result} to possibly unequal numbers of men and women, and also state what can be said for agents with low public ratings.
\begin{theorem}\label{thm::low_loss_bounded}
Suppose there are $p$ men and $w$ woman, with $p\ge n$.
Let $t\ge 1$ be a parameter.
In the bounded derivatives model,
for any given constant $c>0$, for large enough $n$, with probability at least $1-n^{-c}$,
in every stable match, every agent, except possibly the 
men whose aligned agents have public rating less than $\tfrac{p-n}{n}+\tfrac{\sigmabar}{t}$ and the women
whose aligned agents have public rating less than $\tfrac{\sigmabar}{t}$, suffers a loss of at most $\Lbar t^2$, where $\Lbar =\Theta((\ln n/n)^{1/3})$
and $\sigmabar = 3\Lbar/4\mu$.
\end{theorem}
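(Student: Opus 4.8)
\textbf{Proof proposal for Theorem~\ref{thm::low_loss_bounded}.}

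The plan is to reduce the unequal-numbers, arbitrary-$t$ statement back to the machinery of Lemma~\ref{lem::key_bounded}, which is already agnostic to the linearity of $U,V$ and only needs the parameter constraints \eqref{eqn::beta-constraint} and \eqref{eqn::gamma-constraint}. First I would set up the coordinate change described in Section~\ref{sec::utility_models}: with $p\ge n$, the men's public ratings are uniform on $[0,p/n]$ and the women's on $[p/n-1,p/n]$, so that w.h.p.\ the top $n$ men occupy roughly the same rating window as the women. The key structural point is that the argument underlying Lemma~\ref{lem::key_bounded} never used that the two sides have the same size; it used that, after a double cut at rating $r_{m_i}-\alpha$ and at man $m_i$, the set of women who might propose to $m_i$ (those with rating at least $r_{w_i}-3\alpha$, where now $w_i$ is the \emph{aligned} woman of $m_i$ in the shifted coordinates) outnumbers the men they can land on, by an expected surplus of $\Theta(\alpha n)$, and that each surplus woman independently proposes to $m_i$ with probability $\ge\beta$, and $m_i$ finds each such proposal low-loss with probability $\ge\gamma$. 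So the first step is to re-verify this counting argument in the shifted coordinates: a man $m_i$ whose aligned woman has rating $r_{w_i}\ge \tfrac{\sigmabar}{t}$ (for the women) or whose aligned woman has rating $\ge \tfrac{p-n}{n}+\tfrac{\sigmabar}{t}$ (for the men) still has a rating window of width $\ge 3\alpha$ below its aligned partner inside the valid range, so $\ell_i$ and $h_i$ are both well-defined and the Chernoff bounds of Claims~\ref{clm::span-man-interval} and~\ref{clm::span-woman-interval} apply verbatim with $n$ the size of the short side.

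Next I would handle the role of $t$. The idea is that shrinking the allowed region by a factor $t$ (i.e.\ raising the rating cutoff from $\sigmabar$ to $\sigmabar/t$ on the short side, and analogously on the long side) lets us shrink the parameters $\alpha,\beta,\gamma$ — which in the bounded-derivatives model are all $\Theta(L/\mu)$ subject to \eqref{eqn::beta-constraint}–\eqref{eqn::gamma-constraint} via the ratio bound $\rho$ and first-derivative bound $\mu$ — by a factor of roughly $t$ each, while still keeping $\alpha\beta\gamma n = \Theta((L/t)^3 n)\ge c\log n$. Running the product the other way: if we insist the region cutoff is $\sigmabar/t$, then the available rating slack is only $\Theta(\sigmabar/t)$, which forces $\alpha=\Theta(\Lbar/(\mu t))$; feeding this through the failure-probability expression $p_f$ of Lemma~\ref{lem::key_bounded}, the dominant term $n\exp(-\alpha\beta\gamma n/2)$ becomes $n\exp(-\Theta(\Lbar^3 n/t^3))$, and since $\Lbar^3 n = \Theta(\ln n)$ this is $n^{-c}$ provided the implied loss bound is relaxed to $L = \Theta(\Lbar t^2)$ — the extra $t^2$ (rather than $t^3$) coming from the fact that \eqref{eqn::gamma-constraint} converts a rating slack of $3\alpha$ and a score slack $\gamma$ into a loss bound $L = \Theta(\mu\alpha + (\mu/\rho)\gamma)$, i.e.\ $L$ scales linearly in $\alpha$, while the budget $\alpha\beta\gamma n$ scales cubically; balancing $\alpha=\beta=\gamma$ against the fixed budget $\Theta(\ln n)$ and a region cutoff $\sigmabar/t$ gives exactly $L\sim \Lbar t^2$. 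I would write this out as: choose $\alpha = \Lbar t^2/(c_0\mu)$ for an appropriate constant $c_0$ depending on $\rho,\mu$, verify \eqref{eqn::beta-constraint}, \eqref{eqn::gamma-constraint} hold with $\beta=\gamma=\Theta(\alpha)$ by the bounded-derivatives inequalities (exactly as in the proof of Theorem~\ref{thm:basic-bdd-deriv-result}), check that the rating cutoff forced by ``$r_{w_i}\ge 3\alpha$'' is $\Theta(\Lbar t^2)\le \sigmabar/t$ up to constants (which is where $\sigmabar=3\Lbar/4\mu$ and the $t^2$ interplay must be reconciled — possibly the theorem's $3\alpha$ cutoff is being stated loosely as $\sigmabar/t$, and I would make the constant bookkeeping explicit here), and then invoke Lemma~\ref{lem::key_bounded} plus Claim~\ref{clm::full-DA-better} to conclude that in the man-pessimal (hence every) stable match each non-excluded $m_i$ has loss $\le \Lbar t^2$. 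The symmetric statement for women, and the appearance of the $\tfrac{p-n}{n}$ offset for the men, come directly from the shifted coordinate ranges: a man's aligned woman in shifted coordinates sits $\tfrac{p-n}{n}$ lower in the men's own scale, so the ``rating $\ge 3\alpha$'' condition on \emph{the aligned woman} translates to ``the man's aligned agent has rating $\ge \tfrac{p-n}{n}+\Theta(\alpha)$'' when expressed relative to the men.

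The main obstacle I anticipate is not the probabilistic core — that is essentially Lemma~\ref{lem::key_bounded} reused — but the bookkeeping for \emph{the long side when $p\gg n$}. On the long (men's) side there are $p>n$ men competing for $n$ women, so the surplus-counting argument has to be run ``from the women's perspective'': for a man $m_i$ to get a low-loss proposal we need enough women of comparable rank, and the stark-effect-of-competition phenomenon means men far down the list genuinely get nothing. The theorem sidesteps this by only claiming the bound for men whose aligned agent has rating $\ge \tfrac{p-n}{n}+\sigmabar/t$ — i.e.\ the top $n(1-\Theta(\sigmabar/t))$-ish men — and the delicate part is confirming that, for exactly those men, the shifted window still contains a $\Theta(\alpha)$-wide band of women below the aligned partner, so that $\ell_i-h_i=\Theta(\alpha n)>0$ with high probability; this needs a Chernoff bound on the number of women in $W[r_{w_i}-3\alpha, r_{w_i})$ that is uniform over all qualifying $i$ simultaneously, together with a check that $i+\ell_i\le n$ (there really are that many women) which is where the cutoff $\tfrac{p-n}{n}+\sigmabar/t$ is used. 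I would isolate this as a separate claim mirroring Claims~\ref{clm::span-man-interval}–\ref{clm::span-woman-interval} in the shifted coordinates, and then the rest of the proof is a direct transcription of the proof of Theorem~\ref{thm:basic-bdd-deriv-result} with $\alpha$ rescaled by $t^2$ and the loss bound rescaled accordingly.
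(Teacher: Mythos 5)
Your overall strategy --- reuse Lemma~\ref{lem::key_bounded} with rescaled parameters, and handle $p\ne n$ by shifting the rating ranges so the men live on $[0,p/n]$ and the women on $[p/n-1,p/n]$ --- is exactly the paper's route, and your treatment of the unequal-sides bookkeeping (the $\tfrac{p-n}{n}$ offset for the men, the requirement of a $3\alpha$-wide band of agents below the aligned partner on each side) matches the actual proof.

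The gap is in the $t$-rescaling, and you half-notice it yourself. You give three mutually inconsistent prescriptions for $\alpha$: first ``shrink $\alpha,\beta,\gamma$ by a factor of $t$ each'' (which makes $\alpha\beta\gamma n=\Theta(\ln n/t^{3})$ and destroys the Chernoff exponent --- relaxing $L$ afterwards does not repair an exponent unless some parameter is changed); then ``balance $\alpha=\beta=\gamma$ against the fixed budget $\Theta(\ln n)$'' (which forces $\alpha=\Theta(\Lbar)$ independent of $t$, contradicting the cutoff $3\alpha=\Theta(\sigmabar/t)$); and finally ``choose $\alpha=\Lbar t^{2}/(c_0\mu)$'', whose cutoff $3\alpha=\Theta(\Lbar t^{2})$ you then correctly observe cannot be reconciled with $\sigmabar/t$ for $t>1$. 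The missing idea is that the rescaling must be asymmetric across the three parameters: the cutoff forces $\alpha=\sigmabar/(4t)$, constraint \eqref{eqn::beta-constraint} then pins $\beta=\alpha\rho$ (so both shrink by $t$), and the slack is taken up entirely by $\gamma$, which is inflated by a factor of $t^{2}$ over its value in the proof of Theorem~\ref{thm:basic-bdd-deriv-result} (equivalently $\gamma=t^{3}\alpha\rho$ in terms of the new $\alpha$), so that the product $\alpha\beta\gamma n$ is unchanged at $\Theta(\ln n)$. Constraint \eqref{eqn::gamma-constraint} then gives $L\le(3\alpha+\gamma/\rho)\mu=\Theta(\Lbar t^{2})$, dominated by the $\gamma$ term --- that is where the $t^{2}$ actually comes from, not from $\alpha$. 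You also omit the secondary condition the paper records, that $\alpha\beta n=\Omega(\ln n)$ must still hold, which restricts the result to $\sigmabar/t=\Omega((\ln n/n)^{1/2})$, i.e.\ bounds how large $t$ may be.
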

Note that when $L=1$ (i.e.\ 100\% loss), $t=\Theta((n/\ln n)^{1/6})$ and therefore $\sigmabar/t = \Theta((\ln n)/n)^{1/2})$,
providing a lower bound on the range for which this result bounds the loss.

Setting $t=1$ and $p=n$ yields Theorem~\ref{thm:basic-bdd-deriv-result}.

The implication is similar to that for Theorem~\ref{thm:basic-bdd-deriv-result},
but as $t$ increases, i.e., for women whose aligned agents have increasingly low public ratings,
the bound on the number of proposals she can usefully make grows by roughly a $t^2$ factor.

\paragraph*{$\eps$-Bayes-Nash Equilibrium}

\begin{defn} \label{def::strong-bdd-deriv}
A function $f(x,y):{\mathbb R}^2\rightarrow {\mathbb R}^+$ has \emph{$(\rho_{\ell},\rho_u,\mu_{\ell},\mu_u)$-bounded derivatives} if for all $(x,y)\in {\mathbb R}^2$,
\begin{align*}
   &\text{\rm The ratio bound:}~~\rho_{\ell} \le\pdv{f}{x}\Big/\pdv{f}{y}\le \rho_u.\\ 
&\text{\rm The first derivative bound:}~~\mu_{\ell}\le \pdv{f}{x}\leq \mu_u.
\end{align*}
Then $f$ is said to have the \emph{strong bounded derivative} property. Note that in the linearly separable model,
$\rho_{\ell}=\rho_u$ and $\mu_{\ell}=\mu_u$.
\end{defn}

Let $t\geq 1$ be a parameter and $\sigmabar= \Theta([\ln n/n]^{1/3})$. Define. $L^m_t\triangleq U(r_w,1) - U(r_w-\sigmabar t^2,1)$ and
$L^w_t\triangleq V(s_m,1) - V(r_m-\sigmabar t^2,1)$. For this to be meaningful when $r_w-\sigmabar t^2 < 0$, we extend the definition of $U$ to this domain as follows. For $s<0$, $\pdv{U(r,s)}{r}=\mu_{\ell}$ and
$\pdv{U(r,s)}{s}=\rho_{\ell}$.
We proceed analogously to handle the case that
$r_m-\sigmabar t^2 < 0$. 
Define parameters $\sigma_m = \beta/ n^{1/3}$ and $\sigma_w=\nu/n^{1/3}$, where $\beta>1$ and $\nu<1$ are constants. We then define $t_m= \sigmabar/\sigma_m$ and $t_w= \sigmabar/\sigma_w$.
Note that in the strongly bounded derivatives model, $L^m_{t_m} \le \Theta\Big(\frac{\mu_u}{\beta^2}\cdot \frac{\ln n}{n^{1/3}}\Big)$ and $L^w_{t_w} \le \Theta\Big(\frac{\mu_u}{\nu^2}\cdot \frac{\ln n}{n^{1/3}}\Big)$.

\begin{theorem}\label{thm::eq-BN}
Let $\eps=\Theta(1/n^{1/3})$. There are constants $\beta>1$ and $\nu<1$ such that in the strongly bounded derivatives model, there exists an $\eps$-Bayes-Nash equilibrium where, with probability a least $1-n^{c}$, agents with public ratings greater than $\sigmabar$ make at most $\Theta(\ln n)$ proposals and all agents make at most $\Theta(\ln^2 n)$ proposals. Furthermore, in this equilibrium, with probability a least $1-n^{c}$, every man has a loss of at most $L^m_{t_m}$, and every woman $w$ has a loss of at most $L^w_{t_w}$.
\end{theorem}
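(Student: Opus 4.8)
The plan is to exhibit an explicit two-stage strategy profile, show that on a high-probability event it reproduces the stable matching analyzed in Theorems~\ref{thm:basic-bdd-deriv-result} and~\ref{thm::low_loss_bounded} (so that the claimed loss and proposal-count bounds hold), and then verify that no unilateral deviation raises an agent's expected utility by more than $\eps=\Theta(n^{-1/3})$.

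\textbf{The strategy profile.} In the first (communication) stage each agent signals the partners that are \emph{individually acceptable} to it: a man $m$ with $r_m\ge\sigmabar$ (resp.\ a woman $w$ with $r_w\ge\sigmabar$) uses loss threshold $\Lbar$, while a bottom agent uses its rank-dependent threshold $L^m_t$ (resp.\ $L^w_t$) with $t=\sigmabar/r_m$ (resp.\ $t=\sigmabar/r_w$), truncated at $t_m$ (resp.\ $t_w$). In the second stage each agent submits, as its preference list, the \emph{mutually acceptable} edges — those it signaled and that signaled it back — in true utility order, and the mechanism runs woman-proposing DA. The lists are short because of Theorem~\ref{thm::cone-bdd-deriv}: a non-bottom woman's woman-acceptable men lie in a public-rating window of width $\Theta(\Lbar)$ and carry her private score $\ge 1-\Theta(\Lbar)$, and among these the men who also find her man-acceptable carry their private score for her $\ge 1-\Theta(\Lbar)$ as well, so in expectation there are $\Theta(n\Lbar^3)=\Theta(\ln n)$ of them; for a bottom agent the thresholds are a factor $t^2\le t_m^2=\Theta((\ln n)^{2/3})$ larger, giving expected list length $\Theta(\ln^2 n)$. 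A Chernoff bound, followed by a union bound over the $2n$ agents, turns these into worst-case bounds that hold with probability $1-n^{-c}$.

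\textbf{The honest outcome.} Condition on the event $\mathcal E'$ that (a) the loss bounds of Theorems~\ref{thm:basic-bdd-deriv-result} and~\ref{thm::low_loss_bounded} hold, (b) the cone bounds of Theorem~\ref{thm::cone-bdd-deriv} hold, and (c) the list-length bounds above hold; a union bound keeps $\Pr[\overline{\mathcal E'}]\le n^{-c}$. Here one fixes the constants $\beta>1$ and $\nu<1$ so that, via the low-public-rating analysis already developed for Theorem~\ref{thm::low_loss_bounded}, the enlarged thresholds make the DA run clear the bottom region and match \emph{every} agent to a partner of nearly equal public rating, yielding loss at most $L^m_{t_m}$ for men and $L^w_{t_w}$ for women; the asymmetry of $\beta$ and $\nu$ reflects the asymmetric risk of being left unmatched on the proposing versus receiving side of DA. On $\mathcal E'$ the mutually acceptable edge set contains every viable edge and is itself defined by loss thresholds, so Lemma~\ref{lem::match_occurs} (equivalently Theorem~\ref{lem::acceptable-suffice}) gives that DA on the submitted lists returns exactly the stable matching of the full instance. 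The stated guarantees are then immediate.

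\textbf{The equilibrium condition.} One must show that for every agent, every realization of its private information, and every deviation, the expected gain is at most $\eps$. For the proposing side (women), truthful reporting of a full preference list is a dominant strategy in DA, so the only cost of the prescribed acceptable-edge truncation is a possibly worse outcome when $\mathcal E'$ fails; since utilities lie in $[0,1]$ this costs at most $\Pr[\overline{\mathcal E'}]\le n^{-c}$ in expectation, which is $o(\eps)$ for the relevant constants. A deviation in the communication stage can only change which agents keep the deviator on their lists; on $\mathcal E'$ every edge the deviator would want is already mutually acceptable, so again the gain is confined to $\overline{\mathcal E'}$. The genuinely delicate case is a man misreporting (essentially, over-truncating) to try to secure his man-\emph{optimal} rather than his man-pessimal stable partner — a gap that can be $\Theta(\Lbar)\gg\eps$ on a worst-case instance. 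Here I would adapt the large-market incentive-compatibility argument of Kojima--Pathak / Immorlica--Mahdian to the bounded-derivatives model, using the structural control from Lemma~\ref{lem::key_bounded} and Theorem~\ref{thm::cone-bdd-deriv} (potential stable partners lie in a $\Theta(\Lbar)$ public-rating window, and DA rejection chains are short and localized): show that for a uniformly random instance the probability that a fixed man can improve via \emph{any} report is $O(\ln n/n^{1/3})$ while the conditional gain is $O(1)$, so the expected gain is $O(\ln n/n^{1/3})$, which is absorbed into $\eps=\Theta(1/n^{1/3})$ up to a logarithmic factor (or, if a strictly $\Theta(1/n^{1/3})$ bound is wanted, by a sharper chain-length estimate). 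Symmetric reasoning rules out beneficial over-truncation by women.

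\textbf{Main obstacle.} The crux is precisely this last step: a large-market manipulation bound tailored to correlated cardinal utilities, showing that uniformly over an agent's private information no report exploits the $\Theta(\Lbar)$ spread between the man-optimal and man-pessimal stable matchings by more than $\eps$ in expectation. Existing large-market incentive results are for uncorrelated or popularity-type preferences, so this is where the additional ideas beyond the proof of Theorem~\ref{thm::basic-linear-result} are needed.
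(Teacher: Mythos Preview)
Your proposal misidentifies which step is hard and which is easy.

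\textbf{The equilibrium step is much simpler than you make it.} The paper's argument is two lines: once you know that under the prescribed strategy every man's loss is at most $L^m_{t_m}=\Theta(\ln n/n^{1/3})$, and that (by the cone argument following Theorem~\ref{thm:basic-bdd-deriv-result}) no man can ever obtain utility exceeding $U(r_w,1)+\Theta(\Lbar)$ in \emph{any} matching, the maximum conceivable gain from \emph{any} deviation is the sum of these two, namely $\Theta(\ln n/n^{1/3})$. There is no need to bound the man-optimal vs.\ man-pessimal spread, to track rejection chains, or to import Kojima--Pathak/Immorlica--Mahdian machinery: the loss bound already upper-bounds what a deviator is leaving on the table. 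Your ``main obstacle'' dissolves.

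\textbf{The step you handwave is the actual work.} You write that one ``fixes the constants $\beta>1$ and $\nu<1$ so that, via the low-public-rating analysis already developed for Theorem~\ref{thm::low_loss_bounded}, the enlarged thresholds make the DA run clear the bottom region and match \emph{every} agent''. But Theorem~\ref{thm::low_loss_bounded} only reaches down to public ratings of order $(\ln n/n)^{1/2}$ before the loss bound becomes vacuous; it says nothing about the agents below that, and certainly does not show they are matched once both sides truncate. The paper spends essentially all of Appendix~\ref{sec::eps-BN-equil-full-match} on exactly this: it partitions agents into top and bottom zones $T_M,B_M,T_W,B_W$, shows that at most $\tfrac14|B_M|$ of the bottom men receive man-high proposals (so most of $B_M$ must be matched via low-low edges), proves that the low-low proposals among bottom agents behave up to constants like the uniform random model, and then argues that $\Theta(\ln^2 n)$ low-low edges per bottom woman suffice for a coupon-collector-style guarantee that everyone matches. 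This is where the asymmetric constants $\eta,\nu$ (your $\beta,\nu$) actually get pinned down, via constraints like $\eta\ge 6\nu$ and $4(\eta/\nu)\exp(-(\eta/2)^3\rho_\ell^2/128)\le\tfrac1{10}$. Without this bottom-zone analysis you cannot assert loss at most $L^m_{t_m}$ for \emph{every} man, and then your easy equilibrium argument (and the paper's) has nothing to stand on.
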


\paragraph*{The General Utilities Model}
\label{sec::general}
\begin{theorem}\label{thm::low_loss_general}
Let $0<\eps<1$, $0 < \sigma < 1$, and $c>0$ be constants.
In the general utilities model, for large enough $n$,
with probability at least $1-\exp(-\Theta(n))$, 
in every stable matching, every agent, 
except possibly those whose aligned agents have public rating less than $\sigma$, suffers a loss of at most $\epsilon$.
\end{theorem}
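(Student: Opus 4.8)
The plan is to obtain Theorem~\ref{thm::low_loss_general} from the key lemma, Lemma~\ref{lem::key_bounded}, in exactly the way Theorem~\ref{thm::basic-linear-result} was obtained; the point is that when $\epsilon$ and $\sigma$ are \emph{constants} rather than sub-constant, we have enough slack to also take the parameters $\alpha,\beta,\gamma$ of the key lemma to be constants. Once that is done, every term of the failure probability $p_f$ in Lemma~\ref{lem::key_bounded} is of the form $n\cdot\exp(-\Theta(n))=\exp(-\Theta(n))$, which is precisely the claimed probability bound.

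First I would set $L=\epsilon$ and fix the parameters using compactness. Recall that in the general utilities model the change of variables lets us assume all ratings and scores are uniform on $[0,1]$ and that $U,V$ are continuous, strictly increasing, with range $[0,1]$; hence $U,V$ are uniformly continuous on the compact square $[0,1]^2$. By uniform continuity of $U$ there is an $\alpha_0>0$ and a $\gamma>0$ with $U(r,1)-U(r-3\alpha_0,1-\gamma)\le\epsilon$ for all $r\ge 3\alpha_0$, which is exactly constraint~\eqref{eqn::gamma-constraint}; take $\alpha=\min\{\sigma/3,\alpha_0\}$. With $\alpha$ now fixed, constraint~\eqref{eqn::beta-constraint} asks for a $\beta>0$ with $V(r-\alpha,1)\le V(r,1-\beta)$ for all $r\ge\alpha$; since $V$ is continuous and strictly increasing in its first argument, $g(r)=V(r,1)-V(r-\alpha,1)$ is continuous and strictly positive on the compact interval $[\alpha,1]$, so $\eta=\min_{r\in[\alpha,1]}g(r)>0$, and by uniform continuity of $V$ we may choose $\beta>0$ small enough that $V(r,1)-V(r,1-\beta)<\eta\le g(r)$ for every $r$, which rearranges to~\eqref{eqn::beta-constraint}. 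All of $\alpha,\beta,\gamma$ are then positive constants depending only on $U,V,\epsilon,\sigma$.

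Next I would invoke Lemma~\ref{lem::key_bounded} with these parameters: Event $\mathcal E$ (for every $i$ with $r_{w_i}\ge 3\alpha$, the women of rank at most $i+\ell_i$ make at least one proposal to $m_i$ causing him loss at most $\epsilon$) holds with probability at least $1-p_f$, and since $\alpha,\beta,\gamma$ are constants each summand of $p_f$ is $n$ times $\exp(-\Theta(n))$, so $p_f=\exp(-\Theta(n))$. By Claim~\ref{clm::full-DA-better} the same loss bound holds for $m_i$ in the full woman-proposing DA, and as that run is man-pessimal, $m_i$ has loss at most $\epsilon$ in \emph{every} stable matching whenever $r_{w_i}\ge 3\alpha$. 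Running the symmetric (man-proposing) version of the lemma gives, with probability $1-\exp(-\Theta(n))$, that every $w_i$ with $r_{m_i}\ge 3\alpha$ has loss at most $\epsilon$ in every stable matching, since man-proposing DA is woman-pessimal. A union bound keeps the total failure probability at $\exp(-\Theta(n))$; note this is below $n^{-c}$ for $n$ large, so the constant $c$ in the statement plays no essential role. Finally, because $3\alpha\le\sigma$, the men not covered by the conclusion --- those with $r_{w_i}<3\alpha$ --- are among the men whose aligned agent has public rating less than $\sigma$, and symmetrically for the women, which is exactly the exception in the statement.

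The one point needing a little care is the boundary of the rating range: \eqref{eqn::beta-constraint} is only asserted for $r\ge\alpha$, so the key lemma's argument for $m_i$ is clean only when $r_{m_i}\ge\alpha$. But if $r_{w_i}\ge\sigma=3\alpha$, then by concentration of the order statistics of $n$ i.i.d.\ uniforms we have $r_{m_i}\ge r_{w_i}-o(1)\ge\alpha$ for large $n$, except on an event of probability $\exp(-\Theta(n))$, which we fold into the union bound; the symmetric remark covers the women. I expect this bookkeeping, together with the compactness argument used to pin down $\alpha,\beta,\gamma$, to be the only non-routine part of the proof --- the substantive work (the discretization of Claim~\ref{clm::discrete-dist} and the blocking argument of Claim~\ref{clm::many-props-despite-blocking}) is already carried out inside Lemma~\ref{lem::key_bounded}.
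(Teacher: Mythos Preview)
Your proposal is correct and follows essentially the same approach as the paper: set $L=\eps$, use continuity/compactness on $[0,1]^2$ to extract constant $\alpha,\beta,\gamma>0$ satisfying~\eqref{eqn::beta-constraint} and~\eqref{eqn::gamma-constraint}, and then invoke Lemma~\ref{lem::key_bounded} (plus Claim~\ref{clm::full-DA-better}, pessimality, and symmetry) to get the $\exp(-\Theta(n))$ failure bound. The paper sets $\alpha=\sigma/3$ directly and defines $\beta(r,\alpha)$ and $\gamma(r,\alpha,\eps)$ pointwise before minimizing over the compact $r$-range, whereas you go through uniform continuity and take $\alpha=\min\{\sigma/3,\alpha_0\}$; these are cosmetic variants of the same compactness argument, and your extra paragraph about ensuring $r_{m_i}\ge\alpha$ is a harmless bit of bookkeeping that the paper simply absorbs into the convention $r_{m_i}-\alpha=\max\{0,r_{m_i}-\alpha\}$.
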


Clearly the smaller $\eps$, the smaller the ranges of public ratings and private scores that can yield acceptable proposals; however, there does not appear to be a simple functional relationship between $\eps$ and the sizes of these ranges in this general model.

\paragraph*{The Many to One Setting}
Next, we state our many-to-one result, expressing it in terms of the $n_w$ workers and $n_c$ companies, each having $d$ positions. 
We now have possibly different bounds

\begin{align*}
\Lbar_w &= \Theta([(\ln n_w)/n_c]^{1/3})=\Theta([(d\ln n_w)/n_w]^{1/3}), ~~~~\text{and} \\
   \Lbar_c &=\left\{\begin{array}{ll}
                    \Theta((\max\{d,\ln n_w\}/n_w)^{1/3})\hspace*{0.5in} & 
                   d= O((n_w/\ln n_w)^{2/3})\\
                   \Theta((d\ln n_w)/n_w) & d =\Omega( n_w/\ln n_w)^{2/3})
                 \end{array}
         \right.
\end{align*}
on the losses for non-bottommost workers and companies.
Analogous to the one-to-one case, we define $\sigmabar_c = 3\Lbar_c/4\mu$ and $\sigmabar_w = 3\Lbar_w/4\mu$, the public rating thresholds below which these loss bounds need not hold.

\begin{theorem}
\label{thm::many-to-one}
Let 
$\Lbar_c$, $\Lbar_w$, $\sigmabar_c$ and $\sigmabar_w$ be as defined above.
Suppose that $d=O((n/\ln n)^{2/3})$.
Then, for any given constant $k>0$, with probability at least $1-n^{-k}$,
in every stable match, every company, except possibly those whose aligned agent has public rating less
than $\sigmabar_w$, suffers a loss of at most $\Lbar_c$,
and every worker, except possibly those whose aligned agent has public rating less
than $\sigmabar_c$, suffers a loss of at most $\Lbar_w$.
\end{theorem}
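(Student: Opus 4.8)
The plan is to run the parameterized key lemma (Lemma~\ref{lem::key_bounded}) twice, in direct analogy with Theorems~\ref{thm::basic-linear-result} and~\ref{thm:basic-bdd-deriv-result}: once on a run of worker-proposing DA to bound the companies' losses, and once on a run of company-proposing DA to bound the workers' losses. Because worker-proposing DA yields the company-pessimal stable matching and company-proposing DA the worker-pessimal one, and because the obvious many-to-one analogue of Claim~\ref{clm::full-DA-better} holds --- DA's processing order is irrelevant even with capacities, and a company's $d$-th best held worker only improves as the run proceeds --- any bound proved in these two runs holds in \emph{every} stable matching. To handle the capacities I would use the standard unit-capacity reduction: clone each company $c$ of rank $i$ into $d$ copies sharing $r_c$ and all of $c$'s private scores (workers breaking ties among copies of $c$ in a fixed order); then worker-proposing many-to-one DA is exactly worker-proposing one-to-one DA on the cloned instance, and company-proposing many-to-one DA is copy-proposing one-to-one DA on it.

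\textbf{Bounding the companies' losses.} Fix a company $c$ of rank $i$ whose aligned worker $w^*$ (rank $di$) has public rating at least $\sigmabar_w$, and run worker-proposing DA double-cut at $c$ and at worker-rating $r_c-\alpha$, where $\alpha=\Theta(\Lbar_c/\mu)$ and $\beta,\gamma$ are chosen to meet the $U,V$-analogues of~\eqref{eqn::beta-constraint}--\eqref{eqn::gamma-constraint}. The cardinality count inside Lemma~\ref{lem::key_bounded} carries over under the identity $n_w=d\,n_c$: the companies of rating $\ge r_c-\alpha$ offer $\approx(i+h_i)d$ positions with $\mathrm{E}[h_i]=\alpha n_c$, while the workers of rank $\le di+\ell_i$ number $di+\ell_i$ with $\mathrm{E}[\ell_i]=3\alpha n_w$; since $n_w=d\,n_c$, with high probability at least $\Theta(\alpha n_w)$ of these workers cannot be placed in the higher-rated companies and therefore propose all the way down to $c$. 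Each such worker's proposal to $c$ survives the cut with probability at least $\beta$ (independent across workers), and, given survival, gives $c$ loss at most $\Lbar_c$ with probability at least $\gamma$ (independent across workers, from $c$'s scores). The one genuinely new point is that $c$ needs $d$ such proposals, not one: the expected number of low-loss proposals reaching $c$ is $\Theta(\alpha\beta\gamma n_w)$, and a Chernoff bound gives at least $d$ of them with probability $1-\exp(-\Theta(\alpha\beta\gamma n_w))$ provided $\alpha\beta\gamma n_w=\Omega(\max\{d,\ln n_w\})$ --- this is exactly what forces $\Lbar_c=\Theta((\max\{d,\ln n_w\}/n_w)^{1/3})$ when one takes $\alpha=\beta=\gamma$ up to constants. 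The hypothesis $d=O((n/\ln n)^{2/3})$ is precisely what keeps the resulting $\alpha$ large enough that $\alpha n_c=\Omega(\ln n_w)$, so the concentration inputs (the analogues of Claims~\ref{clm::span-man-interval}--\ref{clm::span-woman-interval}) still hold with high probability; a union bound over the $n_c$ companies completes this direction.

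\textbf{Bounding the workers' losses.} Fix a worker $w$ of rank $j$ whose aligned company $c^*$ (rank $\ceil{j/d}$) has rating at least $\sigmabar_c$, run copy-proposing DA double-cut at $w$ and at worker-rating $r_w-\alpha'$ with $\alpha'=\Theta(\Lbar_w/\mu)$, and restrict attention to copies of companies of rating $\ge r_{c^*}-3\alpha'$. The cardinality count again goes through: these companies contribute $\approx(i+3\alpha' n_c)d=di+3\alpha' n_w$ copies, while the workers of rating $\ge r_w-\alpha'$ number $\approx j+\alpha' n_w$, so, since $j\le di$, at least $(di-j)+2\alpha' n_w=\Theta(\alpha' n_w)$ copies are forced to propose to $w$. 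The essential new feature is that these copies come in groups of $d$ sharing a company, so the probability-$\beta$ event ``the copy's proposal to $w$ survives the cut'' ($s_c(w)\ge 1-\beta$) and $w$'s own score $s_w(c)$ are common to all of $c$'s copies; hence the correct accounting is by distinct companies. The $\Theta(\alpha' n_w)$ forced copies span at least $\Theta(\alpha' n_w)/d=\Theta(\alpha' n_c)$ distinct companies; of these, $\Theta(\alpha'\beta n_c)$ make a cut-surviving proposal to $w$ (Chernoff, with independence now across companies), of which $\Theta(\alpha'\beta\gamma n_c)$ give $w$ loss at most $\Lbar_w$ (Chernoff, independence of $w$'s scores across companies). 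Needing only one such proposal, the choice $\alpha'\beta\gamma n_c=\Omega(\ln n_w)$ yields the bound with probability $1-\exp(-\Theta(\alpha'\beta\gamma n_c))$ and gives $\Lbar_w=\Theta((\ln n_w/n_c)^{1/3})=\Theta((d\ln n_w/n_w)^{1/3})$; replacing the effective sample size $n_w$ by $n_c$ is exactly what produces the $d^{1/3}$ factor by which $\Lbar_w$ exceeds $\Lbar_c$. A union bound over the $n_w$ workers finishes, after one first discretizes the utilities as in Claim~\ref{clm::discrete-dist} and reruns the blocking-tree / deferred-decisions argument behind Claim~\ref{clm::many-props-despite-blocking}.

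\textbf{Main obstacle.} I expect the delicate step to be this last point in the worker direction: the blocking-tree argument of Claim~\ref{clm::many-props-despite-blocking} must be re-examined when the proposing side is organized into $d$-copy groups acting in lockstep and ``blocks'' are placed on whole groups, since the case analysis that pushes a deepest collection of sibling blocks up to their common parent must now respect this grouping, and the Chernoff exponent in the conclusion becomes $\Theta(\alpha'\beta\gamma\,n_c)$ rather than $\Theta(\alpha'\beta\gamma\,n_w)$. In the company direction the analogous subtlety is benign --- the proposing side (workers) is ungrouped, and only the receiving capacity changes, replacing the target ``at least one proposal'' by ``at least $d$ proposals''. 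A secondary point is to recheck the concentration claims under grouping: the number of companies in a width-$\alpha$ rating window is $\mathrm{Binomial}(n_c,\alpha)$, so the number of copies is $d$ times it, still sharply concentrated as long as $\alpha n_c=\Omega(\ln n_w)$, which $d=O((n/\ln n)^{2/3})$ guarantees for both $\alpha=\Theta(\Lbar_c/\mu)$ and $\alpha'=\Theta(\Lbar_w/\mu)$. Finally, the two rating thresholds ($\sigmabar_w$ for the companies, $\sigmabar_c$ for the workers) emerge, just as in Theorem~\ref{thm:basic-bdd-deriv-result}, from keeping the shifted public ratings non-negative in the constraints~\eqref{eqn::beta-constraint}--\eqref{eqn::gamma-constraint}.
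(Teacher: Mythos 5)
Your proposal is correct and follows essentially the same route as the paper: the paper likewise applies the key lemma once per side (worker-proposing DA for the companies, company-proposing DA for the workers), obtains the $\max\{d,\ln n\}$ term from requiring $\tfrac14\alpha\beta\gamma n_w\ge d$ low-loss proposals per company, and gets the $d^{1/3}$ inflation in $\Lbar_w$ from the effective sample size $n_c$ on the company-proposing side. The only difference is presentational: the paper works directly with capacities (redefining ``not fully matched'' and adjusting the surplus count $\ell_i$ so that at least $\alpha(n_w-\tfrac52)$ workers remain unfilled) rather than cloning companies into unit-capacity copies, which sidesteps the lockstep-copies issue you flag in the blocking-tree argument.
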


\paragraph*{Lower Bounds}
The next two theorems show that the bounded derivative result is tight in two senses.
First, we show that the bound $\Lbar$ on the loss is tight up to a constant factor. 
\begin{theorem}
\label{thm::Lbar-bound-tight}
In the linear separable model with $\lambda = \tfrac 12$,
if $n\ge 32,000$ and $L= \tfrac 18 (\ln n/n)^{1/3}$,
then with probability at least $\tfrac 14 n^{-1/8}$ there is no perfect matching, let alone stable matching,
in which every agent with public rating $\tfrac 32 L$ or larger suffers a loss of at most $L$. (Here $\mu=\tfrac 12$, so $\tfrac 32 L=3L/4\mu.)$
\end{theorem}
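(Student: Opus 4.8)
The plan is to exhibit, with non-trivial probability, a specific local obstruction to the existence of any perfect matching in which everyone near the top of the public-rating range has loss at most $L$. The natural obstruction is a small set $S$ of women (near some public rating well above $\tfrac32 L$) whose only $L$-acceptable men lie in a slightly smaller set $T$ of men, with $|T| < |S|$; then at least $|S| - |T|$ of these women must go unmatched, or take a loss exceeding $L$, contradicting the hypothesized matching. So the whole argument is a second-moment / Poisson-approximation style counting argument showing such a ``bottleneck'' exists w.h.p.

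\medskip
First I would fix a target public rating $r$ (say $r = \tfrac12$, comfortably above $\tfrac32 L$) and consider the women with public ratings in a window $W[r, r+\Delta]$ for a suitable $\Delta = \Theta(L)$. For a woman $w$ in this window with rank $i$, her loss-$L$-acceptable men are exactly those $m$ with $U_{m,w} = \tfrac12(r_m + s_w(m)) \ge \tfrac12(r_{w_i} + 1) - L$, i.e.\ $r_m + s_w(m) \ge r_{w_i} + 1 - 2L$; since $s_w(m)\le 1$ this forces $r_m \ge r_{w_i} - 2L$, and for each such man the private-score condition $s_w(m) \ge r_{w_i} - r_m + 1 - 2L$ holds with probability $\min\{1, 2L - (r_{w_i}-r_m)\}$, which integrates (over the uniform placement of men's ratings) to an expected number of acceptable men of roughly $2L^2 n$ — note the $L^2$, this is the crux. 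So a single woman has only $\Theta(L^2 n) = \Theta(\ln n)$ acceptable men in expectation.

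\medskip
Second, and this is the main obstacle, I need to turn ``each woman individually has few acceptable men'' into ``some block of $k$ women collectively has fewer than $k$ acceptable men''. The cleanest route: condition on the men's public ratings (a high-probability regularity event controls how many men lie in each $\Theta(L)$-window, via a Chernoff bound as in Claim~\ref{clm::span-man-interval}), and then work purely with the women's private scores, which are independent across women. Partition the men whose ratings lie in $[r - 2L, r + 2L]$ — there are $\Theta(Ln) = \Theta(n^{2/3}(\ln n)^{1/3})$ of them — and ask for a set $T$ of size, say, $k-1$ men, together with a set $S$ of $k$ women from the window, such that every woman in $S$ has \emph{all} her acceptable men inside $T$. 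For a fixed candidate $(S,T)$ this probability factorizes over the $k$ women (private scores independent) and over the men outside $T$ (each woman independently fails to find that man acceptable), giving roughly $\big(\text{(no acceptable man outside }T)\big)^k$. Choosing $k = \Theta(1)$ or $k = \Theta(\log\log n)$ and optimizing the window sizes, a union bound over the $\binom{\#\text{women}}{k}\binom{\#\text{men}}{k-1}$ choices of $(S,T)$ should still leave a probability of at least $\tfrac14 n^{-1/8}$ that at least one bottleneck survives; the exponents $32{,}000$ and $n^{-1/8}$ in the statement are exactly what falls out of balancing these binomial factors against the per-configuration probability with $L = \tfrac18(\ln n/n)^{1/3}$.

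\medskip
Finally I would assemble the pieces: on the regularity event for the men's ratings (probability $\ge 1 - o(n^{-1/8})$) combined with the bottleneck event (probability $\ge \tfrac14 n^{-1/8} + o(n^{-1/8})$ from the second-moment bound), there exist $k$ women with public rating $\ge \tfrac32 L$ whose acceptable-edge neighborhoods are all contained in a set of $k-1$ men; by Hall's theorem no matching can match all of them to acceptable partners, so in any perfect matching at least one of them is matched to a man giving loss $> L$. Since $\tfrac32 L = 3L/4\mu$ with $\mu = \tfrac12$, this is exactly the claimed failure. The one genuinely delicate point is the second moment / union-bound balancing in the middle step — I would expect to need a careful choice of $k$ (likely slowly growing, not constant) and the window widths so that the union-bound loss is subpolynomial while the per-configuration probability stays above $n^{-1/8}$, and I would double-check that the lower-bound form of the Chernoff/Poisson estimates (we need events to be \emph{not too rare}) is available here rather than only the usual upper tail.
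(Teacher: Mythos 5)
Your high-level idea --- exhibit a Hall-type obstruction among agents well above the $\tfrac32 L$ cutoff --- is the right one, and the special case you ultimately need is $k=1$: a single woman all of whose incident edges cause a loss exceeding $L$ to her or to her partner. But there are two genuine problems with the plan as written. First, your central count is wrong. The condition $r_m+s_w(m)\ge r_{m_i}+1-2L$ is only \emph{woman}-acceptability; an acceptable edge must also give the \emph{man} a loss of at most $L$, which imposes an independent private-score condition $s_m(w)\ge 1-\Theta(L)$ on his side. The correct expected number of acceptable edges per woman is therefore $\Theta(L^2)\cdot\Theta(Ln)=\Theta(L^3n)=\Theta(\ln n)$ (a factor $\Theta(L^2)$ from the two private scores and $\Theta(Ln)$ candidate men within a $\Theta(L)$ public-rating window), not the $\Theta(L^2 n)$ you compute --- and note that $L^2n=\Theta((\ln n)^{2/3}n^{1/3})$ is \emph{not} $\Theta(\ln n)$, so your own arithmetic is internally inconsistent. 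This is not cosmetic: with $L^2n$ acceptable edges per woman the probability of an isolated woman would be $\exp(-\Theta(n^{1/3}))$ and the theorem would be unprovable by this route. The paper's Lemmas~\ref{lem::acceptable-prob} and~\ref{lem::bound-on-no-acceptable-edges} are precisely the two-sided bound ($\le 8L^2$ per edge passing a public-rating test, at most $3+6L(n-2)$ such edges), giving $(1-8L^2)^{3+6L(n-2)}\ge n^{-1/8}$ for a fixed woman.

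Second, your mechanism for converting the existence of bottlenecks in expectation into a probability lower bound does not work as described. A union bound over the $\binom{n}{k}\binom{m}{k-1}$ configurations bounds the probability of the union from \emph{above}, not below; to lower-bound the probability that some $(S,T)$ with $|S|=k>1$ exists you would need a Paley--Zygmund/second-moment argument controlling the correlations between overlapping configurations, which you have not set up and which is substantially harder than necessary. The paper avoids all of this by taking $k=1$ and a first-moment argument: the expected number of women with no acceptable incident edge is at least $n^{7/8}$ (Lemma~\ref{lem::lower-bound}), the count is trivially at most $n$, so the probability that at least one exists is at least $\tfrac12 n^{-1/8}$; the only remaining care is that women with public rating below $\tfrac32 L$ have every edge woman-acceptable by definition, which the paper handles by showing (conditional on the regularity events ${\mathcal B}_4,{\mathcal B}_5$ not occurring) that at most $5Ln=o(n^{7/8})$ isolated women can lie in that bottom region. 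If you repair your acceptability count to $\Theta(L^3n)$ and replace the union-bound step with this one-line first-moment argument, your proof becomes the paper's.
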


Next, we show that to obtain sub-constant losses in general, one needs constant bounds on the derivatives. 
We first define a notion of a sub-constant function, which we use to specify sub-constant losses.

\begin{defn}[Sub-constant function]
A function $f(x):{\mathbb R}\rightarrow {\mathbb R}^+$ is sub-constant if for every choice of constant $c>0$, there exists an $\xbar$ such that for all $x \ge\xbar$, $f(x) \le c$.
\end{defn}

\begin{theorem}\label{thm:constant-deriv-bdd-needed}
Let $f:N\rightarrow {\mathbb R}^+$ 
be a continuous, strictly decreasing sub-constant function,
and let $\delta,\sigma \in (0,1)$ be constants. 
Then, in the following two cases, there exist continuous and strictly increasing utility functions $U(.,.)$ and $V(.,.)$ such that for some $\nbar>0$,
for all $n\ge \nbar$, with probability at least $1 - \delta$, in every perfect matching, some rank $i$ man $m_i$ or woman $w_i$ with public rating at least $\sigma$ receives utility less than $U(r_{w_i},1)-f(n)$ or $V(r_{m_i},1)-f(n)$, respectively.

\smallskip

i.  $U(.,.)$ and $V(.,.)$ have derivatives w.r.t.\ their second variables that are bounded by a constant, but for (at least) one of which the derivative w.r.t.\ their first variable is not bounded by any constant.

\smallskip

ii. $U(.,.)$ and $V(.,.)$ 
have derivatives w.r.t.\ their first variables that are bounded by a constant, but for (at least) one of which the derivative w.r.t.\ their second variable is not bounded by any constant.
\end{theorem}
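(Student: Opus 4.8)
The plan is to first reduce the statement to the following cleaner claim: for each of the two cases there is a \emph{fixed} pair of utility functions $U,V$ (satisfying the stated derivative restrictions) and an absolute constant $c_0>0$, depending only on $\sigma$, such that for all large $n$, with probability at least $1-\delta$ no perfect matching gives \emph{every} non-bottom agent (one whose aligned partner has public rating $\ge\sigma$) a loss strictly below $c_0$. This suffices: since $f$ is sub-constant and strictly decreasing there is an $\nbar$ with $f(n)<c_0$ for all $n\ge\nbar$, so on this event every perfect matching has some non-bottom agent with loss $\ge c_0>f(n)$. I would prove case~(i) in full and obtain case~(ii) by the symmetric construction in which the roles of the public rating and the private score in $U,V$ are exchanged.

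For case~(i) take additively separable utilities $U(r,s)=V(r,s)=\phi(r)+s$ with $\phi:[0,1]\to[0,M]$ continuous and strictly increasing, $\phi'$ bounded below by a positive constant but unbounded (i.e.\ $\phi$ has a vertical-tangent singularity); then $\partial_s U=\partial_s V=1$ is bounded and $\partial_r U=\partial_r V=\phi'$ is not. Two structural observations drive the argument. First, if a perfect matching $\mu$ gives $m_i$ a loss at most $c_0$ then, writing $w$ for $\mu(m_i)$, we have $\phi(r_{w_i})-\phi(r_{w})\le c_0-(1-s_{m_i}(w))\le c_0$; hence whenever $w$ has public rating at most that of $m_i$'s aligned woman we get both $\phi(r_{w_i})-\phi(r_{w})\le c_0$ and $s_{m_i}(w)\ge 1-c_0$, and symmetrically for women. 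Thus every edge of a ``witnessing'' matching has its matched-down endpoint (both endpoints, for an aligned edge) carrying a private score $\ge 1-c_0$ for the other, so a witnessing matching imposes at least $(1-2\sigma)n$ independent events, each of probability at most $c_0$, on the private scores. Second, $m_i$'s loss is at least $\phi(r_{w_i})-\phi(r_{\mu(m_i)})$, so a witnessing matching must be banded in $\phi$-value; here is where $\phi$'s steepness is used --- the singularity is placed (for instance near public rating $1$, or just above $\sigma$) so that $\phi$-banding becomes a narrow constraint in rank, sharply limiting the number of feasible permutations.

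The probabilistic core then runs as follows. Condition on the public ratings; with probability $\ge 1-\delta/2$ all relevant public-rating windows contain their expected count $\pm O(\sqrt{n\log n})$ agents (so aligned agents have nearly equal ratings and the rank-width of the forced band is under control). For each feasible (banded) permutation $\pi$, the at least $(1-2\sigma)n$ required private-score events are independent with probability at most $c_0<1$ each, so $\pi$ witnesses with probability at most $c_0^{(1-2\sigma)n}$; a union bound over the feasible $\pi$ then bounds the probability that any witnessing matching exists, and the construction is tuned so this bound tends to $0$, hence is $\le\delta$ for $n$ large. The main obstacle is exactly this last balancing: the number of near-aligned perfect matchings is super-exponential in $n$, so one must exploit the unbounded $r$-derivative aggressively enough to make the forced rank-band short enough that the count of feasible matchings is beaten by the $c_0^{(1-2\sigma)n}$ private-score savings, while simultaneously keeping those per-edge probabilities bounded away from $1$ and handling the order-statistic fluctuations of the public ratings (which near the top have exponential, not uniform, spacings). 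Getting all of these to fit together, rather than any individual estimate, is the technical heart; case~(ii) needs the identical balancing with public ratings and private scores interchanged.
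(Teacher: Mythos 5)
Your opening reduction is fatally flawed: you propose to prove the stronger statement that there is a \emph{fixed} pair $U,V$ and an absolute constant $c_0>0$ (depending only on $\sigma$) such that, with probability $\ge 1-\delta$, no perfect matching gives every non-bottom agent a loss below $c_0$. That statement is false for \emph{every} continuous, strictly increasing pair $U,V$ in this model: Theorem~\ref{thm::low_loss_general} shows that in the general utilities model, for any constants $\eps,\sigma>0$, with probability $1-\exp(-\Theta(n))$ every stable matching (which is in particular a perfect matching) gives all non-bottom agents a loss of at most $\eps$; taking $\eps=c_0/2$ contradicts your cleaner claim outright. The whole content of Theorem~\ref{thm:constant-deriv-bdd-needed} is precisely this gap between ``constant loss is always achievable'' and ``no prescribed sub-constant rate $f$ is achievable,'' which forces the construction of $U,V$ to depend on $f$: the paper (Lemmas~\ref{obs::limits_of_result} and~\ref{obs::limits_of_result_2}) takes $U(r,s)=r+g(s)$ (resp.\ $\widetilde g(r)+s$) with the steepness of $g$ calibrated so that $g(1-\delta/(8nf(n)))=g(1)-f(n)$, i.e.\ the singularity is tuned to the target rate $f(n)$, not to a constant.

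Even setting the reduction aside, your probabilistic core does not go through. You union-bound over all ``banded'' permutations, charging each one $c_0^{(1-2\sigma)n}$ from independent private-score events. But $\phi'$ is unbounded only near one point; away from the singularity $\phi'=\Theta(1)$, so the loss constraint forces a band of rank-width $\Theta(c_0 n)$ there, and the number of permutations respecting such a band is $\exp(\Theta(n\log(c_0 n)))$, which overwhelms $c_0^{(1-2\sigma)n}$. You correctly flag this balancing as ``the technical heart'' but do not resolve it --- and it cannot be resolved, since the target statement is false. The paper avoids all of this by arguing about a \emph{single} extremal agent: with probability $\ge 1-\delta/4$ the top man $m_1$ has $r_{w_1}$ close to $1$, only $O(nf(n))$ women lie within public rating $f(n)$ of $w_1$ (any partner further down already costs him more than $f(n)$ because $\partial U/\partial r\ge 1$ there), and each such woman yields him loss $\le f(n)$ only if his private score for her lands in an interval of measure $\delta/(8nf(n))$; a union bound over these $O(nf(n))$ women shows that with probability $\ge 1-\delta$ every edge incident to $m_1$ costs him more than $f(n)$, so no perfect matching at all can avoid this loss. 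If you want to repair your write-up, replace the global permutation-counting argument with this single-agent argument and make the singularity of your $\phi$ depend on $f$.
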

\section{Proof Sketches for the Remaining Results}
\label{sec::analysis}

In Section \ref{sec::linear}, we proved Theorem \ref{thm:basic-bdd-deriv-result} for the special case of the linear separable model with $\lambda=1/2$. We will now briefly outline how we extend the analysis to the bounded derivative model and the general utilities model, as well as to the case where the number of men and women is unequal and the setting of many-to-one matchings.
The full analyses can be found in Appendix~\ref{sec::upper_full_proofs}.

We will also briefly discuss our construction of an $\epsilon$-Bayes-Nash equilibrium in the bounded derivatives model as well as sketch our lower bound proofs in both the bounded derivatives and the general utility models.
The full proofs can be found in Appendices~\ref{sec::eps-BN-equil-full-match} and~\ref{sec::lower_full_proofs},
respectively.

\subsection{Extending the Upper Bound Result}
\label{sec::extending-results}

\noindent
1. Weaker bounds on the losses for agents with lower ranks. \\
This is obtained by reducing $\alpha$ to $\alpha/t$, where $t>1$, and replacing $\Lbar$ by $L=  4\alpha t^2$. The only change occurs in recalculating the loss probability.

\smallskip
\noindent
2. Unequal numbers of men and women. \\
The critical condition for the bound on $m_i$'s loss is $r_{w_i}\ge 3\alpha$. This simply states that there is a range
of $3\alpha$ ratings below $w_i$. But this statement is independent of how many agents there are on each side.
Similarly, the bound on $w_i$'s loss requires that 
there be a range
of $3\alpha$ ratings below $m_i$.
So all one has to do is rephrase these conditions
in terms of $p$ and $n$, the numbers of men and women, respectively.

\smallskip
\noindent
3. The bounded derivatives model. \\
It suffices to scale the values of $\alpha$, $\beta$, $\gamma$ and $L$ to take account of the bounded derivative property so as to ensure that Equations \eqref{eqn::beta-constraint} and \eqref{eqn::gamma-constraint} still hold.
As we shall see, setting $\beta=\alpha \rho$, $\gamma=\alpha\rho$ and $L= 4\alpha\mu$ suffices.

\smallskip\noindent
4. The many-to-one result.\\
We actually analyze the many-to-many setting. The main issue is that a company (replacing a man in the previous argument) seeks $d_c$ matches rather than 1 and a worker seeks $d_w$ matches. 
We need to restate Lemma~\ref{lem::key_bounded}, for now the alignment we seek is between positions sought by the workers
and provided by the companies, rather than between men and women.

However, the significant change occurs in deducing the theorem, for now we need to determine the probability that a company receives $d_c$ matches. The remaining changes are due to replacing $n$, the number of men and of women, with $n_c$ and $n_w$, the numbers of companies and workers, respectively.

\smallskip\noindent
5. A distribution bound on the losses.\\
By reducing both $\alpha$ and $L$ by a factor $s>1$,
we increase the failure probability for a single agent from $n^{-(c+1)}$ to $n^{-(c+1)/s^3}$.
This implies, for example, that in expectation half the
agents have a loss of $O(1/n^{1/3})$. In fact an analysis along the lines of observation (3) in the sketch proof shows that this bound holds with high probability.


\subsection{Extensions to More General Models}
1. The bounded derivative setting.\\
The only places we use the bounds on the derivatives are to determine $\beta$, $\gamma$, and $L$ satisfying \eqref{eqn::beta-constraint} and \eqref{eqn::gamma-constraint}.
As we show in Appendix~\ref{sec::general},
$\beta=\gamma=\alpha\rho$, and $L=4\alpha/\mu$ suffice.

\smallskip\noindent
2. The general utilities model.\\
Given constants $\eps, \sigma>0$, we need to choose $\alpha, \beta, \gamma>0$ and $n$ 
large enough so that \eqref{eqn::beta-constraint} and  \eqref{eqn::gamma-constraint} are satisfied.
The existence of such constant valued $\alpha$, $\beta$, and $\gamma$ follows using the fact that $U$ and $V$, the utility functions, are continuous and strictly increasing.

\subsection{$\eps$-Bayes-Nash Equilibrium}
In the bounded derivative model, with slightly stronger constraints on the derivatives, we also show the existence of an $\epsilon$-Bayes-Nash equilibrium in which agents make relatively few proposals. Specifically, there is an equilibrium in which no agent proposes more than $O(\ln^2 n)$ times and all but the bottommost $O((\ln n/n)^{1/3})$ fraction of the agents make only $O(\ln n)$ proposals. Here $\eps=\Theta(\ln n/n^{1/3})$.

We use the idea of considering a run of DA with cuts just as in the proof of Theorem~\ref{thm:basic-bdd-deriv-result}; in addition, the proposal receiving side will impose reservation thresholds based on their public rank. We also apply the distribution bound on losses described in (4) in the previous subsection. The resulting analysis is somewhat involved (see Appendix~\ref{sec::eps-BN-equil-full-match}).

\subsection{Lower Bounds}

\smallskip\noindent
1. The lower bound complementing the one-to-one upper bound.\\
The main idea is to show by a direct computation that for each woman, with probability at least $1/n^{1/8}$,
all her incident edges provide a loss of more than $L$ to either her or her partner.
We will need to exclude some low-probability events in which the number of agents in an interval is far from
its expectation, and also eliminate the agents with public ratings less than $\tfrac 32 L$.
The net effect is that with probability at least $1/4n^{1/8}$ some woman has no incident $(L,\tfrac32 L)$-acceptable edge, where $L=\tfrac 18(\ln n/n)^{1/3}$,
and hence with this probability there is no matching using solely $(L,\tfrac32 L)$-acceptable edges.
Consequently, in order to obtain a stable matching with high probability, we need to increase the value of $L$.

\smallskip\noindent
2. The lower bound complementing the general utilities model upper bound.\\
To show that no sub-constant loss bound (such as $(\ln n/n)^{1/3}$) is possible,
we consider a loss bound that is shrinking (slowly) as a function of $n$.
For a given $n$, this can be expressed as a loss bound $\eps(n)$.
We provide two similar constructions as there are two separate derivative bounds. 

Our first construction uses a utility function $U(s,v)=\tfrac12(s+g(v))$, with $g(1)=1$ and $g(\cdot)$ being
unboundedly rapidly growing as $v\ra 1$.
$g$ is designed to ensure that with high probability the edges to the women with public ratings $s\ge 1-2\eps(n)$ 
all have private scores less than $1-\eps(n)$.
This will ensure that with high probability $m_1$, the man with the highest public ranking, will have no edge providing him a loss of at most $\eps(n)$.
However slowly $\eps(n)$ decreases as a function of $n$, we show that we can construct
a corresponding $g$ that grows suitably quickly.
This construction demonstrates that the parameter $\epsilon$ needs to be constant. Notice that our construction actually shows that, in the general setting, w.h.p, there is not only no stable matching where all high public rating agents face sub-constant losses, but in fact no perfect matching. 
\section{Proofs of the Remaining Upper Bound Results}
\label{sec::upper_full_proofs}
\begin{proof}
(Of Theorem~\ref{thm:basic-bdd-deriv-result}.)
We now consider what changes occur when we are no longer restricted to the linear separable model with $\lambda=\tfrac 12$.

First, we need to determine the values for $\beta$, $\gamma$ and
$\Lbar$ implied by the bounded derivative parameters $\rho$ and $\mu$. 
We show the following values, 
$\beta=\alpha \rho$, $\gamma=\alpha\rho$ and $L= 4\alpha\mu$, satisfy \eqref{eqn::beta-constraint} and
\eqref{eqn::gamma-constraint}.

For by the definition of $\rho$,
$V(r-\alpha,1) \le V(r,1-\alpha\rho)=V(r,1-\beta)$, satisfying
\eqref{eqn::beta-constraint}.
And by the definition of $\rho$ and $\mu$,
$U(r,1)-U(r-3\alpha,1-\gamma) \le U(r,1)-U(r-3\alpha-\gamma/\rho,1)\le (3\alpha+\gamma/\rho)\mu=4\alpha\mu=L$,
satisfying \eqref{eqn::gamma-constraint}.

To complete the argument, it suffices to determine the failure probability on setting $L=\Lbar$ when running the double-cut DA.
Recall that the failure probability (summed over the $2n$ men and women) is given by:
\begin{align*}
    p_f&=2n\cdot \exp(-\alpha(n-1)/12)
+2n\cdot \exp(-\alpha (n-1)/24)
    +2n\exp(-\alpha\beta n/8)
    +2n\cdot \exp(-\alpha\beta\gamma n/2)\\
    &\le 2n\cdot \exp(-\alpha(n-1)/12)
+2n\cdot \exp(-\alpha (n-1)/24)
+2n\exp(-\alpha^2\rho n/8)
    +2n\cdot \exp(-\alpha^3\rho^2 n/2).
\end{align*}
We note that $\alpha = \Lbar/4\mu$,
and set $\Lbar=[128(c+2)\mu^3\ln n/(\rho^2 n)]^{1/3}$. 
For large enough $n$ this ensures a failure
probability of at most $n^{-c}$.
\end{proof}

\begin{proof} (Of Theorem~\ref{thm::low_loss_bounded}.)
We now need to consider smaller intervals of men and women below $m_i$ and $w_i$ respectively.

We set $\sigma=\sigmabar/t$, where $t\ge 1$.
We then set $\alpha=\sigma/4$ and $\beta=\alpha \rho$ as before, but to keep the most significant term in the
probability bound unchanged ($2n\cdot \exp(-\alpha\beta\gamma n/2)$), we increase $\gamma$ by a factor of $t^2$.
We also set $L=\Lbar t^2$.

The failure probability continues to be at most $n^{-c}$ for large enough $n$ so long as $\alpha\beta n=\Omega(c\ln n)$; this holds for $\sigma=\Omega((\ln n)/n)^{1/2}$.

Now let's consider what happens when there are $p$ men
and $n$ women. We start with the case $p\ge n$.
Our key lemma is stated w.r.t.\ the rank $i$ man $m_i$ and the rank $i$ woman $w_i$, and requires $r_{w_i}\ge 3\alpha$ when the bottom of the rating range is 0 for both men and women. 

It is convenient to have the range of ratings for the men be $[0,p/n]$ and for the women be $[p/n-1,p/n]$.
The effect is that the expected values for $r_{w_i}$ and $r_{m_i}$ are equal.
The condition for $m_i$ to have a loss of at most $L$ becomes $r_{w_i} \ge \tfrac pn +3\alpha$ (i.e.\ $w_i$ has a rating at least $3\alpha$ greater than the bottommost possible rating for the women).
But the condition for $w_i$ to have a loss of at most $L$ remains
$r_{m_i}\ge 3\alpha$ (i.e.\ $m_i$ has a rating at least $3\alpha$ greater than the bottommost possible rating for the men).

Symmetric bounds apply when $n\ge p$.
\end{proof}

\begin{proof} (Of Theorem~\ref{thm::low_loss_general})
We set $\alpha= \sigma/3$ and $L=\eps$.
Again, we need to satisfy \eqref{eqn::beta-constraint} and \eqref{eqn::gamma-constraint}.

To define $\beta$
we begin by specifying a parameter $\beta(r,\alpha)$.
There are two cases.
If $V(r-\alpha,1) \le V(r,0)$, then $\beta(r,\alpha) = 1$.
Otherwise, as $V$ is continuous and strictly increasing,
there must be a value $\beta(r,\alpha)>0$ such that $V(r-\alpha,1)=V(r,1-\beta(r,\alpha))$.
Now, we define $\beta= \min_{r\in[\alpha,1]}\{\beta(r,\alpha)\}$.
As this is the minimum of strictly positive values on a compact set, it follows that $\beta> 0$, also.
Note that $V(r-\alpha,1) \le V(r,1-\beta)$ for all $r\in [\alpha,1]$, satisfying \eqref{eqn::beta-constraint}. 
Also, $\beta= \Theta(1)$ if $\alpha = \Theta(1)$.

Similarly, to define $\gamma$
we begin by specifying a parameter $\gamma(r,\alpha,\eps)$.
Again, there are two cases.
If $U(r,1) - U(r-3\alpha,0) \le L = \eps$ then $\gamma(r,\alpha,\eps) =1$.
Otherwise, as $U$ is continuous and strictly increasing,
there must be a value $\gamma(r,\alpha,\eps)>0$ such that
$U(r,1) - U(r-3\alpha,1-\gamma(r,\alpha,\eps)) =  \eps$.
Now, we define $\gamma\triangleq \min_{r\in[3\alpha,1]}\{\gamma(r,\alpha,\eps)\}$.
Again, as this is a minimum of strictly positive values on a compact set, $\gamma > 0$ also.
Note that $U(r,1) - \eps \le U(r-3\alpha,1-\gamma)$ for all $r\in [3\alpha,1]$,
satisfying \eqref{eqn::gamma-constraint}. Also, $\gamma= \Theta(1)$ if $\alpha = \Theta(1)$.

As $\sigma=\Theta(1)$, all of $\alpha,\beta,\gamma=\Theta(1)$.
Therefore, by Lemma~\ref{lem::key_bounded}, the failure probability is
$\exp(-\Theta(n))$.
\end{proof}

Because the many-to-one setting is non-symmetric it is actually simpler to analyze the
many-to-many setting, many-to-one being just a special case of this.
We will use the terminology of workers and companies, for want of a better alternative.
(One could think of these workers as being consultants or gig workers who seek multiple tasks at a time.)

In this setting there are $n_c$ companies $c_1,c_2,\ldots,c_{n_c}$,  and $n_w$ workers, $w_1,w_2,\ldots,w_{n_w}$, both ordered by their public ranks. Each company has $d_c$ tasks, and each worker desires $d_w$ tasks. 
For simplicity, we suppose $n_c\cdot d_c = n_w\cdot d_w \triangleq n$. 
We let $\nmax = \max\{n_c,n_w\}$.
There will be two loss parameters, $\Lbar_c$, for the companies, and $\Lbar_w$, for the workers. 
Finally, we use the notation $C(I)$ and $W(I)$, where $I$ is an interval of public ratings, to denote, respectively, the companies and workers with public ratings in the interval $I$.

\begin{defn} [Alignment]
Suppose company $c$ has rank $i$. Let $w$ be the worker with rank $\ceil{d_c\cdot i/d_w}$. Then $w$ is aligned with $c$.
Likewise, suppose worker $w'$ has rank $j$.
Let $c'$ be the company with rank $\ceil{d_w\cdot j/d_c}$.
Then $c'$ is aligned with $w'$.
\end{defn}

\begin{defn}[company-acceptable edges]
Let $0 < \sigma_c, \sigma_w < 1$, $0<\Lbar_c, \Lbar_w<1$ be parameters.
An edge $(c,w)$ is company-acceptable if either $c \in C[0,\sigma_c)$, or the utility $c$ gets from this match is at least $U(r_{w'},1)- \Lbar_c$, where $w'$ is the worker aligned with $c$. Worker-acceptability requires either $c \in [0,\sigma_w)$, or utility at least $V(r_{c'},1)- \Lbar_w$, where $c'$ is the company aligned with $w$. An edge is acceptable if it is both company and worker-acceptable. (Strictly speaking, the definition is w.r.t.\ the four parameters $\sigma_c$, $\sigma_w$, $\Lbar_c$, and $\Lbar_w$, but for the sake of readability, we omit them from the terms company- and worker-acceptable.)
\end{defn}

\begin{defn}[DA stops]
The workers \emph{stop at public rating $r$} if in each worker's preference list all the edges with utility 
less than $V(r,1)$ are removed. 
The workers \emph{stop at company $c$} if in each worker's preference list all the edges following their edge to $c$ are removed.
The workers \emph{double cut at $c$ and public rating $r$}, if they each stop at $c$ or $r$, whichever comes first.
Companies stopping and double cutting are defined similarly.
\end{defn}

\begin{theorem}
\label{thm::many-to-many}
Suppose that $d_w/d_c=O((n_w/\ln \nmax)^{2/3})$. Then,
in the bounded derivatives model,
for any given constant $k>0$, with probability at least $1-n^{-k}$,
in every stable matching, every company $c_i$, for which the aligned worker $w_{j_i}$ has public rating
at least $\sigma_w= \Theta(\Lbar_w)$, suffers loss at most
\begin{align*}
    \Lbar_c =\left\{\begin{array}{ll}
                    \Theta([(\ln \nmax)/n_w]^{1/3})\hspace*{0.5in} & d_c=O(\ln \nmax)\\
                    \Theta([d_c/n_w]^{1/3})& \ln\nmax \le d_c= O(d_w (n_w/\ln \nmax)^{2/3})\\
                   \Theta(d_c\ln \nmax/[d_w n_w]) & d_c =\Omega(d_w (n_w/\ln \nmax)^{2/3})
                 \end{array}
         \right.
\end{align*}
and a corresponding symmetric bound for the workers' loss.

\end{theorem}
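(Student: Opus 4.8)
The plan is to reduce Theorem~\ref{thm::many-to-many} to a position-level restatement of the key lemma, Lemma~\ref{lem::key_bounded}. Concretely, think of company $c$ as offering $d_c$ positions and worker $w$ as seeking $d_w$ positions, with the non-symmetric alignment defined above matching a company's positions to the workers it would naturally draw on. To bound company losses, I would run the worker-proposing double-cut DA, cutting at company $c_i$ and at public rating $r_{c_i}-\alpha$, just as in Section~\ref{sec::sketch-linear-result}. Since worker-proposing DA yields the company-pessimal stable matching---in any stable matching, the workers held by $c_i$, sorted by $c_i$'s utility, dominate coordinate-wise those it holds here---and since the many-to-many analogue of Claim~\ref{clm::full-DA-better} says the full run only improves $c_i$'s roster relative to the double-cut run, it suffices to show that in the double-cut run $c_i$ receives at least $d_c$ proposals, each giving it a loss of at most $\Lbar_c$; then $c_i$ holds $d_c$ workers all of loss at most $\Lbar_c$, and this transfers to every stable matching.

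For the count, let $w_{j_i}$ be the worker aligned with $c_i$ and set $h_i=|C[r_{c_i}-\alpha,r_{c_i})|$ and $\ell_i=|W[r_{w_{j_i}}-3\alpha,r_{w_{j_i}})|$, so there are $i+h_i$ companies in $C[r_{c_i}-\alpha,1]$ and $j_i+\ell_i$ workers in $W[r_{w_{j_i}}-3\alpha,1]$. Cutoff-surviving Phase~1 proposals from these workers land only in $C[r_{c_i}-\alpha,1]$, which offers at most $(i+h_i-1)d_c$ positions other than $c_i$'s. Using $j_i\ge d_c i/d_w$ together with the Claim~\ref{clm::span-man-interval}/Claim~\ref{clm::span-woman-interval} analogues (now $h_i\le\tfrac32\alpha(n_c-1)$ and $\ell_i\ge\tfrac52\alpha(n_w-1)$ w.h.p., with $n_c,n_w$ in place of $n$), these workers have $\Omega(\alpha n)$ position-slots that cannot be placed in $C[r_{c_i}-\alpha,1]\setminus\{c_i\}$, so at least $\Omega(\alpha n/d_w)=\Omega(\alpha n_w)$ distinct workers are not fully matched at the end of Phase~1 and will propose to $c_i$ in Phase~2. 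Each such worker's proposal to $c_i$ is cutoff-surviving with independent probability at least $\beta$ (by~\eqref{eqn::beta-constraint}, it suffices that her private score for $c_i$ exceed $1-\beta$), and, given that, gives $c_i$ loss at most $\Lbar_c$ with independent probability at least $\gamma$ (by~\eqref{eqn::gamma-constraint} and Observation~\ref{obs::towards-man-accept}, it suffices that $c_i$'s private score for her exceed $1-\gamma$). Two Chernoff bounds then produce $\Omega(\alpha\beta\gamma n_w)$ proposals to $c_i$ of loss at most $\Lbar_c$.

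The conditioning among the Phase~1 proposals is dealt with exactly as in the proof of Claim~\ref{clm::many-man-acceptable-props}: discretize the utility space via Claim~\ref{clm::discrete-dist}, form the computation tree $T$ and its extension $T_x$, and analyze the adversarial blocking mechanism $\mathcal M$. The one change is that here a ``block'' corresponds to a worker filling all $d_w$ of her positions, so $\mathcal M$ may place at most $\floor{(i+h_i-1)d_c/d_w}$ blocks per root-to-leaf path; the bottom-up case analysis (Cases 1, 2, 2.1, 2.2) that pushes blocks to the root is otherwise unchanged, and afterwards $\Omega(\alpha n_w)$ workers remain unblocked, each independently proposing to $c_i$ with probability at least $\beta$. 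Taking $\beta=\gamma=\Theta(\alpha\rho)$ and $\Lbar_c=\Theta(\alpha\mu)$ so that~\eqref{eqn::beta-constraint} and~\eqref{eqn::gamma-constraint} hold, as in the proof of Theorem~\ref{thm:basic-bdd-deriv-result}, the expected number of loss-$\le\Lbar_c$ proposals to $c_i$ is $\Theta(\alpha^3 n_w)$; choosing $\alpha$ so this is simultaneously $\Omega(d_c)$ (so $c_i$ can be fully staffed) and $\Omega(\ln\nmax)$ (so the Chernoff failure, summed over the $\le n_c$ companies by a union bound, is $n^{-k}$) gives $\alpha=\Theta((\max\{d_c,\ln\nmax\}/n_w)^{1/3})$ and hence $\Lbar_c=\Theta((\max\{d_c,\ln\nmax\}/n_w)^{1/3})$, covering the first two regimes. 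Running the symmetric company-proposing double-cut DA gives the corresponding bound $\Lbar_w$ for the workers, and a union bound over both sides finishes the argument.

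I expect the main obstacle to be the third, large-$d_c$ regime. Once $d_c$ is large enough that the choice $\alpha=\Theta((d_c/n_w)^{1/3})$ would violate the feasibility constraints of the construction---the worker interval $W[r_{w_{j_i}}-3\alpha,1]$ must fit below the $\sigma_w$ cutoff and inside $[0,1]$, and there are only $n_w$ workers in total, limiting how many proposals $c_i$ can ever hope to receive---one must instead cap $\alpha$ and compensate in the loss probability by widening $\gamma$ (and correspondingly enlarging $\Lbar_c$), in the spirit of the $t^2$ rescaling used for Theorem~\ref{thm::low_loss_bounded}. Carrying this through is what yields the bound $\Lbar_c=\Theta(d_c\ln\nmax/(d_w n_w))$, and re-verifying that the Phase~1 counting (enough unfilled worker-slots) and the blocking argument still go through under this rescaling is the delicate part of the proof.
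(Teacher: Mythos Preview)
Your proposal for the first two regimes is essentially the paper's proof: run worker-proposing double-cut DA at $c_i$ and $r_{c_i}-\alpha$, restate Claims~\ref{clm::span-man-interval}--\ref{clm::one-man-accept-prop} with $n_c,n_w$ in place of $n$, interpret a ``block'' as a worker becoming fully matched, and then require $\Theta(\alpha\beta\gamma n_w)\ge d_c$ so that $c_i$ fills all $d_c$ slots. Setting $\beta=\gamma=\alpha\rho$ and $\Lbar_c=4\alpha\mu$ and balancing $\alpha^3 n_w$ against $\max\{d_c,\ln\nmax\}$ gives the first two lines of $\Lbar_c$, exactly as the paper does.

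Your treatment of the third regime, however, misidentifies the binding constraint. You attribute the large-$d_c$ transition to the worker interval $W[r_{w_{j_i}}-3\alpha,1]$ running out of room or workers, and propose to cap $\alpha$ and widen $\gamma$ via a $t^2$-type rescaling. That is not where the constraint comes from. The paper's failure probability has four terms, and the one you are not tracking is $n_c\cdot\exp(-\alpha(n_c-1)/12)$, the Chernoff bound for $h_i$ in the analogue of Claim~\ref{clm::span-man-interval}. Since there are only $n_c=d_w n_w/d_c$ companies, making this term $o(\nmax^{-k})$ forces $\alpha(n_c-1)=\Omega(\ln\nmax)$, i.e.\ $\alpha=\Omega(d_c\ln\nmax/(d_w n_w))$, hence $\Lbar_c=\Theta(d_c\ln\nmax/(d_w n_w))$. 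This constraint dominates the $(\,\cdot\,/n_w)^{1/3}$ bounds precisely when $d_c/d_w=\Omega((n_w/\ln\nmax)^{2/3})$, which is the third regime. No rescaling of $\gamma$ is involved; the paper simply takes the maximum over all four Chernoff constraints plus the $\tfrac14\alpha\beta\gamma n_w\ge d_c$ constraint, and the company-count term is what kicks in for large $d_c$. Once you add that missing constraint to your list, the third regime falls out directly and the proof is complete without any additional machinery.
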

\begin{proof} 
We need to take account of the fact that each company
seeks to fill $d_c$ positions and each worker seeks $d_w$ positions.
So we slightly redefine the double-cut DA to state that each worker who is not fully matched, i.e., who has fewer than $d_w$ matches, keeps trying to match, stopping when she runs out of proposals, or she is fully matched,
or her next proposal is to $c_i$.

First, to avoid rounding issues, we assume $\alpha$ is chosen so that $\alpha (n_w-\tfrac52)$ is an integer for the argument bounding $\Lbar_c$, and similarly $\alpha (n_c-\tfrac52)$ is an integer for the argument bounding $\Lbar_w$.

We introduce one more index: $j_i = \ceil{d_c\cdot i/d_w}$. 
We then define $\ell_i=\floor{d_c(i +h_i -1)/d_w} +\alpha (n_w-\tfrac 52) - j_i$ (this is where we use the assumption that $\alpha (n_w-\tfrac52)$ is an integer as $\ell_i$ has to be an integer).
This will ensure that after running the double cut DA, the number of not fully matched workers is at least
$\alpha (n_w-\tfrac 52)$.
To see this, note that the number of available positions is $d_c(i+h_i-1)$ (remember $c_i$ is not matched in Step 1); therefore, the number of fully matched workers
is at most $\floor{d_c(i+h_i-1)/d_w}$ 
and therefore the number of not-fully matched workers is at least $\ell_i+j_i - \floor{d_c(i+h_i-1)/d_w} \ge
\floor{d_c(i +h_i -1)/d_w} +\alpha (n_w-\tfrac 52) - \floor{d_c(i+h_i-1)/d_w} \ge \alpha (n_w-\tfrac 52)$.

We need to make small changes to Claims~\ref{clm::span-man-interval}--\ref{clm::one-man-accept-prop} and to their proofs.
It seems simplest to restate and, as necessary, reprove the claims.

\begin{claim}
\label{clm::span-man-interval-mtm}
Let ${\mathcal B}_1$ be the event that for some $i$,
$h_i=\big|C[r_{c_i}-\alpha,r_{c_i})\big| \ge  \tfrac 32 \alpha (n_c-1)$.
${\mathcal B}_1$ occurs with probability at most $n_c\cdot \exp(-\alpha(n_c-1)/12)$.
The only randomness used in the proof are the choices of the companies' public ratings.
An analogous bound applies to the workers.
\end{claim}
Its proof is unchanged.
We just replace $n$ by $n_c$.

\begin{claim}
\label{clm::span-woman-interval-mtm}
Let ${\mathcal B}_2$ be the event that for some $i$, $\ell_i=\big|W[r_{w_i}-3\alpha,r_{w_i})\big| \le \tfrac 52 \alpha (n_w-1)$.
Then ${\mathcal B}_2$ occurs with probability at most $n_w\cdot \exp(-\alpha (n_w-1)/24)$.
%
The only randomness used in the proof are the choices of the workers' public ratings.
An analogous bound applies to the companies.
\end{claim}
Its proof is unchanged.
We just replace $n$ by $n_w$.
\hide{
\begin{proof}
We prove the bound for an arbitrary worker $w$.
Recall that $\ell_i =\floor{d_c(i+h_i-1)/d_w}+\alpha (n_w-\tfrac 52) -j_i$, and $j_i =\ceil{d_c\cdot i/d_w}$.

As ${\mathcal B}_1$ does not occur, by
Claim~\ref{clm::span-man-interval-mtm},
$h_i \le  \tfrac 32 \alpha (n_c-1)$.
Next, we bound the rating range $\tau$ such that
$\big|W[r_{w_{j_i}}-\tau,r_{w_{j_i}}]\setminus\{w_{j_i}\}\big|\ge \ell_i$ with high probability.

For any $\tau>0$, the expected number $N_w$ of workers other than $w$
in $W[r_{w_{j_i}}-\tau,r_{w_{j_i}}]\setminus\{w_{j_i}\}$ is $\tau(n_w-1)$.
Thus, by a Chernoff bound, 
\begin{align*}
    \Pr\big[N_w \le \tau(n_w-1) - \tfrac 12 \alpha(n_w-1)\big]  \le \exp(-\alpha^2(n_w-1)/8\tau).
\end{align*}
We want to ensure $N_w \ge \ell_i$.
This holds if 
\begin{align*}
  \tau(n_w-1) - \tfrac 12 \alpha(n_w-1) \ge   \floor{d_c(i+h_i-1)/d_w}+\alpha (n_w-\tfrac 52) -j_i.
\end{align*}
$\tau(n_w-1) \ge \tfrac 12 \alpha(n_w-1) + \tfrac 32 \alpha (n_c-1)d_c/d_w  +\alpha (n_w-\tfrac 52)$, suffices, and therefore
$\tau = 3\alpha$ suffices.
This yields a probability bound of $\exp(-\alpha(n_w-1)/24)$.
A union bound over all over all $i$ yields the result, i.e.\ over the minimum of all $n_c$ companies,
or all $j_i$, i.e.\ over all $n_w$ workers.
\end{proof}
}

\begin{claim}
\label{clm::many-man-acceptable-props-mtm}
Let ${\mathcal B}_3$ be the event that between them, the workers with
rank at most $j_i+\ell_i$ make at least
$\tfrac 12 \alpha\beta (n_w-5/2)$ Step 2 proposals to $c_i$.
If events ${\mathcal B}_1$ and ${\mathcal B}_2$
do not occur, then ${\mathcal B}_3$ occurs with probability at most $\exp(-\alpha\beta (n_w-5/2)/8)$.
\end{claim}
It's proof is largely unchanged.
The first issue is that now in the run of the DA algorithm
placing a block on a worker $w$ corresponds to $w$ having
matched $d_w$ times. The proof is otherwise unchanged
as any unblocked worker will run through her full utility range as before.
However, the calculations change as follows.
The number of not-fully matched workers is at least
\begin{align*}
d_w(j_i+\ell_i) - d_c(i-1 +h_i) 
&\ge d_w \cdot \frac 52 \alpha(n_w-1) + d_c - d_c\cdot \frac 32 \alpha(n_c-1)\\
& \ge \alpha d_wn_w -\frac 52\alpha d_w \ge \alpha d_w(n_w-\frac 52).
\end{align*}

This causes the replacement of $n$ by $n_w- 5/2$ in the bounds.

\begin{claim}
\label{clm::one-man-accept-prop-mtm}
If none of the events ${\mathcal B}_1$, ${\mathcal B}_2$, or
${\mathcal B}_3$
occur, then
at least $\tfrac 14\alpha\beta\gamma (n_w-5/2)$ of the Step 2 proposals to $c_i$ will
each cause $c_i$ a loss of at most $\Lbar_c$ with probability at least
$1 -\exp(-\alpha\beta\gamma (n_w-5/2)/16)$.
\end{claim}

To ensure $c_i$ receives at least $d_c$ proposals that each cause it a loss at most $\Lbar_c$, by Claim~\ref{clm::one-man-accept-prop-mtm},
we need that 
\begin{align}
\label{eqn::dc-condition}
\tfrac 14\alpha\beta\gamma \big(n_w -\tfrac 52\big) \ge d_c.
\end{align}

Then the overall failure probability summed over all companies is at most
\begin{align*}
   & n_c\cdot \exp(-\alpha(n_c-1)/12)
    +n_w\cdot \exp(-\alpha (n_w-1)/24)
    +n_c\exp(-\alpha\beta (n_w-5/2)/8)\\
   & +n_c\exp(-\alpha\beta\gamma (n_w-5/2)/16).
\end{align*}

In the bounded derivative setting, we continue to set $\beta=\alpha\rho$, $\gamma=\alpha\rho$ and $\Lbar_c=4\alpha\mu$. Then, for large enough $n_c, n_w$, 
with $\Lbar_c\ge 4\mu\cdot[16(k+2)\ln \nmax/\rho^2 (n_w-5/2)]^{1/3}$ 
and $\Lbar_c \ge 48\mu(k+2) \ln \nmax/(n_c-1)\ge 47\mu (d_c/d_w) [(k+2) \ln \nmax/(n_w-1)]$,  the overall failure probability is at most $\nmax^{-k}$.
The first of the two bounds on $\Lbar_c$ dominates if
$d_c/d_w=O((n_w/\ln \nmax)^{2/3})$.
In addition, with $\Lbar_c\ge 4\mu\cdot(4d_c/\rho^2(n_w-\tfrac 52))^{1/3}$, \eqref{eqn::dc-condition} is satisfied.
Thus, the overall condition is that $\Lbar_c=\Omega(\max\{ d_c, \ln \nmax \} /n_w)^{1/3})$.

The corresponding bound $\Lbar_w=\Omega(\max\{ d_w, \ln \nmax \} /n_c)^{1/3})$ can be deduced using the company-proposing DA.

It remains to prove Claim \ref{clm::one-man-accept-prop-mtm}, which we do below.
\end{proof}

\begin{proof}
(Of Claim~\ref{clm::one-man-accept-prop-mtm}.)
As $\mathcal{B}_3$ does not occur, by Claim~\ref{clm::many-man-acceptable-props-mtm}, there
are at least $\tfrac 12\alpha\beta (n_w-5/2)$ Step 2 proposals to $c_i$. 
As explained in observation (4) of the sketch proof,
each Step 2 proposal has independent probability at least $\gamma$ of causing $c_i$ a loss of at most $\Lbar_c$
(the independence is because this is due to the private score of $c_i$ for this proposal).
In expectation, there are at least $\tfrac 12\alpha\beta\gamma (n_w-5/2)$ of the proposals causing $c_i$
a loss of at most $\Lbar_c$, and by a Chernoff bound at least
$\tfrac 14\alpha\beta\gamma (n_w-5/2)$ such proposals to $c_i$ with failure probability at most
$\exp(-\alpha\beta\gamma (n_w-5/2)/16)$.
\end{proof}

\subsection{Range of Public Ratings for Acceptable Edges}
\label{sec::cone-size}

Here we prove that in the one-to-one bounded derivative
setting, with high probability, for each $i$, the acceptable edges from women $w_i$ are to men with public rating in the range $[r_{m_i}-4\alpha,r_{m_i}+5\alpha]$, which we call $w_i$'s \emph{cone}.
A symmetric bound applies to the men.

We then obtain a similar bound for the many-to-one setting.

\begin{theorem}
\label{thm::cone-bdd-deriv}
In the one-to-one bounded derivative setting with $n$ men and $n$ women, for large enough $n$, with probability $1-n^{-c}$, for each $i$, the acceptable edges from women $w_i$ are to men with public rating in the range $[r_{m_i}-4\alpha,r_{m_i}+5\alpha]$, where $\alpha = \Lbar/4\mu$.
A symmetric bound applies to the men.
\end{theorem}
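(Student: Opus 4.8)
The plan is to fix an index $i$ with $w_i$ not among the bottommost agents --- i.e.\ $r_{w_i}\ge\sigma$, where for this model $L=\Lbar=4\alpha\mu$ and $\sigma=\sigmabar=3\Lbar/4\mu=3\alpha$ --- and prove that every acceptable edge $(m_j,w_i)$ has $r_{m_i}-4\alpha\le r_{m_j}\le r_{m_i}+5\alpha$; a union bound over the (at most $n$) such $i$ then gives the theorem, and the bound for the men is entirely symmetric. First I would condition on a high-probability \emph{rank-concentration} event $\mathcal R$: for every $j\in[n]$, $|r_{m_j}-r_{w_j}|\le\epsilon_n$ with $\epsilon_n=\Theta(\sqrt{(c+1)\ln n/n})$. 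This follows from the same Chernoff-bound argument as Claims~\ref{clm::span-man-interval} and~\ref{clm::span-woman-interval} --- each of $r_{m_j},r_{w_j}$ is a $j$-th largest value among $n$ i.i.d.\ $U[0,1]$ draws, hence within $\tfrac12\epsilon_n$ of its mean with probability $1-n^{-(c+2)}$ --- together with a union bound over $j$, so $\Pr[\overline{\mathcal R}]\le n^{-c}$. Crucially $\epsilon_n=o(\alpha)$, since $\alpha=\Theta((\ln n/n)^{1/3})$, so $2\epsilon_n\le\alpha$ once $n$ is large.

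Next, the lower bound $r_{m_j}\ge r_{m_i}-4\alpha$, which comes from woman-acceptability. Because $(m_j,w_i)$ is woman-acceptable and $w_i\notin W[0,\sigma)$, $w_i$'s utility for $m_j$ is at least her benchmark minus $L$: $V(r_{m_j},s_{w_i}(m_j))\ge V(r_{m_i},1)-L$. Since $s_{w_i}(m_j)\le1$ and $V$ is increasing in its second argument, $V(r_{m_j},1)\ge V(r_{m_i},1)-L$. A gap of at most $L$ in $V(\cdot,1)$ corresponds to a gap of at most $L/\mu=4\alpha$ in the public rating --- in the linear model this is the one-liner $\mu r_{m_j}+(1-\mu)\ge\mu r_{m_i}+(1-\mu)-L$, and in the bounded-derivative model one argues with the derivative bounds just as when deducing $L=4\alpha\mu$ from~\eqref{eqn::gamma-constraint} --- so $r_{m_j}\ge r_{m_i}-L/\mu=r_{m_i}-4\alpha$. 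Note that this side incurs no concentration slack, because $w_i$'s benchmark $V(r_{m_i},1)$ is expressed directly in terms of $r_{m_i}$; and this argument needs nothing about whether $m_j$ is bottommost, since woman-acceptability is required of \emph{every} acceptable edge.

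Then, the upper bound $r_{m_j}\le r_{m_i}+5\alpha$, which comes from man-acceptability. If $m_j\in M[0,\sigma)$ then $r_{m_j}<3\alpha\le r_{m_i}+5\alpha$ and there is nothing more to prove, so suppose $m_j\notin M[0,\sigma)$. Man-acceptability of $(m_j,w_i)$ gives $U(r_{w_i},s_{m_j}(w_i))\ge U(r_{w_j},1)-L$, where $w_j$ is aligned with $m_j$; since $s_{m_j}(w_i)\le1$, $U(r_{w_i},1)\ge U(r_{w_j},1)-L$, and translating the utility gap to a rating gap as above gives $r_{w_j}\le r_{w_i}+4\alpha$. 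Finally, on $\mathcal R$, $r_{m_j}\le r_{w_j}+\epsilon_n\le r_{w_i}+4\alpha+\epsilon_n\le r_{m_i}+4\alpha+2\epsilon_n\le r_{m_i}+5\alpha$. The extra $\alpha$ relative to the lower bound is precisely the cost of routing through the aligned pairs $(m_j,w_j)$ and $(m_i,w_i)$, since here the benchmark $U(r_{w_j},1)$ involves $r_{w_j}$ rather than $r_{m_j}$.

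I expect the work to be bookkeeping rather than conceptual: making $\mathcal R$ hold simultaneously for all $n$ indices while keeping $\Pr[\overline{\mathcal R}]\le n^{-c}$ (so it composes with the $1-n^{-c}$ guarantees already in play), carefully absorbing the $o(\alpha)$ alignment slack into the gap between $4\alpha$ and $5\alpha$, and handling the bottommost agents as a genuine exception --- for such $w_i$ the clause ``$w_i\in W[0,\sigma)$'' makes \emph{every} incident edge woman-acceptable, so the cone claim is, and need only be, asserted for non-bottommost $w_i$. The promised many-to-one generalization would follow the same template, with ``aligned'' taken in the non-symmetric many-to-many sense and $\alpha$ replaced by the company- and worker-side parameters; the only new feature is that the worker-side and company-side cones have different radii, mirroring the different loss bounds $\Lbar_w$ and $\Lbar_c$.
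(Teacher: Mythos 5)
Your proof is correct and follows the same overall decomposition as the paper's: the lower end of the cone comes deterministically from woman-acceptability (a utility gap of at most $L$ at private score $1$ forces $r_{m_j}\ge r_{m_i}-L/\mu=r_{m_i}-4\alpha$), and the upper end comes from man-acceptability, which first yields $r_{w_j}\le r_{w_i}+4\alpha$ for the aligned woman $w_j$ of $m_j$ and must then be transferred, with high probability, back to a statement about $r_{m_j}$. The one place you genuinely diverge is that transfer step. The paper does it with two interval counts per pair: w.h.p.\ $W[r_{w_j}-4\alpha,r_{w_j}]$ contains at most $4\tfrac12\alpha(n-1)$ women (bounding the rank gap $i-j$), while $M[r_{m_j}-5\alpha,r_{m_j}]$ contains at least $4\tfrac12\alpha(n-1)$ men, whence $m_i$ lies in the latter interval. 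You instead establish a single uniform order-statistic concentration event $\mathcal R$ --- every aligned pair satisfies $|r_{m_j}-r_{w_j}|\le\epsilon_n=O(\sqrt{\ln n/n})=o(\alpha)$ --- and chain inequalities. Both are Chernoff arguments on counts of uniforms in intervals; yours is more modular (a single reusable global event, which in fact gives the slightly sharper radius $4\alpha+2\epsilon_n$), while the paper's stays local to each pair and never needs to speak about order statistics. Your treatment of the bottommost women (the lower bound is vacuous there since $r_{m_i}<4\alpha$) and of the failure-probability bookkeeping matches the paper's.

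One shared subtlety worth flagging: both you and the paper convert ``$V(r_{m_i},1)-V(r_{m_j},1)\le L$'' into ``$r_{m_i}-r_{m_j}\le L/\mu$,'' which requires $\mu$ to be a \emph{lower} bound on $\pdv{V}{x}$; Definition~\ref{def::bdd-deriv} only supplies the upper bound $\pdv{V}{x}\le\mu$ (a lower bound $\mu_\ell$ appears only in the strong model of Definition~\ref{def::strong-bdd-deriv}, the two coinciding in the linear separable case). Since the paper's own proof makes the identical move, this is not a defect of your argument relative to theirs.
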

\begin{proof}
Theorem~\ref{thm:basic-bdd-deriv-result} bounds the loss by $\Lbar$ for non-bottommost agents with probability $1-n^{-c}$ for large enough $n$.
Therefore $w_i$ will not be interested in matching with any man with public rating less than
$r_{m_i} -\Lbar/\mu$ (for any such man would give a loss
greater than $\Lbar$). 

The situation to higher rated men needs a little more calculation.  Let $m_g$ be such a man.
Then what matters is whether $m_g$ incurs a loss of more than $\Lbar$ if matched to $w_i$. This happens if
$r_{w_g} -r_{w_i} > \Lbar/\mu=4\alpha$.
We now show that to obtain $r_{w_g} -r_{w_i} \le 4\alpha$, w.h.p.\ we must have $r_{m_g}-r_{m_i} \le 5\alpha$.

We prove this in two steps: first, we show that
w.h.p., if  $r_{w_g} -r_{w_i} \le 4\alpha$
then $i-g < 4\alpha(n-1) +\tfrac 12 \alpha (n-1)$.
Second, we show that w.h.p., $r_{m_g}-r_{m_i} \le (4\alpha+\tfrac 12\alpha) + \tfrac 12 \alpha = 5\alpha=\tfrac 54 L/\mu$.

The expected number of women in $W[r_{w_g}-4\alpha,r_{w_g}]$ other than $w_g$ is at most $4\alpha (n-1)$; and so by a Chernoff bound this number is at least $4\alpha (n-1)+\tfrac 12 \alpha (n-1)$ with probability at most
$\exp(-\alpha (n-1)/48)$. Call this bad event ${\mathcal B}_4$.
Note that by assumption, $w_i \in W[r_{w_g}-4\alpha,r_{w_g}]$, and so if ${\mathcal B}_4$ does not occur $i-g< 4\tfrac12 \alpha (n-1)$.

Now suppose ${\mathcal B}_4$ does not occur, and
consider the set $M[r_{m_g}-5\alpha,r_{m_g}]$.
In expectation, other than $m_g$, it contains $5\alpha (n-1)$ men.
By a Chernoff bound, it contains at most $4\tfrac 12 \alpha (n-1)$ men other than $m_g$
with probability at most $\exp(-\alpha(n-1)/40)$.
Call this bad event ${\mathcal B}_5$.

If neither ${\mathcal B}_4$ nor ${\mathcal B}_5$ occur, as $i-g< 4\tfrac12 \alpha (n-1)$,
$m_i \in M(r_{m_g}-5\alpha,r_{m_g}]$, and therefore $r_{m_g}-r_{m_i} < 5\alpha \le \tfrac 54 \Lbar/\mu$.

A union bound over the $n$ women and $n$ men gives
a failure probability of $2n\cdot \exp(-\alpha (n-1)/48) + 2n \cdot \exp(-\alpha (n-1)/40)$, plus the failure probability from the proof of Theorem~\ref{thm:basic-bdd-deriv-result}, which
was actually at most
\begin{align*}
    2n\cdot \exp(-\alpha(n-1)/12)
+2n\cdot \exp(-\alpha n/24)
+2n\exp(-\alpha^2\rho n/8)
    +2n\cdot \exp(-\alpha^3\rho^2 n/2).
\end{align*}
Even adding the new terms, for large enough $n$ it still suffices to set $\Lbar=[128(c+2)\mu^3\ln n/(\rho^2 n)]^{1/3}$ to achieve an overall $n^{-c}$ failure probability.
\end{proof}

We now extend the result to the many-to-many setting.

\begin{theorem}
In the many-to-many bounded derivative setting with $n_c$ companies and $n_w$ workers, for large enough $n_c$ and $n_w$, with probability $1-n^{-k}$,
for each $i$, the acceptable edges from worker $w_i$ are to companies with public rating in the range $[r_{c_{j_i}}-L_w/\mu,r_{c_{j_i}}+5L_c n_c/4\mu(n_c-1)]$,
where $c_{j_i}$ is the company aligned with $w_i$.
A symmetric bound applies to the companies.
\end{theorem}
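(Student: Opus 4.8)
The plan is to run the proof of Theorem~\ref{thm::cone-bdd-deriv} almost verbatim, inserting one purely deterministic step that converts a gap in \emph{worker} ranks into a gap in \emph{company} ranks through the alignment map; the unequal populations ($n_w$ workers, $n_c$ companies) are exactly what makes the two endpoints asymmetric. Throughout I would condition on the probability-$(1-n^{-k})$ event of Theorem~\ref{thm::many-to-many}, so that in every stable matching each non-bottommost company has loss at most $L_c$ and each non-bottommost worker has loss at most $L_w$ (here $L_c,L_w$ are the bounds $\Lbar_c,\Lbar_w$ of that theorem). The handful of extra Chernoff bounds used below have failure probabilities of the same exponential form, and since the constant hidden in $L_c,L_w=\Theta(\cdot)$ is free it can be enlarged to keep the total failure probability at $n^{-k}$. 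I state the argument for a non-bottommost worker $w_i$; the symmetric bound for companies follows by running the same argument with company-proposing DA and the roles of $(c,w)$, $(d_c,d_w)$, $(n_c,n_w)$ and $(L_c,L_w)$ interchanged.

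For the left endpoint, Theorem~\ref{thm::many-to-many} says $w_i$ is matched, in every stable matching, with utility at least $V(r_{c_{j_i}},1)-L_w$, while its utility for any company $c$ is at most $V(r_c,1)$; hence an acceptable edge $(c,w_i)$ forces $V(r_{c_{j_i}},1)-V(r_c,1)\le L_w$, and the first-derivative bound $\mu$ converts this utility gap into the rating gap $r_c\ge r_{c_{j_i}}-L_w/\mu$. No randomness is used here.

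For the right endpoint, write $\alpha=L_c/4\mu$ (as in the proof of Theorem~\ref{thm::many-to-many}) and fix a company $c_g$ of company-rank $g$ with $r_{c_g}>r_{c_{j_i}}$, so $g<j_i$, whose edge to $w_i$ is acceptable, hence company-acceptable. Let $w''$ be the worker aligned with $c_g$, of worker-rank $\ceil{d_c g/d_w}$. Company-acceptability together with $U_{c_g,w_i}\le U(r_{w_i},1)$ gives $U(r_{w''},1)-U(r_{w_i},1)\le L_c$, hence $r_{w''}-r_{w_i}\le L_c/\mu=4\alpha$ by the derivative bound; moreover, from $g\le j_i-1$ and the identities $j_i=\ceil{d_w i/d_c}$, $\mathrm{rank}(w'')=\ceil{d_c g/d_w}$ one checks $\mathrm{rank}(w'')\le i$, so $r_{w''}\ge r_{w_i}$ and therefore $w_i,w''\in W[r_{w''}-4\alpha,r_{w''}]$. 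Now I would apply the two Chernoff bounds of Theorem~\ref{thm::cone-bdd-deriv}, but on populations of different sizes. First, on the workers' ratings: w.h.p., for every worker simultaneously, $W[r_{w''}-4\alpha,r_{w''}]$ contains fewer than $\tfrac92\alpha(n_w-1)$ workers besides $w''$, so the worker-rank gap obeys $i-\mathrm{rank}(w'')<\tfrac92\alpha(n_w-1)$. Multiplying by $d_w/d_c$, substituting $\mathrm{rank}(w'')=\ceil{d_c g/d_w}$, $j_i=\ceil{d_w i/d_c}$ and using $d_w n_w=d_c n_c$ while accounting for the two ceilings, yields the company-rank gap $j_i-g<\tfrac92\alpha n_c+O(d_w/d_c+1)$. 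Second, on the companies' ratings: w.h.p.\ $C[r_{c_g}-\tau,r_{c_g}]$ contains more than $\tau(n_c-1)-\tfrac12\alpha(n_c-1)$ companies besides $c_g$, so with $\tau=\tfrac{5L_c n_c}{4\mu(n_c-1)}=5\alpha\cdot\frac{n_c}{n_c-1}$ (the additive $O(d_w/d_c+1)$ correction being lower order, as discussed below) this count exceeds $j_i-g$, which forces $c_{j_i}\in C[r_{c_g}-\tau,r_{c_g}]$ and hence $r_{c_g}-r_{c_{j_i}}\le\tau$. A union bound over the $n_w$ workers (and over the $n_c$ companies for the symmetric statement) completes the proof.

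The step I expect to be the main obstacle is the deterministic rescaling in the middle: pushing the worker-rank bound through the two $\ceil{\cdot}$'s of the alignment map and the factor $d_w/d_c$ without degrading either the constant $5/4$ or the factor $n_c/(n_c-1)$ in the endpoint. This is where the asymmetry of the statement genuinely enters (companies are metered in units of $1/n_c$, workers in units of $1/n_w$), and one has to verify that the additive corrections it introduces — of order $d_w/d_c$ from the ceilings and $O(1)$ from the rounding of $j_i$ — are $o(\alpha n_c)=o(L_c n_c/\mu)$ for all admissible $(d_c,d_w)$. Using $L_c=\Omega((\max\{d_c,\ln\nmax\}/n_w)^{1/3})$ and $n_c n_w\ge n$, a short calculation shows $L_c n_c$ dominates $d_w/d_c$ by a factor $\Omega((\ln\nmax)^{1/3})$, so these corrections are absorbed into the $\Theta$-slack of $L_c$ for large $n_c,n_w$. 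Everything else is a line-by-line transcription of the one-to-one argument.
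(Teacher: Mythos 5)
Your proposal is correct and follows essentially the same route as the paper: the left endpoint comes deterministically from the worker's loss bound, and the right endpoint from two Chernoff bounds (on worker counts and company counts in rating intervals of width $4\alpha$ and $5\alpha n_c/(n_c-1)$) linked through $d_c n_c = d_w n_w$. The only real difference is bookkeeping: where you push the rank bound through the alignment map's ceilings and then argue the resulting $O(d_w/d_c+1)$ correction is lower order, the paper compares \emph{position} counts ($d_w$ times the number of workers sought versus $d_c$ times the number of companies available), which sidesteps the ceilings entirely and avoids the slack-absorption step you flagged as the main obstacle.
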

\begin{proof}
We need to adapt the previous proof to account for the fact that there are $n_c$ companies and $n_w$ workers.

The argument demonstrating the lower limit is unchanged,
except we replace $\Lbar$ with $L_w$. For the upper limit, we change the argument as follows. Now, we replace $\Lbar$ by $L_c$ .

We first observe that the number of workers in $W[r_{w_g}-4\alpha,r_{w_g}]$ other than $w_g$
is at least $4\alpha (n_w-1)+\tfrac 12 \alpha (n_w-1)$ with probability at most
$\exp(-\alpha (n_w-1)/48)$. Call this bad event ${\mathcal B}_4$.

Second, if ${\mathcal B}_4$ does not occur, the set $M[r_{c_g}-5\alpha n_c/(n_c-1),r_{c_g}]$ contains at most $4\tfrac 12 \alpha n_c$ companies other than $c_g$
with probability at most $\exp(-\alpha n_c/40)$.
Call this bad event ${\mathcal B}_5$.

Suppose neither ${\mathcal B}_4$ nor ${\mathcal B}_5$ occur.
Then, the number of positions sought by the workers
in $M[r_{w_{i-1}},r_{w_g}]$ is $d_w(i-g) < 4\tfrac12 d_w\alpha (n_w-1)= 4\tfrac12 \alpha d_c n_c - 4\tfrac12 d_w\alpha$,
while the number of positions available in
$C[r_{c_g}-5\alpha,r_{c_g}]$ 
is more than
$4\tfrac12 d_c\alpha n_c= 4\tfrac12 \alpha d_c n_c$. 
Thus the number of available positions is at least the number sought, and therefore $c_i \in C[r_{c_g}-5\alpha n_c/(n_c-1),r_{c_g}]$.

It remains to revisit the probability bounds.
The failure probability from the proof of Theorem~\ref{thm::many-to-one} summed over all workers is
\begin{align*}
   & n_w\cdot \exp(-\alpha(n_w-1)/12)
    +\min\{n_c,n_w\}\cdot \exp(-\alpha (n_c-1)/24)\\
   & +n_w\exp(-\alpha\beta (n_c-5/2)/8)
    +n_w\exp(-\alpha\beta\gamma (n_c-5/2)/16),
\end{align*}
where $\beta=\gamma=\alpha\rho$.
There is an analogous bound for the companies.
Again, for large enough $n_c$ and $n_w$, we can use the same values for $L_c$ and $L_w$ as before while maintaining
the total failure probability at $n^{-k}$.
\end{proof}


\subsection{Distribution Bound on Losses (Proof of Theorem~\ref{thm::distr-bound})}
\label{sec::dist-bound}

Recall event ${\mathcal B}_3$ from the proof of Lemma \ref{lem::key_bounded} (see Appendix \ref{sec::main_lemma_full_proof}), that the women with
rank at most $i+\ell_i$ make fewer than
$\tfrac 12 \alpha\beta n$ Step 2 proposals to $m_i$.
Claim~\ref{clm::many-man-acceptable-props} shows that
if ${\mathcal B}_1$ and ${\mathcal B}_2$
do not occur then ${\mathcal B}_3$ occurs with probability at most $\exp(-\alpha\beta n/8)$.

Now, consider a man $m$ and the aligned woman $w$, where
$r_w\ge 4\alpha$.
Let $\beta=\gamma=\alpha \rho$.
We will bound the probability that $m$ has a loss of more than $L^m_\alpha\triangleq U(r_w,1) - (r_w-4\alpha,1)$.

If, in addition, $r_w\ge \sigmabar$ and $\alpha = \sigmabar/(4\cdot 2^h)=\Lbar/(4\mu \cdot 2^h)$,
as $U(r,1)-U(r-4\alpha,1)\le 4\alpha\mu=\Lbar/2^h$, this implies a loss of at most $\Lbar/2^h$.

\begin{lemma}
\label{lem::loss-distribution}

Let $m$ be a man and let $w$ be the aligned woman.
Suppose we run the DA algorithm cutting at $m$ and $r_m-\alpha$.
Then the probability that every Step 2 proposal to $m$ gives him a loss of more than $L^m_\alpha$ is at most $\exp(-\alpha\beta\gamma n/2)$.

\end{lemma}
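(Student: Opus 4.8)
The plan is to follow the argument for Claim~\ref{clm::one-man-accept-prop}, with two modifications: extract the sharper loss bound $L^m_\alpha=U(r_w,1)-U(r_w-4\alpha,1)$ in place of $L=4\alpha\mu$, and avoid conditioning on $\mathcal B_3$ by bounding the relevant expectation directly with the tree/blocking machinery of Claims~\ref{clm::discrete-dist}--\ref{clm::many-props-despite-blocking}. Let $N$ be the number of Step~2 proposals $m$ receives in the double-cut DA at $m$ and $r_m-\alpha$, tracking, exactly as in the proof of Lemma~\ref{lem::key_bounded}, only the women of rank at most $i+\ell_i$, so that every Step~2 proposer has public rating at least $r_w-3\alpha$. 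First, $m$'s private scores are drawn independently of the men's public ratings, the women's public ratings, and the women's private scores --- the only randomness that decides which women issue Step~2 proposals and to whom. Hence, conditioned on all those other variables (and thus on $N$ and on the identities of the $N$ proposers), the events $\{s_m(w')\ge 1-\gamma\}$ are mutually independent over the Step~2 proposers $w'$, each of probability $\gamma$. By the ratio bound applied along the straight-line path from $(r_w-4\alpha,1)$ to $(r_w-3\alpha,1-\gamma)$ (using $\gamma/\rho=\alpha$, as $\gamma=\alpha\rho$), $U(r_w-3\alpha,1-\gamma)\ge U(r_w-4\alpha,1)$; so a Step~2 proposer $w'$ with $s_m(w')\ge 1-\gamma$ gives $m$ utility at least $U(r_w-3\alpha,1-\gamma)\ge U(r_w,1)-L^m_\alpha$. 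Therefore
\[
\Pr[\text{every Step~2 proposal to }m\text{ gives loss}>L^m_\alpha]=\mathbb{E}\big[(1-\gamma)^{N}\big].
\]

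It remains to bound $\mathbb{E}[(1-\gamma)^{N}]$. By Claim~\ref{clm::discrete-dist}, pass to the discrete utility space: there each woman's probability of selecting $m$ is only increased, so (outside the $\le\delta$-probability event on which the discrete and continuous runs differ) $N$ can only grow, hence $(1-\gamma)^{N}$ can only shrink, giving $\mathbb{E}[(1-\gamma)^{N}]\le \mathbb{E}_{\mathrm{disc}}[(1-\gamma)^{N}]+\delta$. Next, represent the discrete double-cut DA by the tree $T_x$ and the artificial blocking mechanism $\mathcal M$ of Claim~\ref{clm::many-man-acceptable-props}; since the double-cut DA is one particular (leaf-only) blocking strategy, $\mathbb{E}_{\mathrm{disc}}[(1-\gamma)^{N}]$ is at most its value under the $\mathcal M$-strategy that maximizes it. For any block set $B$ placed in a subtree with no blocks below it, $\mathbb{E}[(1-\gamma)^{N}\mid B]=\prod_{w'\notin B}(1-\gamma p_{w'})$, where $p_{w'}$ is $w'$'s proposing probability and the $p_{w'}$ are mutually independent given no blocking below; so the best block set of a given size removes the women of largest proposing probability --- the exact property driving Cases~1, 2, 2.1 and~2.2 of the proof of Claim~\ref{clm::many-props-despite-blocking}. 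Those cases carry over verbatim, so the optimal $\mathcal M$-strategy blocks as many women as possible up front. Assuming $\mathcal B_1$ and $\mathcal B_2$ do not occur (their probabilities $n\exp(-\alpha(n-1)/12)$ and $n\exp(-\alpha(n-1)/24)$ from Claims~\ref{clm::span-man-interval} and~\ref{clm::span-woman-interval} are absorbed into the bound of Theorem~\ref{thm::distr-bound} when this lemma is applied), this leaves at least $\ell_i-h_i+1\ge\alpha(n-1)+1\ge\alpha n$ women unblocked, each independently issuing a cutoff-surviving (hence Step~2) proposal to $m$ with probability at least $\beta$ by~\eqref{eqn::beta-constraint}.

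For these $\ge\alpha n$ independent women, each with $p_{w'}\ge\beta$,
\[
\mathbb{E}\big[(1-\gamma)^{N}\big]=\prod_{w'}\big(1-\gamma p_{w'}\big)\le(1-\beta\gamma)^{\alpha n}\le\exp(-\alpha\beta\gamma n)\le\exp(-\alpha\beta\gamma n/2),
\]
and since this bound does not depend on $\delta$, letting $\delta\to0$ finishes the argument.

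The main obstacle is not a single inequality but re-running the discretization/tree/blocking pipeline of Claims~\ref{clm::discrete-dist}--\ref{clm::many-props-despite-blocking} with the objective $\mathbb{E}[(1-\gamma)^{N}]$ in place of $\Pr[N\ge\tfrac12\alpha\beta n]$: one must check that the coupling of Claim~\ref{clm::discrete-dist} bounds in the right direction (it does, since raising each woman's probability of proposing to $m$ only lowers $(1-\gamma)^{N}$) and that the block-consolidation case analysis still applies (it does, since the new objective is again a product over independent women, monotone decreasing in each woman's proposing probability). A smaller point to get right is the loss conversion: $L^m_\alpha$ is exactly what one gets by halting the chain of inequalities in the proof of Theorem~\ref{thm:basic-bdd-deriv-result} at $U(r_w-4\alpha,1)$ rather than bounding further by $4\alpha\mu$.
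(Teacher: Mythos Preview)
Your argument is correct, but it takes a considerably more elaborate route than the paper's. The paper's proof is essentially two lines: it reads the lemma as conditioned on $\mathcal B_3$ not occurring (this is stated in the first sentence of the proof, and is consistent with how the lemma is invoked in Corollary~\ref{cor::loss-distribution}, which explicitly assumes none of $\mathcal B_1$--$\mathcal B_3$ occur). With $\mathcal B_3$ ruled out, $m$ receives at least $\tfrac12\alpha\beta n$ Step~2 proposals, each independently giving loss $\le L^m_\alpha$ with probability $\ge\gamma$ (by the same score argument you give), so all of them miss with probability at most $(1-\gamma)^{\alpha\beta n/2}\le\exp(-\alpha\beta\gamma n/2)$. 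No re-running of the discretization or blocking machinery is needed.

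Your route --- bounding $\mathbb{E}[(1-\gamma)^N]$ directly and pushing the block-consolidation argument through with this new objective --- avoids conditioning on $\mathcal B_3$ and in fact yields the sharper exponent $\alpha\beta\gamma n$ before you discard the factor of two. The adaptation is sound: the product form $\prod_{w'\notin B}(1-\gamma p_{w'})$ is monotone decreasing in each $p_{w'}$, so Cases~1, 2, and~2.1 go through unchanged, and in Case~2.2 the comparison reduces to $(1-\gamma p_w)\ge(1-\gamma)$, which holds since $p_w\le1$. The tradeoff is that you re-verify a nontrivial piece of machinery that the paper simply reuses as a black box; the paper's version is shorter and more modular, while yours is more self-contained and marginally tighter.
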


\begin{proof}
Let $m$ be a man in $M[r_m,r_m+\delta]$.
As ${\mathcal B}_3$ does not occur,
$m$ receives at least $\tfrac 12 \alpha\beta n$ Step 2 proposals.
As shown in Observation 4 of the proof sketch,
each proposal gives a loss of more than $L^m_\alpha$ with probability at most $\gamma$.
Thus, the probability that every one of these proposals give $m$ a loss of more than $L^m_\alpha$
is at most
\begin{align*}
    (1 - \gamma)^{\alpha\beta n/2}\le \exp(-\alpha\beta\gamma n/2).
\end{align*}

\end{proof}

\begin{cor}
\label{cor::loss-distribution}
Suppose we run the DA algorithm cutting at $m$ and $r_m-\alpha$. Let ${\mathcal B}_6^h$ be the event that at least $2[1+(n-1)\delta]\cdot \exp(-\alpha\beta\gamma n/2)$ men in $M[r_m,r_m+\delta]$ suffer loss greater
than $L^m_\alpha$, where $0< \delta \le 1$ and $\alpha=\sigmabar/(4\cdot 2^h)$. 
If $r_w\ge \sigmabar$, $\beta=\gamma=\alpha \rho$,
and  none of ${\mathcal B}_1$--${\mathcal B}_3$ occur,
then ${\mathcal B}_6^h$ occurs with probability at most $n^{-(c+2)}$, where $\tfrac 13 \log\big(\frac{(c+2)\ln n}{\ln[\delta n/(3(c+2)\ln n)]}\big)\le h \le \tfrac 13 \log[(c+2)\ln n]$.
\end{cor}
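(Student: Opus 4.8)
The plan is to apply Lemma~\ref{lem::loss-distribution} once for each man in the window $M[r_m,r_m+\delta]$, convert the resulting per‑man failure bound into a sum of essentially independent Bernoulli variables, and finish with a Chernoff bound, using the stated lower bound on $h$ to certify that the relevant expectation is $\Omega((c+2)\ln n)$.

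Concretely, condition on the public ratings of all agents, on all of the women's private scores, and on the (high‑probability) event that none of ${\mathcal B}_1$, ${\mathcal B}_2$, ${\mathcal B}_3$ occurs. Then the set $S\triangleq M[r_m,r_m+\delta]$ is frozen; write $N\triangleq|S|$. For each $m'\in S$ consider the double‑cut DA that cuts at $m'$ and at $r_{m'}-\alpha$: since the aligned woman $w'$ of $m'$ satisfies $r_{w'}\ge r_w\ge\sigmabar\ge 4\alpha$, Lemma~\ref{lem::loss-distribution} applies and says that, over $m'$'s own private scores, the probability that every Step~2 proposal to $m'$ leaves it with loss exceeding $L^{m'}_\alpha$ is at most $q\triangleq\exp(-\alpha\beta\gamma n/2)$; by Claim~\ref{clm::full-DA-better} the same upper bound governs the event that $m'$ finishes the full DA with loss more than $L^{m'}_\alpha\le 4\alpha\mu=\Lbar/2^h$, and hence (identifying the thresholds as in the statement) with loss more than $L^m_\alpha$. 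The reason this decouples across men is that in the run cutting at $m'$ the man $m'$ receives no proposal before Phase~2, so its private scores are untouched by the conditioning; consequently the set of Phase~2 proposers to $m'$ is determined by the conditioning, and the events ``$m'$ has large loss'', one per $m'\in S$, depend on pairwise disjoint blocks of private scores and are therefore mutually independent, each of probability at most $q$.

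Hence the number $X$ of men in $S$ with large loss is stochastically dominated by $\text{Bin}(N,q)$, so $\text{E}[X]\le Nq$, and a standard multiplicative Chernoff bound gives $\Pr[X\ge 2Nq]\le\exp(-Nq/3)$. Since $N=|M[r_m,r_m+\delta]|$ is itself concentrated about $\text{E}[N]=1+(n-1)\delta$ by a Chernoff bound over the public ratings, up to the constant built into the statement we may replace $N$ by $1+(n-1)\delta$ and set $\mu^{*}\triangleq[1+(n-1)\delta]\,q$; it remains to check $\mu^{*}=\Omega((c+2)\ln n)$. Substituting $\alpha=\sigmabar/(4\cdot 2^h)$, $\beta=\gamma=\alpha\rho$, and $\sigmabar^{3}=\Theta\!\big((c+2)\ln n/(\rho^{2}n)\big)$ from the choice of $\Lbar$ (equivalently $\sigmabar$) in Theorem~\ref{thm:basic-bdd-deriv-result} gives $\alpha\beta\gamma n/2=\Theta\!\big((c+2)\ln n/2^{3h}\big)$, so $q=n^{-\Theta((c+2)/2^{3h})}$ and $\mu^{*}=\Theta(\delta n)\cdot n^{-\Theta((c+2)/2^{3h})}$. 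The requirement $\mu^{*}=\Omega((c+2)\ln n)$ is then equivalent to $2^{3h}=\Omega\!\big((c+2)\ln n/\ln[\delta n/(3(c+2)\ln n)]\big)$, which is exactly the lower bound on $h$ in the statement; the upper bound on $h$ merely keeps $\mu^{*}\le N$ so that ${\mathcal B}_6^h$ is not a vacuous assertion. With $\mu^{*}=\Omega((c+2)\ln n)$ we get $\exp(-\mu^{*}/3)\le n^{-(c+2)}$ for large enough $n$, which is the claim.

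I expect the main obstacle to be the independence step: one must argue carefully that conditioning on the public ratings, the women's private scores, and the non‑occurrence of ${\mathcal B}_1$, ${\mathcal B}_2$, ${\mathcal B}_3$ genuinely decouples the per‑man loss events, which hinges on the observation above that in the DA cutting at $m'$ the man $m'$ sees no proposal until Phase~2, so the set of its Phase~2 proposers is measurable with respect to the conditioning while $m'$'s own private scores remain fresh. A lesser, purely computational, point is matching the exponent arithmetic to the precise lower bound on $h$, which is where the exact value of $\sigmabar$ from Theorem~\ref{thm:basic-bdd-deriv-result} enters.
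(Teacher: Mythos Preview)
Your proposal is correct and follows essentially the same route as the paper: apply Lemma~\ref{lem::loss-distribution} to each man in the window, observe that the per-man failure events depend on disjoint blocks of the men's private scores and are therefore independent, then use a Chernoff bound and plug in $\alpha=\sigmabar/(4\cdot 2^h)$, $\beta=\gamma=\alpha\rho$ to obtain $\exp(-\alpha\beta\gamma n/2)=\exp(-(c+2)\ln n/2^{3h})$ and verify the lower bound on $h$ makes the exponent at least $(c+2)\ln n$. Your independence justification is spelled out more carefully than the paper's one-line remark, and your detour through concentration of $N$ is unnecessary (the paper simply uses $1+(n-1)\delta$ as an upper bound on the expected count), but otherwise the arguments coincide.
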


\begin{proof}
Consider a man $m$ in $M[r_m,r_m+\delta]$. By Lemma \ref{lem::loss-distribution} it follows that the probability that $m$ experiences a loss of more than $L^m_\alpha$ is at most $\exp(-\alpha\beta\gamma n/2)$.

This bound depends only on $m$'s private scores for the Step $2$ proposals made to him.
Thus the outcomes for the different men in $M[r_m,r_m+\delta]$ are independent.

In expectation, at most $(1+(n-1)\delta)\cdot \exp(-\alpha\beta\gamma n/2)$ men in $M[r_m,r_m+\delta]$
suffer a loss of more than $L^m_\alpha$, and by a Chernoff bound, at most
$2(1+(n-1)\delta)\cdot\exp(-\alpha\beta\gamma n/2)$ men suffer such a loss
with probability $\exp(-\delta n\cdot\exp(-\alpha\beta\gamma n/2)/3)$.

Now $\exp(-\alpha\beta\gamma n/2) = \exp(-\Lbar^3 n \rho^2/[128\mu^3 2^{3h}])=\exp(-(c+2)\ln n/2^{3h})$.
Let $h= \tfrac13\log[(c+2)\ln n]-\tfrac 13\log g$.
Then $\exp(-\alpha\beta\gamma n/2)= \exp(-g)$.
In sum, at most $2(1+(n-1)\delta)/\exp(g)$ men in 
$M[r_m,r_m+\delta]$ have a loss of more
than $L^m_\alpha$ with probability at least
$1- \exp(-\delta n\cdot \exp(-g)/3)$.
So the failure probability is at most
$n^{-(c+2)}$ if $g\le \ln(\delta n/[3(c+2)\ln n])$.
\end{proof}

\begin{proof} (Of Theorem~\ref{thm::distr-bound})~
We apply Corollary~\ref{cor::loss-distribution} with $\delta=1$.
Over all the men and women whose aligned partners have public score at least $\sigmabar$, this yields that at most
$2n \cdot \exp(-\alpha\beta\gamma n/2) = 2n \cdot \exp(-(c+2)\ln n/2^{3h})$ men suffer a loss of more than $\Lbar/2^h$, and likewise for the women, with failure probability at most $2n^{-(c+1)}\log\log n$, for integer $h$ in the range
stated in the lemma.

Applying the prior analysis, if none of ${\mathcal B}_1$--${\mathcal B}_3$ occur, then the outcome is a stable matching with the bounds on the losses as stated in the previous paragraph. For large enough $n$, the failure probability will be at most $n^{-c}$.
\end{proof}

\section{Lower Bounds}
\label{sec::lower_full_proofs}
\subsection{A Lower Bound in the Linear Model}

The following theorem shows that the upper bound we obtained
is the best possible up to a constant factor.
The intuition is as follows: the expected number of acceptable edges per agent is $\Theta(\ln n)$, excluding the agents with public ratings of less than $L$. 
So long as the constant is small enough, the variance
in the number of these edges over all the agents will
be sufficient to ensure a good probability that at least one agent will have no incident acceptable edge.

For the lower bound we set $\lambda = \tfrac 12$.
We begin by identifying and bounding the probability of some bad events, denoted by ${\mathcal B}_4$ and ${\mathcal B}_5$. We then perform
an analysis for the case that ${\mathcal B}_4$ and ${\mathcal B}_5$ do not occur.

To do this, we need some additional notation.
In the following lemmas, we let $(m,w)$ and $(m',w')$ be two pairs of men and women with equal public ranks, and suppose their public ratings are $r_m, r_w, r_{m'},r_{w'}$, respectively.
We let $x = r_m - r_{m'}$ and $y = r_w - r_{w'}$.
Note that $\text{sign}(x)=\text{sign}(y)$.

\smallskip
\noindent
{\bf Event ${\mathcal B}_4$}.
Let ${\mathcal E}_4$ be the following event:
If $|x| \le 4 L$, then
the number of men with public ratings in the range $[r_m,r_{m'}]$ lies in the range $(2+ x\cdot(n-2) -L(n-2), 2+ x\cdot(n-2)+ L(n-2))$, and similarly for the women.
${\mathcal B}_4$ is the (bad) event that ${\mathcal E}_4$ does not occur.

\begin{lemma} \label{lem::index-to-range}
${\mathcal B}_4$ occurs with probability at most $2n^2\cdot n^{-L(n-2)/12\ln n}$.
\end{lemma}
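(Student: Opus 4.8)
The natural reading of the statement --- consistent with the factor $n^2$ in the claimed bound --- is that $\mathcal B_4$ is the event that $\mathcal E_4$ fails for \emph{at least one} pair of equal public ranks $i,i'$, so the plan is a union bound over $\Theta(n^2)$ events, with one Chernoff tail bound per event. It is cleanest to union over all $\binom{n}{2}$ unordered pairs of labeled men (and, separately, of labeled women) rather than over rank pairs. Fix such a pair of men, with ratings $r,r'$; taking $r\ge r'$ without loss of generality and writing $x=r-r'\ge 0$ (the case $r<r'$ is symmetric), I would condition only on $r$ and $r'$, which are genuinely i.i.d.\ uniform on $[0,1]$, leaving the other $n-2$ men's ratings as fresh i.i.d.\ uniforms. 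Then the number of men with rating in $[r',r]$ equals $2+Y$, where $Y\sim\mathrm{Bin}(n-2,x)$ has mean $x(n-2)$, matching the center $2+x(n-2)$ of the interval in $\mathcal E_4$. Since the rank-$i$ and rank-$i'$ men are one such labeled pair, bounding the failure probability for every labeled pair suffices, and the women are handled identically with $y$ in place of $x$. One should \emph{not} condition on the event $\{x\le 4L\}$ before invoking a tail bound (that distorts the law of $Y$); instead I would bound, uniformly over all ratings with $x\le 4L$, the conditional probability that $Y$ deviates from its mean $x(n-2)$ by at least $L(n-2)$, and then integrate over $x$.

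For the per-pair tail bound, set $\mu=x(n-2)$ and $a=L(n-2)$, so the hypothesis $x\le 4L$ gives $\mu\le 4a$. For the upper tail I would use the Bernstein-type inequality $\Pr[Y\ge\mu+a]\le\exp\!\big(-a^2/(2(\mu+a/3))\big)$; since $\mu+a/3\le 13a/3$ this is at most $\exp(-3a/26)\le\exp(-a/12)$. For the lower tail (nontrivial only when $a\le\mu$) I would use $\Pr[Y\le\mu-a]\le\exp(-a^2/(2\mu))\le\exp(-a^2/(8a))=\exp(-a/8)\le\exp(-a/12)$. Hence, for a fixed labeled pair, the count lies outside $\big(2+x(n-2)-L(n-2),\,2+x(n-2)+L(n-2)\big)$ while $x\le 4L$ with probability at most $2\exp(-L(n-2)/12)$; and the identity $\exp(-L(n-2)/12)=n^{-L(n-2)/(12\ln n)}$ rewrites this in the form stated in the lemma.

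Finally, a union bound over the $\binom{n}{2}$ pairs of men, the $\binom{n}{2}$ pairs of women, and the two tails in each case gives
\[
\Pr[\mathcal B_4]\ \le\ 2\cdot\binom{n}{2}\cdot 2\cdot n^{-L(n-2)/(12\ln n)}\ =\ 2n(n-1)\,n^{-L(n-2)/(12\ln n)}\ \le\ 2n^2\,n^{-L(n-2)/(12\ln n)},
\]
which is the claimed bound.

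The step I expect to be the main obstacle is exactly the tail bound in the regime $x\ll L$, where the additive slack $a=L(n-2)$ dwarfs the mean $\mu=x(n-2)$: there the familiar multiplicative Chernoff bound $\exp(-\delta^2\mu/3)$ with $\delta=L/x$ is worthless, so one must use a Bernstein/Poisson-type inequality whose exponent degrades only linearly in $a$, and invoke the hypothesis $x\le 4L$ precisely to cap $\mu$ by a constant multiple of $a$ (as done above). Everything else is bookkeeping --- the conditioning subtlety already flagged, and tracking the union-bound constant so that it lands at $2n^2$.
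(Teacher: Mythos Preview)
Your proposal is correct and follows essentially the same route as the paper: a concentration bound for the count in $[r_{m'},r_m]$ for each fixed pair, followed by a union bound over the $\binom{n}{2}$ pairs on each side, arriving at $2n(n-1)\,e^{-L(n-2)/12}\le 2n^2\,n^{-L(n-2)/(12\ln n)}$. The paper's proof is the same two-line argument, writing the two tails as $e^{-L^2(n-2)/(2|x|)}+e^{-L^2(n-2)/(3|x|)}$ and then using $|x|\le 4L$.

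One remark: the concern you flag as ``the main obstacle'' is real, and your treatment is in fact more careful than the paper's. The multiplicative Chernoff form $e^{-\delta^2\mu/3}$ that the paper invokes for the upper tail is only valid for $\delta\le 1$, i.e.\ for $|x|\ge L$; for $|x|<L$ the written intermediate expression $e^{-L^2(n-2)/(3|x|)}$ is not an upper bound on the tail probability. The conclusion $\le e^{-L(n-2)/12}$ is nevertheless correct, since for $\delta>1$ one has the alternative bound $e^{-\delta\mu/3}=e^{-L(n-2)/3}$, which is even stronger. Your Bernstein inequality $\exp(-a^2/(2(\mu+a/3)))$ handles both regimes uniformly and avoids this case split, which is a cleaner way to present the argument.
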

\begin{proof}
Not counting $m$ and $m'$, the expected number of men
with public ratings in the range $[r_m,r_{m'}]$ is $|x|(n-2)$. By a Chernoff bound, this number lies outside the range
$(|x|(n-2) -   L(n-2), |x|(n-2) +   L(n-2))$ with probability at most $e^{- L^2(n-2)/2|x|} + e^{- L^2(n-2)/3|x|}\le 2e^{-L(n-2)/12}$, as $|x|\le 4 L$ by assumption.

The same bound applies to the women.
Now we apply a union bound to all $\tfrac 12 n(n-1)$ pairs $(m,w)$ and $(m',w')$ to obtain the result.
\end{proof}

\smallskip
\noindent
{\bf Event ${\mathcal B}_5$}.
This is the event that for some pairs $(m,w)$ and $(m',w')$,
either (i) $|y|= \big|r_{w'} - r_w\big|> 4L$ 
and the number of women
in the range $[r_w,r_{w'}]$ is at most
$2+ 3 L(n-2)$,
or (ii)  $|x|= \big|r_{m'} - r_m\big|> 4L$ and the number of men
in the range $[r_m,r_{m'}]$ is at most
$2+ 3 L(n-2)$.

\begin{lemma} \label{lem::prob-Bfour}
${\mathcal B}_5$ occurs with probability at most
$n^2\cdot n^{-L(n-2)/8\ln n}$.
\end{lemma}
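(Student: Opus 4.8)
The plan is to argue exactly as in the proof of Lemma~\ref{lem::index-to-range}: a Chernoff lower-tail estimate for a single pair of agents, followed by a union bound. Note first that the condition in case~(i) of ${\mathcal B}_5$ depends only on the two women $w,w'$, and that in case~(ii) only on the two men $m,m'$; so it suffices to bound, by $\tfrac12 n^2 e^{-L(n-2)/8}$ each, the probability that some pair of women satisfies~(i) and the probability that some pair of men satisfies~(ii). These two bounds are proved identically, so I describe only the former.

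First I would fix an (unordered) pair of women $w,w'$ and condition on their public ratings, writing $p=|r_w-r_{w'}|$. If $p\le 4L$, then case~(i) is impossible for this pair. If $p>4L$, then conditioned on $r_w$ and $r_{w'}$ the remaining $n-2$ women have i.i.d.\ Uniform$[0,1]$ ratings, so the number $X$ of them falling in the closed interval between $r_w$ and $r_{w'}$ is $\mathrm{Binomial}(n-2,p)$ with mean $\mathbb{E}[X]=p(n-2)$; and the number of women lying in that interval (including $w$ and $w'$) is exactly $X+2$. Hence case~(i) for this pair is precisely the event $X\le 3L(n-2)$.

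Since $p>4L$, we have $3L(n-2)\le \tfrac34\,p(n-2)=(1-\tfrac14)\mathbb{E}[X]$, so by the multiplicative Chernoff bound $\Pr[X\le 3L(n-2)]\le \Pr\!\big[X\le (1-\tfrac14)\mathbb{E}[X]\big]\le e^{-\mathbb{E}[X]/32}=e^{-p(n-2)/32}\le e^{-L(n-2)/8}$, using $p>4L$ in the last step. As this bound holds for every conditioning value of $r_w,r_{w'}$ (and the probability is $0$ when $p\le 4L$), a fixed pair of women satisfies~(i) with probability at most $e^{-L(n-2)/8}$; the same bound holds for a fixed pair of men and case~(ii). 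Union-bounding over the at most $\tfrac12 n^2$ pairs of women and the at most $\tfrac12 n^2$ pairs of men yields $\Pr[{\mathcal B}_5]\le n^2 e^{-L(n-2)/8}=n^2\cdot n^{-L(n-2)/(8\ln n)}$, as required.

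The only point that requires care is the choice of Chernoff parameters so that the exponent comes out to exactly $L(n-2)/8$; this is obtained by combining the deviation factor $\tfrac14$ (available because $p>4L$ forces the threshold $3L(n-2)$ to sit below $\tfrac34$ of the mean) with the lower bound $\mathbb{E}[X]>4L(n-2)$ on the mean, giving the factor $\tfrac{1}{32}\cdot 4=\tfrac18$. Everything else is routine, and in particular the argument never needs the ``equal public ranks'' hypothesis, since cases~(i) and~(ii) separate cleanly into a women-only and a men-only statement.
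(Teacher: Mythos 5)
Your proof is correct and follows essentially the same route as the paper's: for a fixed pair with rating gap exceeding $4L$, a lower-tail Chernoff bound on the Binomial$(n-2,p)$ count gives $e^{-p(n-2)/32}\le e^{-L(n-2)/8}$, and a union bound over all pairs of women and all pairs of men gives the factor $n^2$. The paper writes the Chernoff exponent as $(|y|-3L)^2(n-2)/(2|y|)$ and lower-bounds $|y|-3L$ by $|y|/4$, which is exactly your choice of deviation parameter $\delta=\tfrac14$, so the two computations coincide.
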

\begin{proof}
We obtain a bound in case (i).
Excluding $w$ and $w'$, the expected number of women in the range $[r_w,r_{w'}]$
is at least $|y|(n-2)$.
By a Chernoff bound it is at most $y(n-2) - (|y|-3L)\cdot(n-2)$
with probability at most $\exp(-(|y|-3L)^2(n-2)/2|y|)\le \exp(-(|y|/4)^2(n-2)/2|y|)\le \exp(-L(n-2)/8)$.

The same bound holds in case (ii).
Now we apply a union bound to all $\tfrac 12 n(n-1)$ pairs $(m,w)$ and $(m',w')$ to obtain the result.
\end{proof}

\begin{theorem}
If $n\ge 32,000$ and $L= \tfrac 18 (\ln n/n)^{1/3}$
then with probability at least $\tfrac 14 n^{-1/8}$ there is no perfect matching, let alone stable matching,
in which every edge is $(L,\tfrac 32 L)$-acceptable.
\end{theorem}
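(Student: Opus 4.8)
The plan is to show that with probability at least $\tfrac14 n^{-1/8}$ some woman has \emph{no} incident $(L,\tfrac32 L)$-acceptable edge; since any perfect matching must match her along an incident edge, this already rules out a perfect matching (hence a stable matching) all of whose edges are acceptable. Set $\lambda=\tfrac12$, fix a rank $i^\ast\in[n/3,2n/3]$, and let $w^\ast=w_{i^\ast}$ with aligned man $m^\ast=m_{i^\ast}$; write $x_j=r_{m^\ast}-r_{m_j}$ and $y_j=r_{w^\ast}-r_{w_j}$, as in Lemmas~\ref{lem::index-to-range} and~\ref{lem::prob-Bfour}. Unwinding the definitions in the linear model with $\lambda=\tfrac12$, when $r_{w^\ast}\ge\tfrac32 L$ the edge $(m_j,w^\ast)$ is $(L,\tfrac32 L)$-acceptable precisely when $s_{w^\ast}(m_j)\ge 1+x_j-2L$ and, in addition, either $m_j\in M[0,\tfrac32 L)$ or $s_{m_j}(w^\ast)\ge 1-y_j-2L$.

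First I would condition on all public ratings, restricting to the event $\mathcal G$ that neither $\mathcal B_4$ nor $\mathcal B_5$ occurs and $r_{w^\ast},r_{m^\ast}\in[\tfrac13,\tfrac23]$. By Lemmas~\ref{lem::index-to-range} and~\ref{lem::prob-Bfour} and a Chernoff bound for the sample quantile, $\Pr[\mathcal G]\ge 1-\Pr[\mathcal B_4]-\Pr[\mathcal B_5]-e^{-\Omega(n)}$, and substituting $L=\tfrac18(\ln n/n)^{1/3}$ into the two lemmas one checks this is at least $\tfrac34$ once $n\ge 32{,}000$. For any realization of the ratings in $\mathcal G$, acceptability of each of the $n-1$ candidate edges at $w^\ast$ is determined by the private scores $s_{w^\ast}(m_j),s_{m_j}(w^\ast)$ alone, which are mutually independent across $j$; so with $p_j$ the conditional probability that $(m_j,w^\ast)$ is acceptable, $\Pr[\,w^\ast\text{ has no acceptable edge}\mid\text{ratings}\,]=\prod_{j\neq i^\ast}(1-p_j)$.

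The heart of the matter is to bound $\sum_j p_j$, using two consequences of $\mathcal G$. (a)~If $|r_{m_j}-r_{m^\ast}|>2L$ then $p_j=0$: for $x_j>2L$ the first condition is unsatisfiable, while for $x_j<-2L$ the absence of $\mathcal B_5$ forces enough men, hence (through the common rank difference) enough women, strictly between ranks $j$ and $i^\ast$ that $y_j<-2L$, killing the second condition; in particular, since $r_{m^\ast}\ge\tfrac13$, no edge to a bottom man contributes. (b)~For $|r_{m_j}-r_{m^\ast}|\le 2L$ one has $p_j\le (2L-x_j)^+(2L+y_j)^+$, while the absence of $\mathcal B_4$ forces $|x_j-y_j|\le 2L$ and makes the $x_j$ in this window equidistributed with density $n$. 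Replacing $y_j$ by its worst admissible value and the sum by an integral then gives $\sum_j p_j\le C\,nL^3$ for an explicit constant $C<32$ (the gap below $32$ is exactly what the threshold $32{,}000$ secures). Since $nL^3=\tfrac1{512}\ln n$, this is $\sum_j p_j\le\tfrac{C}{512}\ln n$.

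Finally, as every $p_j=O(L^2)\le\tfrac12$ for large $n$, the bound $1-p\ge e^{-2p}$ gives
\[
\prod_{j\neq i^\ast}(1-p_j)\ \ge\ e^{-2\sum_j p_j}\ \ge\ n^{-C/256}\ \ge\ n^{-1/8},
\]
using $C/256<\tfrac18$. Multiplying by $\Pr[\mathcal G]\ge\tfrac34$ yields $\Pr[\,w^\ast\text{ has no acceptable edge}\,]\ge\tfrac34\,n^{-1/8}\ge\tfrac14\,n^{-1/8}$ for $n\ge 32{,}000$, and so the same bound on the probability that no perfect (or stable) matching uses only $(L,\tfrac32 L)$-acceptable edges. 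I expect the main obstacle to be the quantitative bookkeeping: one must tie public ranks to public ratings \emph{on both sides simultaneously} tightly enough that $y_j$ is essentially $x_j$ --- without this the expected number of acceptable edges at $w^\ast$ would be $\Theta(nL^2)=\Theta(n^{1/3}(\ln n)^{2/3})$ instead of $\Theta(\ln n)$ and the product would vanish --- and one must keep every constant explicit so the statement already holds at $n=32{,}000$, where $\Pr[\mathcal B_4]$ is a moderate constant rather than negligible. Events $\mathcal B_4,\mathcal B_5$ are the devices that make (a) and (b) precise, and passing from $n^{-1/8}$ to $\tfrac14 n^{-1/8}$ is the slack that absorbs them.
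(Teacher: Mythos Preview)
Your approach is essentially the paper's: both pivot on Lemmas~\ref{lem::index-to-range}--\ref{lem::prob-Bfour} to show that a fixed agent has no acceptable edge with probability at least $n^{-1/8}$. Your device of fixing a middle-rank $w^\ast$ is a clean way to sidestep the bottom-agent exemption (the paper instead averages over all agents via a first-moment argument and subtracts a $5Ln$ boundary correction). But two quantitative steps fail.

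First, claim (a) for $x_j<-2L$ is wrong. Event $\mathcal B_5$ only fires when $|x|>4L$, so for $x_j\in(-4L,-2L)$ its absence tells you nothing; all you get from $\neg\mathcal B_4$ is $|x_j-y_j|<2L$ (Lemma~\ref{lem::score-span-diff}), which still allows $y_j\in(-2L,0)$ and hence $p_j>0$. The correct cut-off is the paper's public-rating test ``$x_j\le 2L$ and $y_j\ge -2L$'' (Definition~\ref{def::public-score-test}, Lemma~\ref{lem::not-acceptable-due-to-public-score}), which can pass for $x_j$ as negative as $-4L$.

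Second, and more damaging for your constants, the inequality $1-p\ge e^{-2p}$ loses a factor of~$2$ that the bookkeeping cannot absorb. With the uniform bound $p_j\le 8L^2$ and the count $\le 3+6L(n-2)\le 7Ln$ (Lemmas~\ref{lem::acceptable-prob} and~\ref{lem::bound-on-no-acceptable-edges}), one gets $\sum_j p_j\le 56\,nL^3$, i.e.\ $C=56$, so $e^{-2\sum p_j}$ only gives $n^{-7/32}$, not $n^{-1/8}$. The paper instead uses the sharper $\ln(1-8L^2)\ge -8L^2/(1-8L^2)$, whence $(1-8L^2)^{7Ln}\ge n^{-1/8}$ reduces to $\tfrac{7}{8}/(1-8L^2)\le 1$, which holds. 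Your integral over $|x_j|\le 2L$ does evaluate to $\tfrac{80}{3}L^3<32L^3$, but once you include the missing window $(-4L,-2L)$ the integral becomes $36L^3$; and in any case $\neg\mathcal B_4$ only controls counts in intervals with additive error $\pm L(n-2)$, which is the same order as the total count, so ``equidistributed with density $n$'' is not justified at this precision. The cleanest fix is to drop the integral and the $e^{-2p}$ crutch and use the paper's sharper logarithm bound directly.
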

\begin{proof}
Suppose that ${\mathcal B}_4$ and ${\mathcal B}_5$ do not occur.
Then, we will show that the expected number of women with
no acceptable incident edge is greater than or equal to $\tfrac12 n^{7/8}$.
As there are $n$ women, it immediately follows that 
with probability at least $\tfrac 12 n^{-1/8}$ there is no matching in which every edge is acceptable.
The result now follows if the probability of ${\mathcal B}_4 \cup {\mathcal B}_5$ is at most $\tfrac 14 n^{-1/8}$, i.e.\ that $2n^2\cdot n^{-L(n-2)/12\ln n} +n^2\cdot n^{-L(n-2)/8\ln n}  \le \tfrac 14 n^{-1/8}$; 
$n\ge 32,000$ suffices.

Lemma~\ref{lem::lower-bound} below shows that in expectation there are at least $n^{7/8}$ women (and men)
such that every possible proposal to one of these women would cause at least one of the two parties a loss
greater than $L$.
Recall that every edge to a woman with a public rating less than $\tfrac 32 L$ is woman-acceptable.
Let $w'$ be the topmost such woman (i.e.\ the one with the highest public rating). 
Let $w$ be the woman with the lowest public rating equal to or greater than $r_{w'}+2L$, and let $m$ be $w$'s aligned partner. Then the edge $(m,w')$ gives $m$ a loss greater than $L$, and thus every edge $(m,w'')$ that is man-acceptable to $m$ will be acceptable to $w''$ only if it gives $w''$ a loss of at most $L$.
So for men with public rating at least $r_m$,
an edge is acceptable only if it gives both partners a loss of at most $L$.
We show that there are at most $5Ln$ such men if
${\mathcal B}_4$ does not occur.
For if this event does not occur, then
the number of women in the range $[r_{w_n},r_w]$,
and hence the number of men in the range $[r_{m_n},r_m]$,
is at most $2+\tfrac92 L(n-2)\le 5Ln$, if $2\le \tfrac{1}{16}(n-2)\ln^{1/3}n/n^{1/3}$; $n\ge 256$ suffices. 
The same bound applies to the women.

%
Thus, 
there are at least $n^{7/8}-  5 Ln$ women
who do not have acceptable matches.
So long as $\tfrac 12 n^{7/8} \ge 5 Ln = \tfrac 5{16} n^{2/3}\ln^{1/3}n$,
this implies that the number of women with no acceptable match is at least $\tfrac 12 n^{7/8}$.
This condition holds when $n\ge 1$.
\end{proof}

\begin{lemma} \label{lem::score-span-diff}
Suppose that ${\mathcal B}_4$ and ${\mathcal B}_5$ do not occur.
Further suppose that either $|x|\le 2L$
or $|y|\le 2L$.
Then,
$\big|x-y| <  2L$.
\end{lemma}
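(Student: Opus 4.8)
The plan is to exploit the single fact that couples $x$ and $y$: since $(m,w)$ and $(m',w')$ have equal public ranks, say ranks $i$ and $j$ respectively, the men with public rating in the interval between $r_m$ and $r_{m'}$ are precisely those of rank between $i$ and $j$, a set of size $|i-j|+1$; likewise the women with public rating between $r_w$ and $r_{w'}$ form a set of exactly the same size. Call this common value $N$. Also, swapping the roles of the primed and unprimed pairs negates both $x$ and $y$ while leaving $\mathcal{E}_4$ and $\mathcal{B}_5$ (and $N$) unchanged, and recall $\mathrm{sign}(x)=\mathrm{sign}(y)$; so I may assume $x\ge 0$ and $y\ge 0$, in which case the hypothesis reads ``$x\le 2L$ or $y\le 2L$'' and the goal is $|x-y|<2L$.

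First I would remove the remaining asymmetry. The statements defining $\mathcal{E}_4$ and $\mathcal{B}_5$ (clauses (i) and (ii)) are symmetric under exchanging the two sides, which exchanges $x$ and $y$; hence it suffices to treat the case $x\le 2L$, the case $y\le 2L$ being identical with men and women swapped. So assume $x\le 2L$.

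Next I would rule out the possibility $y>4L$. If $y>4L$, then since $\mathcal{B}_5$ does not occur its clause (i) fails for this pair, so $N$ (the number of women with rating between $r_w$ and $r_{w'}$) exceeds $2+3L(n-2)$. On the other hand, since $x\le 2L\le 4L$, $\mathcal{E}_4$ applied to the men gives $N<2+x(n-2)+L(n-2)\le 2+3L(n-2)$, a contradiction. Hence $y\le 4L$. Now both $x\le 2L\le 4L$ and $y\le 4L$ are in range, so $\mathcal{E}_4$ applies to each side and yields $N>2+x(n-2)-L(n-2)$ and $N<2+y(n-2)+L(n-2)$; combining these gives $x(n-2)-L(n-2)<y(n-2)+L(n-2)$, i.e.\ $x-y<2L$. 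Running the same comparison in the other direction gives $y-x<2L$, so $|x-y|<2L$, as required.

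I do not expect a genuine obstacle here: the argument is purely combinatorial once $\mathcal{E}_4$ and the non-occurrence of $\mathcal{B}_5$ are granted, since all the Chernoff estimates have already been absorbed into those two events. The only points requiring care are bookkeeping ones: the interval written ``$[r_m,r_{m'}]$'' should be read as the interval between the two ratings (of length $|x|$), which is why the reduction to $x,y\ge 0$ is convenient; and one must notice that the equal-rank identity $N_{\text{men}}=N_{\text{women}}=N$ is exactly the mechanism by which a bound on one side constrains the other, as used both in eliminating $y>4L$ and in the final subtraction.
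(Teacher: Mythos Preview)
Your proof is correct and follows essentially the same route as the paper's: use the equal-rank identity to equate the counts $N$ of men and women in the two rating intervals, rule out $|y|>4L$ by playing the $\mathcal{B}_5$ lower bound against the $\mathcal{E}_4$ upper bound, and then subtract the two $\mathcal{E}_4$ bounds when both $|x|,|y|\le 4L$. Your explicit reduction to $x,y\ge 0$ is a bit more careful than the paper's presentation, but the argument is the same.
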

\begin{proof}
We consider the case that $x\le 2L$.
The proof for the other case is symmetric.

Let $h$ be the
the number of men in the range $[r_m,r_{m'}]$; $h$ is also
the number of women in the range $[r_w,r_{w'}]$. By Lemma~\ref{lem::index-to-range}, $h\in 2+\big(|x|(n-2) - L(n-2), |x|(n-2) +  L(n-2)\big)$ and
if $|y|\le 4L$,
$h\in 2+\big(|y|(n-2) - L(n-2), |y|(n-2) +  L(n-2)\big)$. Consequently $|x-y| <  2L$ if $|y| \le 4L$.

If $|y|> 4L$, as ${\mathcal B}_5$ does not occur, the number of women in the range $[r_w,r_{w'}]$ is more than $2+ 3 L(n-2)$. But as ${\mathcal B}_4$ does not occur, the number of men is at most $2+ 3 L(n-2)$. These numbers are supposed to be equal, and therefore $|y|>4L$ cannot happen.
\end{proof}

\begin{lemma} \label{lem::acceptable-prob}
Suppose that ${\mathcal B}_4$ and ${\mathcal B}_5$ do not occur.  Further suppose that $y=r_w -r_{w'}\ge 0$. Then, 
the probability that edge $(m,w')$ causes a loss of at most $L$ to both $m$ and $w'$ is at most
$(2L-y)\cdot(4L+y)\le 8L^2$.
A symmetric bound applies if $\overline{x}=r_{m'}-r_m\ge 0$.
\end{lemma}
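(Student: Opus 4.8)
The plan is to turn each of the two loss conditions into a one-sided condition on a single private score, read off its probability from the fact that scores are uniform on $[0,1]$, multiply the two probabilities using independence, and then invoke Lemma~\ref{lem::score-span-diff} to rewrite the factor that still mentions $x$ in terms of $y$ alone. First I would record the two losses explicitly: since $\lambda=\tfrac12$ and $w$ is aligned with $m$ whereas $m'$ is aligned with $w'$, Definition~\ref{defn::loss_linear} gives $m$ the benchmark $\tfrac12(r_w+1)$ and $w'$ the benchmark $\tfrac12(r_{m'}+1)$. Hence the loss $m$ gets from the edge $(m,w')$ is $\tfrac12(r_w-r_{w'})+\tfrac12(1-s_m(w'))=\tfrac12 y+\tfrac12(1-s_m(w'))$, and the loss $w'$ gets is $\tfrac12(r_{m'}-r_m)+\tfrac12(1-s_{w'}(m))=-\tfrac12 x+\tfrac12(1-s_{w'}(m))$. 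So ``$m$'s loss $\le L$'' is equivalent to $s_m(w')\ge 1-(2L-y)$, which has probability $0$ if $y>2L$ and $2L-y$ otherwise, and ``$w'$'s loss $\le L$'' is equivalent to $s_{w'}(m)\ge 1-(2L+x)$, which has probability at most $2L+x$ (using $x\ge0$). If $y>2L$ the joint probability is $0$ and we are done, so I may assume $0\le y\le 2L$; then, as the two scores are independent, the probability that $(m,w')$ gives both agents loss at most $L$ is at most $(2L-y)(2L+x)$.

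Next I would bound $2L+x$ by $4L+y$. Since $0\le y\le 2L$ and ${\mathcal B}_4,{\mathcal B}_5$ are assumed not to occur, the hypothesis $|y|\le 2L$ of Lemma~\ref{lem::score-span-diff} holds, so $|x-y|<2L$, giving $x<y+2L$ and hence $2L+x<4L+y$. Therefore the probability is at most $(2L-y)(4L+y)=8L^2-2Ly-y^2\le 8L^2$ because $y\ge0$. For the symmetric statement, when $\overline{x}=r_{m'}-r_m\ge0$ (equivalently $x\le0$, whence $y\le0$ and $\overline{y}=r_{w'}-r_w\ge0$) the identical computation applies with the two factors swapped: ``$w'$'s loss $\le L$'' has probability $\max\{0,2L-\overline{x}\}$ and ``$m$'s loss $\le L$'' has probability at most $2L+\overline{y}$, and (when $2L-\overline{x}>0$, so $|x|\le 2L$) Lemma~\ref{lem::score-span-diff} yields $\overline{y}-\overline{x}<2L$, so the joint probability is at most $(2L-\overline{x})(4L+\overline{x})\le 8L^2$.

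I do not expect a substantive obstacle here; the only points needing care are the degenerate cases (a survival probability equal to $0$, or the crude bound $2L+x$ exceeding $1$ — both harmless since only one-sided inequalities are used) and making sure the hypothesis $|y|\le 2L$ (resp.\ $|x|\le 2L$) of Lemma~\ref{lem::score-span-diff} is genuinely met before applying it, which is precisely why the argument first disposes of the case $y>2L$.
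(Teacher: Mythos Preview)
Your proof is correct and follows essentially the same route as the paper: compute the two losses explicitly in the $\lambda=\tfrac12$ linear model, read off the survival probabilities $(2L-y)$ and $(2L+x)$, multiply by independence, and then use Lemma~\ref{lem::score-span-diff} to replace $x$ by $y+2L$. Your write-up is in fact more careful than the paper's (you explicitly handle the degenerate case $y>2L$ before invoking Lemma~\ref{lem::score-span-diff}, and you note why the crude bound $2L+x$ is valid even if it exceeds~$1$); the paper's version states the same two-line argument more tersely.
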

\begin{proof}
We show the proof for the first bound. The argument for the second bound is identical.
$m$ has a loss of at least $y$ on edge $(m,w')$.
Therefore, for $(m,w')$ to be acceptable to $m$, we need $y\le 2L$.
The probability that $(m,w')$ is acceptable
is $(2L-y) \cdot (2L+x)$, and by Lemma~\ref{lem::score-span-diff}, this is at most 
$(2L-y)\cdot(2L+y + 2L)$.
\end{proof}

\begin{lemma}
\label{lem::not-acceptable-due-to-public-score}
Consider an edge $(m,w')$. If $x<-2L$ or $y>2L$ then
$(m,w')$ is not acceptable.
\end{lemma}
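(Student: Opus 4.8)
The plan is to establish the claim by directly computing the relevant losses, using that $\lambda=\tfrac12$. Recall that for equal public ranks, $w$ is $m$'s aligned partner and $m'$ is $w'$'s aligned partner, so by Definition~\ref{defn::loss_linear} the benchmark of $m$ is $U(r_w,1)=\tfrac12(r_w+1)$ and that of $w'$ is $V(r_{m'},1)=\tfrac12(r_{m'}+1)$. Since $m$'s utility for $w'$ is $\tfrac12(r_{w'}+s_m(w'))$, the loss $m$ sustains from the edge $(m,w')$ is
\[
\tfrac12(r_w+1)-\tfrac12\big(r_{w'}+s_m(w')\big)=\tfrac12 y+\tfrac12\big(1-s_m(w')\big)\ \ge\ \tfrac12 y ,
\]
because $s_m(w')\le 1$. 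Symmetrically, $w'$'s utility for $m$ is $\tfrac12(r_m+s_{w'}(m))$, so the loss $w'$ sustains from $(m,w')$ is
\[
\tfrac12(r_{m'}+1)-\tfrac12\big(r_m+s_{w'}(m)\big)=-\tfrac12 x+\tfrac12\big(1-s_{w'}(m)\big)\ \ge\ -\tfrac12 x .
\]

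Given these two bounds the lemma is immediate. First I would treat the case $y>2L$: the first display gives that $m$'s loss on $(m,w')$ is strictly greater than $L$, so $(m,w')$ is not $(L,\tfrac32 L)$-man-acceptable, hence not acceptable. In the case $x<-2L$, the second display gives that $w'$'s loss on $(m,w')$ is strictly greater than $L$, so $(m,w')$ is not $(L,\tfrac32 L)$-woman-acceptable, hence again not acceptable.

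There is essentially no obstacle here; the argument is a one-line computation in each case. The only points needing care are bookkeeping the alignment correctly (the benchmark of $m$ is governed by $r_w$, and that of $w'$ by $r_{m'}$, so the loss of $m$ depends on $y$ while the loss of $w'$ depends on $x$), and observing that the step from ``loss exceeds $L$'' to ``not acceptable'' uses that the $M[0,\tfrac32 L)$ (resp.\ $W[0,\tfrac32 L)$) escape clause in the definition of acceptability does not apply — which is guaranteed because this lemma is only invoked within the proof of Lemma~\ref{lem::lower-bound}, where attention is restricted to agents of public rating at least $\tfrac32 L$.
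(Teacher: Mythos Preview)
Your proof is correct and follows the same direct-computation approach as the paper's one-line proof. You are in fact more careful than the paper (which asserts a loss ``more than $2L$'' in each case, whereas with $\lambda=\tfrac12$ the loss is only guaranteed to exceed $L$, exactly as you compute --- still sufficient for non-acceptability), and your remark on the escape clause correctly flags bookkeeping that the paper leaves implicit and handles only in the surrounding theorem proof.
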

\begin{proof}
If $y>2L$ then $m$ has a loss of more than $2L$,
and if $x<-2L$ then $w'$ has a loss of more than $2L$.
\end{proof}

\begin{defn}
\label{def::public-score-test}
Let $(m,w')$ be an edge.
If $x \ge -2L$ and $y\le 2L$ we say $(m,w')$ passes
the public rating test, and otherwise it fails the test.
\end{defn}

\begin{lemma}
\label{lem::bound-on-no-acceptable-edges}
Suppose that ${\mathcal B}_4$ and ${\mathcal B}_5$ do not occur. Then apart from at most $3+6L(n-2)$ edges $(m,w')$ 
all other edges incident on $m$ fail the public rating test.
\end{lemma}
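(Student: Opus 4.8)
The plan is to fix an arbitrary man $m=m_i$ (with aligned woman $w=w_i$) and to pin down \emph{exactly} which edges $(m_i,w_j)$ pass the public rating test, and then simply count them. For the edge $(m_i,w_j)$ we have $m'=m_j$, hence $x=r_{m_i}-r_{m_j}$ and $y=r_{w_i}-r_{w_j}$; by Definition~\ref{def::public-score-test} this edge passes precisely when $r_{m_j}\le r_{m_i}+2L$ and $r_{w_j}\ge r_{w_i}-2L$. Because the men are listed in decreasing order of public rating, the first inequality holds exactly for the ranks $j\ge a$, where $a\le i$ is the least rank with $r_{m_a}\le r_{m_i}+2L$; symmetrically the second holds exactly for $j\le b$, where $b\ge i$ is the greatest rank with $r_{w_b}\ge r_{w_i}-2L$. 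Thus the passing edges incident on $m$ are exactly those with $j\in\{a,\dots,b\}$, and there are $b-a+1=(i-a)+(b-i)+1$ of them; the whole task reduces to bounding $i-a$ and $b-i$.

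Next I would invoke the hypothesis that ${\mathcal B}_4$ does not occur, i.e., that ${\mathcal E}_4$ holds. Apply ${\mathcal E}_4$ to the equal-rank pairs $(m_i,w_i)$ and $(m_a,w_a)$: since $a\le i$ we have $r_{m_i}\le r_{m_a}\le r_{m_i}+2L$, so $x=r_{m_i}-r_{m_a}$ lies in $[-2L,0]$, in particular $|x|\le 4L$, so ${\mathcal E}_4$ is in force. The number of men with public rating in $[r_{m_i},r_{m_a}]$ is exactly $i-a+1$, so ${\mathcal E}_4$ gives $i-a+1<2+x(n-2)+L(n-2)\le 2+L(n-2)$, using $x\le 0$. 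Symmetrically, applying ${\mathcal E}_4$ to $(m_i,w_i)$ and $(m_b,w_b)$: since $b\ge i$ we have $r_{w_i}-2L\le r_{w_b}\le r_{w_i}$, so $y=r_{w_i}-r_{w_b}$ lies in $[0,2L]\subseteq[-4L,4L]$; the number of women with public rating in $[r_{w_b},r_{w_i}]$ is exactly $b-i+1$, so ${\mathcal E}_4$ gives $b-i+1<2+y(n-2)+L(n-2)\le 2+3L(n-2)$. Adding these two inequalities and subtracting $1$ yields $b-a+1<3+4L(n-2)\le 3+6L(n-2)$, which is the claimed bound; the bound for edges incident on a woman follows by the symmetry of the two sides.

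I do not expect a genuine obstacle here --- the argument is essentially bookkeeping about sorted rating sequences. The one point requiring care is that ${\mathcal E}_4$ controls gaps between the ratings of \emph{actual} agents, which is exactly why the estimates must be anchored at the ranks $a$ and $b$ rather than at the thresholds $r_{m_i}\pm 2L$ (which need not be anyone's rating); one should also note the harmless degenerate cases $a=i$ or $b=i$, where the corresponding count is $1$ and the inequality derived from ${\mathcal E}_4$ still holds outright. Note that the argument actually yields the sharper $3+4L(n-2)$; the extra slack in the stated $3+6L(n-2)$ simply absorbs the mild asymmetry between the men's and women's contributions, and nothing downstream needs the tighter constant.
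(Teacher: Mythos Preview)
Your approach is exactly the paper's: identify the passing edges as those with rank $j\in\{a,\dots,b\}$ and bound $i-a$ and $b-i$ separately via ${\mathcal E}_4$, then combine and subtract one for the shared index $i$.

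There is, however, a slip in your bound for $i-a$. You invoke ${\mathcal E}_4$ to get $i-a+1<2+x(n-2)+L(n-2)$ and then use $x\le 0$ to conclude $i-a+1<2+L(n-2)$. This takes the signed $x$ in the paper's displayed statement of ${\mathcal E}_4$ at face value, but that display contains a typo: the proof of Lemma~\ref{lem::index-to-range} shows the concentration window is $(2+|x|(n-2)-L(n-2),\,2+|x|(n-2)+L(n-2))$, as it must be, since the count in an interval of length $|x|$ concentrates around $|x|(n-2)$, not around $x(n-2)$ (which would be negative in your case $x\le 0$). With the correct $|x|$ and $|x|=r_{m_a}-r_{m_i}\le 2L$, one gets $i-a+1<2+3L(n-2)$, matching what you obtained on the women's side (where $y\ge 0$ so the issue does not arise). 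Summing gives $b-a+1<3+6L(n-2)$, exactly the stated bound; your claimed sharpening to $3+4L(n-2)$ is an artifact of the typo and is not valid.
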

\begin{proof}
Let $w'$ be the lowest rated woman in $W[r_w-2L,r_{w}]$. By Lemma~\ref{lem::index-to-range},
there are at most $2+3L(n-2)$ women in $W[r_{w'},r_{w}]\subseteq W[r_w-2L,r_w]$.
By Lemma~\ref{lem::not-acceptable-due-to-public-score},
for any woman $w''$ with a lower rating than $w'$, $(m,w'')$ will fail the public rating test (as for $w''$, $y>2L$).

Now let $m'$ be the highest rated man in $M[r_m,r_m+2L]$. By Lemma~\ref{lem::index-to-range},
for any woman $w''$ with a higher rating than $w'$, $(m,w'')$ will fail the public rating test (as for $w''$, $x<-2L$). 
By Lemma ~\ref{lem::index-to-range},
there are at most $2+3L(n-2)$ men in $M[r_m,r_m+2L]$,
and therefore there are at most $2+3L(n-2)$ women
in $W[r_w,r_{w'}]$.

$w$ belongs to both these sets of women.
So there are at most $3+6L(n-2)$ women who pass the public rating test.
\end{proof}

\hide{
\begin{lemma} \label{lem::prob-single-edge-low-loss}
Suppose that ${\mathcal B}_4$ and ${\mathcal B}_5$ do not occur.
The probability that a random edge $(m,w')$
with $w'\ne w$ causes a loss of at most $L$ to both $m$ and $w'$ is at most $20 L^3$.
\end{lemma}
\begin{proof}
We average the bound from Lemma~\ref{lem::acceptable-prob}
over all $y$ in the range $[0,2L]$, covering the
case that $y=r_w -r_{w'}\ge 0$, and then average over all
all $\overline{x}$ in the range $[0,2L]$, covering the
case that $\overline{x}=r_{m'}-r_m\ge 0$.
\begin{align*}
    \int_{0}^{2L} (2L-y) \cdot (4L+y) ~\text{d}y= 16 L^3 - 4L^3 - \frac {8}3 L^3 \le 9\tfrac23 L^3.
\end{align*}
Thus the overall bound is $19\tfrac 13 L^3\le 20L^3$.

\end{proof}
}


\begin{lemma} \label{lem::lower-bound}
Suppose that ${\mathcal B}_4$ and ${\mathcal B}_5$ do not occur and $n\ge 100$. 
If $L\le\tfrac 18 (\ln n/n)^{1/3}$, and there are equal numbers of men and women, then
the expected number of unmatched men (and women) is at least $n^{7/8}$.
\end{lemma}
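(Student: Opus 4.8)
The plan is to bound, for each individual woman $w'$, the probability that every edge incident on $w'$ causes at least one of its endpoints a loss greater than $L$, to show this probability is at least $n^{-1/8}$, and then to conclude by linearity of expectation over the $n$ women. Throughout we condition on the realized public ratings of all agents and assume, as hypothesized, that ${\mathcal B}_4$ and ${\mathcal B}_5$ do not occur. The crucial point is that, once the public ratings are fixed, the event ``$(m,w')$ causes both $m$ and $w'$ a loss of at most $L$'' is determined by the two private scores $s_m(w')$ and $s_{w'}(m)$ alone; as these score pairs are disjoint across distinct men $m$ and all scores are mutually independent, these events are mutually independent conditioned on the ratings. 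Hence the conditional probability that no incident edge of $w'$ causes both of its endpoints a loss of at most $L$ equals $\prod_m(1-q_m)$, where $q_m$ is the conditional probability that $(m,w')$ causes both endpoints loss at most $L$.

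I would then bound each factor using the preceding lemmas. By Lemma~\ref{lem::not-acceptable-due-to-public-score} (see Definition~\ref{def::public-score-test}), if $(m,w')$ fails the public rating test then it already causes $m$ or $w'$ a loss exceeding $2L\ge L$, so $q_m=0$ for those $m$. By Lemma~\ref{lem::bound-on-no-acceptable-edges}, at most $3+6L(n-2)$ men $m$ give an edge $(m,w')$ that passes the public rating test, and for each such $m$, Lemma~\ref{lem::acceptable-prob} --- which invokes Lemma~\ref{lem::score-span-diff}, and hence uses the assumption that ${\mathcal B}_4,{\mathcal B}_5$ do not occur --- yields $q_m\le 8L^2$. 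Since $8L^2<1$, it follows that
\[
\prod_m(1-q_m)\ \ge\ (1-8L^2)^{\,3+6L(n-2)}.
\]

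It then remains to check that $(1-8L^2)^{3+6L(n-2)}\ge n^{-1/8}$ when $L=\tfrac18(\ln n/n)^{1/3}$ and $n\ge 100$. Using $1-8L^2\ge\exp\!\big(-8L^2/(1-8L^2)\big)$, the left side is at least $\exp\!\big(-(3+6L(n-2))\,8L^2/(1-8L^2)\big)\ge\exp\!\big(-(24L^2+48L^3n)/(1-8L^2)\big)$. Here $48L^3n=\tfrac{3}{32}\ln n$ is the dominant term, while $8L^2=\tfrac18(\ln n/n)^{2/3}$ and $24L^2$ are small and tend to $0$; a routine estimate then gives an exponent of at least $-\tfrac18\ln n$, with room to spare because $\tfrac{3}{32}<\tfrac18$. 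Summing the per-woman bound over all $n$ women gives an expected count of at least $n\cdot n^{-1/8}=n^{7/8}$ women $w'$ such that every edge incident on $w'$ causes $m$ or $w'$ a loss exceeding $L$, and each such woman is necessarily unmatched in any matching that uses only edges causing both endpoints a loss of at most $L$; the identical argument with the two sides interchanged yields the same bound for the men. I expect the only real obstacle to be the bookkeeping in this last step: the expected number of incident edges per agent causing both endpoints loss at most $L$ is $\Theta(L^3n)=\Theta(\ln n)$, and the purpose of the choice $L=\tfrac18(\ln n/n)^{1/3}$ is precisely to pin this mean just below $\tfrac18\ln n$, so that the ``Poissonized'' chance of having no such edge survives at $n^{-1/8}$; one has to verify the absolute constants carefully for $n\ge 100$.
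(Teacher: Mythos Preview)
Your proposal is correct and follows essentially the same approach as the paper's proof: fix an agent, use Lemma~\ref{lem::bound-on-no-acceptable-edges} to bound the number of edges passing the public rating test by $3+6L(n-2)$, use Lemma~\ref{lem::acceptable-prob} to bound each such edge's acceptance probability by $8L^2$, multiply, and verify $(1-8L^2)^{3+6L(n-2)}\ge n^{-1/8}$ for $L=\tfrac18(\ln n/n)^{1/3}$ and $n\ge100$. One small remark: Lemma~\ref{lem::bound-on-no-acceptable-edges} is stated for edges incident on a \emph{fixed man} $m$, whereas you apply it to edges incident on a fixed woman $w'$; the paper itself actually argues from the man's side (and your explicit independence argument is if anything cleaner than the paper's), so either swap sides or note that the symmetric version of the lemma holds by the identical proof.
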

\begin{proof}
Consider an arbitrary man $m$.
By Lemma~\ref{lem::acceptable-prob}, each edge which passes the public rating test is acceptable with probability at most $8L^2$.
By Lemma~\ref{lem::bound-on-no-acceptable-edges},
there are at most $3+6L(n-2)$ such edges incident on $m$.
Therefore
the probability that all $n$ edges incident on $m$ cause one or both parties
a loss of more than $L$ is at least
\begin{align*}
    \big(1-8L^2\big)^{3+6L(n-2)} = e^{(3+(6L(n-2))\ln (1- 8L^2)}  \ge n^{-1/8},
\end{align*}
as we argue next.

For this to hold, it suffices
that
\begin{align*}
&-8\cdot \frac{3+6L(n-2)} {\ln n} \Big(-8L^2- \frac 12(8L^2)^2 -\frac 13(8L^2)^3 -\ldots\Big) \le 1\\
\text{or that}\hspace*{0.4in} & 8 \cdot \frac{7Ln}{\ln n}\cdot\frac {8L^2}{1-8L^2}\le 1\hspace*{0.5in}\text{(if $3\le Ln$)}\\
\text{or that}\hspace*{0.4in}    &56 \cdot\frac {n^{2/3}}{8\ln^{2/3} n}\cdot \frac{\ln^{2/3}n}{8n^{2/3}-\ln^{2/3}n}\le 1\\
\text{or that}\hspace*{0.4in} &   \frac78 \cdot\frac 1{1-\ln^{2/3}n/8n^{2/3}}\le 1
\end{align*}
which holds if $\ln^{2/3}n/8n^{2/3}\le 1/8$, which holds
for $n\ge 1$.
Our other condition, $3\le Ln$, or $3\le \tfrac 18 n^{2/3}\ln^{1/3}n$, holds if $n\ge 100$. 
\hide{
$\tfrac{20}{512}\tfrac{\ln n}{n})\Big) \le 1$; in turn it suffices that
$8\big(\tfrac{20}{512} +\tfrac 12\big( \tfrac{20}{512}\big)^2 + \ldots \big) \le 1$,
and this is readily seen to be true.

Note that edge $(m,w)$ causes both $m$ and $w$ a loss of at most $L$ is $4L^2$.
We deduce from Lemma~\ref{lem::prob-single-edge-low-loss}
that
the probability that all $n$ edges incident on $w$ cause one or both parties
a loss of more than $L$ is at least
\begin{align*}
    \Big(1-4L^2)(1 -  20 L^3\big)^(n-1) \le e^{n\ln (1- 20L^3)}  \ge n^{-1/8},
\end{align*}
as we argue next.
For this to hold, it suffices
that
$-8\tfrac n{\ln n}\ln \Big(1 - \tfrac{20}{512}\tfrac{\ln n}{n})\Big) \le 1$; in turn it suffices that
$8\big(\tfrac{20}{512} +\tfrac 12\big( \tfrac{20}{512}\big)^2 + \ldots \big) \le 1$,
and this is readily seen to be true.
}

Thus the expected number of women having all incident edges causing a loss of more than $L$ to both parties
is at least $n^{7/8}$.
\end{proof}


\subsection{Lower Bound on Performance for the General Utility Model}
\label{sec::general_lower}

Now we show that without the bounds on the derivatives, no sub-constant loss is achievable in general.

\begin{defn}[Sub-constant function]
A function $f(x):{\mathbb R}\rightarrow {\mathbb R}^+$ is sub-constant if for every choice of constant $c>0$, there exists an $\xbar$ such that for all $x \ge\xbar$, $f(x) \le c$.
\end{defn}

We first examine what happens if the derivatives w.r.t.\ private scores are not bounded,
but the derivatives w.r.t.\ public ratings are bounded;
this implies there is no lower bound on the ratio
of the derivatives of the utility functions w.r.t.\ public ratings and private scores (recall Definition~\ref{def::bdd-deriv}).

\begin{lemma}\label{obs::limits_of_result}
Let $f:N\rightarrow {\mathbb R}^+$ be a continuous, strictly decreasing sub-constant function, and let $\sigma,\delta \in (0,1)$ be constants. Suppose the public ratings and private scores of the $n$ men and $n$ women are drawn uniformly and independently from $[0,1]$. Then there exist continuous and strictly increasing utility functions $U(.,.)$ and $V(.,.)$ 
having derivatives w.r.t.\ their first variables that are bounded by a constant, but for (at least) one of which the derivatives w.r.t.\ their second variables are not bounded by any constant, having the following property:
for some $\nbar>0$,
for all $n\ge \nbar$, with probability at least $1 - \delta$, in every perfect matching, some rank $i$ man $m_i$ or woman $w_i$ with public rating at least $\sigma$ receives utility less than $U(r_{w_i},1)-f(n)$ or $V(r_{m_i},1)-f(n)$, respectively.
\end{lemma}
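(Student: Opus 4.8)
The plan is to realize $U$ so that it is linear in the public rating but rises extremely steeply near the top of the private‑score range, while $V$ stays fully linear. Concretely, take $V(r,s)=\tfrac12(r+s)$ (the linear separable model with $\lambda=\tfrac12$, all of whose derivatives are bounded) and $U(r,s)=\tfrac12\bigl(r+g(s)\bigr)$, where $g:[0,1]\to[0,1]$ is continuous and strictly increasing with $g(0)=0$, $g(1)=1$, extended to $\mathbb{R}_+$ by $g(s)=s$ for $s>1$. Writing $c=-\ln(1-\delta/2)>0$ and $x_n:=e^{-c/n}=(1-\delta/2)^{1/n}$ — a strictly increasing sequence tending to $1$ — I would choose $\nbar$ so that $f(\nbar)<\tfrac13$ and define $g$ by interpolating monotonically and continuously through the points $(x_n,\,1-3f(n))$ for $n\ge\nbar$ and through $(1,1)$, taking $g$ linear from $0$ to $1-3f(\nbar)$ on $[0,x_{\nbar}]$. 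Since $f$ is strictly decreasing and sub‑constant, $1-3f(n)$ is strictly increasing and tends to $1$, so this is a legitimate continuous strictly increasing function; moreover, by routing $g$ through a sufficiently steep sub‑interval between each consecutive pair $x_n,x_{n+1}$ we may additionally force $g$ to be non‑Lipschitz, so that $\partial U/\partial s$ is not bounded by any constant, whereas $\partial U/\partial r\equiv\tfrac12$ and all derivatives of $V$ are bounded.

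With this choice I would argue entirely about $m_1$, the man of highest public rating, whose benchmark is $U(r_{w_1},1)=\tfrac12(r_{w_1}+1)$. For any woman $w$ we have $r_w\le r_{w_1}$, so the loss $m_1$ sustains from a match with $w$ is
\[
U(r_{w_1},1) - U(r_w,s_{m_1}(w)) = \tfrac12(r_{w_1}-r_w) + \tfrac12\bigl(1 - g(s_{m_1}(w))\bigr) \ \ge\ \tfrac12\bigl(1 - g(s_{m_1}(w))\bigr).
\]
Hence it suffices to prove that, with probability at least $1-\delta$, both $r_{m_1}\ge\sigma$ and $g(s_{m_1}(w))<1-2f(n)$ for every woman $w$: on that event, in every perfect matching $m_1$ is matched to some $w$, has public rating at least $\sigma$, and receives utility $U(r_w,s_{m_1}(w))<U(r_{w_1},1)-f(n)$, which is exactly the claimed conclusion (with $i=1$, in the ``man'' case). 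Note this argument never invokes stability, so it rules out all perfect matchings.

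For the probability estimate I would first note $\Pr[r_{m_1}<\sigma]=\sigma^n\le\delta/2$ once $n$ is large, since $\sigma<1$. Second, $s_{m_1}(w_1),\dots,s_{m_1}(w_n)$ are i.i.d.\ uniform on $[0,1]$, so $\Pr[\max_w s_{m_1}(w)\le x_n]=x_n^{\,n}=e^{-c}=1-\delta/2$; and whenever $\max_w s_{m_1}(w)\le x_n$, monotonicity of $g$ gives $g(s_{m_1}(w))\le g(x_n)=1-3f(n)<1-2f(n)$ for all $w$. A union bound over the two failure events then yields probability at least $1-\delta$ that both required properties hold, provided $\nbar$ is taken large enough to validate $\sigma^n\le\delta/2$ for $n\ge\nbar$ (and $f(\nbar)<\tfrac13$, already imposed). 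This completes the proof.

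The one point requiring care is the construction of $g$: for an arbitrary sub‑constant $f$ — which may decay arbitrarily fast — it must simultaneously (i) pass through the rapidly‑accumulating points $(x_n,1-3f(n))$, (ii) be continuous, strictly increasing, and equal to $1$ at $s=1$, and (iii) fail to be Lipschitz. I expect this to be the main thing to get right, but it is not a genuine obstacle: (i) and (ii) are compatible precisely because both $x_n\uparrow1$ and $1-3f(n)\uparrow1$, and (iii) comes for free, since between two prescribed interpolation points $g$ is otherwise unconstrained and may be routed through an arbitrarily steep segment without disturbing the prescribed values. Everything else — the max‑of‑uniforms computation, the loss identity, and the union bound — is routine.
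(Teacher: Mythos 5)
Your proof is correct, and it reaches the conclusion by a genuinely different construction from the paper's. The paper also takes $U(r,s)=r+g(s)$ with a steep $g$, but it calibrates $g$ so that $g\big(1-\tfrac{\delta}{8nf(n)}\big)=g(1)-f(n)$, defining $g$ through the inverse of $k(y)=8yf(y)/\delta$; this forces it to (i) assume, WLOG, that $f$ decays slowly enough for $k$ to be increasing and unbounded and for $\ln(4/\delta)/n\le f(n)$, and (ii) run a three-event argument: $r_{w_1}$ is within $f(n)$ of $1$, only the $O(nf(n))$ topmost women can avoid a loss of $f(n)$ coming from the public-rating term alone, and each of those independently fails to have a private score above $1-\tfrac{\delta}{8nf(n)}$. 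Your version makes the private-score deficit alone responsible for the loss --- using $U(r_{w_1},1)-U(r_w,s)\ge\tfrac12(1-g(s))$ for \emph{every} woman $w$ --- and calibrates $g$ directly against the distribution of the maximum of $m_1$'s $n$ i.i.d.\ scores via the interpolation points $(x_n,1-3f(n))$ with $x_n=(1-\delta/2)^{1/n}$. This eliminates all public-rating bookkeeping except the trivial bound $\Pr[r_{m_1}<\sigma]=\sigma^n$, and it needs no rate assumption on $f$, since both coordinate sequences of your interpolation points increase to $1$ for any strictly decreasing sub-constant $f$; in that respect your construction is cleaner and slightly more self-contained than the paper's. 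The one point you flag --- making $g$ non-Lipschitz while respecting the interpolation data --- is indeed harmless: in each interval $[x_n,x_{n+1}]$ place half of the prescribed rise on a sub-interval of length at most $\min\{(x_{n+1}-x_n)/2,\;3(f(n)-f(n+1))/(2n)\}$, which gives slope at least $n$ there, so $\partial U/\partial s$ is not bounded by any constant. Both proofs establish the claim for all perfect matchings without invoking stability.
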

\begin{proof}
We will give an example where, with probability at least $1-\delta$, in every perfect matching, man $m_1$
receives utility less than $U(r_{w_1},1)-f(n)$.

Observe that proving the result for a more slowly decreasing $f$ implies it for faster decreasing functions.
In what follows, at times we will need to assume $f$ decreases sufficiently slowly, but given the just made observation, we can do so WLOG.

Now we define $U(r,s) = r + g(s)$ where $g(s)$ is a continuous, strictly increasing function, and for which 
\begin{align*}
    g\Big( 1 - \frac {\delta} {8n\cdot f(n)}\Big) = g(1) - f(n).
\end{align*}
The reason for this condition will become clear in due course.
We will first demonstrate that there is such a $g$.
To this end, define
\begin{align*}
 \hspace*{1in}   k(y) &= \frac { 8y\cdot f(y)}{\delta}&\text{\rm for } y \ge 0 \hspace*{1in} \\
    g(s) &= g(1) - f(k^{-1}(1/(1-s))),~~&\text{if}~s<1\hspace*{1in} \\
    g(1) &= f(k^{-1}((1))) & \text{(so $g(0) = 0$).}\hspace*{1in}
\end{align*}
We will want $k$ to be strictly increasing and unbounded. This is true if $f$ is sufficiently slowly decreasing.
Next, note that as $f$ is continuous, so is $k$. Therefore  $k^{-1}$ is continuous and strictly increasing, and therefore so is $g$,
except possibly at $s=1$.
For $g$ to be continuous at $s=1$ we need $\lim_{s\ra 1}f(k^{-1}(1/(1-s))) = 0$, which happens as $\lim_{x\ra \infty}f(x) = 0$, which happens since $f$ is sub-constant.

Setting $s = 1-\frac {\delta} {8n\cdot f(n)}$ gives
\begin{align}
\label{eqn:g-defn}
    g(s) = g(1) - f\Big(k^{-1}\Big(\frac {8n\cdot f(n)}  {\delta}\Big)\Big) = g(1) -f(n),
\end{align}
as desired.

Strictly speaking, we should rescale the utility so that its range is $[0,1]$ rather than the actual $\big[0, 1 + g(1)]\subset\big[0,1+f(0)\big]$. 
Note that although $1+g(1)$ is a function of $\delta$,
it is always bounded by $1+f(0)$, a constant, and so the rescaling
does not affect the result stated in the lemma.
We omit performing the rescaling to avoid unnecessary clutter.

For $m_1$ to face a loss of at most $f(n)$, he must match with a woman having public rating at least $r_{w_1}-f(n)$.

The probability that no woman has a public rating in the range
$[1-\ln (4/\delta )/n,1]$ is at most 
\begin{align*}
\Big(1- \frac{\ln(4/\delta)}{n}\Big)^n \le \exp(-\ln(4/\delta))=\tfrac 14 \delta.
\end{align*}
Otherwise, $r_{w_1} \ge 1-\ln (4/\delta )/n$.
If $f$ is sufficiently slowly decreasing, for large enough $n$,
$\ln (4/\delta )/n \le  f(n)$. Therefore, for such large enough $n$,
with probability at least $1-\tfrac 1{4}\delta$,
$r_{w_1} - f(n) \ge 1 - 2 f(n)$.
Call the probability $\tfrac14\delta$ event ${\mathcal B}_1$.

The same analysis shows that, with failure probability at most
$\tfrac 1{4} \delta$, $r_{m_1} \ge 1 -f(n)$,
and as $f$ is a sub-constant function, for large enough $n$,
$r_{m_1} \ge 1 -f(n) \ge \sigma$.
Call the probability $\tfrac14\delta$ event ${\mathcal B}_2$.

The expected number of women other than $w_1$ in  $W[r_{w_1}-f(n), r_{w_1}]$ is $(n-1) \cdot f(n)$. Let  $n_w$ be the actual number of women other than $w_1$ in this range.
By a Chernoff bound,
\begin{align*}
 \Pr\Big[n_w \ge  (n-1) \cdot f(n)+ \sqrt{3(n-1)\cdot f(n)\cdot\ln(4/\delta)}\Big] \le e^{-\ln(4/\delta)} = \tfrac 1{4} \delta.
\end{align*}
If $f$ is decreasing sufficiently slowly, then for sufficiently large $n$,
$1+\sqrt{3(n-1)\cdot f(n)\cdot\ln(4/\delta)} \le n\cdot f(n)$;
therefore, in addition, $1+n_w \le 2n \cdot f(n)$ with probability at least $1 -\tfrac 1{4} \delta$.
Note that $1+n_w$ is the number of women in 
$W[r_{w_1}-f(n),1]$.
Call the probability $\tfrac14\delta$ event ${\mathcal B}_2$.

Next, consider an edge $(m_1,w)$ for which $m_1$'s private score is $s$.
For this edge to cause more than $f(n)$ loss to $m$, it suffices that $g(s) < g(1) -f(n)=g(1 -\delta/[8n\cdot f(n)])$ by \eqref{eqn:g-defn}.
This occurs with probability $\delta/[8n\cdot f(n)]$.

We now lower bound the probability that every match with a woman in $W[r_{w_1}-f(n),1]$
causes $m$ a loss of more than $f(n)$ if none of the events ${\mathcal B}_1$--${\mathcal B}_3$ occur. This probability
is at least:
\begin{align*}
    \Big(1 - \frac{\delta}{8n\cdot f(n)}\Big)^{2n\cdot f(n)}
    \ge 1 - \tfrac 14 \delta.
\end{align*}

Thus, by a union bound, modulo an overall failure probability of at most $\delta$,
$m_1$ has a loss of more than $f(n)$ on every incident edge, and hence in every perfect matching some agent ($m_1$ actually) incurs a loss of more than $f(n)$.
\end{proof}

We now consider the case where the derivatives w.r.t.\ the first variable are bounded, but there is no bound on the derivatives w.r.t.\ the second variable.

\begin{lemma}\label{obs::limits_of_result_2}
Let $f:N\rightarrow {\mathbb R}^+$ 
be a continuous, strictly decreasing sub-constant function,
and let $\sigma,\delta \in (0,1)$ be constants. Suppose the public ratings and private scores of the $n$ men and $n$ women are drawn uniformly and independently from $[0,1]$. Then there exist continuous and strictly increasing utility functions $U(.,.)$ and $V(.,.)$ having derivatives w.r.t.\ their second variables that are bounded by a constant, but for (at least) one of which the derivative w.r.t.\ their first variable is not bounded by any constant, having the following property:
for some $\nbar>0$,  
for all $n\ge \nbar$, 
with probability at least $1 - \delta$, in every perfect matching, some man $m_i$ or woman $w_i$ with public rating at least $\sigma$ receives utility less than $U(r_{w_i},1)-f(n)$ or $V(r_{m_i},1)-f(n)$, respectively.
\end{lemma}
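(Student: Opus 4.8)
The plan is to mirror the proof of Lemma~\ref{obs::limits_of_result}, interchanging the roles of the public ratings and the private scores. Now the ``irregular'' coordinate is the first one, so I would take $U(r,s)=h(r)+s$ and $V(r,s)=r+s$ (rescaled so both have range $[0,1]$), where $h:\mathbb{R}_+\to\mathbb{R}_+$ is continuous and strictly increasing with $\partial U/\partial s=1$ bounded but $\partial U/\partial r=h'(r)$ unbounded as $r\to 1$. Exactly as $g$ was built from an auxiliary $k(y)=8yf(y)/\delta$, I would define $h$ from a similar auxiliary function, tailored to $f$, so that $h$ stays bounded (making the rescaling harmless) and so that $h(1)-h\!\left(1-\tfrac{\delta}{8nf(n)}\right)=f(n)$ for every $n$; equivalently, the set $\{r:h(r)\ge h(1)-f(n)\}$ is exactly $[1-\tfrac{\delta}{8nf(n)},1]$. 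As in Lemma~\ref{obs::limits_of_result}, I may assume WLOG that $f$ decreases sufficiently slowly, since proving the statement for a slower $f$ yields it for every faster-decreasing one.

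I would then focus on the top man $m_1$. His benchmark is $U(r_{w_1},1)=h(r_{w_1})+1$, and since every woman $w$ has $r_w\le r_{w_1}$ and hence $h(r_w)\le h(r_{w_1})$, the edge $(m_1,w)$ gives $m_1$ loss at most $f(n)$ only if \emph{both} $s_{m_1}(w)\ge 1-f(n)$ and $h(r_w)\ge h(r_{w_1})-f(n)$. Introduce bad events: $\mathcal B_1$, that $r_{w_1}<1-\ln(4/\delta)/n$; $\mathcal B_2$, that $r_{m_1}<1-f(n)$ (which for large $n$ also forces $r_{m_1}\ge\sigma$, as $f$ is sub-constant); and $\mathcal B_3$, that too many women satisfy $h(r_w)\ge h(r_{w_1})-f(n)$. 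Each of $\mathcal B_1,\mathcal B_2$ has probability at most $\tfrac14\delta$ by the ``no point in a window of width $\Theta(\ln(1/\delta)/n)$ (resp.\ $\Theta(f(n))$)'' computation. Conditioned on $\overline{\mathcal B_1}$, $r_{w_1}$ lies within $\ln(4/\delta)/n\ll\tfrac{\delta}{8nf(n)}$ of $1$, so all women meeting the public-rating condition lie in a deterministic window near $1$; a Chernoff bound on their number gives $\Pr[\mathcal B_3\mid\overline{\mathcal B_1}]=o(1)\le\tfrac14\delta$. Finally, on $\overline{\mathcal B_1}\cap\overline{\mathcal B_2}\cap\overline{\mathcal B_3}$ each of the few qualifying women independently (over $m_1$'s own private scores) makes $(m_1,w)$ have loss $\le f(n)$ with probability at most $f(n)$, so a union bound over them bounds the chance that $m_1$ has \emph{any} low-loss edge by $\tfrac14\delta$. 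Combining, with probability at least $1-\delta$ man $m_1$ (with $r_{m_1}\ge\sigma$) has loss $>f(n)$ on every incident edge, so in every perfect matching some agent with public rating $\ge\sigma$ incurs loss $>f(n)$.

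The main obstacle is the step bounding the width of the public-rating window in $\mathcal B_3$. Because the irregular variable is now the public rating, ``$w$ is $f(n)$-close to $m_1$'s target'' is not a plain additive window $[r_{w_1}-f(n),1]$ (as it was for $g$) but the $h^{-1}$-pullback of such a window, and since $r_{w_1}$ sits slightly below $1$ one must pull back from $h(r_{w_1})$ rather than from $h(1)$, so the extra ``depth'' $h(1)-h(r_{w_1})$ must be shown negligible next to $f(n)$. Making this robust for very slowly decreasing $f$ is the delicate point: it appears to need either a sharper choice of the constants defining $h$, or a collective Hall-type argument over the $\Theta(\ln(1/\delta))$ topmost men instead of $m_1$ alone — showing that all of their $f(n)$-acceptable edges point into one common set of women that is provably smaller than the number of such men. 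I would attempt the single-$m_1$ route first and fall back to the Hall-type variant if the window bound cannot be pushed through for the slowest admissible $f$.
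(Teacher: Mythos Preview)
Your proposal is essentially the paper's proof, and you have correctly located the only delicate step. The paper resolves precisely the obstacle you flag, and it does so by your first suggested route (a sharper choice of constants), not the Hall-type fallback.

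Concretely, instead of setting $h(1)-h(1-\nu)=f(n)$ with $\nu=\delta/(8nf(n))$, the paper defines $\widetilde g$ (your $h$) with an extra factor of~$3$, so that $\widetilde g(1)-\widetilde g(1-\nu)=3f(n)$. It then shows, using only the ``$f$ decreases slowly enough'' freedom (specifically $3f(2n)\le 2f(n)$ and $\tfrac32\,f(n)/f(2n)\ge 1$), that $\widetilde g(1-\nu/3)\ge \widetilde g(1)-2f(n)$. Since for large $n$ one has $\ln(4/\delta)/n\le \nu/3$, the event $r_{w_1}\ge 1-\nu/3$ holds with probability $\ge 1-\delta/4$, and on this event $\widetilde g(r_{w_1})-\widetilde g(1-\nu)\ge f(n)$. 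Hence every woman with $r_w<1-\nu$ causes $m_1$ loss exceeding $f(n)$ from the public-rating term alone, and the ``qualifying'' window is the fixed interval $[1-\nu,1]$, whose population you bound by $2n\nu$ via Chernoff. The remaining private-score step is exactly your union bound: $(1-f(n))^{2n\nu}\ge 1-\delta/4$.

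So your plan is right; just build the factor of~$3$ into $h$ from the start and the window issue disappears without any collective/Hall argument.
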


\begin{proof}
We will give an example where, in every stable matching,  man $m_1$ receives  utility less than $U(r_{w_1},1)-f(n)$. 
The analysis has the same thrust as the one for the preceding lemma.

Let $U(r,s)=\widetilde{g}(r)+s$, where $\widetilde{g}$ is defined below is a very similar way to the $g$ in the proof of Lemma~\ref{obs::limits_of_result}.
\begin{align*}
 \hspace*{1in}   \widetilde{k}(y) &= \frac { 8y\cdot f(y)}{\delta}& \\
    \widetilde{g}(r) &= \widetilde{g}(1) - 3\cdot f(\widetilde{k}^{-1}(1/(1-r))),~~&\text{if}~r<1\hspace*{1in} \\
    \widetilde{g}(1) &= 3\cdot f(\widetilde{k}^{-1}((1))) & \text{(so $\widetilde{g}(0) = 0$).}\hspace*{1in}
\end{align*}
Now, setting $r = 1-\frac {\delta}{8n\cdot f(n)}\triangleq 1 -\nu$, gives $\widetilde{g}(r) = \widetilde{g}(1) - 3\cdot f(n)$.
Again, strictly speaking, we should rescale the utility so that its range is $[0,1]$. 

Next, we will show that $\widetilde{g}(1-\nu/3) -\widetilde{g}(r) \ge f(n)$.
As $\widetilde{g}(r) = \widetilde{g}(1) - 3\cdot f(n)$, it suffices to show that $\widetilde{g}(1-\nu/3) \ge \widetilde{g}(1) -2\cdot f(n)$, which we do as follows:
\begin{align*}
    \widetilde{g}(1-\nu/3) &= \widetilde{g}(1) -  3\cdot f\Big(\widetilde{k}^{-1}\Big(\frac{3}{\nu}\Big)\Big) \\
    &= \widetilde{g}(1) -  3\cdot f\Big(\widetilde{k}^{-1}\Big(3\cdot \frac{8n\cdot f(n)}{\delta}\Big)\Big)\\
    & = \widetilde{g}(1) -  3\cdot f\Big(\widetilde{k}^{-1}\Big(\frac 32\cdot \frac{f(n)}{f(2n)} \cdot \frac{8\cdot 2n \cdot f(2n)}{\delta}\Big)\Big)
\end{align*}
If $f$ is sufficiently slowly decreasing, for all $n$,
$\frac 32\cdot \frac{f(n)}{f(2n)}\ge 1$, and as $\widetilde{k}^{-1}$ is increasing and $f$ is
decreasing, the RHS of the above expression is at least
\begin{align*}
    \widetilde{g}(1) -  3\cdot f\Big(\widetilde{k}^{-1}\Big( \frac{8\cdot 2n \cdot f(2n)}{\delta}\Big)\Big)=\widetilde{g}(1) - 3\cdot f(2n)
    \ge \widetilde{g}(1) -2\cdot f(n),
\end{align*}
as $3\cdot f(2n) \ge 2\cdot f(n)$, if $f$ decreases sufficiently slowly.

The probability that no woman has a public rating in the range
$[1-\ln (4/\delta)/n,1]$ is at most $\tfrac 14 \delta$.
As $f$ is a sub-constant function, for large enough $n$,
$\ln (4/\delta)/n \le  \tfrac 13 \cdot \delta/[8n\cdot f(n)]= \tfrac 13\nu$. 

Therefore, for large enough $n$,
with probability at least $1-\tfrac 14\delta$, 
\begin{align*}
\widetilde{g}(r_{w_1}) - \widetilde{g}(1-\nu)\ge \widetilde{g}(1 - \nu/3) - \widetilde{g}(1-\nu) \ge f(n).
\end{align*}
Thus, with probability at least $1-\tfrac 14\delta$, all women with public rating less than $(1-\nu)$ will cause $m_1$ a loss of more than $f(n)$.
Call the probability $\tfrac14\delta$ event ${\mathcal B}_1$.

Let $n_w$ be the actual number of women, aside $w_1$, with public rating at least $(1-\nu)$.
$\text{E}\big[n_w\big]\le(n-1)\nu$.
By a Chernoff bound,
\begin{align*}
 \Pr\Big[n_w \ge (n-1)\nu + \sqrt{3(n-1)\nu \cdot \ln(4/\delta)}\Big] \le \exp(-\ln(4/\delta))= \tfrac 14\delta.
\end{align*}
However slowly $f$ is decreasing, for large enough $n$, $1+\sqrt{3(n-1)\nu \cdot \ln(4/\delta)}\le n \nu$,
which implies $1+n_w \le 2n\nu$.
Call the probability $\tfrac14\delta$ event ${\mathcal B}_2$.

The same analysis as in the proof of Lemma~\ref{obs::limits_of_result}
shows that, with failure probability at most
$\tfrac 14 \delta$, $r_{m_1} \ge 1 -f(n)$,
and as $f$ is a sub-constant function, for large enough $n$,
$r_{m_1} \ge 1 -f(n) \ge \sigma$.
Call the probability $\tfrac14\delta$ event ${\mathcal B}_3$.

Next, note that an edge $(m_1,w)$ causes $m_1$ a loss of more
than $f(n)$ based on the private score alone with probability $1-f(n)$.

Thus, if none of the events ${\mathcal B}_1$--${\mathcal B}_3$ occur, the probability that every edge incident on $m_1$ causes it a loss of more than $f(n)$ is at least
\begin{align*}
 \big(1 - f(n)\big)^{2n\nu}=\big(1 - f(n)\big)^{\delta/[4\cdot f(n)]}\ge 1 - \delta/4,
\end{align*}
if $\delta \le 1$.

Therefore, by a union bound, modulo an overall failure probability of at most $\delta$,
$m_1$ has a loss of more than $f(n)$ on every incident edge, and hence in every perfect matching some agent ($m_1$ actually) incurs a loss of more than $f(n)$.
\end{proof}


\section{$\eps$-Bayes-Nash Equilibria}
\label{sec::eps-BN-equil-full-match}

In this section, we demonstrate that there is a $\eps$-Nash equilibrium in which with high probability all agents have low losses.
To obtain this result, we need the stronger bounded-derivatives condition, namely we need both lower and upper bounds for the two derivative expressions
(see Definition~\ref{def::strong-bdd-deriv}).
We will assume that both $U$ and $V$ satisfy the strong bounded derivative property.

\hide{
\begin{defn} \label{def::strong-bdd-deriv}
A function $f(x,y):{\mathbb R}^2\rightarrow {\mathbb R}^+$ has \emph{$(\rho_{\ell},\rho_u,\mu_{\ell},\mu_u)$-bounded derivatives} if for all $(x,y)\in {\mathbb R}^2$,
\begin{align*}
   &\text{\rm The ratio bound:}~~\rho_{\ell} \le\pdv{f}{x}\Big/\pdv{f}{y}\le \rho_u.\\ 
&\text{\rm The first derivative bound:}~~\mu_{\ell}\le \pdv{f}{x}\leq \mu_u.
\end{align*}
Then $f$ is said to have the \emph{strong bounded derivative} property. 
\end{defn}
}

Our analysis here will repeatedly use weak stochastic dominance to justify the application of Chernoff bounds.
To avoid repetition, we summarize the technique here.
Suppose ${\mathcal X} =\{X_1,X_2,\ldots,X_m\}$ is a collection of not-necessarily independent binary random variables. Suppose that $\Pr[X_i=1 | {\mathcal X}\setminus\{X_i\}] \le p_i$.
Let $Y_i$ be a binary variable with $\Pr[Y_i=1]=p_i$, with the $Y_i$ being independent.
Clearly, for all $z$, $\Pr[\sum_i X_i\ge z] \le \Pr[\sum_i Y_i\ge z]$.
Thus, if by means of a Chernoff bound, we show that
$\Pr[\sum_i Y_i\ge z] \le q$, then $\Pr[\sum_i X_i\ge z]\le q$ also.
Henceforth, we will justify this application of a Chernoff bound to ${\mathcal X}$ by saying it uses \emph{stochastic dominance}.

Our analysis will build on the bound shown in Theorem~\ref{thm::many-to-one}.
It will be helpful to review the randomness that was used.
The probability that events ${\mathcal B}_1$ and ${\mathcal B}_2$ do not occur is based on the public ratings of the men and women.
The bound on the probability that event ${\mathcal B}_3$ occurs for a particular man $m_i$
is based on the private scores of the edges to $m_i$,
namely the private score of the woman $w_j$ for the man $m_i$, for each such edge $(m_i,w_j)$.
Note that ${\mathcal B}_3$ is the bad event in Claim~\ref{clm::many-man-acceptable-props}.
The bound on the final error term is based on the private scores of $m_i$ for their edges to the women $w_j$, as given in Claim~\ref{clm::one-man-accept-prop}.
We will let ${\mathcal B}_7$ denote the bad event in Claim~\ref{clm::one-man-accept-prop}, namely that all the proposals to $m_i$ cause him too large a loss.
Symmetric bounds apply to the women.

In the analysis that follows, we will identify additional bad events concerning there being too few or too many agents in a range of public ratings; these will depend on the range. We will also bound the probability of losses for bottommost women and men using the private scores of proposals to these agents; these private scores
will be disjoint from the ones used in the bounds mentioned in the previous paragraph.

In the remainder of this section, $m$ and $w$ are always aligned, as are $m'$ and $w'$, $m''$ and $w''$, etc.

In addition, in order to improve some of the bounds, we will restate losses in terms of public ratings and private scores.
A quick inspection of the proof of Theorem~\ref{thm:basic-bdd-deriv-result}
shows that the high probability bound on the loss for a man $m$, whose aligned woman $w$ has public rating $r_w\ge \sigmabar$, is at most $U(r_w,1) - (r_w-\sigmabar,1)$ (recall that $\alpha = \tfrac 14 \sigmabar)$.
Also note that is suffices to set $\sigmabar= [128(c+2)\ln n/(\rho_l^2 n)]^{1/3}$.
Similarly, if $r_w \ge \sigmabar/t$, where $t>1$,
the bound on the loss is at most
$L^m_t\triangleq U(r_w,1) - U(r_w-\sigmabar t^2,1)$ (see the proof of Theorem~\ref{thm::low_loss_bounded}).
Analogously, for a woman $w$,
if $r_m \ge \sigmabar/t$, where $t>1$,
the bound on the loss is at most
$L^w_t\triangleq V(r_m,1) - V(r_m-\sigmabar t^2,1)$.

As already noted, these bounds use the private scores of proposals to men and women with public ratings of at least $\sigmabar/t$.

Shortly, we will specify maximum values $t_m$ and $t_w$ of $t$ for the men and women, respectively.
We will demonstrate the existence of a stable match in which w.h.p.\ every man $m$
has a loss of at most 
$L^m_{t_m}$, and every
woman $w$ has a loss of at most
$L^w_{t_w}$.
For this to be meaningful when $r_w-\sigmabar t_m^2 < 0$, we extend the definition of $U$ to this domain as follows. For $r<0$, $\pdv{U(r,s)}{r}=\mu_{\ell}$ and
$\pdv{U(r,s)}{s}=\rho_{\ell}$.
We proceed analogously to handle the case that
$r_m-\sigmabar t_w^2 < 0$.

We define $t_m$ and $t_w$ using  suitable constants $\eta>1$ and
$0<\nu<1$, which we will specify later. 
We set $\sigma_m = \nu/ n^{1/3}$ and $\sigma_w= \eta/n^{1/3}$.
We then define $t_m= \sigmabar/\sigma_m$ and
$t_w= \sigmabar/\sigma_w$.
Note that $\sigma_w/\sigma_m= \eta/\nu$.

The maximum loss will occur only to some of the agents with low public ratings. We identify the potentially high-loss agents as follows.

\begin{defn}
\label{def::bottom-zone}
Let $w'$ be the bottommost woman with a public rating of at least $\sigma_m$ and let $m'$ be aligned
with $w'$. Then the \emph{bottom zone} of men comprises
the set $B_M\triangleq M[0,r_{m'})$, and the top zone $T_M$ comprises $M[r_{m'},1]$.
Similarly, let $m''$ be the bottommost man with a rating of at least $\sigma_w$ and let $w''$ be aligned
with $m''$. Then the \emph{bottom zone} of women comprises
the set $B_W\triangleq W[0,r_{w''})$,
and the top zone $T_W$ comprises $W[r_{w''},1]$.
\end{defn}

Note that by Theorem~\ref{thm::low_loss_bounded}, in any stable match, every man $m \in T_M$ has loss at most $L_{t_m}^m$.
Likewise, every woman $w \in T_W$ has loss at most $L_{t_w}^w$.

We also want to distinguish those edges which yield men a utility of at least $U(0,1)$ and women a utility of at least $V(0,1)$.

\begin{defn}
\label{def::high-edges}
An edge $(m_i,w_j)$ is \emph{man-high} if $U(r_{w_j},s_{m_i}(w_j))\ge U(0,1)$, and otherwise it is \emph{man-low}; 
it is \emph{woman-high} if $V(r_{m_i},s_{w_j}(m_i)) \ge V(0,1)$, and otherwise it is \emph{woman-low}.
\end{defn}

We begin by identifying two bad events ${\mathcal B}_8$
and ${\mathcal B}_9$ and bounding the probabilities they occur.

\smallskip
\noindent
{\bf Event ${\mathcal B}_8$}.
Let ${\mathcal E}_8$ be the event that the number of men in $B_M$ lies in the range $(\tfrac 12 \sigma_m \cdot n,2\sigma_m\cdot n)$, and each of these men has public rating less than $3\sigma_m$, together with the corresponding event for women.
Let ${\mathcal B}_8$ be the complementary event.

\begin{lemma}
\label{lem::max-score-bottom-zone}
${\mathcal B}_8$ occurs with probability at most $\exp(-\sigma_m\cdot n/3)+ \exp(-\sigma_m\cdot n/6)+\exp(-\sigma_m\cdot n/8)+\exp(-\sigma_w\cdot n/3)+ \exp(-\sigma_w\cdot n/6)+\exp(-\sigma_w\cdot n/8) \le 6\exp(-\sigma_m\cdot n/8)$.
This bound is based on the independent random choices of public ratings for the men and women.
\end{lemma}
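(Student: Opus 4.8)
The plan is to reduce the bad event $\mathcal B_8$ to a union of six elementary tail events about the number of agents falling in a short interval of public ratings, and then to dispatch each by a Chernoff bound. The crucial first step is a combinatorial identity that ties the size of the bottom zone to a count of low-rated agents on the \emph{opposite} side. Let $i$ be the public rank of $w'$. Since $w'$ is, by definition, the lowest-rated woman with public rating at least $\sigma_m$, the women with rating $\ge \sigma_m$ are exactly $w_1,\dots,w_i$, so $i=|W[\sigma_m,1]|$ and hence $|W[0,\sigma_m)| = n-i$. As $m'$ is aligned with $w'$ we have $m'=m_i$, so $B_M = \{m_{i+1},\dots,m_n\}$ and therefore $|B_M| = n-i = |W[0,\sigma_m)|$ (ratings are distinct almost surely). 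Moreover every man in $B_M$ has public rating below $3\sigma_m$ exactly when the topmost man of $B_M$, namely $m_{i+1}$, does, which in turn holds iff at most $i$ men have rating $\ge 3\sigma_m$, i.e.\ iff $|M[0,3\sigma_m)| \ge n-i = |W[0,\sigma_m)|$. The mirror statements hold for $B_W$, with $\sigma_w$ in place of $\sigma_m$ and the roles of the two sides exchanged (so $|B_W| = |M[0,\sigma_w)|$, and all of $B_W$ lies below $3\sigma_w$ iff $|W[0,3\sigma_w)| \ge |M[0,\sigma_w)|$).

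Granting these identities, the men's part of $\mathcal E_8$ follows from the conjunction of three events: (a) $|W[0,\sigma_m)| < 2\sigma_m n$; (b) $|W[0,\sigma_m)| > \tfrac12\sigma_m n$; and (c) $|M[0,3\sigma_m)| \ge 2\sigma_m n$ — indeed (a) gives $|B_M| < 2\sigma_m n$, (b) gives $|B_M| > \tfrac12\sigma_m n$, and (a) together with (c) gives $|W[0,\sigma_m)| < 2\sigma_m n \le |M[0,3\sigma_m)|$, which is the count comparison guaranteeing that all of $B_M$ lies below $3\sigma_m$. Hence the men's part of $\mathcal B_8$ lies in $\neg(a)\cup\neg(b)\cup\neg(c)$. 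Now $|W[0,\sigma_m)|$ and $|M[0,3\sigma_m)|$ are sums of independent Bernoulli indicators over the women's, resp.\ men's, public ratings, with means $\sigma_m n$ and $3\sigma_m n$, so standard Chernoff bounds give $\Pr[\neg(a)] = \Pr[|W[0,\sigma_m)|\ge 2\sigma_m n] \le \exp(-\sigma_m n/3)$, $\Pr[\neg(b)] = \Pr[|W[0,\sigma_m)|\le \tfrac12\sigma_m n] \le \exp(-\sigma_m n/8)$, and $\Pr[\neg(c)] = \Pr[|M[0,3\sigma_m)| \le (1-\tfrac13)\cdot 3\sigma_m n] \le \exp(-\sigma_m n/6)$. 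The three analogous events for the women's part, obtained by replacing $\sigma_m$ with $\sigma_w$ (the analogue of (c) now using the men's public ratings), contribute $\exp(-\sigma_w n/3)$, $\exp(-\sigma_w n/8)$, and $\exp(-\sigma_w n/6)$. A union bound over all six events gives the displayed sum, and since $\sigma_m = \nu/n^{1/3} < \eta/n^{1/3} = \sigma_w$ and $\exp(-\sigma_m n/8) \ge \exp(-\sigma_m n/6) \ge \exp(-\sigma_m n/3)$, every term is at most $\exp(-\sigma_m n/8)$, so the sum is at most $6\exp(-\sigma_m n/8)$. Only public ratings are used, and the men's and women's ratings are mutually independent, so each Chernoff application is legitimate as stated.

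The only real obstacle here is the first step: seeing that $|B_M|$ equals a count of the low-rated women and that ``all of $B_M$ lies below $3\sigma_m$'' is equivalent to a clean inequality between two such counts; once these reductions are in place the rest is a routine Chernoff-plus-union-bound computation. A minor degenerate case to mention is when $w'$ (or $m''$) fails to exist because no woman has rating at least $\sigma_m$: then $|W[0,\sigma_m)| = n > 2\sigma_m n$ for large $n$, which already lies inside $\neg(a)$, so no separate argument is needed. (I would also note that weak stochastic dominance, used elsewhere in this section, is not required here, since the relevant counts are genuine sums of independent indicators.)
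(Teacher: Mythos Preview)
Your proof is correct and follows essentially the same approach as the paper: both identify $|B_M|=|W[0,\sigma_m)|$ via alignment, apply the same three Chernoff bounds (upper and lower tails on $|W[0,\sigma_m)|$ with mean $\sigma_m n$, and a lower tail on $|M[0,3\sigma_m)|$ with mean $3\sigma_m n$), and then invoke symmetry for the women's part. Your write-up is in fact a bit more explicit than the paper's in spelling out why $|M[0,3\sigma_m)|\ge 2\sigma_m n$ together with $|B_M|<2\sigma_m n$ forces every man in $B_M$ below $3\sigma_m$, and in handling the degenerate case where $w'$ does not exist, but the underlying argument is identical.
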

\begin{proof}
In expectation, there are $\sigma_m\cdot n$ women with public rating less than $\sigma_m$. These choices are based on the women's independent public scores. Hence, by a Chernoff bound,
the probability that there are at least $2\sigma_m\cdot n$ women with public rating less than $\sigma_m$ is at most
$\exp(-\sigma_m\cdot n/3)$,
and the probability that there are at most
$\tfrac 12 \sigma_m\cdot n$ women with public rating less that $\sigma_m$ is at most $\exp(-\sigma_m\cdot n/8)$.
But these are the women aligned with the men in $B_M$. Hence these bounds also apply to the number of men in $B_M$.

Now we bound the probability that there are at most $2\sigma_m\cdot n$ men in the public rating range $[0,3\sigma_m)$.
The expected number of men in this range is $3\sigma_m\cdot n$.
This is based on their independent public ratings.
Then, by a Chernoff bound, there are at most $2\sigma_m\cdot n$ men in this range with probability at most
$\exp(-\sigma_m\cdot n/6)$.

Analogous bounds apply to the women.
\end{proof}

\noindent
{\bf Event ${\mathcal B}_9$}. This is the event that
$r_{m'} < 4\sigma_m$.

\begin{lemma}
\label{lem::smprime-bound}
If ${\mathcal B}_8$ does not occur, then
${\mathcal B}_9$ occurs with probability at most $\exp(-\sigma_m\cdot n)$. 
This bound is based on the independent random choices of public ratings for the men.
\end{lemma}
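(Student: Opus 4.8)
The plan is to use from the good event ${\mathcal E}_8\triangleq\overline{{\mathcal B}_8}$ only its clause that every man in $B_M$ has public rating below $3\sigma_m$, and to observe that, together with ${\mathcal B}_9$ (which throughout I treat as the bad event that $r_{m'}\ge 4\sigma_m$), this forces a band of public ratings of width $\sigma_m$ to contain no man. First recall the structure: if $w'$ has rank $j$ --- equivalently, $j$ is the number of women with public rating at least $\sigma_m$ --- then $m'=m_j$, so $r_{m'}=r_{m_j}$, and $B_M=M[0,r_{m'})=\{m_{j+1},m_{j+2},\dots,m_n\}$.

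The main step is a deterministic set inclusion. Suppose ${\mathcal E}_8$ and ${\mathcal B}_9$ both hold. By ${\mathcal E}_8$, every man lying in $B_M=M[0,r_{m'})$ has public rating less than $3\sigma_m$; equivalently, no man has public rating in $[3\sigma_m,r_{m'})$, since any such man would belong to $B_M$ yet have rating at least $3\sigma_m$. Since ${\mathcal B}_9$ gives $r_{m'}\ge 4\sigma_m$, we have $[3\sigma_m,4\sigma_m)\subseteq[3\sigma_m,r_{m'})$, so no man has public rating in $[3\sigma_m,4\sigma_m)$; that is,
\[
{\mathcal B}_9\cap{\mathcal E}_8 \;\subseteq\; \bigl\{\, M[3\sigma_m,4\sigma_m)=\emptyset \,\bigr\}.
\]
It then remains only to bound the probability of the right-hand event. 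For $n$ large enough that $4\sigma_m\le 1$, each of the $n$ men independently has public rating in the length-$\sigma_m$ interval $[3\sigma_m,4\sigma_m)$ with probability exactly $\sigma_m$, so $\Pr\bigl[M[3\sigma_m,4\sigma_m)=\emptyset\bigr]=(1-\sigma_m)^n\le e^{-\sigma_m n}$. Hence $\Pr[{\mathcal B}_9\cap{\mathcal E}_8]\le e^{-\sigma_m n}$, which gives the lemma; and since the final estimate depends only on the men's public ratings and uses only their mutual independence, this also accounts for the lemma's closing remark.

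I do not anticipate a genuinely hard step. The only thing to get right is the implication that ${\mathcal E}_8$ together with $r_{m'}\ge 4\sigma_m$ empties the band $[3\sigma_m,4\sigma_m)$ of men, after which the estimate is the elementary fact that a width-$\sigma_m$ window misses all $n$ independent uniform points only with probability $(1-\sigma_m)^n$. One minor point: the argument delivers the joint bound $\Pr[{\mathcal B}_9\cap{\mathcal E}_8]\le e^{-\sigma_m n}$, which is exactly the form consumed by the union bound over bad events in the sequel; if instead a conditional statement $\Pr[{\mathcal B}_9\mid{\mathcal E}_8]$ is wanted, the same bound holds up to the harmless factor $1/\Pr[{\mathcal E}_8]=1+o(1)$, since $\Pr[{\mathcal B}_8]=o(1)$ by Lemma~\ref{lem::max-score-bottom-zone}. (Should one prefer not to invoke the ``$<3\sigma_m$'' clause, a cruder alternative uses only $|B_M|=n-j<2\sigma_m n$ from ${\mathcal E}_8$ and a Chernoff bound on the number of men with rating below $4\sigma_m$, whose mean is $4\sigma_m n$; this still yields $e^{-\Omega(\sigma_m n)}$, though with a weaker constant.)
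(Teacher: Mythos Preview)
Your proposal is correct and follows essentially the same argument as the paper: under $\overline{{\mathcal B}_8}$ the men in $B_M$ all have rating below $3\sigma_m$, so if $r_{m'}\ge 4\sigma_m$ the interval $[3\sigma_m,4\sigma_m)$ contains no men, an event of probability at most $(1-\sigma_m)^n\le e^{-\sigma_m n}$. You also correctly flag that the paper's stated definition of ${\mathcal B}_9$ as ``$r_{m'}<4\sigma_m$'' is a typo --- the proof (both yours and the paper's) bounds the probability of $r_{m'}\ge 4\sigma_m$, which is the bad event actually used downstream.
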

\begin{proof}
As ${\mathcal B}_8$ does not occur, by Lemma~\ref{lem::max-score-bottom-zone},
$B_M\subset M[0,3\sigma_m)$.
Therefore, if $r_{m'}\ge 4\sigma_m$,
$M[3\sigma_m,4\sigma_m)$ is empty.
But the probability that $M[3\sigma_m,4\sigma_m)$ is empty is at most $(1-\sigma_m)^n\le \exp(-\sigma_m\cdot n)$, and it follows that this is the probability that
$r_{m'}\ge 4\sigma_m$.
\end{proof}

The desired stable match will be found by running the woman-proposing DA
when each man $m$, whose aligned woman $w$ has public rating less than $\sigma_m$, applies a truncation strategy
of refusing proposals that provide a loss greater than $L^m_{t_m}$.
No truncation is applied by men with higher public ratings, but we already know their losses
are bounded by $L^m_{t_m}$.
The women apply a symmetric truncation, meaning that a woman $w$ will only propose edges that provide a loss of at most $L^w_{t_w}$.

Our analysis considers the result of running woman-proposing DA on the truncated edge set.
We begin by observing that every man in $T_M$ is matched, and similarly every woman in $T_W$ is matched.
We then argue that every woman in $B_W$ will be matched,
from which we deduce that every man in $B_M$ must also be matched.

As we showed in Theorem~\ref{thm::low_loss_bounded},
with failure probability $O(n^{-(c+1)})$, in every stable match, every man $m$ in $T_M$ will have a loss of at most 
$L^m_{t_m}$.
Furthermore, this match is achieved with the edge set cut as in Lemma~\ref{lem::key_bounded}.
As the men in $T_M$ do not truncate any edges, all the edges required for Lemma~\ref{lem::key_bounded} remain present despite the men's truncations.
Also, all the edges used by this lemma are women-high,
and the women do not truncate such edges.
Thus the result of Lemma~\ref{lem::key_bounded} continues to apply as does Theorem~\ref{thm::low_loss_bounded}.

A symmetric argument shows that with failure probability $O(n^{-(c+1)})$, in every stable match, every woman $w$ in $T_W$ will have a loss of at most 
$L^w_{t_w}$.

To analyse what happens to the women in $B_W$ we proceed as follows.

\hide{
Furthermore, the construction in Theorem~\ref{thm:basic-bdd-deriv-result}
ensures that all these men receive a Phase 1 proposal that is woman-high, for these are edges the women leave untruncated. Thus, in Phase 1, w.h.p., all the men in $T_M$ are matched.

Our analysis considers a run of woman-proposing DA which proceeds in two phases.
\RJC{I don't believe we actually use the phases}

\smallskip
\noindent
{\bf Phase 1}.
Recall that $w'$ is the woman with the smallest public rating greater than or equal to $\sigma_m$, 
and that $m'$ is aligned with $w'$. 
Phase 1 applies a cut at $r_{m'}-\tfrac 14\sigma_m$.

\smallskip
\noindent
{\bf Phase 2}.
Each unmatched women keeps proposing the next edge on her preference list. 

\smallskip

In every stable match, every man $m$ in $T_M$ will have a loss of at most 
$L^m_{t_m}$.
These men do not truncate any edges.
Furthermore, the construction in Theorem~\ref{thm:basic-bdd-deriv-result}
ensures that all these men receive a Phase 1 proposal that is woman-high, for these are edges the women leave untruncated. Thus, in Phase 1, w.h.p., all the men in $T_M$ are matched.

Consequently, it suffices to show that during Phase 2 every man in $B_M$ becomes matched.
For this implies all the men are matched and therefore all the women are matched.
}
We observe that w.h.p.:\\
i. The men in $B_M$ receive at most $\tfrac14|B_M|$ proposals which are both man-high and woman-high.\\
ii. The men in $B_M$ receive at most $\tfrac14 |B_M|$ proposals which are both man-high and woman-low.\\
iii. We conclude that at most half the men in $B_M$ will receive a man-high proposal.\\
iv. The proposals from $B_W$ to $B_M$ that are both man-low and woman-low behave in the same way as in the uniform random model, up to a constant factor. This will mean that it suffices that the women in $B_W$ have $\Theta(\ln^2 n)$ man-low and women-low edges to the men in $B_M$ (which they do), and ensures that each man in $B_M$ receives at least one proposal.

Our analysis will also be concerned with the following subsets $W_h$ of women, for integer $h\ge 0$; $W_h$ comprises the women aligned with the men in $M[2^h\sigma_w,2^{h+1}\sigma_w)$.

\smallskip\noindent
{\bf Event ${\mathcal B}_{10}$}.
${\mathcal B}^h_{10}$ is the event that  $|W_h|\ge 2^{h+1}\eta\cdot n^{2/3}$.
And ${\mathcal B}_{10} = \cup_{h\ge 0}{\mathcal B}^h_{10}$.

\begin{lemma}
\label{lem::women-in-intermed-zone}
${\mathcal B}^h_{10}$ occurs with probability at most $\exp(-2^{h}\eta \cdot n^{2/3}/3)$.
And ${\mathcal B}_{10}$ occurs with probability at most
$2\exp(-\eta \cdot n^{2/3}/3)$, if $n^{2/3} \ge 3$.
These bounds are based on the independent random choices of public ratings for the men.
\end{lemma}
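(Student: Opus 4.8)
The plan is to reduce the statement about $|W_h|$ to a standard concentration bound on the number of men whose public ratings fall in a fixed dyadic interval. First I would note that, since the one-to-one alignment $m_i \leftrightarrow w_i$ is a bijection between the two sides, $W_h$ has exactly the same cardinality as $M[2^h\sigma_w, 2^{h+1}\sigma_w)$, i.e.\ $|W_h|$ is the number of men with public rating in that interval. Recalling $\sigma_w = \eta/n^{1/3}$, the interval has length $2^h\sigma_w = 2^h\eta/n^{1/3}$ (or is truncated at $1$, which only shortens it), so $|W_h|$ is a sum of $n$ independent Bernoulli indicators, one per man, whose success probabilities are all at most $2^h\sigma_w$ because the men's public ratings are drawn independently and uniformly from $[0,1]$. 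Hence $\mathbb{E}[\,|W_h|\,] \le \mu_h := 2^h\eta\, n^{2/3}$. (If $2^h\sigma_w \ge 1$ the interval is empty, $|W_h| = 0$, and ${\mathcal B}^h_{10}$ cannot occur, so every bound below is trivially valid in that case.)

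Next I would apply the multiplicative Chernoff upper tail in the form $\Pr[X \ge 2\,\mathbb{E}[X]] \le \exp(-\mathbb{E}[X]/3)$ already used throughout the paper, with $X = |W_h|$. Using $\mathbb{E}[X] \le \mu_h$, this yields $\Pr[{\mathcal B}^h_{10}] = \Pr[\,|W_h| \ge 2^{h+1}\eta\, n^{2/3}\,] \le \exp(-\mu_h/3) = \exp(-2^h\eta\, n^{2/3}/3)$, which is the first claim.

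For the second claim I would union bound over $h \ge 0$ and sum the resulting super-geometrically decaying series. Writing $a := \eta\, n^{2/3}/3$, we get $\Pr[{\mathcal B}_{10}] \le \sum_{h\ge 0} \exp(-2^h a) \le \sum_{h\ge 0}\exp(-(h+1)a) = e^{-a}/(1-e^{-a})$, where the middle inequality uses $2^h \ge h+1$ for integers $h\ge 0$. When $n^{2/3}\ge 3$ and $\eta > 1$ we have $a \ge \eta > 1 > \ln 2$, so $e^{-a} < 1/2$ and therefore $1/(1-e^{-a}) < 2$; this gives $\Pr[{\mathcal B}_{10}] \le 2e^{-a} = 2\exp(-\eta\, n^{2/3}/3)$, as required.

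The whole argument is routine; there is no substantive obstacle. The only points that need a little care are that the dyadic intervals stick out past $1$ for large $h$ (handled by observing that the mean only decreases and the event becomes impossible once $2^h\sigma_w \ge 1$), that the Chernoff upper tail remains valid when one has only an upper bound on the mean, and the elementary estimate $\sum_{h\ge 0} e^{-2^h a} \le 2e^{-a}$ valid for $a \ge \ln 2$.
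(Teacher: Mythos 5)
Your proof is correct and follows essentially the same route as the paper's: identify $|W_h|$ with the number of men whose public ratings land in the interval $[2^h\sigma_w,2^{h+1}\sigma_w)$, compute the mean $2^h\eta n^{2/3}$, apply the multiplicative Chernoff upper tail at twice the mean, and union bound over $h$. The paper states the summation step without detail, so your explicit estimate $\sum_{h\ge 0}e^{-2^h a}\le 2e^{-a}$ for $a\ge\ln 2$ and your handling of intervals that protrude past $1$ are welcome additions, not deviations.
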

\begin{proof}
The expected number of men in $M[2^h\eta  /n^{1/3},2^{h+1}\eta/n^{1/3})$ is $2^h\eta n^{2/3}$, and these choices are based on the men's public ratings.
Thus, by a Chernoff bound, there are at least $2^{h+1}\eta n^{2/3}$ men in this range with
probability at most 
$\exp(-2^h\eta n^{2/3}/3)$.
This is also the bound on the number of women aligned with these men.

The second claim follows on summing the probability bound over $h$, using the assumption that $n^{2/3} \ge 3$.
\end{proof}

\begin{lemma}
\label{lem::num-matches-WM-BL}
Suppose that none of ${\mathcal B}_1$--${\mathcal B}_{10}$ occur.
Then there are at most $\tfrac 14|B_M|$ matches between women in $T_W$ and men in $B_M$, with failure probability at most $\exp(-|B_M|/24)$,
if $\eta\ge 6\nu$ and $4(\eta/\nu) \cdot \exp(-(\eta/2)^3\rho_{\ell}^2/128) \le \tfrac 1{10}$.
\end{lemma}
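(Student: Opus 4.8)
The plan is to bound $\mathbb{E}[X]$, where $X$ is the number of men of $B_M$ that end up matched to a woman of $T_W$ in the woman-proposing DA on the truncated edge set, by a small constant fraction of $|B_M|$, and then to conclude by a Chernoff bound. The guiding observation is that a woman $w\in T_W$ can be matched to a man $m\in B_M$ only if she ``fails'' to secure a match at or above her own public-rating level: on $\overline{{\mathcal B}_8}$ every $m\in B_M$ has $r_m<3\sigma_m$, whereas $r_w\ge r_{w''}$ with $r_{w''}=\Theta(\sigma_w)$, and the hypothesis $\eta\ge 6\nu$ is precisely what makes $3\sigma_m$ lie comfortably below $\sigma_w$. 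Since woman-proposing DA is woman-optimal (Observation~\ref{obs::DA_optimality}), if $w$ has any stable partner at or above her level then her actual match has strictly larger utility than $V(r_m,s_w(m))$ for every $m\in B_M$, so she is not matched into $B_M$. Hence it suffices to bound, for each $w\in T_W$, the probability that $w$ has no such good partner.

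To organize this I would split $T_W=\bigcup_{h\ge 0}W_h$ into the bands already introduced, so (on the relevant concentration events, and up to constants) a woman of $W_h$ has aligned public rating, hence $r_w$, at least $2^{h}\sigma_w$. For $h$ at least some threshold $h^{*}=\Theta(\log\ln n)$, the loss bound of Theorem~\ref{thm::low_loss_bounded}, applied with the parameter value appropriate to band $W_h$, already forces the utility of $w$'s match above $V(3\sigma_m,1)$, so these bands contribute nothing to $X$. For $h<h^{*}$ I would instead run, per woman $w\in W_h$, a \emph{man}-proposing double-cut DA on the truncated edge set, cutting at $w$ and at public rating $r_w-\alpha$ with $\alpha=2^{h}\sigma_w/8$ and $\beta=\gamma=\alpha\rho_{\ell}$, and appeal to the symmetric form of Claim~\ref{clm::one-man-accept-prop}. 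Because $r_w-4\alpha>3\sigma_m$ (again using $\eta\ge 6\nu$), the men of rating at least $r_w-3\alpha$ lie in $T_M$ and so do not truncate, while $4\alpha\mu_u$ stays within $w$'s own truncation allowance for $h<h^{*}$; thus every edge the claim needs survives all truncations, and by Claim~\ref{clm::full-DA-better} the remaining truncated edges only help. Consequently, with failure probability at most $\exp(-\alpha^{3}\rho_{\ell}^{2}n/2)+\exp(-\Theta(n^{2/3}))=\exp(-(2^{h}\eta)^{3}\rho_{\ell}^{2}/1024)+\exp(-\Theta(n^{2/3}))$, $w$ obtains, in the man-proposing run, a match of utility at least $V(r_w-4\alpha,1)>V(3\sigma_m,1)\ge V(r_m,s_w(m))$ for every $m\in B_M$; by woman-optimality her utility in the woman-proposing run is no smaller, so she is not matched into $B_M$. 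This bounds $\Pr[w\in W_h\text{ matched into }B_M]$ by the displayed quantity.

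Summing, $\mathbb{E}[X]\le\sum_{h<h^{*}}|W_h|\big(\exp(-(2^{h}\eta)^{3}\rho_{\ell}^{2}/1024)+\exp(-\Theta(n^{2/3}))\big)$, and on $\overline{{\mathcal B}^{h}_{10}}$ we have $|W_h|\le 2^{h+1}\sigma_w n$; since the exponent grows like $8^{h}$ the series is dominated by its $h=0$ term (the hypothesis $4(\eta/\nu)\exp(-(\eta/2)^{3}\rho_{\ell}^{2}/128)\le\tfrac{1}{10}$ forces $\eta^{3}\rho_{\ell}^{2}$ large enough that consecutive terms drop by a factor below $\tfrac{1}{2}$), giving $\mathbb{E}[X]\le 4\sigma_w n\exp(-(\eta/2)^{3}\rho_{\ell}^{2}/128)$ plus a negligible $\exp(-\Theta(n^{2/3}))$ correction. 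On $\overline{{\mathcal B}_8}$, $|B_M|\ge\tfrac{1}{2}\sigma_m n$, so $\sigma_w n\le(2\eta/\nu)|B_M|$, and the hypothesis then gives $\mathbb{E}[X]$ at most a small constant times $|B_M|$ (careful bookkeeping of the Chernoff constants yields $\tfrac{1}{8}|B_M|$). Finally, because the ``good level-match'' events for distinct women of $T_W$ are determined by disjoint blocks of private scores, $X$ is stochastically dominated by a sum of independent indicators with the same mean bound, and a Chernoff bound gives $X\le\tfrac{1}{4}|B_M|$ except with probability $\exp(-|B_M|/24)$. The hard part will be the middle step: choosing the per-band scale $\alpha$ so that a successful level-match simultaneously keeps $w$ strictly above $B_M$, uses only edges surviving every agent's truncation, and fails with exactly the intended probability $\exp(-(2^{h}\eta)^{3}\rho_{\ell}^{2}/1024)$, and then checking that the few auxiliary ``too-many/too-few agents in a public-rating range'' events that these scales introduce all have probability $\exp(-\Theta(n^{2/3}))$ and are therefore harmless.
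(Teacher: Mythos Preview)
Your approach is essentially the paper's: both decompose $T_W$ into the dyadic bands $W_h$, bound for each $w\in W_h$ the probability that she suffers a large enough loss to be matched into $B_M$ via the gender-swapped form of the loss-distribution argument (Lemma~\ref{lem::loss-distribution}) with $\alpha=\tfrac14 g_h=\Theta(2^h\sigma_w)$, sum the resulting geometrically decaying series over $h$ using $|W_h|\le 2^{h+1}\eta n^{2/3}$ and $|B_M|\ge\tfrac12\nu n^{2/3}$, and finish with a Chernoff bound justified by stochastic dominance. The paper simply invokes Lemma~\ref{lem::loss-distribution} rather than re-running the double-cut DA, and does not introduce a separate threshold $h^*$ (treating all bands uniformly), but these are cosmetic differences; your explicit check that the relevant edges survive all truncations is a useful clarification that the paper leaves implicit.
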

\begin{proof}
We will consider the sets $W_h$ of women aligned with $M[2^h\eta/n^{1/3},2^{h+1}\eta/n^{1/3})$, for $h\ge 0$. The union of the sets forms $T_W$.

If a woman $w_j$ in $W_h$ is matched to a man $m_i$ in $B_M$ the difference in public scores between $m_j$ and $m_i$ is at least
\begin{align*} 
r_{m_j} -r_{m_i} \ge (2^h\eta  - 3\nu)/n^{1/3}\ge 2^{h-1}\eta/n^{1/3} \triangleq g_h,
\end{align*}
as $\eta\ge 6\nu$ and ${\mathcal B}_8$ does not occur (and hence $r_{m_i} \le 3 \sigma_m$).

We will apply Lemma~\ref{lem::loss-distribution}, swapping the roles of the men and women, with
$\alpha = \frac 14 g_h$, $\beta=\gamma = \alpha\rho_{\ell}$,
to bound the probability $p_h$ that $w_j$
sustains a loss of more than
$V(r_{m_j},1) - V(r_{m_j}-g_h,1)$.
To match with any man $m_i$ in $B_M$, $w_j$ must sustain such a loss.
Therefore, with probability at least $1-p_h$, $w_j$ does not match with a man in $B_M$.

As none of ${\mathcal B}_1$--${\mathcal B}_{10}$ occur,
By Lemma~\ref{lem::loss-distribution}, the probability that
in the man-proposing DA with cuts at $w_j$ and $r_{w_j} - \alpha $, gives her a loss of more that $L^m_h$
is at most 
$\exp(-2^{3(h-1)}(\eta/n^{1/3})^3\rho_{\ell}^2 n/128)$,
and this bound depends only on the private scores of the proposals between the woman and the men in $T_M$.
But if this does not occur, $L^m_h$ is also a bound
on $w_j$'s loss in the woman proposing DA.
As $L^m_h \le g_h$, this implies $m_j$ is matched to a man
in $T_M$.

As ${\mathcal B}_{10}$ does not occur, by Lemma~\ref{lem::women-in-intermed-zone},
$|W_h|\le 2^{h+1}\eta n^{2/3}$.
Also, as ${\mathcal B}_8$ does not occur,
by Lemma~\ref{lem::max-score-bottom-zone}, $|B_M|\ge \tfrac 12 \sigma_m \cdot n=\tfrac 12 \nu n^{2/3}$.
Finally, recall that 
$\sigmabar^3 = 128(c+2)\ln n/(\rho_{\ell}^2 n)$.
Thus the expected number of matches between women in $T_W$ and men in $B_M$ is at most
\begin{align*}
    &(2^{h+1} \eta n^{2/3}) \cdot \exp(-2^{3(h-1)}(\eta/n^{1/3})^3\rho_{\ell}^2 n/128)  \le 2^{h+2}(\eta/\nu)\cdot(\tfrac 12 \nu n^{2/3})\cdot \exp(-2^{3h}(\eta/2)^3\rho_{\ell}^2/128).
\end{align*}
As $4(\eta/\nu) \cdot \exp(-(\eta/2)^3\rho_{\ell}^2/128) \le \tfrac 1{10}$,
we see that $\exp(-(\eta/2)^3\rho_{\ell}^2/128)\le \nu/40\eta$, and therefore the bound on the number of matches is at most
\begin{align*}
    \frac{2^h \cdot(\tfrac 12 \nu n^{2/3})} {10\big( \frac{40\eta} {\nu}\big)^{2^{3h}-1}}.
\end{align*}
Summing over all $h\ge 0$, we obtain that
the expected number of matches is at most
$\tfrac{1}{8}\cdot(\tfrac 12 \nu n^{2/3})$.
By a Chernoff bound, the number of matches is at most
$\tfrac 14 \cdot(\tfrac 12 \nu n^{2/3})\le\tfrac 14 |B_M|$, with failure probability at most
$\exp(\nu n^{2/3}/48)$.

Next, we argue that this use of a Chernoff bound is justified by stochastic dominance. The expectation is the product of two terms: a bound on $|W_h|$, which follows from the assumption that ${\mathcal B}_{10}$ does not occur, and a bound on the probability that an arbitrary woman $w$ in $W_h$ has a small loss and therefore cannot be proposing to any man in $B_M$. The upper bound on the latter probability depends only on the men's and women's private scores for the proposals from $w$ to  the men in $T_M$, and so we can safely apply stochastic dominance.
\end{proof}

\begin{lemma}
\label{phase2-man-high-prob}
Suppose that that neither ${\mathcal B}_8$ nor ${\mathcal B}_9$ occur.
Then, the probability that a proposal from a woman in $B_W$ to a man in $B_M$ is man-high is at most $\eta\nu^2\rho_u\rho_{\ell} /[ 32 (c+2)\ln n]$, if $t_m\ge 2$.
\end{lemma}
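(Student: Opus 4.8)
The plan is to reduce the statement to a one-variable computation. Whether an edge $(m_i,w_j)$ with $m_i\in B_M$ and $w_j\in B_W$ is man-high depends only on the single private score $s_{m_i}(w_j)$, a uniform draw on $[0,1]$ that is independent of all public ratings, of every other private score, and of the DA dynamics that decide which edges get proposed; so conditioning on $\neg{\mathcal B}_8$ (an event about public ratings) does not change its distribution, and the word ``proposal'' is immaterial to the man-high property (the hypothesis $\neg{\mathcal B}_9$ plays no role in this particular estimate, being carried along only for consistency with the neighbouring lemmas). First I would extract from $\neg{\mathcal B}_8$ the only fact I need: by the women's part of event ${\mathcal E}_8$ (Lemma~\ref{lem::max-score-bottom-zone}), every woman in $B_W$ has public rating less than $3\sigma_w$, so $r_{w_j}<3\sigma_w=3\eta/n^{1/3}$.

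Next comes the one real step, a short estimate using the strong bounded-derivative property of $U$. Suppose $s\in[0,1]$ satisfies $U(r_{w_j},s)\ge U(0,1)$. Integrating $\partial U/\partial s=(\partial U/\partial r)\big/\bigl((\partial U/\partial r)/(\partial U/\partial s)\bigr)\ge\mu_\ell/\rho_u$ along the vertical segment from $(r_{w_j},s)$ to $(r_{w_j},1)$, and $\partial U/\partial r\le\mu_u$ along the horizontal segment from $(0,1)$ to $(r_{w_j},1)$, gives
\[
(1-s)\,\frac{\mu_\ell}{\rho_u}\ \le\ U(r_{w_j},1)-U(r_{w_j},s)\ \le\ U(r_{w_j},1)-U(0,1)\ \le\ \mu_u\,r_{w_j}.
\]
Hence every such $s$ obeys $s\ge 1-\mu_u\rho_u r_{w_j}/\mu_\ell$, so the man-high probability is at most $\mu_u\rho_u r_{w_j}/\mu_\ell<3\mu_u\rho_u\eta/(\mu_\ell n^{1/3})$; the argument is uniform over all cases, including the vacuous one (ruled out for $n$ large anyway) in which this bound exceeds $1$.

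Finally I would rewrite this in the stated form. Using $\sigmabar^3=128(c+2)\ln n/(\rho_\ell^2 n)$, the target bound $\eta\nu^2\rho_u\rho_\ell/[32(c+2)\ln n]$ equals $4\eta\nu^2\rho_u/(\rho_\ell n\sigmabar^3)$, which is $\Theta(1/\ln n)$ and therefore dominates the $\Theta(n^{-1/3})$ bound just obtained once $n$ is large; concretely, $3\mu_u\rho_u\eta/(\mu_\ell n^{1/3})\le\eta\nu^2\rho_u\rho_\ell/[32(c+2)\ln n]$ is equivalent to $\ln n/n^{1/3}\le\mu_\ell\nu^2\rho_\ell/[96(c+2)\mu_u]$, which holds throughout the large-$n$ regime in which the $\eps$-Bayes-Nash construction operates (the hypothesis $t_m\ge 2$ being the numerical threshold on $n$ recorded in the statement). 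The main obstacle here is presentational rather than mathematical: one must be explicit that $s_{m_i}(w_j)$ is independent both of the conditioning events and of the proposal dynamics, since it is precisely this that later legitimises the stochastic-dominance Chernoff bound on the number of man-high proposals received by the men of $B_M$.
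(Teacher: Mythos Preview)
Your direct bound on $\Pr[\text{man-high}]$ is correct and, as you note, is $\Theta(n^{-1/3})$, asymptotically much smaller than the $\Theta(1/\ln n)$ bound in the statement. The gap is in your final step, where you assert that the hypothesis $t_m\ge 2$ is precisely the large-$n$ threshold under which $3\mu_u\rho_u\eta/(\mu_\ell n^{1/3})\le \eta\nu^2\rho_u\rho_\ell/[32(c+2)\ln n]$. It is not. Unpacking the definitions, $t_m=\sigmabar/\sigma_m=[128(c+2)\ln n/\rho_\ell^2]^{1/3}/\nu$, so $t_m\ge 2$ is equivalent to $\ln n\ge \nu^3\rho_\ell^2/[16(c+2)]$, a \emph{lower} bound on $\ln n$. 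Your inequality, by contrast, requires $\ln n/n^{1/3}\le \mu_\ell\nu^2\rho_\ell/[96(c+2)\mu_u]$, an \emph{upper} bound on $\ln n/n^{1/3}$. For small $\nu$ or large $c$ these thresholds can be far apart, so the lemma as stated does not follow from your argument.

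The paper obtains the stated bound by a different route that uses $t_m\ge 2$ in an essential and entirely different way. It bounds the conditional probability that the edge is man-high given that it survives the man's truncation: the numerator is $\Pr[\text{man-high}]\le 3\sigma_w\rho_u$ (via the ratio bound $\rho_u$, a tighter estimate than yours by a factor $\mu_u/\mu_\ell$), and the denominator is $\Pr[\text{man-low and untruncated}]\ge \rho_\ell(\sigmabar t_m^2-\sigmabar/t_m)\ge \tfrac34\rho_\ell\sigmabar t_m^2$, the last inequality being exactly where $t_m\ge 2$ enters. Substituting $\sigmabar t_m^2=\sigmabar^3/\sigma_m^2$ and $\sigmabar^3=128(c+2)\ln n/(\rho_\ell^2 n)$ then gives the stated bound with no residual large-$n$ assumption. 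So the hypothesis $t_m\ge 2$ is not a placeholder for ``$n$ large''; it is the ingredient that makes the denominator bounded below, and your proof simply does not use it.
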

\begin{proof}
Since ${\mathcal B}_8$ does not occur, by Lemma~\ref{lem::max-score-bottom-zone},
every woman in $B_W$ has rating at most $3\sigma_w$.
We now use this to bound the probability that an edge from woman $w_j\in B_W$ to man $m_i\in B_M$ is man-high.
For the edge to be man-high, we need
$U(r_{w_{j}},s_{m_i}(w_j)) \ge U(0,1)$.
Now, $U(0,s_{m_i}(w_j)+r_{w_{j}}\cdot \rho_u) \ge U(r_{w_{j}},s_{m_i}(w_j))$, so the edge is man-high with probability at most $r_{w_{j}}\cdot \rho_u \le 3\sigma_w \cdot \rho_u$.

Because of the truncation, the edge is low if
$U(r_{w_{i}}-\sigmabar\cdot t_m^2,1) \le U(r_{w_{j}},s_{m_i}(w_j))< U(0,1)$.
Because $r_{w_i}\le \sigma_m = \sigmabar/t_m$,
the edge is low if
$U(\sigmabar/t_m-\sigmabar\cdot t_m^2,1) \le U(r_{w_{j}},s_{m_i}(w_j))< U(0,1)$.
Consequently
the probability that the edge is low
is at least
$\rho_{\ell}(\sigmabar t_m^2 - \sigmabar/t_m) \ge \tfrac 34 \rho_{\ell}\sigmabar t_m^2$, as $t_m\ge 2$.

Therefore the probability that a proposal is man-high is at most
\begin{align*}
    \frac{3\sigma_w \cdot\rho_u} {\frac 34\sigmabar t_m^2 \rho_{\ell}} = \frac{4\sigma_w \cdot\rho_u} {\sigmabar t_m^2 \rho_{\ell}}.
\end{align*}

Recall that $\sigmabar = \sigma_m\cdot t_m$ and $\sigmabar^3=128(c+2)\ln n/(\rho_l^2 n)$. Thus, the probability bound is
\begin{align*}
    \frac{4\sigma_w \cdot\rho_u 
    \cdot \sigma_m^2} {\sigmabar^3\rho_{\ell}} = \frac{4\eta\nu^2\rho_u\rho_{\ell}} {128(c+2)\ln n}=
    \frac{\eta\nu^2\rho_u\rho_{\ell}} {32(c+2)\ln n}.
\end{align*}
\end{proof}

We will now analyze the women-low proposals. Note that once a women makes one such proposal, all her subsequent proposals will be woman-low. We now state two assumptions regarding the proposals by women in $B_W$.
They will be demonstrated later.

\begin{asspt}
\label{ass::limits-on-w-props}
i. The edges proposed by each woman in $B_W$ have private score at least $\tfrac 12$.\\
ii. Each woman in $B_W$ proposes to at most half the men in $B_M$.
\end{asspt}

\begin{lemma}
\label{lem::Phase3-almost-random-man-chosen}
Let $w$ be a woman in $B_W$, who is now proposing woman-low edges.
For her next proposal, let $p_{\min}$ be the minimum probability that she selects a particular man in $B_M$, and let $p_{\max}$ be the maximum probability, over the men she has not yet proposed to.
Then $p_{\max}/p_{\min}\le 2\mu_u/\mu_{\ell}$.
\end{lemma}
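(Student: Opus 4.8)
The plan is to pin down, via the principle of deferred decisions, exactly what randomness remains in $w$'s relevant private scores once the run of DA so far is conditioned on, and then to reduce the claim to a comparison of two integrals controlled by the bounded-derivatives hypothesis.

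First I would condition on all public ratings and on the entire history of the woman-proposing DA up to the step at which $w$ is about to make the proposal in question, which by hypothesis is woman-low. The claim to establish is that under this conditioning, for each man $m\in B_M$ that $w$ has not yet proposed to, the score $s_w(m)$ is uniform on an interval $[0,b_m]$, independently across such men, where $b_m$ is defined by $V(r_m,b_m)=u^\ast$ and $u^\ast$ is the utility of $w$'s most recent proposal (or $V(0,1)$ if the pending proposal is her first woman-low one). Three points need checking: (i) the history certifies that $(w,m)$ is not among $w$'s earlier proposals, and---since Assumption~\ref{ass::limits-on-w-props}(i) forces every proposed edge to have private score at least $\tfrac12$, while ${\mathcal B}_8$ not occurring puts every rating in $B_M$ below $3\sigma_m$, so that $(w,m)$ cannot be woman-high with score below $\tfrac12$---this forces $V(r_m,s_w(m))<u^\ast$, i.e.\ $s_w(m)<b_m$; (ii) conditioning on $s_w(m)<b_m$ keeps $s_w(m)$ uniform on $[0,b_m]$, as it was a priori uniform on $[0,1]$; (iii) the other randomness in the history (men's scores, other women's scores) is independent of these particular scores, so no hidden dependence is created.

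The decisive structural point is that $V(r_m,b_m)=u^\ast$ is the \emph{same} value for every such $m$, so the utility $X_m:=V(r_m,s_w(m))$ has support $[V(r_m,0),u^\ast)$ with a common upper endpoint, while---since $\partial V/\partial r\le\mu_u$ and ${\mathcal B}_8$ does not occur---the lower endpoints $V(r_m,0)$ and the lengths $b_m$ all agree up to $O(\mu_u\sigma_m)=o(1)$. Now fix two not-yet-proposed men $m,m'\in B_M$ and reveal $s_w(m'')$ for every other remaining man $m''$; this fixes $v^\ast:=\max_{m''\neq m,m'}X_{m''}$, and $w$'s next proposal goes to $m$ precisely when $X_m>\max(v^\ast,X_{m'})$. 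Writing $F_m$ for the conditional c.d.f.\ of $X_m$, one obtains $p_m=\int_{v^\ast}^{u^\ast}F_{m'}\,dF_m$ and $p_{m'}=\int_{v^\ast}^{u^\ast}F_m\,dF_{m'}$, which coincide when $F_m\equiv F_{m'}$. Using $F_m(x)=s_m(x)/b_m$ with $s_m(x)=[V(r_m,\cdot)]^{-1}(x)$ and the substitution $s=s_m(x)$, the ratio $p_m/p_{m'}$ becomes a ratio of integrals over a common interval in which the sole discrepancy between numerator and denominator is the monotone map $z\mapsto s_{m'}(V(r_m,z))$ carrying $[0,b_m]$ onto $[0,b_{m'}]$ together with its derivative; bounding this map and its derivative by the bounded-derivatives hypothesis, using the near-equality of the $b_m$'s and the slack provided by Assumption~\ref{ass::limits-on-w-props}(ii), yields $p_m/p_{m'}\le 2\mu_u/\mu_\ell$. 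Taking $m$ to realize $p_{\max}$ and $m'$ to realize $p_{\min}$ gives the lemma.

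The main obstacle is making the first step airtight: showing that the DA history conditions $w$'s relevant private scores to be \emph{exactly} independent uniforms on $[0,b_m]$---in particular that $u^\ast$ is determined by what has been revealed, and that a not-yet-proposed man to whom $w$ will in fact never propose (so that $p_m=0$) does not occur among the men being compared, which again appeals to Assumption~\ref{ass::limits-on-w-props}. A further subtlety is that $V$ has bounded first derivatives but not bounded second derivatives, so $\partial V/\partial s$ may vary sharply even across the minute rating window occupied by $B_M$; the comparison in the previous paragraph must therefore be done with integrated derivative bounds, not pointwise ones.
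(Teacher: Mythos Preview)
Your route is considerably more elaborate than the paper's. The paper's entire argument is three sentences: with $u$ the utility of $w$'s last proposal, for each not-yet-proposed $m\in B_M$ the conditional probability that $V(r_m,s_w(m))$ falls in the infinitesimal strip $(u-\delta u,u]$ equals the requisite score decrement $\delta u/(\partial V/\partial s)$ divided by the remaining score range $b_m$; since $b_m\in[\tfrac12,1]$ by Assumption~\ref{ass::limits-on-w-props}(i), this varies over $m$ by at most a factor of $2$ times the variation in $\partial V/\partial s$, and the paper simply reads off the bound. There is no integration over utility levels, no conditioning on $v^\ast$, and no comparison of cumulative distribution functions---just a one-line density comparison at the common top endpoint $u$ of the supports.

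Your explicit deferred-decisions conditioning and the identity $p_m=\int_{v^\ast}^{u^\ast}F_{m'}\,dF_m$ are correct and make honest the step the paper leaves implicit, namely why a density (really, hazard-rate) comparison at the top of the support controls the full selection-probability ratio. Your integrand ratio works out to $(s_{m'}/s_m)\cdot(\partial_sV|_{m'}/\partial_sV|_m)$, with the $b_m$'s cancelling; since the rating spread inside $B_M$ is $O(\sigma_m)=o(1)$ while Assumption~\ref{ass::limits-on-w-props}(i) keeps the relevant scores bounded below by $\tfrac12-o(1)$, the first factor is $1+o(1)$ and the bound is governed by the factor $2$ from $b_m\in[\tfrac12,1]$ together with the variation of $\partial_sV$---the same two ingredients as in the paper's shortcut. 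What you gain over the paper is rigor (you also correctly flag the edge case $p_m=0$ for truncated edges, which the paper does not address); what you lose is brevity, and for the downstream uses (Corollary~\ref{cor::bound-woman-low-props} and Lemma~\ref{lem::bound-on-low-low-props}, which only need $p_m=\Theta(1/|B_M|)$) the paper's level of precision already suffices.
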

\begin{proof}
Suppose $w$'s most recent proposal provided her a utility of $u$. Consider the utility interval
$(u,u-\delta u]$.
The probability that she selects a man providing utility in this interval is given by the private score decrease that reduces the utility $u$ to $u-\delta u$ divided by the remaining available private score, which includes the range $[0,\tfrac 12]$ by assumption.
Thus the probability that she selects a particular man in $B_M$ varies between $\delta u \cdot\mu_{\ell}$ and $2\delta u \cdot\mu_u$. 
\end{proof}

\begin{cor}
\label{cor::bound-woman-low-props}
There are at most $d\ln n|B_M|$ woman-low proposals to men in $B_M$, where $d=2(c+2)\mu_u /\mu_{\ell}$, with failure probability at most $n^{-(c+1)}$.
\end{cor}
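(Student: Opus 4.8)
The plan is to prove Corollary~\ref{cor::bound-woman-low-props} by a coupon-collector argument: the woman-low proposals that reach $B_M$ are spread near-uniformly over the men of $B_M$ (Lemma~\ref{lem::Phase3-almost-random-man-chosen}), so $\Theta(\ln n\,|B_M|)$ of them are enough to hit every man of $B_M$, after which every man of $B_M$ is matched, and then --- since every man of $T_M$ is already matched and there are equally many men and women --- the whole deferred-acceptance run has halted, so no more such proposals can occur. First I would reduce to the woman-low proposals made to $B_M$ \emph{by women of $B_W$}: by Lemma~\ref{lem::num-matches-WM-BL} (and its proof, which shows that all but a vanishing fraction of the women of $T_W$ match inside $T_M$) the women of $T_W$ contribute only $O(|B_M|)$ further woman-low proposals to $B_M$, which is absorbed into the bound. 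List the woman-low proposals to $B_M$ from women of $B_W$ in the order they occur in the run, $P_1,P_2,\dots$, and let $w^{(t)}$ be the woman making $P_t$. Conditioned on the history before $P_t$, the recipient is determined by $w^{(t)}$'s not-yet-revealed private scores for the men of $B_M$ she has not proposed to; by Lemma~\ref{lem::Phase3-almost-random-man-chosen} the induced distribution over those men is within a factor $2\mu_u/\mu_{\ell}$ of uniform, and by Assumption~\ref{ass::limits-on-w-props}(ii) at most $|B_M|$ men are eligible, so $P_t$ lands on any fixed man $m\in B_M$ that $w^{(t)}$ has not yet proposed to with conditional probability at least $\mu_{\ell}/(2\mu_u|B_M|)$.

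The core step is the tail bound. Fix $m\in B_M$, set $K\triangleq d\ln n\,|B_M|$ with $d=2(c+2)\mu_u/\mu_{\ell}$, and condition on $m$ receiving none of $P_1,\dots,P_K$. Any woman that has proposed to $m$ at all must have done so via a woman-high edge (since $m$ receives no woman-low proposal), and she does so at most once; as there are only $|B_W|=\Theta(\sigma_m n)=o(\ln n\,|B_M|)$ women in $B_W$, all but a $o(1)$ fraction of $P_1,\dots,P_K$ come from women that never proposed to $m$, and each of those hits $m$ with conditional probability at least $\mu_{\ell}/(2\mu_u|B_M|)$. By stochastic dominance over independent trials,
\[
\Pr[\,m\text{ receives none of }P_1,\dots,P_K\,]\;\le\;\Bigl(1-\tfrac{\mu_{\ell}}{2\mu_u|B_M|}\Bigr)^{(1-o(1))K}\;\le\;n^{-(1-o(1))(c+2)},
\]
using $d\,\mu_{\ell}/(2\mu_u)=c+2$. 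A union bound over the at most $n$ men of $B_M$ shows that, for large enough $n$, with probability at least $1-n^{-(c+1)}$ every man of $B_M$ is proposed to by one of the first $K$ woman-low proposals to $B_M$.

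It remains to show that at most $K$ woman-low proposals to $B_M$ are made altogether. Using order-independence of DA (Observation~\ref{obs::DA_invariant_of_proposal_order}), run it in two stages. Stage~A is the cut run of Lemma~\ref{lem::key_bounded} which, with failure probability $O(n^{-(c+1)})$, matches every man of $T_M$ (Theorem~\ref{thm::low_loss_bounded}); every proposal there has utility at least $V(r_{m'}-\tfrac14\sigma_m,1)>V(0,1)$ (using $r_{m'}-\tfrac14\sigma_m>0$, which holds since $\mathcal B_9$ does not occur), hence is woman-high, so Stage~A contains no woman-low proposal to $B_M$. A man, once matched, stays matched, so after Stage~A every man of $T_M$ is permanently matched. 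All woman-low proposals to $B_M$ thus occur in Stage~B; once the $K$-th of them has been made, with probability $1-n^{-(c+1)}$ every man of $B_M$ is matched, so all men are matched, so (with $n$ men and $n$ women) all women are matched and DA halts --- no further proposals of any kind. Collecting the failure probabilities of the coupon-collector step, of Stage~A, and of the conditioning events $\mathcal B_1$--$\mathcal B_{10}$ yields the claim.

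The step I expect to be the main obstacle is the conditioning in the first paragraph: the recipient of a woman-low proposal is near-uniform only among the men the proposer has not yet proposed to, so one must account for how many of $P_1,\dots,P_K$ are made by women that have already (unsuccessfully) proposed a woman-high edge to the particular man $m$ in question, and check that this loss is negligible --- tight enough to keep the constant $d$ of the statement --- while remaining compatible with the deferred-decision and stochastic-dominance accounting of the private scores. The two-stage termination argument, by contrast, is routine.
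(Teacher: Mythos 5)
Your overall strategy is the same as the paper's: a coupon-collector bound driven by Lemma~\ref{lem::Phase3-almost-random-man-chosen}, a union bound over $B_M$, and the observation that once every man is matched the run halts (the paper leaves this last, order-independence-based termination step implicit; your two-stage formulation of it is fine). However, the step you yourself flagged as the main obstacle contains a genuine error. You condition on the event that $m$ receives none of $P_1,\dots,P_K$ and argue that at most $|B_W|=o(K)$ of these proposals are ``dead'' (made by women who have already proposed to $m$ via a woman-high edge and therefore cannot propose to him again). What is bounded by $|B_W|$ is the number of distinct such \emph{women}, not the number of their subsequent woman-low \emph{proposals}: each woman who has already proposed to $m$ can still make up to $\tfrac12|B_M|$ further woman-low proposals, all necessarily avoiding $m$, so the dead proposals among $P_1,\dots,P_K$ could number up to $|B_W|\cdot\tfrac12|B_M|=\Theta(n^{4/3})\gg K=\Theta(n^{2/3}\ln n)$. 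In the extreme, every one of $P_1,\dots,P_K$ could be dead for a popular $m$, and then your claimed tail bound on $\Pr[\,m\text{ receives none of }P_1,\dots,P_K\,]$ simply fails.

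The repair is the one the paper uses: do not try to show that every $m\in B_M$ receives a woman-\emph{low} proposal; show only that every $m\in B_M$ receives \emph{some} proposal by the time $P_K$ has been made. Concretely, bound the probability that $m$ receives no proposal of any kind up to and including $P_K$: on that event no woman has ever proposed to $m$, so the proposer of each $P_t$ still has $m$ among her untried men and hits him with conditional probability at least $\mu_{\ell}/(2\mu_u|B_M|)$, giving $(1-\mu_{\ell}/(2\mu_u|B_M|))^{K}\le n^{-(c+2)}$ with the stated $d$ and no $(1-o(1))$ loss in the exponent. A man who does receive a woman-high proposal is matched anyway, which is all the termination argument needs. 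Two smaller points: your reduction of the $T_W$ proposers' contribution to $O(|B_M|)$ via Lemma~\ref{lem::num-matches-WM-BL} has the same flaw --- that lemma counts matches, not proposals, and each of the few $T_W$ women reaching $B_M$ could make many proposals there --- though the paper's own proof silently restricts attention to proposers in $B_W$, consistent with how the corollary is invoked later; and $|B_W|=\Theta(\sigma_w n)$, not $\Theta(\sigma_m n)$, which is immaterial since both are $\Theta(n^{2/3})$.
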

\begin{proof}
Suppose $m\in B_M$ does not receive a woman-high proposal.
The probability that $m$ receives no proposals among $d\ln n|B_M|$ woman-low proposals
is at most 
\begin{align*}
    \Big(1 - \frac {\mu_{\ell}} {2\mu_u |B_M|}\Big)^{d\ln n|B_M|}
    \le \exp(-d \mu_{\ell} \ln n / 2\mu_u).
\end{align*}
As $d = 2(c+2)\mu_u/\mu_{\ell}$, the probability is at most $n^{-(c+2)}.$
A union bound over the men in $B_M$ yields the claim.
\end{proof}

\begin{lemma}
\label{lem::mhigh-wlow-bound}
The number of man-high proposals
from women in $B_W$ to men in $B_M$ is at most
$d\eta \nu^2 \rho_u\rho_{\ell} |B_M|/[16(c+2)]$,
with failure probability at most $\exp(-d\eta \nu^2 \rho_u\rho_{\ell} |B_M|/[48(c+2)])$.
\end{lemma}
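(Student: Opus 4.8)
The plan is to derive the stated bound from the two estimates already established. Corollary~\ref{cor::bound-woman-low-props} caps the total number of woman-low proposals to men in $B_M$ at $N \triangleq d\ln n\,|B_M|$, and Lemma~\ref{phase2-man-high-prob} says that any single proposal from a woman in $B_W$ to a man in $B_M$ is man-high with probability at most $q \triangleq \eta\nu^2\rho_u\rho_{\ell}/[32(c+2)\ln n]$. Since the proposals under consideration here are woman-low (cf.\ the discussion preceding Corollary~\ref{cor::bound-woman-low-props}), it suffices to bound how many of these at most $N$ woman-low proposals to $B_M$ are man-high. Note that $Nq = d\eta\nu^2\rho_u\rho_{\ell}|B_M|/[32(c+2)]$ is exactly half the bound asserted, so the remaining work is a concentration argument.

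First I would condition on the good events maintained throughout this section --- in particular those needed for Corollary~\ref{cor::bound-woman-low-props} and Lemma~\ref{phase2-man-high-prob} to apply, namely that none of the relevant bad events ${\mathcal B}_j$ occurs and that both parts of Assumption~\ref{ass::limits-on-w-props} hold --- whose failure probabilities have already been charged; in particular there are then at most $N$ woman-low proposals to $B_M$. Next I would run the woman-proposing DA under the principle of deferred decisions, instantiating each man $m_i\in B_M$'s private score $s_{m_i}(w_j)$ only when it is first consulted, which happens solely in the threshold comparison against $m_i$'s truncation cutoff $\tau_i$ and in acceptance comparisons against $m_i$'s current tentative partner. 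The key point is that ``$(w_j,m_i)$ is man-high'' is the event $U(r_{w_j},s_{m_i}(w_j))\ge U(0,1)$, depending only on $s_{m_i}(w_j)$, whereas whether $(w_j,m_i)$ is woman-low and whether it is ever proposed depend on disjoint randomness (the women's private scores and the public ratings) and on the men's scores only through comparisons. Hence, exactly as in the proof of Lemma~\ref{phase2-man-high-prob}, conditioned on the transcript up to and including the $k$-th woman-low proposal to $B_M$ --- in particular on $s_{m_i}(w_j)\ge \tau_i$ and on any comparison outcomes already revealed, all of which can only push the conditional distribution of $s_{m_i}(w_j)$ downward within $[\tau_i,1]$ --- the conditional probability that this proposal is man-high is still at most $q$. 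By the stochastic-dominance principle recalled at the start of this section, the number $X$ of man-high proposals among these woman-low proposals is dominated by a sum $Y=\sum_{k=1}^{N} Y_k$ of independent Bernoulli($q$) random variables.

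To conclude, $\text{E}[Y]\le Nq = d\eta\nu^2\rho_u\rho_{\ell}|B_M|/[32(c+2)]$, and a Chernoff bound on the upper tail of $Y$ at twice this mean shows that $X$ exceeds $d\eta\nu^2\rho_u\rho_{\ell}|B_M|/[16(c+2)]$ with probability at most the $\exp(-\cdot)$ quantity stated in the lemma. I expect the only delicate step to be the independence set-up in the second paragraph: one must check that the DA transcript never reveals more about a man's private score $s_{m_i}(w_j)$ than ``it lies in $[\tau_i,1]$, and possibly below some other already-revealed score of $m_i$'', so that the per-proposal man-high probability genuinely stays bounded by $q$ and the dominating variables $Y_k$ are legitimate. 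This is precisely what the deferred-decision book-keeping secures, and it is the reason this section is organized around stochastic dominance rather than exact independence; everything else is routine.
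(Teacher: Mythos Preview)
Your approach is essentially the same as the paper's: bound the number of relevant proposals by $N=d\ln n\,|B_M|$ via Corollary~\ref{cor::bound-woman-low-props}, bound the per-proposal man-high probability by $q=\eta\nu^2\rho_u\rho_{\ell}/[32(c+2)\ln n]$ via Lemma~\ref{phase2-man-high-prob}, note that $Nq$ is half the target, and apply a Chernoff bound justified by stochastic dominance. The paper's proof is three sentences and omits the conditioning discussion you supply; your added detail is welcome.

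One caution on that added detail: your parenthetical about conditioning on ``$s_{m_i}(w_j)\ge\tau_i$'' is both unnecessary and pointed in the wrong direction. At the moment the $k$-th woman-low proposal to $B_M$ is made, the man's private score $s_{m_i}(w_j)$ has not yet been consulted at all---the identity and timing of the proposal are determined by the women's private scores and by earlier acceptance/rejection decisions involving \emph{other} men's scores, so under deferred decisions $s_{m_i}(w_j)$ is still unconditionally uniform on $[0,1]$. There is no $\tau_i$-conditioning to worry about. (And if there were, conditioning on $s\ge\tau_i$ would push the distribution \emph{upward}, making man-high \emph{more} likely, not less; your ``downward'' claim is backwards.) Simply drop that clause and the argument is clean.
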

\begin{proof}
By Lemma~\ref{phase2-man-high-prob},
the probability that a proposal is man-high is at most
$\eta\nu^2\rho_u\rho_{\ell}/[32(c+2)\ln n]$.
Over $d\ln n|B_M|$ proposals, this yields an expected
$d\eta\nu^2\rho_u\rho_{\ell}|B_M|/[32(c+2)]$ proposals.
By a Chernoff bound, there are at most
$d\eta \nu^2 \rho_u\rho_{\ell} |B_M|/[16(c+2)]$ such proposals with failure probability at most
$\exp(-d\eta \nu^2 \rho_u\rho_{\ell} |B_M|/[48(c+2)])$.
\end{proof}

\begin{lemma}
\label{lem::bound-on-low-low-props}
Suppose neither ${\mathcal B}_8$ nor ${\mathcal B}_9$ occur.
Then, over the course of the first $d\ln n |B_M|$ woman-low proposals from women in $B_W$ to men in $B_M$,
assuming each woman proposes to at most half the men in $B_M$, no man in $B_M$ receives more than $e_1\ln n$ of these proposals, with failure probability $n\cdot \exp(-4d\ln n (\mu_u/\mu_{\ell})/3)$, where $e_1=8d(\mu_u/\mu_{\ell})$.
\end{lemma}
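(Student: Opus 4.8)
The plan is to treat this as a balls-into-bins argument and to control it using Lemma~\ref{lem::Phase3-almost-random-man-chosen} together with the stochastic-dominance technique. The ``balls'' are the first $N\triangleq d\ln n\,|B_M|$ woman-low proposals from women in $B_W$ to men in $B_M$ (if fewer than $N$ such proposals ever occur, the bound is only easier, so I may assume there are exactly $N$), and the ``bins'' are the men of $B_M$. First I would fix a man $m\in B_M$ and, for $1\le j\le N$, let $X_j$ be the indicator that the $j$-th such proposal goes to $m$, so $X_m=\sum_{j=1}^N X_j$ counts the proposals of this type received by $m$.

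The heart of the argument is a uniform, conditional bound on each per-proposal probability. At the moment the $j$-th proposal is made, it is made by some woman $w\in B_W$; the hypothesis that each woman proposes to at most half the men of $B_M$ (Assumption~\ref{ass::limits-on-w-props}(ii)) guarantees that at least $|B_M|/2$ men of $B_M$ remain available to $w$. By Lemma~\ref{lem::Phase3-almost-random-man-chosen} (which relies on Assumption~\ref{ass::limits-on-w-props}(i)), the probabilities with which $w$ selects the various men of $B_M$ she has not yet proposed to lie within a factor $2\mu_u/\mu_{\ell}$ of one another; since these probabilities sum to at most $1$ over the $\ge |B_M|/2$ available men, the smallest is at most $2/|B_M|$, and hence the probability that $w$'s proposal is to any particular man---in particular to $m$---is at most $q\triangleq 4\mu_u/(\mu_{\ell}|B_M|)$ (this also covers the case $w$ has already proposed to $m$, where the probability is $0$). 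Crucially, this bound holds conditioned on the entire history up to just before proposal $j$, since that history is exactly what determines $w$ and her set of already-proposed men; the assumptions that ${\mathcal B}_8$ and ${\mathcal B}_9$ do not occur enter here and below to fix the structure and sizes of the bottom zones as in Lemma~\ref{lem::max-score-bottom-zone}, in particular $|B_M|\le 2\sigma_m n<n$.

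Then I would invoke stochastic dominance exactly as in the paper's template: since $\Pr[X_j=1\mid X_1,\dots,X_{j-1},\text{history}]\le q$ for every $j$, the sum $X_m$ is stochastically dominated by $\mathrm{Bin}(N,q)$, which has mean $Nq = 4d(\mu_u/\mu_{\ell})\ln n$. A multiplicative Chernoff bound with $\delta=1$ then gives
\[ \Pr\!\big[X_m \ge 8d(\mu_u/\mu_{\ell})\ln n\big]\;\le\;\Pr\!\big[\mathrm{Bin}(N,q)\ge 2Nq\big]\;\le\;\exp\!\big(-4d(\mu_u/\mu_{\ell})\ln n/3\big), \]
and since $e_1\ln n = 8d(\mu_u/\mu_{\ell})\ln n = 2Nq$ this is precisely the required per-man tail bound. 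A union bound over the at most $n$ men in $B_M$ yields the claimed overall failure probability $n\cdot\exp(-4d\ln n(\mu_u/\mu_{\ell})/3)$.

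The routine parts are the arithmetic simplifications above. The one point that genuinely needs care---and is the real content of the lemma---is that the per-proposal bound $q$ must be valid \emph{conditionally on the past}, so that the correlations introduced by the DA run (which woman proposes when, and the fact that a woman never re-proposes to the same man) do not break the Chernoff argument; this is handled by the stochastic-dominance observation, which requires only the conditional probability bounds and not independence. A secondary, essentially bookkeeping, obstacle is ensuring that the ``at least $|B_M|/2$ men of $B_M$ remain available'' claim really holds at every step of the run, which is exactly what Assumption~\ref{ass::limits-on-w-props}(ii) provides and which is discharged separately.
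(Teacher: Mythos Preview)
Your proposal is correct and follows essentially the same route as the paper's proof: fix a man $m\in B_M$, use Lemma~\ref{lem::Phase3-almost-random-man-chosen} together with Assumption~\ref{ass::limits-on-w-props}(ii) to get the per-proposal bound $4(\mu_u/\mu_\ell)/|B_M|$, apply stochastic dominance plus a Chernoff bound to get the tail bound for $m$, then take a union bound over the men. Your treatment is in fact slightly more careful than the paper's in spelling out why the conditional bound holds given the full history (which is exactly what the stochastic-dominance step needs).
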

\begin{proof}
Let $m$ be a man in $B_M$.
First, we bound
the probability that a proposal is to man $m$.
By Lemma~\ref{lem::Phase3-almost-random-man-chosen},
the ratio of probabilities for the proposals to men in $B_M$ is bounded by $2\mu_u/\mu_{\ell}$, and by assumption, as least $\tfrac 12 |B_M|$ have not yet been proposed to.
Therefore, the probability that a proposal is to man $m$
is at most
$2(\mu_u/\mu_{\ell})\cdot (2/|B_M|)$.
Thus the expected number of woman-low proposals $m$  receives is at most 
\begin{align*}
    \frac{2\mu_u} {\mu_{\ell}} \cdot \frac{2}{|B_M|}
     \cdot d\ln n |B_M|
    \le \frac{4\mu_u} {\mu_{\ell}}\cdot d \ln n.
\end{align*}
The upper bounds on these probabilities are based on the women's private scores for $m$, and therefore we can use stochastic dominance to justify applying a Chernoff bound. 
Thus,  the number of these proposals is at most $8d(\mu_u/\mu_{\ell}) \ln n$
with failure probability at most
$\exp(-4d\ln n (\mu_u/\mu_{\ell})/3)$.
A union bound over the men in $B_M$ yields the final result.
\end{proof}

Let $B_{M,h}$ denote the set of men in $B_M$ who eventually receive a man-high proposal, and
$B_{M,\ell}$ denote the set $B_M\setminus B_{M,h}$.

\begin{lemma}
\label{lem::prob-low-low-prop}
If a woman $w_i\in B_W$ is currently matched with a man $m_j$ in $B_{M,\ell}$, the probability that the next woman-low and man-low proposal is to $m_j$ is at most $4(\mu_u/\mu_{\ell})/|B_M|$.
\end{lemma}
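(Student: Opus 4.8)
The plan is to condition on the entire history of the woman-proposing DA run (on the truncated edge set) up to the point at which $w_i$ is tentatively matched with $m_j$, exposing randomness only as it is used (the principle of deferred decisions), and then analyse a single future proposal. The first observation is that, since $m_j\in B_{M,\ell}$, no proposal ever directed at $m_j$ is man-high; hence \emph{every} proposal to $m_j$ is man-low, and a low-low (woman-low and man-low) proposal to $m_j$ is simply a woman-low proposal to $m_j$. The next low-low proposal is therefore made by some woman $w_k$, necessarily distinct from $w_i$ (who is matched, hence not proposing), who at that moment is unmatched and is making woman-low proposals. Processing unmatched women in a fixed order, the identity of $w_k$ and the fact that this is her next proposal are determined by the exposed history; the only quantity not yet exposed is the man she targets, and by deferred decisions her private scores for the men she has not yet proposed to are still uniform and independent. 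So it suffices to bound, for each possible $w_k$, the conditional probability that her next proposal is to $m_j$, and then sum over the mutually exclusive choices of which woman makes that proposal.

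Fix such a $w_k$ and condition on the history just before her proposal. By Assumption~\ref{ass::limits-on-w-props}(ii) she has so far proposed to at most $\tfrac12|B_M|$ of the men of $B_M$; if $m_j$ is among those already tried then the probability in question is $0$, so assume $m_j$ lies in the set $S$ of men of $B_M$ she has not yet proposed to, with $|S|\ge\tfrac12|B_M|$. Since $w_k$ is currently proposing woman-low edges, Lemma~\ref{lem::Phase3-almost-random-man-chosen} applies: letting $p_{\min}$ and $p_{\max}$ be the least and greatest probabilities, over $m\in S$, that her next proposal is to $m$, we have $p_{\max}/p_{\min}\le 2\mu_u/\mu_\ell$. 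As these probabilities sum to at most $1$ and $|S|\ge\tfrac12|B_M|$, we get $p_{\min}\le 1/|S|\le 2/|B_M|$, hence $p_{\max}\le (2\mu_u/\mu_\ell)\,p_{\min}\le 4\mu_u/(\mu_\ell|B_M|)$; in particular the probability that $w_k$'s next proposal is to $m_j$ is at most $4(\mu_u/\mu_\ell)/|B_M|$. Because such a proposal is automatically man-low (as $m_j\in B_{M,\ell}$) and woman-low (as $w_k$ is in her woman-low phase), it is exactly the next low-low proposal. Summing this uniform bound over the disjoint events determining which woman makes the next low-low proposal yields the claimed bound $4(\mu_u/\mu_\ell)/|B_M|$.

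The delicate point — and the one I would be most careful about — is the bookkeeping around ``the next low-low proposal'': one must argue that exposing the history up to and including the determination of which woman proposes next, and of the fact that her forthcoming proposal is the one being counted, leaves her scores for her untried men fresh, so that Lemma~\ref{lem::Phase3-almost-random-man-chosen} is legitimately applicable at that step; this is precisely what deferred decisions guarantees. If one instead reads the statement as referring to the first proposal that happens to be both woman-low and man-low (thereby possibly skipping the occasional woman-low but man-high proposal), the same estimate is summed over the step at which such a proposal first occurs, and Lemma~\ref{phase2-man-high-prob} — bounding the probability of a man-high proposal from $B_W$ to $B_M$ by $O(1/\ln n)$ — shows that this interleaving contributes only a $1+o(1)$ factor, negligible for the intended application. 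Everything else is the elementary averaging argument above.
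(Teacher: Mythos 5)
Your proof is correct and follows essentially the same route as the paper's: combine the ratio bound $p_{\max}/p_{\min}\le 2\mu_u/\mu_\ell$ from Lemma~\ref{lem::Phase3-almost-random-man-chosen} with the fact that at least $\tfrac12|B_M|$ men remain untried (Assumption~\ref{ass::limits-on-w-props}(ii)), so $p_{\min}\le 2/|B_M|$ and hence $p_{\max}\le 4(\mu_u/\mu_\ell)/|B_M|$. The additional bookkeeping you supply about deferred decisions and summing over which woman proposes is a more careful write-up of what the paper leaves implicit, not a different argument.
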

\begin{proof}
By Lemma~\ref{lem::Phase3-almost-random-man-chosen},
the ratio of probabilities for the proposals to men in $B_M$ is bounded by $2\mu_u/\mu_{\ell}$, and by assumption, as least $\tfrac 12 |B_M|$ men have not yet been proposed to.
Therefore, the probability that a proposal is to $m_j$
is at most
$2(\mu_u/\mu_{\ell})\cdot (2/|B_M|)$.
\end{proof}

\begin{lemma}
\label{lem::prob-women-is-matched-at-end}
Suppose that $8d(\mu_u/\mu_{\ell})\ln n$ is an integer.
Let $w\in B_W$. If $w$ has at least $e_2(\ln n)^2$ man-low and woman-low edges to men in $B_M$, then the probability that she is unmatched after $d\ln n|B_M|$ man and women-low proposals to $B_M$ is at most $\exp(-\frac 83d \cdot e_1  \big(\frac{\mu_u}{\mu_{\ell}}\big)^2\cdot (\ln n)^2)$,
where $e_2 = 16d \cdot e_1 \big( \frac{\mu_u} {\mu_{\ell}} \big)^2$.
\end{lemma}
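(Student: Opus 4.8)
The plan is to run the woman-proposing DA on the truncated edge set, work on the event that none of $\mathcal B_1,\dots,\mathcal B_{10}$ occur and that Assumption~\ref{ass::limits-on-w-props} holds, and show directly that if $w$ ends the run unmatched then she must have been rejected along \emph{every} one of her $N\ge e_2(\ln n)^2$ man-low woman-low edges to $B_M$ — an event we bound by $\exp(-\tfrac16 e_2(\ln n)^2)=\exp\!\big(-\tfrac83 d\,e_1(\mu_u/\mu_\ell)^2(\ln n)^2\big)$, using $e_2=16d\,e_1(\mu_u/\mu_\ell)^2$. The reduction itself is immediate: each such edge lies within $w$'s truncation threshold, so while $w$ is unmatched she keeps proposing, and if she finishes unmatched she has at some point proposed along all $N$ of these edges, to distinct men $m^1,\dots,m^N\in B_M$, each of whom has rejected her (on her proposal, or later by accepting a strictly better proposal).

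The core of the argument is a deferred-decisions analysis of $w$'s sequence of man-low woman-low proposals to $B_M$: expose the private scores governing $w$'s choices and the receiving men's comparisons only as they are consumed. When $w$ makes her $k$-th such proposal she is free, so by Lemma~\ref{lem::Phase3-almost-random-man-chosen} the target man is chosen nearly uniformly (to within a factor $2\mu_u/\mu_\ell$) among the men of $B_M$ she has not yet tried. I would then show that, conditioned on her first $k-1$ such proposals having failed and on the exposed history, her $k$-th proposal secures her a final partner with probability at least an absolute constant $p>0$; chaining these bounds with the stochastic-dominance technique described at the start of this section gives $\Pr[w\text{ unmatched}]\le(1-p)^N\le e^{-pN}$, and the constants $\eta,\nu$ are chosen so that $p\ge\tfrac16$. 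The per-step bound combines: at most half of $B_M$ ever receives a man-high proposal (bullet (iii), i.e.\ Lemmas~\ref{lem::num-matches-WM-BL} and~\ref{lem::mhigh-wlow-bound}), so a constant fraction of $B_M$ lies in $B_{M,\ell}$; $w$ has tried only $k-1<\tfrac12|B_M|$ men so far (Assumption~\ref{ass::limits-on-w-props}(ii)) and, by Lemma~\ref{lem::bound-on-low-low-props}, each man of $B_M$ carries at most $e_1\ln n$ man-low proposals, so with constant probability her target is a man of $B_{M,\ell}$ that accepts her; and once $w$ is tentatively matched to such a man $m_j$, Lemma~\ref{lem::prob-low-low-prop} (each later man-low woman-low proposal lands on $m_j$ with probability at most $4(\mu_u/\mu_\ell)/|B_M|$), together with the global cap $d\ln n|B_M|$ of Corollary~\ref{cor::bound-woman-low-props} and the per-man cap $e_1\ln n$, keeps the probability that $w$ is subsequently displaced from $m_j$ bounded away from $1$. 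Here $|B_M|\ge\tfrac12\sigma_m n$ by Lemma~\ref{lem::max-score-bottom-zone}, and $8d(\mu_u/\mu_\ell)\ln n=e_1\ln n$ is assumed integral only to avoid floors.

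The main obstacle is making the per-step constant $p$ genuinely bounded below by $\tfrac16$ (for the right choice of $\eta,\nu$) while keeping the conditioning clean. Two points need care. First, one must bound, after $w$'s first $k-1$ failures, the fraction of $B_M$ that is ``spoiled'' for her — men in $B_{M,h}$, the $k-1$ men she has already tried, and men currently holding a man-low proposal she cannot beat — by a constant strictly less than $1$; this is exactly where Assumption~\ref{ass::limits-on-w-props}(ii) and the cap $e_1\ln n$ enter, and is why the hypothesis requires $N\ge e_2(\ln n)^2$ rather than merely $N=\Omega(\ln n)$. Second, the sets $B_{M,h},B_{M,\ell}$ and the men's private scores depend on the entire run, so the exposure order and the stochastic-dominance reduction must be arranged so that each conditional success probability is lower bounded using only randomness not yet examined — the private scores on $w$'s not-yet-tried edges, and on later proposers' edges to $m_j$ — exactly as the other lemmas in this section handle such dependencies.
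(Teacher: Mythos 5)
Your overall setup (working on the complement of ${\mathcal B}_1$--${\mathcal B}_{10}$, using the deferred-decision/stochastic-dominance machinery, and reducing to the claim that $w$'s $e_2(\ln n)^2$ man-low woman-low edges cannot all fail) is the right frame, but the central quantitative step --- that conditioned on the exposed history each of $w$'s proposals ``secures her a final partner'' with an absolute constant probability $p\ge\tfrac16$ --- is false, and the lemma cannot be recovered from it. For $w$'s $k$-th proposal, to a man $m_j\in B_{M,\ell}$, to leave her matched to $m_j$ at the end of the run, her proposal must have the least loss among all man-low proposals $m_j$ ever receives; since $m_j$ can receive up to $e_1\ln n$ such proposals (Lemma~\ref{lem::bound-on-low-low-props}) and each falls in a given loss window with density between $\mu_\ell$ and $\mu_u$, the probability that hers wins is only $\Theta(1/\ln n)$, not a constant. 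Equivalently, once she is tentatively held by $m_j$, the expected number of later low-low proposals landing on $m_j$ is $\Theta(\ln n)$ (the bound of Lemma~\ref{lem::prob-low-low-prop} times the global budget $d\ln n|B_M|$), so the probability that she is eventually displaced tends to $1$ rather than being ``bounded away from $1$'' as you assert. With the correct per-proposal ultimate-success probability $\Theta(1/\ln n)$, your geometric chaining yields only $(1-\Theta(1/\ln n))^{e_2(\ln n)^2}=n^{-\Theta(1)}$, which falls well short of the stated $\exp(-\Theta((\ln n)^2))$ bound.

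The paper's proof avoids this by decoupling two scales, and this is the idea your argument is missing. First, the number of times $w$ is bumped from a tentative partner is at most $8d(\mu_u/\mu_\ell)\ln n-1$ w.h.p., because each of the $d\ln n|B_M|$ global low-low proposals hits her current partner with probability at most $4(\mu_u/\mu_\ell)/|B_M|$ (Lemma~\ref{lem::prob-low-low-prop}), so the expected number of bumps is $O(\ln n)$ and a Chernoff bound applies via stochastic dominance. Second, after each bump a single new proposal re-matches her (tentatively) with probability at least $\mu_\ell/(e_1\mu_u\ln n)$, using the per-man cap of Lemma~\ref{lem::bound-on-low-low-props} and the independence of each man's private scores. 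Hence the total number of proposals she needs is dominated by a sum with mean $8d\,e_1(\mu_u/\mu_\ell)^2(\ln n)^2$, and a Chernoff bound at twice the mean simultaneously yields the $e_2(\ln n)^2$ edge budget and the failure probability $\exp(-\tfrac83 d\,e_1(\mu_u/\mu_\ell)^2(\ln n)^2)$. To repair your write-up, replace the per-proposal ``constant $p$'' step with this bump-counting decomposition.
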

\begin{proof}
Suppose $w$ is currently matched to a man in $B_{M,\ell}$. By Lemma~\ref{lem::prob-low-low-prop},
the probability that she is bumped (i.e.\ loses her current match) by the next man and women-low proposal to $B_M$ is at most $4(\mu_u/\mu_{\ell})/|B_M|$.

Therefore, over the course of $d\ln n|B_M|$ such proposals, she is bumped at most an expected
$4d(\mu_u/\mu_{\ell})\ln n$ times.
Using stochastic dominance, we can apply 
a Chernoff bound, which shows
she is bumped at most
$8d(\mu_u/\mu_{\ell})\ln n -1$ times
with failure probability at most $\exp(-\tfrac 43d(\mu_u/\mu_{\ell})\ln n)$.

We now bound the probability that $w$ tentatively matches with that man.
By Lemma~\ref{lem::bound-on-low-low-props},
$m$ receives at most $e_1\ln n$ proposals (including the current proposal by $w$). Each proposal has probability
at most $\mu_u \Delta$ and at least $\mu_{\ell} \Delta$ of being in a $\Delta$ range of loss for the man, and therefore $w$'s proposal produces the least loss among these up to $e_1\ln n$ proposals with probability at least
\begin{align*}
    \frac{\mu_{\ell}}{(e_1\ln n -1)\cdot \mu_u +\mu_{\ell} } \ge  \frac{1}{e_1\ln n} \cdot \frac {\mu_{\ell}}{\mu_u}.
\end{align*}

Note that the bounds for each man are independent as they depend on the private scores of that man for the proposals he has received.

Therefore, to end up matched after these $d\ln n |B_M|$ proposals, it suffices that $w$ make an expected
\begin{align}
    8d\frac{\mu_u}{\mu_{\ell}}\ln n \cdot {e_1\ln n} \cdot \frac {\mu_u} {\mu_{\ell}}=
    8d \cdot e_1 \Big(\frac{\mu_u}{\mu_{\ell}}\Big)^2 \cdot (\ln n)^2~~\text{proposals.}
\end{align}

Then, by a Chernoff bound, she makes at most 
$16d \cdot e_1 \big(\frac{\mu_u}{\mu_{\ell}}\big)^2\cdot (\ln n)^2$ proposals
with failure probability at most
$\exp(-\frac 83 d \cdot e_1  \big(\frac{\mu_u}{\mu_{\ell}}\big)^{2}\cdot (\ln n)^2)$.
\end{proof}

\begin{lemma}
\label{lem::enuf-edges}
Each woman in $B_W$ has at least $e_2(\ln n)^2$ man and woman-low edges to $B_M$ with failure probability at most $\exp(-e_2 \ln^2 n/3)$,
if 
$\nu \le \frac{1}{8\eta^2}\Big(\frac{\mu_{\ell}}{\rho_{\ell}}\Big)^2 \Big(\frac{\mu_u}{\mu_{\ell}}\Big)^5$,
$t_m\ge 2$, and $t_w\ge 2$.
\end{lemma}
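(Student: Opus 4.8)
The plan is to fix an arbitrary woman $w\in B_W$, with aligned man $m$, and to show that, conditioned on ${\mathcal B}_8$ and ${\mathcal B}_9$ not occurring (plus one auxiliary concentration event described below), $w$ has at least $e_2(\ln n)^2$ edges to men in $B_M$ that are simultaneously man-low, woman-low, and untruncated -- the untruncated property is what actually makes these edges available to $w$ during the woman-low proposal phase. The lemma then follows by a union bound over the at most $2\sigma_w n$ women in $B_W$, the conditioned-out events contributing only an extra $\exp(-\Theta(n^{2/3}))$. Recall that since ${\mathcal B}_8$ does not occur we have $|B_M|\ge\tfrac12\sigma_m n=\tfrac12\nu n^{2/3}$, every man in $B_M$ has public rating below $3\sigma_m$, and every woman in $B_W$ has public rating below $3\sigma_w$. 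The auxiliary event states that every man aligned with a woman of $B_W$ also has rating below $3\sigma_w$; this holds except with probability $\exp(-\Theta(\sigma_w n))$ by a Chernoff bound, since in expectation $3\sigma_w n$ men have rating below $3\sigma_w$ while $|B_W|<2\sigma_w n$.

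First I would bound, for a fixed man $m_i\in B_M$, the probability that the edge $(m_i,w)$ qualifies. Whether $(m_i,w)$ is man-low depends only on $s_{m_i}(w)$: it is man-low iff $U(r_w,s_{m_i}(w))<U(0,1)$, and since $\partial U/\partial x\le\mu_u$ and $\partial U/\partial y\ge\mu_\ell/\rho_u$ we get $U(r_w,1)-U(0,1)\le\mu_u r_w\le 3\mu_u\sigma_w$, so the edge is man-low with probability at least $1-3\mu_u\rho_u\sigma_w/\mu_\ell$, which exceeds $\tfrac12$ for $n$ large since $\sigma_w=\eta/n^{1/3}\to0$. Whether $(m_i,w)$ is woman-low and untruncated depends only on $s_w(m_i)$: the edge is untruncated iff $V(r_{m_i},s_w(m_i))\ge V(r_m-\sigmabar t_w^2,1)$ and woman-low iff $V(r_{m_i},s_w(m_i))<V(0,1)$. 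Because $r_m<3\sigma_w$ while $\sigmabar t_w^2=\sigmabar^3/\sigma_w^2=128(c+2)\ln n/(\rho_\ell^2\eta^2 n^{1/3})$, for $n$ large we have $r_m<\sigmabar t_w^2$, hence $r_m-\sigmabar t_w^2<0$, and by the extended definition of $V$ on negative first argument (slope $\mu_\ell$), $V(0,1)-V(r_m-\sigmabar t_w^2,1)\ge\mu_\ell(\sigmabar t_w^2-r_m)\ge\tfrac12\mu_\ell\sigmabar t_w^2$; since $\partial V/\partial y\le\mu_u/\rho_\ell$, this utility gap corresponds to a private-score window of length at least $\tfrac12(\rho_\ell\mu_\ell/\mu_u)\sigmabar t_w^2$. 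As the two events use disjoint private scores they are independent, so $(m_i,w)$ qualifies with probability at least $\tfrac14(\rho_\ell\mu_\ell/\mu_u)\sigmabar t_w^2$.

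Next I would sum over $m_i\in B_M$. The qualification indicators for distinct men $m_i$ are mutually independent, since they involve the pairwise-disjoint private scores $s_{m_i}(w)$ and $s_w(m_i)$. Letting $X$ be the number of qualifying edges from $w$ to $B_M$,
\begin{align*}
\mathbb{E}[X]\ \ge\ |B_M|\cdot\tfrac14\cdot\frac{\rho_\ell\mu_\ell}{\mu_u}\cdot\sigmabar t_w^2\ \ge\ \tfrac12\nu n^{2/3}\cdot\tfrac14\cdot\frac{\rho_\ell\mu_\ell}{\mu_u}\cdot\frac{128(c+2)\ln n}{\rho_\ell^2\eta^2 n^{1/3}}\ =\ \frac{16(c+2)\,\nu\,\mu_\ell}{\eta^2\,\mu_u\,\rho_\ell}\;n^{1/3}\ln n .
\end{align*}
Since $e_2=512(c+2)^2(\mu_u/\mu_\ell)^5$ is a constant and $n^{1/3}\ln n$ grows faster than $\ln^2 n$, for $n$ large $\mathbb{E}[X]\ge 2e_2\ln^2 n$; here the stated hypotheses (the bound on $\nu$ in terms of $\eta$ and the constants, and $t_m,t_w\ge2$, which themselves hold once $\ln n$ is a large enough multiple of $\eta^3\rho_\ell^2/(c+2)$) are what keep these numeric constants consistent with the rest of the construction. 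A standard Chernoff lower-tail bound then gives $\Pr[X<e_2\ln^2 n]\le\Pr[X<\tfrac12\mathbb{E}[X]]\le\exp(-\mathbb{E}[X]/8)=\exp(-\Theta(n^{1/3}\ln n))$, and a union bound over the at most $2\sigma_w n=2\eta n^{2/3}$ women in $B_W$, plus the $\exp(-\Theta(n^{2/3}))$ probability of the conditioned-out events, is at most $\exp(-e_2\ln^2 n/3)$ for $n$ large.

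The main obstacle is the truncation bookkeeping: one must check that every man $m$ aligned with a woman of $B_W$ really does satisfy $r_m-\sigmabar t_w^2<0$, so that the favourable slope $\mu_\ell$ of the extended utility function applies and the untruncated-and-woman-low private-score window has length $\Theta(\sigmabar t_w^2)=\Theta(\ln n/n^{1/3})$ rather than possibly being empty (which would happen if the truncation threshold $V(r_m-\sigmabar t_w^2,1)$ exceeded $V(0,1)$). Everything else is a routine Chernoff computation; in particular, because the qualification events for different men in $B_M$ depend on pairwise-disjoint private scores, the Chernoff bound applies directly, without the stochastic-dominance device used elsewhere in this section.
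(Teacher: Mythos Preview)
Your argument is correct for the lemma as literally stated, but it follows a different route from the paper's and there is one point where your wording overreaches.

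The paper's proof invokes the computation from Lemma~\ref{phase2-man-high-prob} symmetrically on both sides: it lower bounds the probability that an edge is simultaneously man-low \emph{and man-untruncated} by $\tfrac34\rho_\ell\sigmabar t_m^2$, and likewise woman-low \emph{and woman-untruncated} by $\tfrac34\rho_\ell\sigmabar t_w^2$. Multiplying these two $\Theta(\ln n/n^{1/3})$ factors by $|B_M|\ge\tfrac12\sigma_m n=\Theta(n^{2/3})$ yields an expected count of order exactly $(\ln n)^2$, and the hypothesis $\nu\le\tfrac1{8\eta^2}(\mu_\ell/\rho_\ell)^2(\mu_u/\mu_\ell)^5$ is precisely what is needed to make this expectation at least $2e_2(\ln n)^2$. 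The Chernoff bound then gives failure probability $\exp(-e_2\ln^2 n/3)$ directly, with no appeal to ``$n$ large enough''.

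Your route instead bounds the man-low probability by $\tfrac12$ (a near-trivial bound, since $r_w\le 3\sigma_w\to 0$) and only the woman-low-and-woman-untruncated probability by $\Theta(\sigmabar t_w^2)$. This gives an expected count $\Theta(n^{1/3}\ln n)$, which dwarfs $e_2(\ln n)^2$ for large $n$, so the Chernoff tail and the union bound are immediate and the hypothesis on $\nu$ plays no role in your argument. That is a genuine simplification for the lemma as stated.

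The one caveat: you announce that you will count edges that are ``man-low, woman-low, and untruncated'', but your computation only enforces \emph{woman}-untruncation (the condition $V(r_{m_i},s_w(m_i))\ge V(r_m-\sigmabar t_w^2,1)$); you never check that $U(r_{w},s_{m_i}(w))\ge U(r_{w_i}-\sigmabar t_m^2,1)$, i.e., that the man does not reject the edge. For the lemma's literal statement (``man and woman-low'') this is harmless, since you are lower-bounding a subset. But the paper's proof, by going through both truncation windows, actually delivers edges that are untruncated on \emph{both} sides, which is what the downstream Lemma~\ref{lem::prob-women-is-matched-at-end} needs: for $w$'s proposal to be tentatively accepted, the man in $B_M$ must not have truncated it. So your approach proves the stated lemma with room to spare, but silently drops a property the paper's proof supplies and later relies on.
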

\begin{proof}
As in the proof of Lemma~\ref{phase2-man-high-prob},
the probability that an edge from a woman $w_j\in B_W$ to a man $m_i\in B_M$ is  man-low is at least $\tfrac 34 \sigmabar t_w^2\rho_{\ell}$ and this depends on the man's private score for this edge; similarly the probability that it is woman low is at least $\tfrac 34\sigmabar t_m^2 \rho_{\ell}$ and depends on the woman's private score for the edge.
As ${\mathcal B}_8$ does not occur, by Lemma~\ref{lem::max-score-bottom-zone},
$|B_M|\ge \tfrac 12 \sigma_m \cdot n$.
Thus, the expected number of man and woman-low edges from $w_j$ to $B_M$ is at least
\begin{align*}
    \frac 34\sigmabar t_w^2\rho_{\ell} \cdot \frac 34\sigmabar t_m^2 \rho_{\ell} \cdot \tfrac 12 \sigma_m \cdot n
    &\ge \frac 14 \cdot\frac{\sigmabar^6 \cdot n}{\sigma_w^2\sigma_m} (\rho_{\ell})^2\\
    & \ge 32 \cdot 128(c+2)^2  \Big(\frac{\mu_{\ell}}{\rho_{\ell}}\Big)^2\frac{(\ln n)^2}{\eta^2\nu}\\
    &\ge 2e_2(\ln n)^2,~~~~~\text{if}
\end{align*}
\begin{align*}
    2e_2=16de_1 \Big(\frac{\mu_u}{\mu_{\ell}}\Big)^2
   & =8\cdot 16\Big[2(c+2)\Big(\frac{\mu_u}{\mu_{\ell}}\Big)\Big]^2\cdot \Big(\frac{\mu_u}{\mu_{\ell}}\Big)^3 \\
   &  \le 32 \cdot 128(c+2)^2  \Big(\frac{\mu_{\ell}}{\rho_{\ell}}\Big)^2\frac{1}{\eta^2\nu}\\
   \text{i.e., if~~} 8\Big(\frac{\mu_{\ell}}{\rho_{\ell}}\Big)^2 \ge \Big(\frac{\mu_{\ell}}{\mu_u}\Big)^5 \eta^2\nu.
\end{align*}

Applying stochastic dominance, by a Chernoff bound, the number of edges is at least
$e_2 \ln^2 n$ with probability at most $\exp(-e_2 \ln^2 n/3)$.
\end{proof}

\begin{lemma}
\label{lem::TW-matched}
All the women in $T_W$ are matched with failure probability at most $O(n^{-(c+1)})$.
\end{lemma}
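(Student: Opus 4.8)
The plan is to mirror, with men and women interchanged, the argument in the paragraph preceding the lemma showing that every man in $T_M$ is matched, taking the truncation parameter $t=t_w$ (so $\sigma=\sigma_w$, $\alpha=\sigma_w/4$, $\beta=\alpha\rho_{\ell}$, with $\gamma$ enlarged exactly as in the proof of Theorem~\ref{thm::low_loss_bounded} so that $\alpha\beta\gamma n=\Omega(\ln n)$ and the guaranteed loss bound is $L^w_{t_w}$). First I would invoke the symmetric form of Lemma~\ref{lem::key_bounded}: off an event of probability $O(n^{-(c+1)})$---the union of the counting events $\mathcal{B}_1,\mathcal{B}_2$ (now for the men's and the women's ratings) and the Chernoff failures internal to the key lemma---in the \emph{man}-proposing double-cut DA that cuts at woman $w_i$ and at rating $\max\{0,r_{w_i}-\alpha\}$, every index $i$ with $r_{m_i}\ge 3\alpha$ has the property that $w_i$ receives at least $\tfrac12\alpha\beta n$ surviving Phase-2 proposals (symmetric form of Claim~\ref{clm::many-man-acceptable-props}) and that at least one of them causes her a loss of at most $L^w_{t_w}$ (symmetric form of Claim~\ref{clm::one-man-accept-prop}), so $w_i$ is accepted. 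Since $w_i\in T_W$ forces $r_{m_i}\ge r_{m''}\ge\sigma_w>3\alpha$, every woman in $T_W$ is covered.

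Next I would verify that the agents' truncation strategies leave this conclusion intact. The decisive observation is that every edge the double-cut DA exercises is man-high: a cut-surviving proposal to a woman carries man-utility at least $U(\max\{0,r_{w_i}-\alpha\},1)\ge U(0,1)$, and so does every $\beta$-surviving proposal to $w_i$ herself. Moreover no man ever truncates a man-high edge: a truncating man $m\in B_M$ has an aligned woman $w$ with $r_w<\sigma_m$ and therefore keeps every edge of utility at least $U(r_w,1)-L^m_{t_m}=U(r_w-\sigmabar t_m^2,1)$, a threshold that---because $\sigmabar t_m^2>\sigma_m>r_w$, and using the extended definition of $U$ below rating $0$---lies strictly below $U(0,1)$; men in $T_M$ do not truncate at all. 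The women who truncate all lie in $B_W$, and since the men do the proposing here their truncations only generate extra rejections; a rejection can only decrease the number of men ever matched to women other than $w_i$---the structural bound that at most $i+h_i-1$ women other than $w_i$ are reachable under the rating cut is untouched---and hence can only increase the number of Phase-2 proposals that reach $w_i$. Finally $w_i$ does not truncate the guaranteed loss-$\le L^w_{t_w}$ proposal. Thus the conclusion of the first step survives verbatim on the truncated edge set, and by the symmetric form of Claim~\ref{clm::full-DA-better} the full man-proposing DA on the truncated edges also matches every $w_i\in T_W$ (indeed with loss at most $L^w_{t_w}$).

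It remains to transfer this to the woman-proposing DA actually used. By the standard fact that the set of matched agents is the same in every stable matching of a fixed (possibly incomplete) preference profile, every $w_i\in T_W$ is matched in the woman-proposing DA on the truncated edges as well. The total failure probability is that of the symmetric instance of Lemma~\ref{lem::key_bounded} with $t=t_w$, which---choosing $\sigmabar^3=128(c+2)\ln n/(\rho_{\ell}^2 n)$ as elsewhere---is $O(n^{-(c+1)})$.

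The step I expect to be the real obstacle is the second one: showing precisely that the men's truncations remove none of the edges Lemma~\ref{lem::key_bounded} depends on, and that the $B_W$-women's truncations can only help the proposal count rather than, through some cascade of re-matchings, starve $w_i$ of proposals. The man-high-edges observation is the clean way around this, as it lets one avoid any case analysis on which agent is matched to which during the run.
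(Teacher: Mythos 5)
Your proof is correct and follows essentially the same route as the paper's: run the symmetric (man-proposing) double-cut DA of Lemma~\ref{lem::key_bounded} with the $t=t_w$ parameters from Theorem~\ref{thm::low_loss_bounded}, check that the truncations remove none of the edges that argument relies on (the surviving proposals are man-high and hence never truncated by the men, while the women's truncations only increase the number of unmatched proposers reaching $w_i$), and transfer the conclusion to the woman-proposing outcome via the invariance of the matched set across stable matchings. The paper's version is terser and splits into the cases $r_{m_j}\ge\sigmabar$ and $r_{m_j}<\sigmabar$, but the substance is the same.
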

\begin{proof}
Let $w_j$ be a woman in $T_W$.
If $r_{m_j}\ge \sigmabar$, then the truncation does not remove any of the acceptable edges to $w_j$ and so the previous analysis shows $w_j$ is matched with failure probability $O(n^{-(c+2)})$.

So now suppose that
$r_{m_j}< \sigmabar$.
Consider a run of man-proposing DA with the edge set cut at 
$w_j$ and $\frac 14 r_{w_j}$.
Now, the acceptable edges are all woman-high.
Furthermore, the acceptable edges cause a man $m_i$ a loss
of at most $U(r_{w_i},1) -U(r_{w_i}-\sigmabar t_w^2,1)$, and these are edges that are not truncated by $m_i$. The proof of Theorem~\ref{thm::low_loss_bounded} shows that such a woman $m_j$ is matched using these edges with failure probability $O(n^{-(c+2)})$.
\end{proof}

\begin{lemma}
\label{lem::assump-holds}
If ${\mathcal B}_8$ does not occur, then, for large enough $n$,
Assumption~\ref{ass::limits-on-w-props} holds with failure probability $\exp(-|B_M|/8)$.
\end{lemma}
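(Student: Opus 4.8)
The plan is to handle the two parts of Assumption~\ref{ass::limits-on-w-props} separately. Throughout, $w$ denotes a woman in $B_W$ with aligned man $m$; recall $w$ truncates at loss $L^w_{t_w}$, so $w$ proposes an edge $(w,m^*)$ only if $V(r_{m^*},s_w(m^*))\ge\theta_w:=V(r_m,1)-L^w_{t_w}=V(r_m-\sigmabar t_w^2,1)$. First I would record what the event ${\mathcal E}_8$ (the complement of ${\mathcal B}_8$) supplies via Lemma~\ref{lem::max-score-bottom-zone}: every man of $B_M$ has public rating below $3\sigma_m$, and $\tfrac12\sigma_m n<|B_M|<2\sigma_m n$. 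I would also note the basic size estimate $\sigmabar t_w^2=\sigmabar^3 n^{2/3}/\eta^2=128(c+2)\ln n/(\rho_\ell^2\eta^2 n^{1/3})=\Theta(\ln n/n^{1/3})\to 0$, and hence, since $r_m\ge 0$ and the extension of $V$ to ratings below $0$ has first derivative at most $\mu_u$, that $\theta_w\ge V(0,1)-\mu_u\sigmabar t_w^2=V(0,1)-o(1)$. (Only $r_m\ge 0$ is needed here: a larger $r_m$ only raises $\theta_w$ and makes $w$'s truncation stricter, which can only help both claims.)

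For part (i), I would show that a proposal from $w$ to a man $m^*\in B_M$ with $s_w(m^*)\le\tfrac12$ would already be removed by $w$'s truncation, so it is never made. Indeed, using $\partial V/\partial s\ge\mu_\ell/\rho_u$ and $\partial V/\partial r\le\mu_u$,
\[
V(r_{m^*},s_w(m^*))\le V(3\sigma_m,\tfrac12)\le V(3\sigma_m,1)-\tfrac{\mu_\ell}{2\rho_u}\le V(0,1)+3\mu_u\sigma_m-\tfrac{\mu_\ell}{2\rho_u},
\]
and since $\sigma_m,\sigmabar t_w^2\to 0$ while $\mu_\ell/(2\rho_u)$ is a fixed positive constant, the right-hand side is below $V(0,1)-\mu_u\sigmabar t_w^2\le\theta_w$ for all large $n$. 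Thus every edge a woman in $B_W$ proposes to a man of $B_M$ has private score exceeding $\tfrac12$; this holds with certainty once ${\mathcal E}_8$ holds and $n$ is large, so part (i) costs no failure probability.

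For part (ii), fix $m^*\in B_M$ and let $\tau_{m^*}\in(0,1)$ be defined (it exists for large $n$ by the bounds just used) by $V(r_{m^*},\tau_{m^*})=\theta_w$; then $w$ can propose to $m^*$ only if $s_w(m^*)\ge\tau_{m^*}$. Combining $\theta_w=V(r_{m^*},\tau_{m^*})\le V(3\sigma_m,1)-(1-\tau_{m^*})\mu_\ell/\rho_u$ with $\theta_w\ge V(3\sigma_m,1)-3\mu_u\sigma_m-\mu_u\sigmabar t_w^2$ would give $1-\tau_{m^*}\le q_n:=\tfrac{\rho_u\mu_u}{\mu_\ell}\big(3\sigma_m+\sigmabar t_w^2\big)=\Theta(\ln n/n^{1/3})$, hence $\Pr[s_w(m^*)\ge\tau_{m^*}]\le q_n$. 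These events, over $m^*\in B_M$, involve disjoint private scores $s_w(m^*)$ and so are independent, and they are unaffected by conditioning on ${\mathcal E}_8$ (an event about public ratings only). Letting $X_w$ be the number of men of $B_M$ to which $w$ could possibly propose, $\mathbb{E}[X_w]\le q_n|B_M|$, so for $n$ large enough that $q_n\le\tfrac{1}{4e}$ a Chernoff bound gives $\Pr[X_w\ge\tfrac12|B_M|]\le(2eq_n)^{|B_M|/2}\le 2^{-|B_M|/2}\le e^{-|B_M|/4}$. A union bound over the at most $2\sigma_w n=2\eta n^{2/3}$ women of $B_W$ yields total failure probability at most $2\eta n^{2/3}e^{-|B_M|/4}$, which, since $|B_M|\ge\tfrac12\nu n^{2/3}$, is at most $e^{-|B_M|/8}$ for all large $n$. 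As the men $w$ actually proposes to in $B_M$ form a subset of the $X_w$ she could propose to, this establishes part (ii), and together with part (i) proves the lemma.

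The step I expect to be the main obstacle is the estimate $1-\tau_{m^*}=O(\ln n/n^{1/3})$: one must be careful with the extension of $V$ below rating $0$ and, more importantly, must keep the error terms genuinely of order $\ln n/n^{1/3}$ rather than merely $o(1)$, since only a bound of this strength makes the Chernoff exponent $|B_M|/2$ large enough to absorb the polynomial union‑bound factor $2\eta n^{2/3}$ and land exactly on the claimed failure probability $\exp(-|B_M|/8)$.
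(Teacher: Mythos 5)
Your proof is correct and follows the same overall architecture as the paper's: part (i) is a deterministic consequence of the women's truncation thresholds via the ratio bound, and part (ii) is an expectation-plus-Chernoff-plus-union-bound argument over the per-edge survival probabilities. The one genuine difference is in part (ii): the paper counts edges surviving \emph{both} truncations, multiplying the man-side survival probability $O(\sigmabar t_m^2\rho_u)$ by the woman-side one $O(\sigmabar t_w^2\rho_u)$ to get an expected count of $O(\ln^2 n/n^{2/3})\cdot|B_M|$, whereas you use only the proposer's truncation and get the weaker but still sufficient $O(\ln n/n^{1/3})\cdot|B_M|=o(|B_M|)$. Your variant buys robustness: the bound holds whether or not "proposes to" is read as including edges the man would refuse, since the men a woman actually proposes to are always a subset of those surviving her own cutoff. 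You are also somewhat more careful than the paper on two points worth keeping: you correctly read Assumption (i) as a statement about edges to $B_M$ (which is the only way it is used, in Lemma~\ref{lem::Phase3-almost-random-man-chosen}; taken literally over all men it would fail for proposals to highly rated men), and you explicitly account for the facts that $r_{m^*}$ for $m^*\in B_M$ may exceed $r_m$ by up to $3\sigma_m$ and that $r_m-\sigmabar t_w^2$ may be negative, where the extended definition of $V$ is needed. The arithmetic in your Chernoff and union-bound steps checks out against the claimed $\exp(-|B_M|/8)$.
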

\begin{proof}
Assumption (i) holds if $\sigmabar t_w^2 \rho_u \le \tfrac 12$, i.e.\ if $\sigmabar^3/\sigma_w^2 \rho_u \le \tfrac 12$, i.e.\ if $128(c+2)\ln n/(\rho_{\ell}^2 n) \cdot n^{2/3}/\eta^2 \cdot \rho_u \le \tfrac 12$;
this holds if $(n^{1/3}/\ln n)\ge 256 (c+2) \rho_u/[(\rho_{\ell}^2 \eta^2]$, which is true for large enough $n$.

Assumption (ii) holds if each woman in $B_W$ has at most $\tfrac 12 |B_M|$ untruncated edges to men in $B_M$.
The probability that an edge $(m_i,w_j)$ is not truncated by $m_i$ is at most $(r_{w_j} + \sigmabar t_m^2)\rho_u\le 2\sigmabar t_m^2\rho_u$,
and the probability that it is not truncated by $w_j$ is at most $(r_{m_i} + \sigmabar t_w^2)\rho_u \le 2\sigmabar t_w^2\rho_u$.
Thus the expected number of untruncated edges from a woman $w\in B_W$ to the men in $B_M$ is at most
\begin{align*}
    2 \sigmabar t_w^2\rho_u \cdot 2 \sigmabar t_m^2\rho_u \cdot |B_M|
   &\le 4 \frac{\sigmabar^6}{\sigma_w^2 \sigma_m^2} \rho_u^2 |B_M|
    \le \frac{(256 (c+2) \ln n\rho_u)^2}{\rho_{\ell}^4\eta^2 \nu^2 n^{2/3}}|B_M|
    \le \tfrac 14 |B_M|,
\end{align*}
if $n$ is large enough.

Note that the bounds on the probabilities are due to the men's and women's independent private scores for these edges. Thus, using stochastic dominance, by means of a Chernoff bound, we obtain that the number of these edges is at most $\tfrac 12 |B_M|$ with failure probability
$\exp(-|B_M|/8)$.
\end{proof}

\begin{lemma}
\label{lem::constraints and error bound}
The run of woman-proposing DA with the truncated edge sets matches every woman (and man) with failure probability $n^{-c}$ if $n$ is large enough, if
$\nu =\frac{64}{\eta^2}\cdot\Big(\frac{1}{\rho_{\ell}}\big)^2\cdot\Big(\frac{\mu_{\ell}}{\mu_u}\Big)^4$ and
$\eta$ satisfies
$4\eta^3 \rho_{\ell}^4(\mu_u/\mu_{\ell})^4\cdot \exp(-\eta^3\rho_{\ell}^2/128) \le \tfrac 1{10}$.
\end{lemma}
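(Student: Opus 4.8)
The plan is to assemble the preceding lemmas into one global argument. First I would observe that since there are $n$ men and $n$ women, it suffices to show that, with failure probability at most $n^{-c}$, every man is matched (equivalently, every woman). Split the women into the top zone $T_W$ and bottom zone $B_W$, and the men into $T_M$ and $B_M$ (Definition~\ref{def::bottom-zone}). By Lemma~\ref{lem::TW-matched} every woman in $T_W$ is matched, and by Theorem~\ref{thm::low_loss_bounded} — invoked with $c{+}1$ in place of $c$, which only changes the constant hidden in $\sigmabar$ — every man in $T_M$ is matched with loss at most $L^m_{t_m}$; both hold with failure probability $O(n^{-(c+1)})$, and the truncations by the $B_M$ men and $B_W$ women leave intact all the edges these statements rely on (the cut of Lemma~\ref{lem::key_bounded} uses only woman-high edges, which the women do not truncate, and the $T_M$ men truncate nothing). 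So the real work is to show that, conditioned on a suitable high-probability event, every man in $B_M$ receives at least one proposal during the run of woman-proposing DA on the truncated edge set, hence is matched — which, with $T_M$ matched, forces all men, and therefore all women, to be matched.

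Next I would fix the good event $\mathcal G$: none of ${\mathcal B}_1$--${\mathcal B}_{10}$ occurs, Assumption~\ref{ass::limits-on-w-props} holds, and the Chernoff bounds of Lemmas~\ref{lem::num-matches-WM-BL}, \ref{lem::mhigh-wlow-bound}, \ref{lem::bound-on-low-low-props}, \ref{lem::prob-women-is-matched-at-end}, \ref{lem::enuf-edges}, Corollary~\ref{cor::bound-woman-low-props} and Lemma~\ref{lem::assump-holds} all succeed. Each conditional failure probability has already been bounded: the ${\mathcal B}_i$ bounds (Lemmas~\ref{lem::max-score-bottom-zone}, \ref{lem::smprime-bound}, \ref{lem::women-in-intermed-zone}) and the $\exp(-\Theta(|B_M|)) = \exp(-\Theta(n^{2/3}))$ and $\exp(-\Theta(\ln^2 n))$ terms are $o(n^{-c})$, while the $O(n^{-(c+1)})$ and $O(n^{-(c+2)})$ terms, summed over the $O(\log n)$ values of $h$ that appear, are $o(n^{-c})$; together with the $O(n^{-(c+1)})$ from the previous paragraph this gives $\Pr[\overline{\mathcal G}] \le n^{-c}$ for $n$ large. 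The key point making this union bound legitimate is that each bound is conditioned only on the non-occurrence of earlier bad events and uses a fresh disjoint block of random variables (public ratings, or a specific set of private scores), so the stochastic-dominance/deferred-decisions justification of the Chernoff bounds already given in the individual lemmas applies, with no circularity.

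Then I would check that the stated choices $\nu = \tfrac{64}{\eta^2}\rho_{\ell}^{-2}(\mu_{\ell}/\mu_u)^4$ and $\eta$ satisfying $4\eta^3\rho_{\ell}^4(\mu_u/\mu_{\ell})^4\exp(-\eta^3\rho_{\ell}^2/128) \le \tfrac1{10}$ meet every hypothesis used in the chain: $\eta \ge 6\nu$ and $4(\eta/\nu)\exp(-(\eta/2)^3\rho_{\ell}^2/128) \le \tfrac1{10}$ (Lemma~\ref{lem::num-matches-WM-BL}); $t_m, t_w \ge 2$, which amounts to $\sigmabar \ge 2\sigma_m$ and $\sigmabar \ge 2\sigma_w$ and holds for large $n$ since $\sigmabar = \Theta((\ln n/n)^{1/3})$ while $\sigma_m, \sigma_w = \Theta(n^{-1/3})$ (Lemmas~\ref{phase2-man-high-prob}, \ref{lem::enuf-edges}, \ref{lem::assump-holds}); $\nu \le \tfrac1{8\eta^2}(\mu_{\ell}/\rho_{\ell})^2(\mu_u/\mu_{\ell})^5$ (Lemma~\ref{lem::enuf-edges}); and $d\eta\nu^2\rho_u\rho_{\ell}/[16(c+2)] \le \tfrac14$ so that the man-high proposal count of Lemma~\ref{lem::mhigh-wlow-bound} is at most $\tfrac14|B_M|$. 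Since $\eta/\nu = \tfrac1{64}\eta^3\rho_{\ell}^2(\mu_u/\mu_{\ell})^4$ and $\eta\nu^2 = \tfrac{64^2}{\eta^3}\rho_{\ell}^{-4}(\mu_{\ell}/\mu_u)^8 \to 0$, and since the exponential in the displayed $\eta$-constraint decays faster than the one in Lemma~\ref{lem::num-matches-WM-BL}, all of these reduce to taking $\eta$ a sufficiently large constant (the quantity $8d(\mu_u/\mu_{\ell})\ln n$ can be replaced by its floor at negligible cost).

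Finally, conditioned on $\mathcal G$, I would run the $B_M$ argument. By Lemma~\ref{lem::num-matches-WM-BL} at most $\tfrac14|B_M|$ men of $B_M$ end up matched to women in $T_W$, and by Lemma~\ref{lem::mhigh-wlow-bound} (with the constant check above) at most $\tfrac14|B_M|$ men of $B_M$ receive a man-high proposal from $B_W$; hence at least $\tfrac12|B_M|$ men of $B_M$ lie in $B_{M,\ell}$ and are never taken by $T_W$, so they see only man-low, woman-low proposals. On these, Lemma~\ref{lem::Phase3-almost-random-man-chosen} makes the choice of man almost uniform (ratio $\le 2\mu_u/\mu_{\ell}$), so by Corollary~\ref{cor::bound-woman-low-props} after $d\ln n|B_M|$ woman-low proposals to $B_M$ every man in $B_M$ has received a proposal; meanwhile, by Lemmas~\ref{lem::bound-on-low-low-props}, \ref{lem::prob-low-low-prop} and~\ref{lem::prob-women-is-matched-at-end}, together with the per-woman edge count $\ge e_2\ln^2 n$ of Lemma~\ref{lem::enuf-edges}, every woman in $B_W$ is matched after at most $e_2\ln^2 n$ such proposals, in particular without exhausting her list. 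Thus either DA terminates earlier with everyone matched, or by the time $d\ln n|B_M|$ woman-low proposals to $B_M$ are made, all of $B_M$, $B_W$, $T_M$, $T_W$ are matched; either way every agent is matched. I expect the main obstacle to be exactly this final bookkeeping: confirming that a single pair $(\nu,\eta)$ simultaneously satisfies the numerically distinct constraints of Lemmas~\ref{lem::num-matches-WM-BL}--\ref{lem::assump-holds}, and that the many Chernoff bounds — each valid only under its own conditioning — combine into one union bound of total weight $n^{-c}$; the probabilistic content proper is already carried by the preceding lemmas.
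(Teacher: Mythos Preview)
Your proposal is correct and takes essentially the same approach as the paper: the paper's proof is terser, simply deducing from Lemmas~\ref{lem::prob-women-is-matched-at-end} and~\ref{lem::enuf-edges} that every woman in $B_W$ is matched, from Lemma~\ref{lem::TW-matched} that every woman in $T_W$ is matched, then listing the constraints from the chain of lemmas, substituting the stated $\nu$, and summing the failure probabilities to $O(n^{-(c+1)})\le n^{-c}$. Your version is more explicit about how the pieces interlock (in particular you spell out the $|B_{M,h}|\le \tfrac12|B_M|$ step via Lemmas~\ref{lem::num-matches-WM-BL} and~\ref{lem::mhigh-wlow-bound}, and the ``either DA terminates early or the $d\ln n|B_M|$ proposals are actually made'' dichotomy), but the argument and the bookkeeping are the same.
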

\begin{proof}
As any unmatched woman in $B_W$ will keep proposing until she runs out of proposals, we deduce from
Lemmas~\ref{lem::prob-women-is-matched-at-end} and~\ref{lem::enuf-edges} that
all the women in $B_W$ are matched, modulo the lemma's failure probability.
By Lemma~\ref{lem::TW-matched}, all the women in $T_W$ are matched, modulo the lemma's failure probability. Thus all the women are matched.

This entails the following constraints, from Lemmas~\ref{lem::women-in-intermed-zone},
\ref{lem::num-matches-WM-BL},
\ref{phase2-man-high-prob},
Corollary~\ref{cor::bound-woman-low-props}, Lemmas~\ref{lem::bound-on-low-low-props}, \ref{lem::prob-women-is-matched-at-end}, \ref{lem::enuf-edges}, \ref{lem::num-matches-WM-BL}, respectively.
\begin{align*}
    n^{2/3} &\ge 3\\
    \eta &\ge 6\nu\\
    t_m &\ge 2\\
    d&=2(c+2)\mu_u /\mu_{\ell}\\
    e_1&=16d(\mu_u/\mu_{\ell}) = 8(c+2)(\mu_u/\mu_{\ell})^2\\
    e_2 &= 8d \cdot e_1 ( {\mu_u}/ {\mu_{\ell}} )^2 = 8(c+2)^2(\mu_u/\mu_{\ell})^4\\
    t_w &\ge 2\\
    \nu &\le \frac{1}{8\eta^2}\cdot\Big(\frac{\mu_{\ell}}{\rho_{\ell}}\Big)^2\cdot\Big(\frac{\mu_{u}}{\mu_{\ell}}\Big)^5\\
    \tfrac 1{10} & \ge 4(\eta/\nu) \cdot \exp(-(\eta/2)^3\rho_{\ell}^2/128)  \\
\end{align*}
We set $\nu =\frac{1}{8\eta^2}\cdot\Big(\frac{\mu_{\ell}}{\rho_{\ell}}\Big)^2\cdot\Big(\frac{\mu_{u}}{\mu_{\ell}}\Big)^5$.
The final constraint becomes
\begin{align*}
     \eta^3 (\rho_{\ell}/\mu_{\ell})^2(\mu_{\ell}/\mu_{u})^5\cdot \exp(-(\eta/2)^3\rho_{\ell}^2/128) \le \tfrac 8{10}.
\end{align*}
In addition, we need to satisfy $\eta \ge 6\nu$.
Clearly, $\eta=O(1)$ suffices.

Finally, to ensure $t_w\ge 2$ it suffices to have
\begin{align*}
    \Big(\frac{128(c+2) \ln n} {\rho_{\ell}^2}\Big)^{1/3} \ge 2 \eta,
\end{align*}
and clearly this holds if $n$ is large enough.
As $t_m> t_w$, this also ensures that $t_w\ge 2$.

We also assume that 
$8d(\mu_u/\mu_{\ell})\ln n=8(c+2)(\mu_u/\mu_{\ell})^2\ln n$ is an integer
(in Lemma~\ref{lem::prob-women-is-matched-at-end}).
This can be achieved by increasing $\mu_u$ slightly.
 
The overall failure probability obtained by summing the terms in Lemmas~\ref{lem::enuf-edges}, \ref{lem::prob-women-is-matched-at-end},
\ref{lem::bound-on-low-low-props},
\ref{lem::mhigh-wlow-bound},
\ref{lem::num-matches-WM-BL},
\ref{lem::women-in-intermed-zone},
\ref{lem::smprime-bound},
\ref{lem::max-score-bottom-zone},
\ref{lem::TW-matched},
\ref{lem::assump-holds},
and Corollary~\ref{cor::bound-woman-low-props},
plus ruling out ${\mathcal B}_1$--${\mathcal B}_3$,
is at most
\begin{align*}
    &\exp(-e_2 (\ln n)^2/3)
    +\exp(-\tfrac 83 d \cdot e_1 ({\mu_u}/{\mu_{\ell}})^2\cdot (\ln n)^2)
    + n\cdot \exp(-4d\ln n (\mu_u/\mu_{\ell})/3)\\
    & +\exp(-d\eta \nu^2 \rho_u\rho_{\ell} |B_M|/[48(c+2)])
     +\exp(-|B_M|/24)
    +2\exp\big(-\eta \cdot n^{2/3}/3\big) \\
    &+\exp(-\sigma_m\cdot n)
    +6\exp(-\sigma_m\cdot n/8)
    +\exp(-|B_M|/8)
    +O(n^{-(c+1)}).
\end{align*}
This totals $O(n^{-(c+1)})$, which is bounded by $n^{-c}$ for large enough $n$.
\end{proof}

\begin{proof}(of Theorem~\ref{thm::eq-BN})
Lemma~\ref{lem::constraints and error bound} shows that, with probability at least $1-n^{-c}$, there exists a stable matching, in which every man and woman obtains a match with a loss of less than $L^{m}_{t_m}$ and $L^{w}_{t_w}$, respectively; it results from the men with public rating $\sigmabar t$ implementing reservation strategies with reservation thresholds $L^{m}_{t}$, for $t< t_m$,
and the remaining men using the reservation threshold $L^m_{t_m}$. The edges meeting this constraint are the
acceptable edges for this run of DA.
By Theorem~\ref{thm:basic-bdd-deriv-result}, w.h.p, no man $m$ gets utility greater than $U(r^m,1)+\Theta([\ln n/n]^{1/3})$, and an analogous bound applies to the women. Thus, the most a man could gain by deviating from the equilibrium strategy, in terms of his expected utility, is 
\begin{align*}
n^{-c}\cdot2+(1-n^{-c})\cdot(\Theta([\ln n/n]^{1/3})+L^{m}_{t_m}).
\end{align*}
Since $L^{m}_{t_m}=\Theta(\ln n/n^{1/3})$, this is an $\eps$-Bayes-Nash equilibrium with $\eps=\Theta(\ln n/n^{1/3})$.

Further notice that, for each agent, the number of acceptable edges is at most $\Theta(\ln^2 n)$; furthermore, this bound improves to at most $\Theta(\ln n)$ for all agents outside the bottom $\Theta([\ln n/n]^{1/3})$ fraction of agents.
\end{proof}

\section{Additional Numerical Simulations and Discussion}
\label{sec::more_numerics}
Here we provide another set of the experiments, 
but for $n=\text{1,000}$ instead of 2,000. The relative weight of public ratings and private scores is unchanged ($\lambda=0.8$).

\subsection{Numbers of Available Edges}
\subsubsection{One-to-one} $n=\text{1,000}$, $\lambda=0.8$, $L=0.15$, 100 runs.

\begin{figure}[!ht]
    \centering
    \subfloat[Number of edges in the acceptable edge set for each woman.]{	\includegraphics[width=0.47\linewidth,height=4cm]{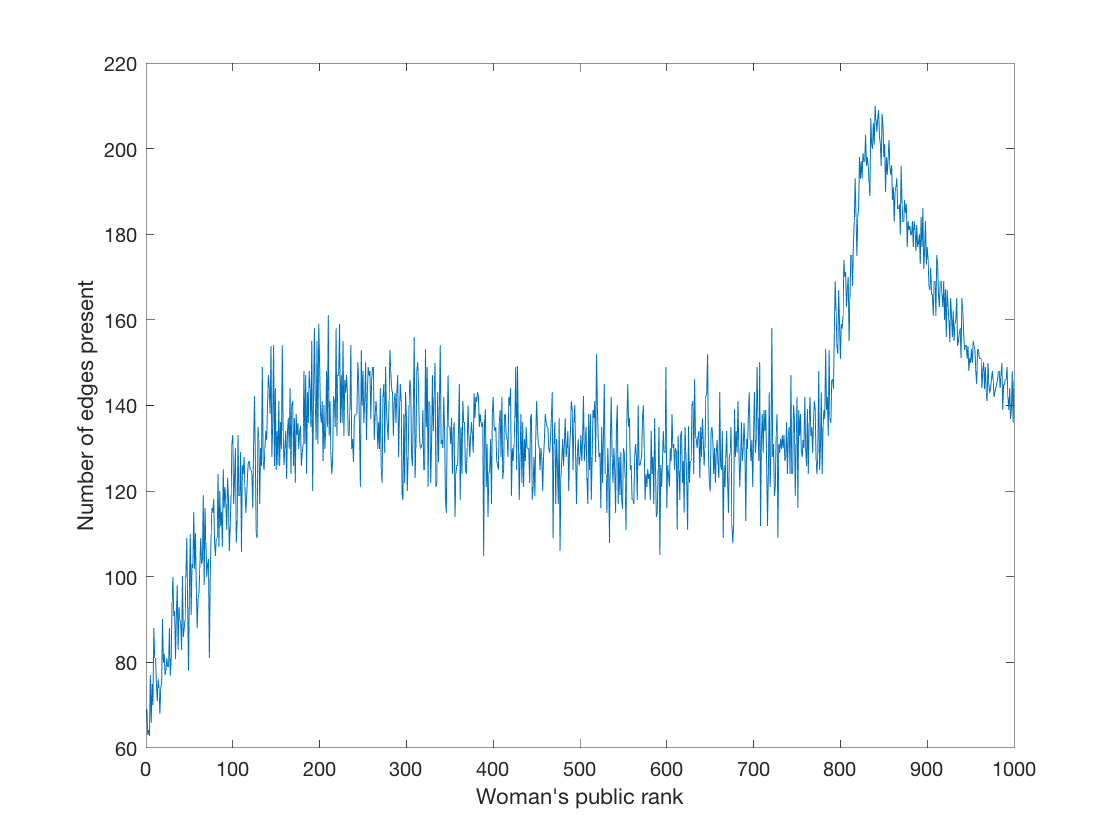}}
    \qquad
    \subfloat[Number of edges in the acceptable edge set  proposed by each woman.
    ]{\includegraphics[width=0.47\linewidth,height=4cm]{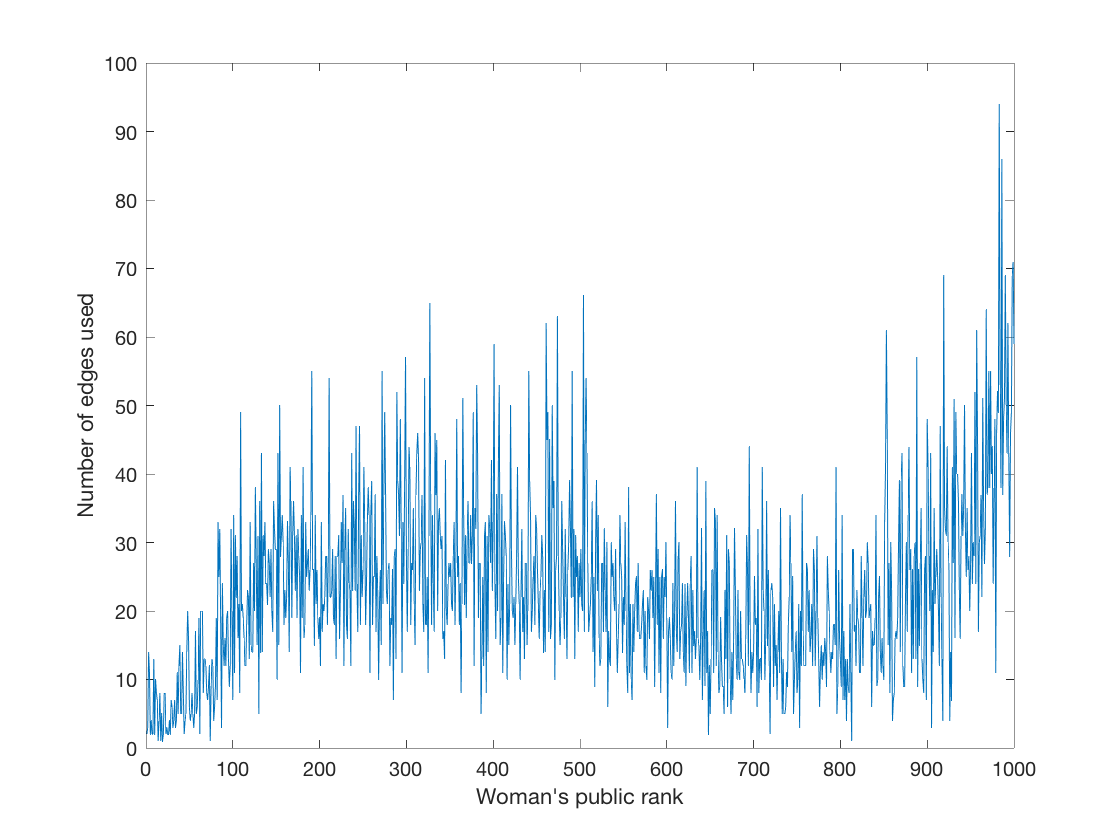}}
    \caption{One-to-one case: Outcome in a typical run.}
\label{fig:edges_typ_1000}
\end{figure}

\begin{figure}[!ht]
    \centering
    \subfloat[Number of edges in the acceptable edge set, per woman, by decile; average in blue with circles, minimum in red with stars.]{	\includegraphics[width=0.47\linewidth,height=4cm]{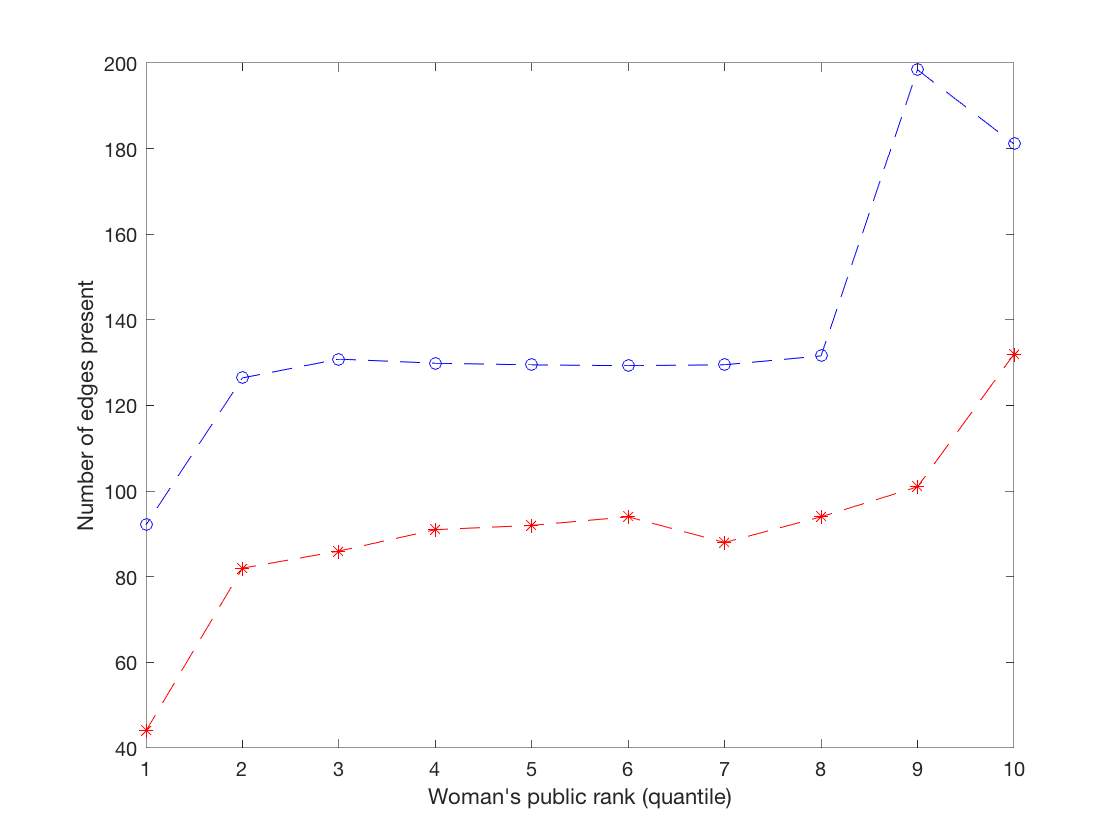}}
    \qquad
    \subfloat[Number of edges in the acceptable edge set proposed during the run of DA, per women, by decile; average in blue with circles, maximum in red with stars.]{\includegraphics[width=0.47\linewidth,height=4cm]{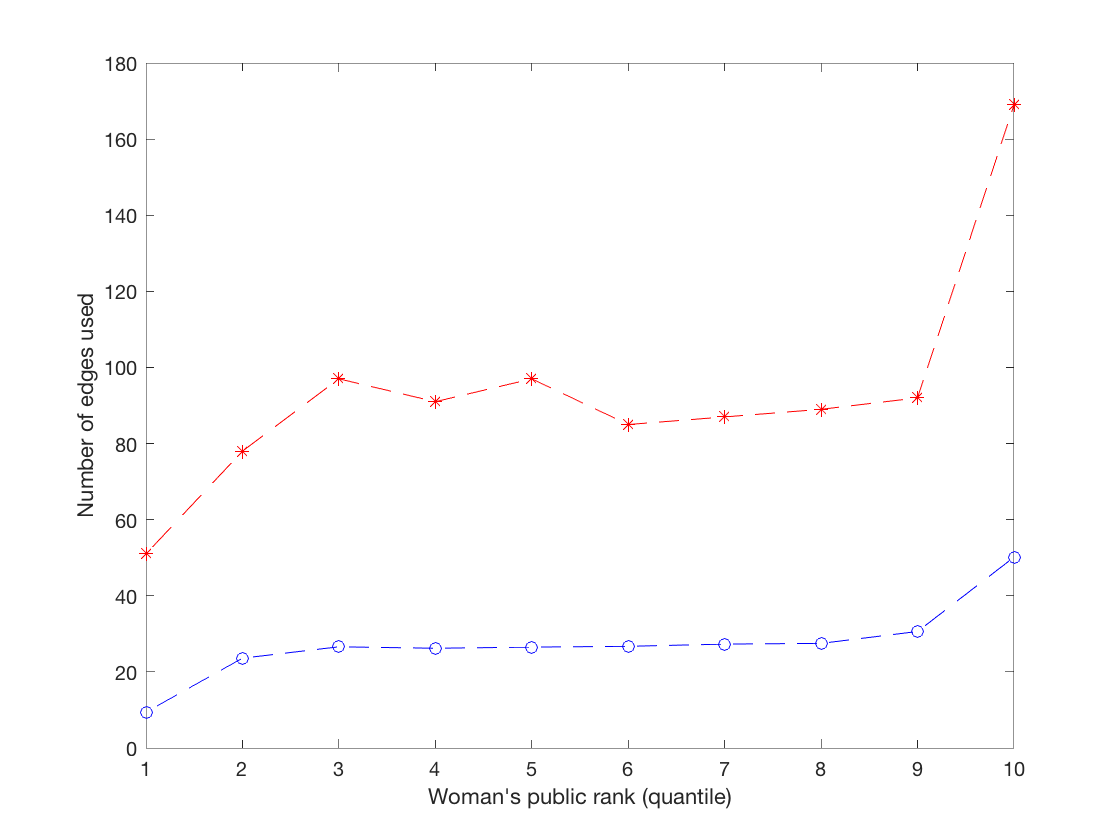}}
    \caption{One-to-one case: summary statistics.}
\label{fig:edges_avg_1000}
\end{figure}

\subsubsection{Many-to-one} $n=\text{1,000}$, $\lambda=0.8$, $d=4$, $L_c=0.16$, $L_w=0.25$, 100 runs.

We chose to present the results for $d=4$ rather than $8$ (as used in the $n=\text{2,000}$ experiments) because the needed value for $L_w$ with $d=8$ leads to very large acceptable edge sets, which we do not consider an interesting case.

\begin{figure}[H]
    \centering
    \subfloat[Number of edges in the acceptable edge set, per woman, by decile; minimum in red with stars, average in blue with circles. ($n_w=\text{1,000}$, $d=4$, $\lambda=0.8$, $L_c = 0.15$, $L_w=0.25$.)]{	\includegraphics[width=0.47\linewidth,height=4cm]{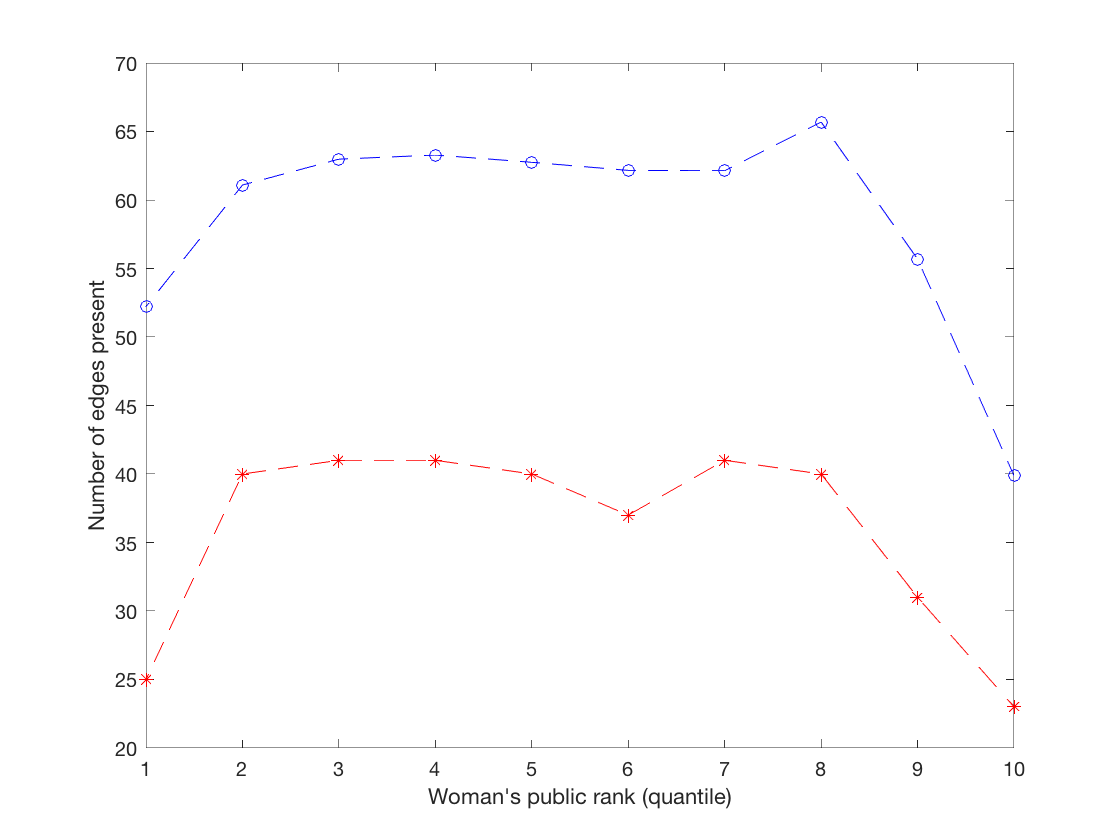}}
    \qquad
    \subfloat[Number of edges in the acceptable edge set proposed during the run of DA, per  woman, by decile; maximum in red with stars, average in blue with circles. 
    ]{\includegraphics[width=0.47\linewidth, height=4cm]{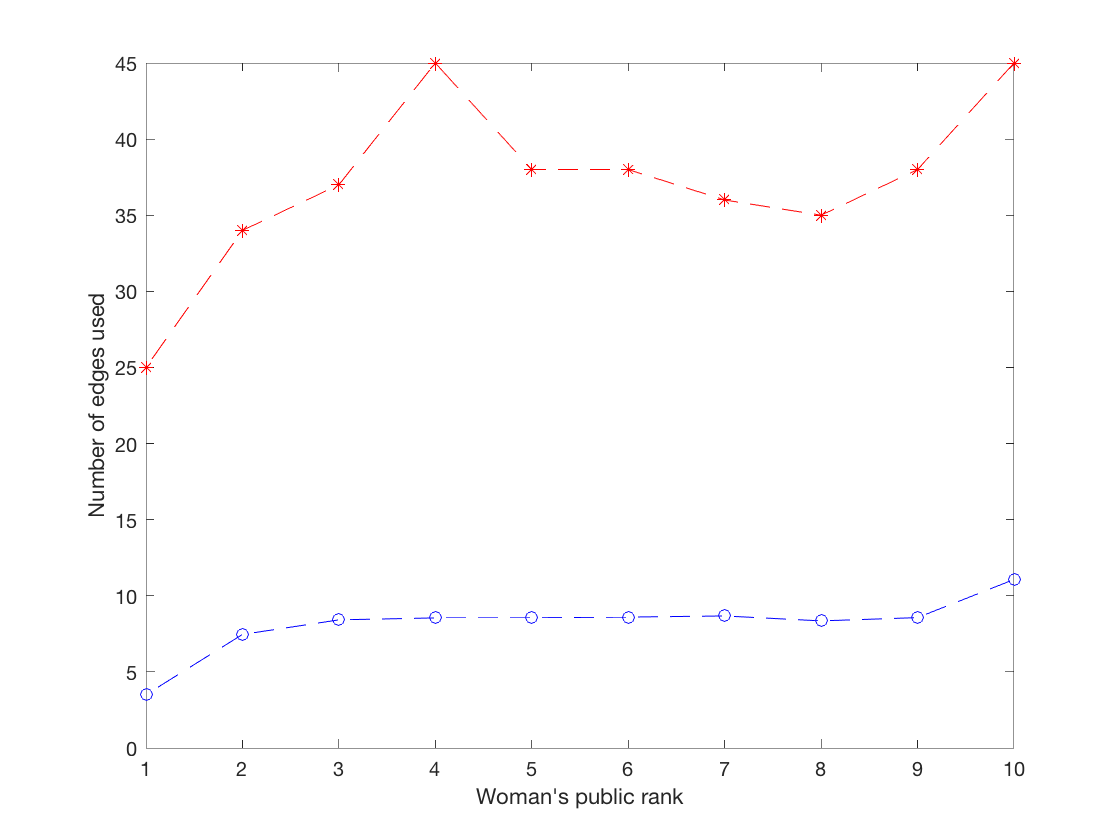}}
    \caption{Many to One Setting}
\label{fig:edges_typ_4_1000}
\end{figure}

\hide{
We explain the reason for the perhaps surprising drop at the ends of the curves in Figure~\ref{fig:edges_typ_4_1000}(a). We observe that for a worker $w$ the range of company public ratings for its acceptable edges is
$[s_c-\Theta(L_w),s_c+\Theta(L_c)]$, and at the ends a portion of this range will be cut off, reducing the number of acceptable edges, with the effect more pronounced for low public ratings. 
}

\subsection{Unique Stable Partners}

100 runs; 38 men have multiple stable partners in the typical run shown.
\begin{figure}[h!]
    \centering
    \subfloat[Public rank of men with multiple stable partners in a typical run.]{	\includegraphics[width=0.47\linewidth]{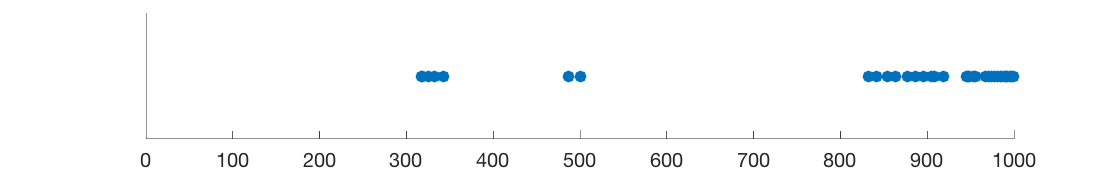}}
    \qquad
    \subfloat[Average numbers of men with multiple stable partners, by decile.]{\includegraphics[width=0.47\linewidth,height=4cm]{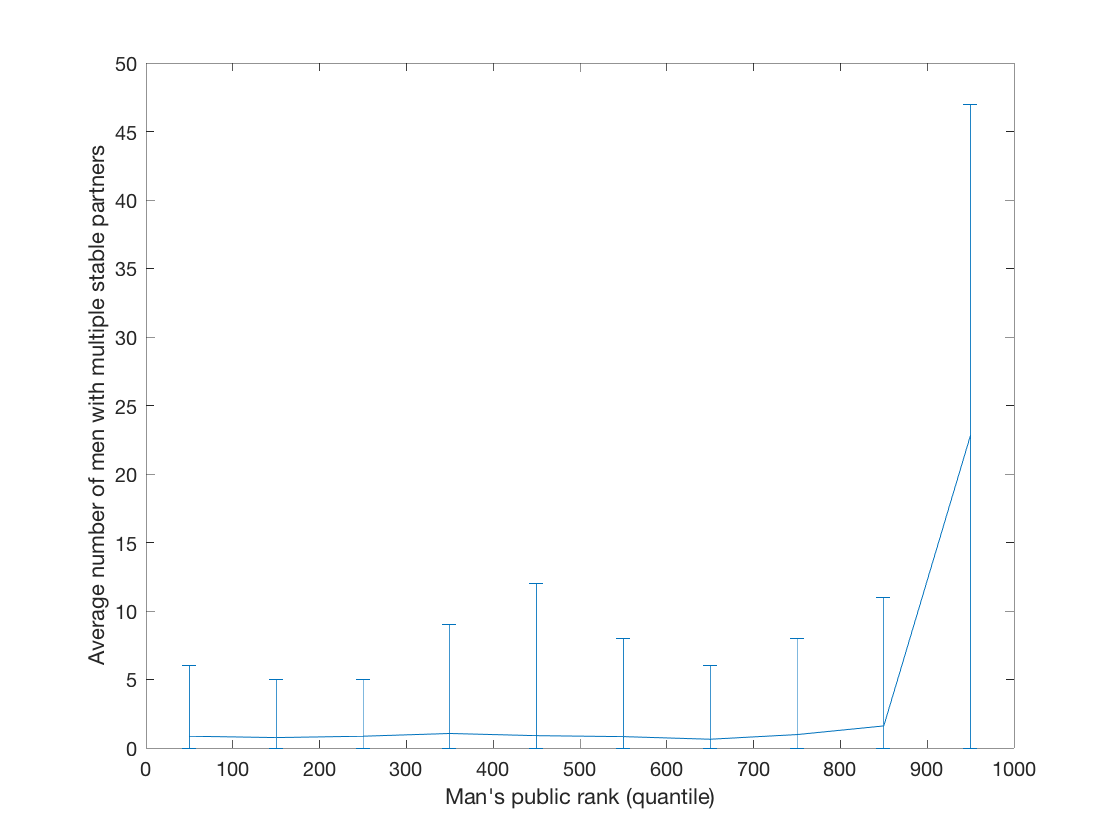}}
    \caption{Unique stable partners, one-to-one setting, $n=\text{1,000}$.}
\label{fig:multiple_stable_1000}
\end{figure}

\subsection{Constant Number of Proposals}

$r=0.19$, $q=0.60$, company capacity $=4$, 100 runs.
\begin{figure}[H]
    \centering
    \subfloat[Public ranks of unmatched workers in a typical run.]{	\includegraphics[width=0.29\linewidth]{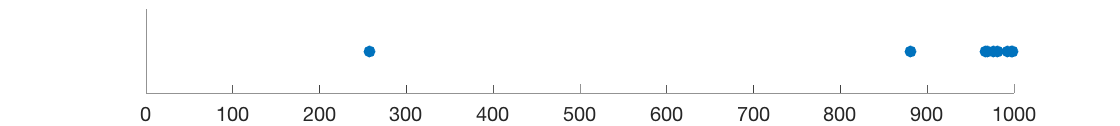}}
    \qquad
    \subfloat[Average number of unmatched workers, by decile.]{\includegraphics[width=0.29\linewidth]{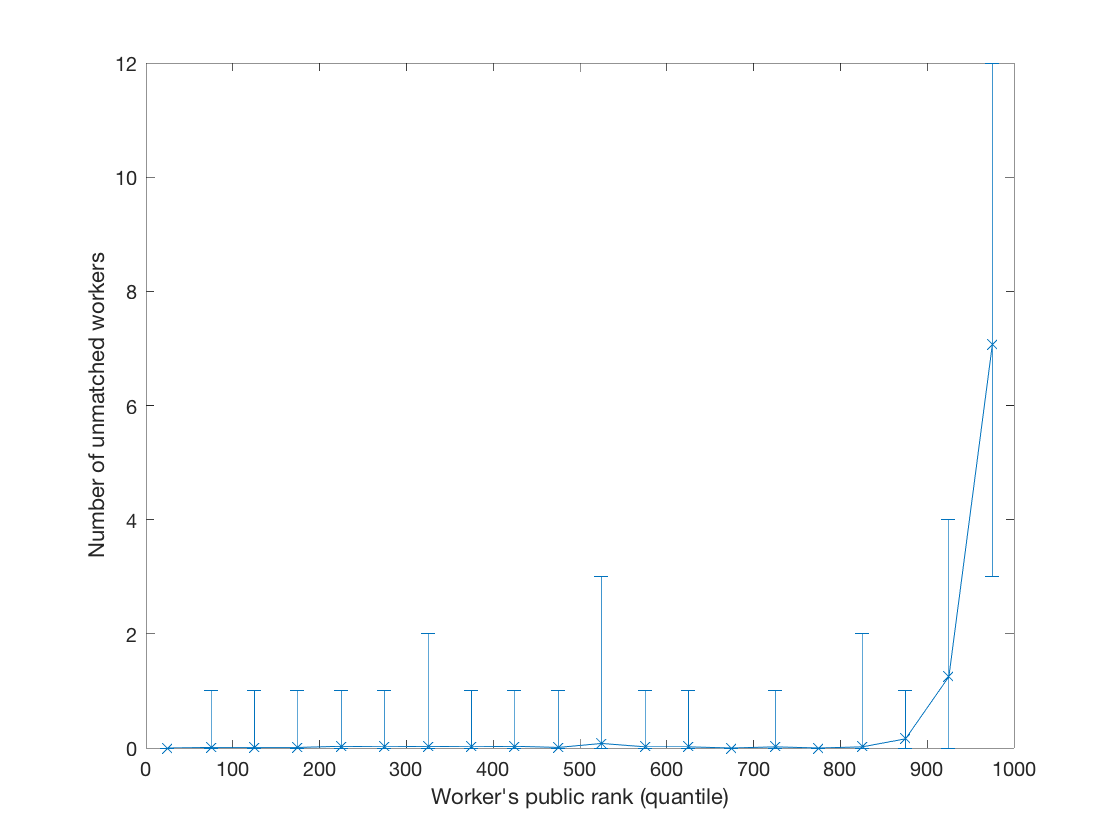}}
    \qquad
    \subfloat[Distribution of workers' utilities with worker-proposing DA: $\text{(full edge set result)}$ $- \text{(Interview edge set result)}$]{\includegraphics[width=0.29\linewidth]{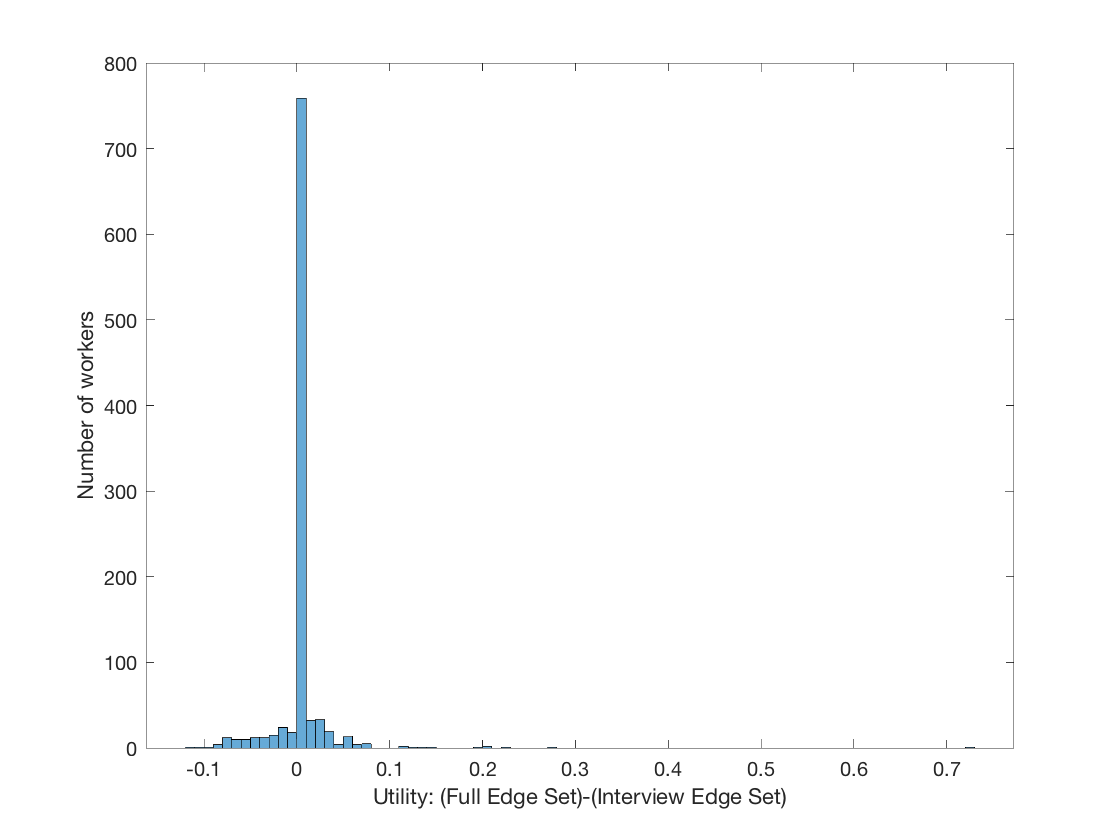}}
    \caption{Constant number of proposals, $n=\text{1,000}$.}
    
\label{fig:constant_1000}
\end{figure}

\printbibliography

\end{document}